\documentclass[acmsmall,screen,nonacm,review=false,timestamp=false]{}
\acmArticle{1}

\usepackage{graphicx} 
\usepackage{pifont}
\usepackage{algorithm}
\usepackage[noend]{algpseudocode}
\usepackage{xspace}
\usepackage{enumitem}
\usepackage{xcolor}
\usepackage{proof}
\usepackage{amsmath,amsthm}
\usepackage{hyperref}
\usepackage{bbding}

\usepackage{bm}
\usepackage{listings}
\usepackage{newunicodechar}
\usepackage{tikz}
\usetikzlibrary{automata,positioning}
\usepackage{scalerel}
\usepackage{url}
\usepackage[only,llbracket,rrbracket,llparenthesis,rrparenthesis]{stmaryrd}
\usepackage{accsupp} 
\usetikzlibrary{cd}
\usepackage{algorithm}
\usepackage{stmaryrd}
\usetikzlibrary{tikzmark}
\usepackage[normalem]{ulem} 
\usepackage{color}
\definecolor{keywordcolor}{rgb}{0.1, 0.5, 0.1}      
\definecolor{tacticcolor}{rgb}{0.0, 0.1, 0.6}    
\definecolor{commentcolor}{rgb}{0.4, 0.4, 0.4}   
\definecolor{symbolcolor}{rgb}{0.0, 0.1, 0.6}    
\definecolor{sortcolor}{rgb}{0.7, 0.1, 0.1}   
\definecolor{attributecolor}{rgb}{0.7, 0.1, 0.1} 
\usepackage{setspace}

\parskip 1ex
\parindent 0ex

\usepackage{fancyhdr}
\AtBeginDocument{%
    \addtolength{\footskip}{2.0\baselineskip}%
    \fancyfoot[L]
}
\newunicodechar{⊆}{\ensuremath{\subseteq}}
\newunicodechar{∃}{\ensuremath{\exists}}
\newunicodechar{↦}{\ensuremath{\mapsto}}
\newunicodechar{∧}{\ensuremath{\land}}
\newunicodechar{⇐}{\ensuremath{\Leftarrow}}
\newunicodechar{⬝}{\ensuremath{\bullet}}
\newunicodechar{↼}{\ensuremath{\leftharpoonup}}
\newunicodechar{Ξ}{\ensuremath{\mathbb{F}}}
\newunicodechar{β}{\ensuremath{\mathrm{\beta}}}
\newunicodechar{α}{\ensuremath{\alpha}}
\newunicodechar{⟧}{\ensuremath{\rrbracket}}
\newunicodechar{⟦}{\ensuremath{\llbracket}}
\newunicodechar{Π}{\ensuremath{\mathrm{\Pi}}}
\newunicodechar{∀}{{{$\forall$}}}
\newunicodechar{λ}{\ensuremath{\lambda}}
\newunicodechar{Φ}{\ensuremath{\mathrm{\Phi}}}
\newunicodechar{⊨}{\ensuremath{\vDash}}
\newunicodechar{⦃}{\ensuremath{\lBrace}}
\newunicodechar{⦄}{\ensuremath{\rBrace}}
\newunicodechar{∅}{\ensuremath{\emptyset}}
\newunicodechar{∪}{\ensuremath{\cup}}
\newunicodechar{∩}{\ensuremath{\cap}}
\newunicodechar{Δ}{\ensuremath{\mathrm{\Delta}}}
\newunicodechar{δ}{\ensuremath{\mathrm{\delta}}}
\newunicodechar{γ}{\ensuremath{\gamma}}
\newunicodechar{↔}{\ensuremath{\leftrightarrow}}
\newunicodechar{∈}{\ensuremath{\in}}
\newunicodechar{φ}{\ensuremath{\phi}}
\newunicodechar{⇈}{\ensuremath{\Uparrow}}
\newunicodechar{⊥}{\ensuremath{\bot}}
\newcommand{\conf}{\ensuremath{\mathfrak{c}}\xspace}
\newcommand{\toolName}{\textit{MMINT-A}}
\newcommand{\toolNamePL}{\textit{MMINT-PL}}
\usepackage{xspace}

\newcommand{\nc}[1]{{{#1}}}

\newcommand{\elem}[1]{\ensuremath{{\tt #1}}}
\newcommand{\token}[1]{{\elem{#1}}}

\newcommand{\AC}{\ensuremath{{\tt AC}}}
\newcommand{\Goal}{\ensuremath{{\tt Goal}}}
\newcommand{\Evd}{\ensuremath{{\mathcal{E}}}}
\newcommand{\Supp}{\ensuremath{\mathsf{Supp}}}
\newcommand{\Var}[1]{\ensuremath{#1_{\textbf{PL}}}}
\newcommand{\pl}[1]{\ensuremath{\textbf{#1}}}
\newcommand{\Inv}[3]{\ensuremath{\mathsf{Inv}(#1,#2,#3)}}

\newcommand{\xmark}{\text{\ding{55}}}
\newcommand{\cmark}{\text{\ding{51}}}

\algnewcommand\algorithmicswitch{\textbf{switch}}
\algnewcommand\algorithmiccase{\textbf{case}}
\algnewcommand\algorithmicassert{\texttt{assert}}
\algnewcommand\Assert[1]{\State \algorithmicassert(#1)}%
\algdef{SE}[SWITCH]{Switch}{EndSwitch}[1]{\algorithmicswitch\ #1\ \algorithmicdo}{\algorithmicend\ \algorithmicswitch}%
\algdef{SE}[CASE]{Case}{EndCase}[1]{\algorithmiccase\ #1}{\algorithmicend\ \algorithmiccase}%
\algtext*{EndSwitch}%
\algtext*{EndCase}%
\algnewcommand{\IfThenElse}[3]{
\algorithmicif\ #1\ \algorithmicthen\ #2\ \algorithmicelse\ #3}

\algnewcommand{\IfSingle}[2]{
\algorithmicif\ #1\ \algorithmicthen\ #2\
}
\algnewcommand{\ElsIfSingle}[2]{
 \algorithmicelse\ \algorithmicif\ #1 \algorithmicthen\ #2\
}
\newtheorem*{lemma*}{Lemma}
\newtheorem*{theorem*}{Theorem}
\newtheorem*{proposition*}{Proposition}

\newcommand{\ellipses}[0]{,\ldots,}

\newcounter{typofixcounter}

\newcommand{\typofix}[3][]{%
  \refstepcounter{typofixcounter}
  \if\relax\detokenize{#1}\relax
  \else
  \fi
  {#3}%
}

\newcommand{\weakNew}[1]{{#1}}
\newcommand{\weakDelete}[1]{}
%
%
\newcommand{\weakChange}[2]{#2}

%

\newcommand{\typoRef}[1]{\hyperref[#1]{\textbf{T\ref{#1}}}}

\newcounter{newTextCounter}

\newcommand{\newText}[2][]{%
  \refstepcounter{newTextCounter}%
 \if\relax\detokenize{#1}\relax
  \else
    \label{#1}%
  \fi
  {#2}%
}

\newcommand{\newTextRef}[1]{\hyperref[#1]{\textbf{N\ref{#1}}}}

\newcounter{changeCounter}

\newcommand{\change}[3][]{%
  \refstepcounter{changeCounter}
  \if\relax\detokenize{#1}\relax
  \else
    \label{#1}
  \fi
  {#3}%
}

\newcommand{\changeRef}[1]{\hyperref[#1]{\textbf{C\ref{#1}}}}

\newcounter{deleteCounter}

\newcommand{\delete}[2][]{%
  \refstepcounter{deleteCounter}
  \if\relax\detokenize{#1}\relax
  \else
    \label{#1}
  \fi
}
\newcommand{\deleteRef}[1]{\hyperref[#1]{\textbf{D\ref{#1}}}}
\newcommand{\ntn}[1]{{#1}}
\newcommand{\modlang}{\ensuremath{\mathcal{M}}}
\newcommand{\template}[2]{\ensuremath{T}_{#1,#2}}
\begin{document}

\title{Assurance Case Development for Evolving Software Product Lines: A Formal Approach}

\author{Logan Murphy}
\email{lmurphy@cs.toronto.edu}
\affiliation{%
  \institution{University of Toronto}
  \city{Toronto}
  \country{Canada}
}

\author{Torin Viger}
\email{tviger@cs.toronto.edu}
\affiliation{%
  \institution{University of Toronto}
  \city{Toronto}
  \country{Canada}
}
\author{Alessio Di Sandro}
\email{adisandro@cs.toronto.edu}
\affiliation{%
  \institution{University of Toronto}
  \city{Toronto}
  \country{Canada}
}
\author{Aren A. Babikian}
\email{babikian@cs.toronto.edu}
\affiliation{%
  \institution{University of Toronto}
  \city{Toronto}
  \country{Canada}
}
\author{Marsha Chechik}
\email{chechik@cs.toronto.edu}
\affiliation{%
  \institution{University of Toronto}
  \city{Toronto}
  \country{Canada}
}

\renewcommand{\shortauthors}{Murphy et al.}

\begin{abstract}
  {In critical software engineering, structured assurance cases (ACs) are used to demonstrate how key \weakNew{system} properties are supported by evidence (e.g., test results, proofs). Creating rigorous ACs is particularly challenging in the context of software product lines (SPLs), i.e, sets of software products with overlapping but distinct features and behaviours. Since SPLs can encompass very large numbers of products, developing a rigorous AC for each product individually is infeasible. Moreover, if the SPL evolves, e.g., by the modification or introduction of features, it can be infeasible to assess the impact of this change. Instead, the development and maintenance of ACs ought to be \emph{lifted} such that a single AC can be developed for the entire SPL simultaneously, and  be analyzed for regression in a variability-aware fashion. In this article, we describe a formal approach to lifted AC development and regression analysis. We formalize a language of variability-aware ACs for SPLs and study the lifting of template-based AC development. We also define a regression analysis to determine the effects of SPL evolutions on variability-aware ACs. We describe a model-based assurance management tool which implements these techniques, and illustrate our contributions by developing an AC for a product line of medical devices.
  
  }
\end{abstract}

\begin{CCSXML}
<ccs2012>
<concept>
<concept_id>10011007.10011074.10011092.10011096.10011097</concept_id>
<concept_desc>Software and its engineering~Software product lines</concept_desc>
<concept_significance>500</concept_significance>
</concept>
<concept>
<concept_id>10011007.10010940.10010992.10010998</concept_id>
<concept_desc>Software and its engineering~Formal methods</concept_desc>
<concept_significance>500</concept_significance>
</concept>
<concept>
<concept_id>10011007.10011074.10011099</concept_id>
<concept_desc>Software and its engineering~Software verification and validation</concept_desc>
<concept_significance>500</concept_significance>
</concept>
</ccs2012>
\end{CCSXML}

\ccsdesc[500]{Software and its engineering~Software product lines}
\ccsdesc[500]{Software and its engineering~Formal methods}
\ccsdesc[500]{Software and its engineering~Software verification and validation}
\keywords{Assurance Cases, Software Product Lines, Lifted Software Analysis, Change Impact Analysis}


\maketitle

\section{Introduction}
\label{sec:intro}
{{In many software engineering contexts, stakeholders require \emph{assurance} that software products will operate as intended. Several industries (e.g., automotive), have developed safety standards (e.g., ISO 26262 ~\cite{iso26262}) requiring careful documentation of verification activities via \emph{assurance cases} (ACs)~\cite{rushby2015interpretation}. ACs use structured argumentation to refine system-level requirements into lower-level specifications that be supported directly by evidence artifacts (e.g., tests, proofs). Given their role in safety and reliability engineering, great care must be taken to verify the correctness of an AC prior to system deployment. While ACs are traditionally developed and reviewed manually, using natural language, portions of ACs can be sufficiently formalized that formal methods can be leveraged to verify their correctness~\cite{viger2023foremost,varadarajan2023clarissa}.}} Nonetheless, the development of rigorous ACs, and systematically analyzing their \emph{regression}, remains a challenging problem.

{Simultaneously, software systems are increasing in scale and complexity. In many cases, companies are not developing individual software products, but a {family} of related products, i.e., a {software product line} (SPL)~\cite{apel2016feature}.} An SPL encompasses a set of {products} with distinct, but often overlapping, functionalities and behaviours. An SPL can be implemented in any number of ways, for instance, by annotating lines of code \weakChange{by}{with} \emph{presence conditions} indicating in which products a given statement is included.  SPLs can become very large in size, representing, e.g., millions of distinct products. When developing SPLs in assurance-critical contexts, assurance must be obtained for each product in the SPL. However, \weakDelete{given the complexity of SPLs,} developing and verifying individual ACs for each product is infeasible.

{In SPL engineering (SPLE), \emph{lifting} is the process of redefining a product-based analysis (e.g., testing, model checking) so that it can be applied directly to product lines~\cite{MURPHY2025112280}. In the lifting process, the semantics of the analysis are made \emph{variability-aware}, such that the variability of the SPL (e.g., as indicated by presence conditions) is handled explicitly. Applying a lifted analysis produces a \emph{variability-aware} analysis result, which concisely encodes distinct analysis results for each product of the SPL. Lifted analyses generally offer dramatic speedups compared to product-based methods.}

\nc{The same principle of {lifting} can be applied to AC development and regression, as illustrated in Fig.~\ref{fig:liftvsbrute}. Traditional product-based methods are shown in the bottom half\weakDelete{, coloured in blue}. Given a software product, we can used existing AC development methods to obtain a product AC. Thereafter, given some evolution of the product, we can attempt to determine the \emph{regression} of its assurance, wherein parts of the AC \weakChange{are found to be}{may be identified as} out of date and requiring revision. By contrast, lifted AC development (shown in the top half\weakDelete{, in pink}), proceeds directly from the software product line, producing a \emph{product line of ACs}, one for each product in the SPL. Then, given an evolution of the product line, lifted regression analysis determines the impact of this evolution on the assurance of each product. The two layers are formally related through \emph{product derivation}, indicated with dashed arrows. For lifted development and regression to be correct, the diagram in Fig.~\ref{fig:liftvsbrute} must commute; that is, applying lifted techniques and deriving a result for some product $p$ must give the same outcome as deriving $p$ from the SPL and applying product-based techniques.}

In a recent paper~\cite{ifm2024}, we developed a formal framework for lifted AC development, the cornerstone of which is a language of variability-aware ACs. \typofix{Building on recent work in formal methods for AC development, leverages formal AC \emph{templates} to develop formally verifiable ACs}{Building on recent work in formal methods for AC development, our framework uses formal AC \emph{templates} to enable the development of verifiable ACs.}~\cite{viger2020just,viger2023foremost,Denney:2018}.  To support lifted AC development, we studied the \weakDelete{systematic} \emph{lifting} of AC templates, and in particular, the use of AC templates which integrate lifted software analyses (e.g., lifted model querying~\cite{di2023adding}). These contributions were formalized and verified using the proof assistant Lean~\cite{lean4}. We also described a model-based AC management tool supporting lifted AC development, and \weakChange{illustrated}{demonstrated} our methodology and tooling by developing an AC for a product line of medical devices. 

This article extends our prior work~\cite{ifm2024} in the following ways: (1) In ~\cite{ifm2024}, only a high-level sketch of the formalization was provided, with the complete details being available through the implementation in Lean. Here, we provide a complete formalization of lifted AC development, including the formal semantics of variational ACs and lifted AC templates; (2) We extend our formalization to consider the impact of SPL evolutions on variational ACs; (3) We describe additional functionalities which have been implemented in our AC management tool to perform regression analysis in response to product line evolutions; (4) We extend our case study to \weakChange{illustrate}{demonstrate} contributions (2) and (3) over evolutions \typofix{to}{of} the product line of medical devices.

\begin{figure}[t]
    \centering
\includegraphics[width=\textwidth]{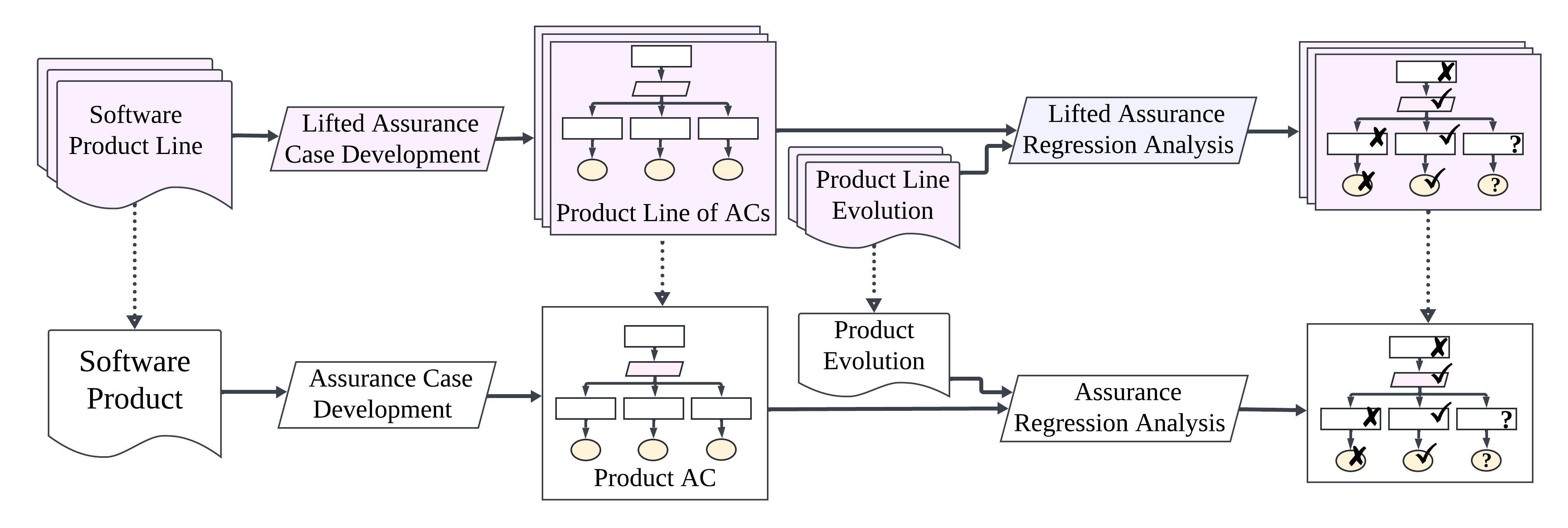}
    \vspace{-0.2in}
    \caption{\bf Lifted development and regression of assurance cases for evolving SPLs (pink), compared with product-based methods (blue). Derivation of products is indicated by dotted arrows. The labels $\{\cmark, \xmark, \textbf{?}\}$ indicate regression (or lack thereof) of assurance following an evolution.}
    \label{fig:liftvsbrute}
    \Description[]{}
\end{figure}
\nc{This article is structured as follows. We begin by providing the necessary background material on assurance cases and product lines (Sec.~\ref{sec:background}). We \weakNew{then} formalize a simple assurance case language, and define a formal framework for product-based AC development through verifiable templates (Sec.~\ref{sec:product_based_acs}). This framework is \weakNew{then} \emph{lifted} from the product level to the product line level (Sec.~\ref{sec:lifting}). We {\ntn{then}} extend our framework to consider the regression of assurance following evolutions of product line models. This is done in two stages: we first define a formal regression analysis for product-based ACs  (Sec.~\ref{sec:product_regression}), and subsequently lift this regression analysis for product lines of ACs (Sec.~\ref{sec:liftedRegression}). We describe tool support for lifted AC development and regression analysis, and demonstrate our contributions over a case study in which we develop and evolve an AC for a product line of medical devices (Sec.~\ref{sec:tool}). We provide an overview of related work (Sec.~\ref{sec:related}) before concluding with an outlook on future research directions (Sec.~\ref{sec:conclusion}). } 

\section{Background}
\label{sec:background}
In this section, we provide background on structured assurance cases (Sec.~\ref{sec:background:model_based_acs}) and software product lines (Sec.~\ref{sec:productLines}).

\paragraph{Notation and Terminology.}
 A \emph{predicate} $P$ over $X$ is a function $P : X \to  \{\top,\bot\}$, where $\top$ and $\bot$ denote truth and falsity, respectively. \change[notation:pset]{For any set $S$, we use $[S]^*$ to denote the set of all \emph{finite} subsets of $S$}{For any set $U$, we use $\mathbb{P}(U)$ to denote the power set of $U$, and $\mathbb{F}(U)$ to denote the set of \emph{families} of subsets of $U$, i.e., $\mathbb{F}(U) = \mathbb{P}({\mathbb{P}(U)})$}. We use the\typofix{follow}{following} conventions to denote product line entities: boldface (e.g., $\pl{x}$) denotes a product line object; \delete[vartypenotation]{given a set $X$, $\Var{X}$ denotes the set of ``product lines of $X$'s'';} a superscript arrow on a function (e.g., $f^\uparrow$) indicates that this function is \emph{lifted}, i.e., is a modified version of function $f$ which can be applied to product lines. \delete[featexpr:fwdref]{Lowercase $\phi$ always denotes a feature expression, and $\psi$ always denotes an analysis specification (e.g., an LTL formula). Uppercase $\Phi$ always denotes a feature model.} Throughout this work, we  use the term ``variational'' to mean an object imbued with variability, generally coinciding with the term ``product line'', e.g., a variational assurance case is the same thing as a product line of assurance cases. A \emph{variability-aware} analysis is one which explicitly interprets the variability encoded in a product line\weakChange{. By contrast, a ``product-based'' language is one in which there is no explicit representation of variability, and}{, whereas} a product-based analysis \weakDelete{or function} is one which can be applied only to products (as opposed to product lines).
 
\subsection{Structured Assurance Cases}
\label{sec:background:model_based_acs}

\begin{figure}[t]
    \centering
\includegraphics[width=\textwidth]{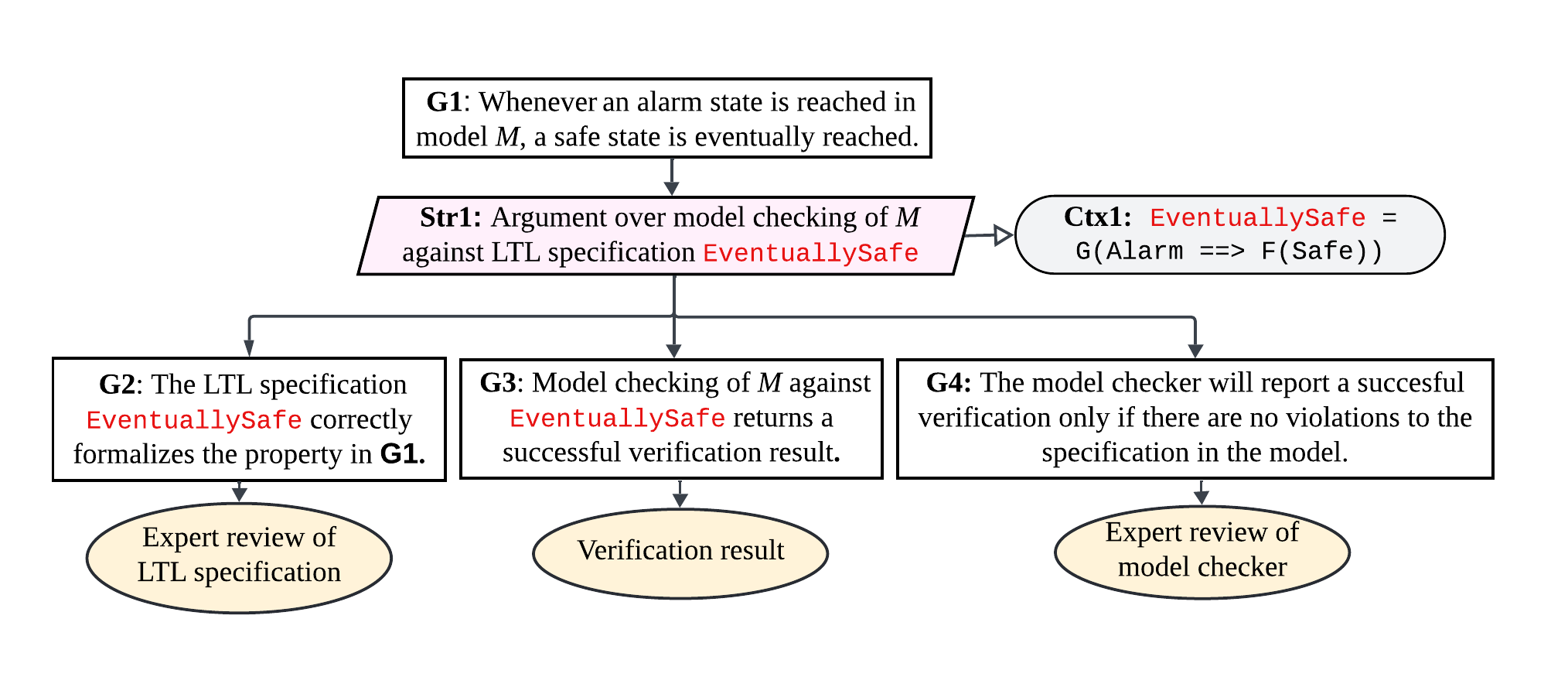}
    \caption{\bf An AC fragment (in GSN) for an LTS model $M$. The parent goal $g_1$ is decomposed via a model checking strategy $st$ into three subgoals, each supported by evidence. A context node (gray) is used to show the formal specification used for model checking.}
    \label{fig:gsn-example}
    \vspace{-0.1in}
    \Description[]{}
\end{figure}

An \emph{assurance case} (AC) is a document which systematically demonstrates a system's suitability for operation by appealing to a set of evidence artifacts. 
{In this work, we focus on assurance cases represented using Goal Structuring Notation (GSN)~\cite{kelly1999arguing}. 
{In GSN, an AC is modeled as a rooted tree, whose nodes can contain \emph{goals} (i.e., claims to be supported), \emph{evidence} artifacts, assumptions, contextual information (e.g., references to  documents or artifacts), and \emph{strategies}. Strategies are used to decompose goals into a finite set of subgoals, with the intended interpretation being a logical refinement: if each of the subgoals hold, then the parent goal should hold.

\begin{example}\label{ex:gsn}
    \nc{An AC fragment for a labelled transition system (LTS) model $M$ is shown in Fig.~\ref{fig:gsn-example}. The root goal ${\ntn{\textbf{G1}}}$ asserts that whenever an alarm is reached in the model, a safe state is eventually reached. This goal can be supported through model checking~\cite{baier2008principles} against the LTL specification ${\tt G(Alarm \implies F({\tt Safe}))}$, i.e., ``whenever  (${\tt G}$) an ${\tt Alarm}$ state is reached, a ${\tt Safe}$ state is eventually (${\tt F}$) reached.'' The corresponding argument over model checking is given by strategy ${\ntn{\textbf{Str1}}}$. This strategy decomposes ${\ntn{\textbf{G1}}}$ into three subgoals: ${\ntn{\textbf{G2}}}$, asserting that the specification correctly formalizes the property described in the parent goal; ${\ntn{\textbf{G3}}}$, {\ntn{asserting}} that model checking did not reveal any violations; and ${\ntn{\textbf{G4}}}$,  {\ntn{asserting}} that a successful verification result can be used to conclude absence of violations to the specification. Each of these subgoals can then be supported by evidence: in the case of \typofix{$g_2$}{{\ntn{\textbf{G3}}}}, the verification result itself provides the evidence, with the remaining goals being supported by expert opinion. {\ntn{The  specification used for verification is shown in the \emph{context} node \textbf{Ctx1.}}}}
\end{example}

\subsubsection{Assurance Case Templates}
\label{sec:background:templates}
One of the limitations of informal assurance case development is that the correctness of ACs needs to be validated manually, which is difficult, expensive and error-prone. If we instead impose formalization as part of AC development, we can use formal methods to verify the correctness of ACs. This has been an active research area in assurance engineering, with various formal methods being leveraged to improve the rigor of ACs, including theorem provers~\cite{denney2012advocate}, model checking~\cite{sljivo2018tool} and proof assistants~\cite{foster2021integration,viger2023foremost}. {While there are variations in the research literature, many proposals involve the formalization of AC fragments as \emph{templates} whose correctness can be verified. An AC template is effectively a procedure for producing a (partial) AC, often requiring some kind of \emph{instantiation} data. }

\begin{example} \label{ex:domainDecomp1}
    {Consider  the \emph{domain decomposition} template given by Viger\typofix{et. al}{et al.}~\cite{viger2020just} to decompose an {invariance} claim, i.e., a goal of the form $\forall x \in S,\: P(x)$ where $S$ is some subset of universe ${U}$. The template is instantiated for $S$ by giving a finite family $\mathcal{F} = \{X_1,...,X_n\}$ of subsets of ${U}$, creating $n$ subgoals $\{g_1,...,g_n\}$ such that goal $g_i$ asserts $\forall x \in X_i,\: P(x)$. An instantiation of this template yields a sound argument if family $\mathcal{F}$ is \emph{complete} with respect to $S$, i.e., $S \subseteq \bigcup_i X_i $.}
\end{example}

\subsection{Software Product Lines}
\label{sec:productLines}

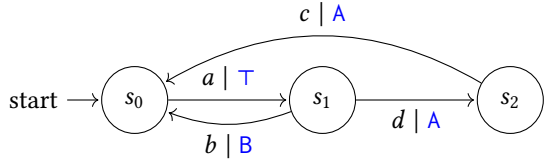
\begin{figure}[t]
\vspace{-0.2in}
    \centering
    \[\begin{tikzpicture}[shorten >=1pt,node distance=2.5cm,on grid,auto] 
   \node[state,initial] (q_0)   {$s_0$}; 
   \node[state] (q_1) [right=of q_0] {$s_1$}; 
   \node[state] (q_2) [right=of q_1] {$s_2$}; 
    \path[->] 
    (q_0) edge  node {$a \mid \textcolor{blue}{{\top}}$} (q_1)
    (q_2) edge [bend right=30, above] node {$c \mid \textcolor{blue}{{\tt A}}$} (q_0)
    (q_1) edge [below] node  {$d \mid \textcolor{blue}{{\tt A}}$}(q_2)
          edge [bend left=20, below] node {$b \mid \textcolor{blue}{{\tt B}}$} (q_0);
\end{tikzpicture}\]
\vspace{-9mm}
    \caption{\bf A Featured Transition System (FTS) over features ${F} = \{{\tt A}, {\tt B}\}$ and feature model $\Phi = {\tt A} ~{\tt xor}~ {\tt B}$ (adapted from ~\cite{beek2019static}).}
    \label{fig:FTS}
    \Description[]{}
\end{figure}

{{A \emph{software product line} (SPL) is a family of software artifacts ({products}) with distinct (but often overlapping) structure and\typofix{behaviours}{behaviour}~\cite{apel2016feature}. The variability of an SPL is defined in terms of a set $F$ of \emph{features}, each of which can be either present or absent in a given product. {A product is obtained from the SPL  by choosing a \emph{configuration} $\conf \subseteq {F}$ of features.} 
In this work, we focus on \emph{annotative} product lines, in which elements of software products -- lines of code, model elements, etc. -- are \emph{annotated} to indicate the products for which they are used. 

Annotative product line engineering makes extensive use of \emph{feature expressions}, i.e., propositional expressions whose atomic propositions are features. Given a set of features $F$, we let ${\tt Prop}(F)$ denote the set of feature expressions over $F$. When used to annotate a product element, a feature expression is called a \emph{presence condition}. There is a natural entailment relation between configurations and feature expressions; given $\conf \subseteq F$ and $\phi \in {\tt Prop}(F)$, we say that $\conf$ satisfies $\phi$ (denoted 
by $\conf \vDash \phi$) iff $\phi$ evaluates to $\top$ when all features in $\conf$ are replaced by $\top$, and all other features replaced by $\bot$. Given a feature expression \typofix{$\conf$}{$\phi$} over $F$, we let $\llbracket \phi \rrbracket$ denote the set of configurations satisfying $\phi$. In a given product line, not all possible combinations of features may be desired. A \emph{feature model} $\Phi$ defines the set of configurations which are considered ``valid'' in a given product line. The feature model can be\typofix{defied}{defined} \weakNew{in any number of manners:} graphically, i.e., via a feature diagram~\cite{schobbens2006feature}\weakNew{; textually, i.e., via feature modeling DSLs, such as the Universal Variability Language~\cite{benavides2025uvl};} or as a feature expression\weakChange{;}{.} In this work, we \weakChange{adopt the latter convention}{define feature models as feature expressions}.

{{Any modeling language for software products can be extended to represent product lines. For example, a product line of LTSs can be represented as a Featured Transition System (FTS)~\cite{classen2012featured}.}}

\begin{example}
    {Fig.~\ref{fig:FTS} illustrates an FTS  (adapted from \cite{beek2019static}) over the feature set $\{{\tt A}, {\tt B}\}$ with feature model $\Phi = A~{\tt xor}~B$. The presence conditions of transitions are \weakChange{written in blue font next to}{indicated to the right of} the transition labels. There are two valid configurations of this SPL, as $\llbracket\Phi\rrbracket = \{\{{\tt A}\},\{{\tt B}\}\}$.  In Fig.~\ref{fig:FTS}, the transition labeled by $a$ is present under both configurations, since it is annotated by $\top$, while the transition labeled by $b$ is present only in configuration $\{{\tt B}\}$.} 
\end{example}

While there are a multitude of ways in which a language can be extended to model product lines, we adopt the following as a default representation of an annotative product line model (adapted from \textrm{~\cite{shahin2020automatic}}). Given any modeling language $\mathcal{M}$ and model $M \in \mathcal{M}$, we let $\ntn{Elem}(M)$ denote the set of elements of $M$.

\begin{definition}[Product Line Model]\label{def:plModel}
    Let $\mathcal{M}$ be a modeling language. \weakNew{A} \emph{product line model} over $\mathcal{M}$ \weakDelete{-- also called a product line of $\mathcal{M}$-models --} is a tuple $\pl{M} = \langle F, \Phi, M, \ell \rangle$, where  $F$ is a set of features,  $\Phi$ is a feature model, \weakChange{$M$ is an $\mathcal{M}$-model}{$M \in \modlang$}, and $\ell : \ntn{Elem}(M) \to {\tt Prop}(F)$ is a \emph{labeling function} on elements of $M$, which annotates each element $e$ with a presence condition $\ell(e)$.
    We let $\Var{\mathcal{M}}$ denote the set of all possible \weakChange{product lines of $\mathcal{M}$-models}{product line models over $\modlang$}.
\end{definition} 

A fundamental operation on product line models is \emph{derivation}, i.e., the extraction of a product given a specific configuration. \newText[derivation:explain]{Given a product line model $\pl{M} = \langle F, \Phi, M, \ell\rangle$, we derive the product model induced by $\conf \subseteq F$ by removing from $M$ all elements $e$ such that $\conf \nvDash \ell(e)$. We refer to the resulting product model as $\pl{M}|_\conf$. Depending on the modeling language being used, there may be some elements of the model which ``depend'' on the element $e$ being removed. For example, removing a transition from a state machine model may make some states unreachable, in which case they should also be removed during derivation. As another example, if a certain class $C$ is removed from a class diagram, any subclasses of $C$ may also need to be removed. Generally speaking, defining a product line model consistent with Def.~\ref{def:plModel} requires defining a particular derivation operator for the chosen modeling language.}

    \delete[derivation:def]{\emph{Definition.} Given a product line model $\pl{M} = (F,\Phi,M,\ell)$, and a configuration $\conf \subseteq F$, we can obtain the \emph{product model} under $\conf$ by removing from $M$ all elements $e$ such that $\ell(e)$ is unsatisfiable. All elements which become ``unreachable'' are also removed.}

\begin{figure}[t]
    \centering
    \[\begin{tikzpicture}[shorten >=1pt,node distance=2.5cm,on grid,auto] 
   \node[state,initial] (q_0)   {$s_0$}; 
   \node[state] (q_1) [right=of q_0] {$s_1$}; 
    \path[->] 
    (q_0) edge  node {$a $} (q_1)
    (q_1) edge [bend left=20, below] node {$b$} (q_0);
\end{tikzpicture}\]
\vspace{-9mm}
    \caption{\bf The LTS product derived from the FTS in Fig.~\ref{fig:FTS} under configuration $\{\token{B}\}$.}
    \label{fig:LTS}
    \Description[]{}
\end{figure}
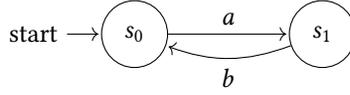

\begin{example}
    \nc{\weakNew{Derivation of an LTS from an FTS under a configuration $\conf$ is achieved by removing all edges whose presence conditions are not satisfied by $\conf$, and subsequently removing any states and transitions which become unreachable.} Fig.~\ref{fig:LTS} shows the LTS product derived from the FTS in Fig.~\ref{fig:FTS} under configuration $\conf = \{\token{B}\}$, i.e., the product in which feature $\token{B}$ is present and feature $\token{A}$ is absent. The transition labeled $a$ is kept since it is annotated by $\top$, which is satisfied by all configurations. The transition labeled $b$ is kept since it is guarded by \weakChange{feature expression}{presence condition} $\token{B}$, and $\{\token{B}\} \vDash \token{B}$. The transitions labeled by $c$ and $d$ are removed since they are both guarded by \weakChange{feature expression}{presence condition} $\token{A}$, and $\{\token{B}\} \nvDash \token{A}$. The state $s_2$ then becomes unreachable, and is removed.}
\end{example}

\change[explicit:pl:aside]{While Def.~\ref{def:plModel} is sufficiently generic for most applications, it is not the only possible definition of an annotative product line. Indeed, in this paper we consider some instances of ``product lines'' that use a more specialized representation -- for instance, product lines of assurance cases. In general, we consider a ``product line extension'' of any language $X$ to be a language $\Var{X}$ which is equipped with a derivation operator $d : \Var{X} \times 2^F \to X$, i.e., a way of obtaining a product in $X$ given a configuration $\conf \subseteq F$.
We use the same generic notation to denote derivation across all product line languages: given $\pl{x} \in \Var{X}$ and $\conf \subseteq F$, we let $\pl{x}|_\conf$ denote the product derived from $x$ under $\conf$.}{While Def.~\ref{def:plModel} provides one way of encoding an annotative product line, there are alternatives. For instance, given some set $X$, one can encode a ``product line of elements of $X$'' explicitly as a set of pairs $\{\langle x_1, \phi_1\rangle \ellipses \langle x_1, \phi_n\rangle \}$, where each $x_i$ is the ``product'' corresponding to configurations satisfying feature expression $\phi_i$~\cite{walkingshaw2014variational}. For the derivation of a product to be well-defined, the feature expressions must be pairwise disjoint (i.e., for all distinct $i$ and $j$, $\llbracket \phi_i \land \phi_j \rrbracket = \emptyset$), and every valid configuration must satisfy some $\phi_i$, (i.e., $\llbracket \bigvee_i \phi_i\rrbracket = \llbracket\Phi\rrbracket$). If both these properties hold, we can derive a product under any valid configuration $\conf$ by choosing the unique $x_i$ such that $\conf \vDash \phi_i$. For example, given features $F = \{\token{A}, \token{B}\}$ and feature model $\Phi = \top$, we could define an instance of a product line of natural numbers as $\pl{n} = \{(1, \token{A} \wedge \token{B}), (2,\token{A} \wedge \neg B), (17, \neg \token{A}) \}$, from which we could derive the products $\pl{n}|_{\{\token{A},\token{B}\}} = 1$ and ${\pl{n}}|_\emptyset = 17$. As another example, we could define a product line of Boolean values over the same features and feature model as $\pl{b} = \{(\top, \token{A}), (\bot, \neg \token{A})\}$, such that $\pl{b}|_{\{\token{A},\token{B}\}} = \top$ and $\pl{b}|_{\{\token{B}\}} = \bot$. 

While this ``explicit'' encoding can be used to represent product lines of  {arbitrary} data, there are many cases in which the encoding described in Def.~\ref{def:plModel} is more efficient. For instance, consider a product line of two LTSs $M_1$ and $M_2$, each of which contains thousands of states, and which differ  only in that $M_1$ has a state $s$ which is not present in $M_2$. An ``explicit'' encoding of this product line would require storing full copies of both $M_1$ and $M_2$, even though they are nearly identical models. Representing the product line using the encoding described in Def.~\ref{def:plModel} (e.g., as an FTS) is more efficient, as we only need to store one copy of all model elements appearing in either $M_1$ or $M_2$. Nevertheless, the explicit encoding of annotative product lines is useful as it allows us, in principle, to reason about product lines of arbitrary data. In what follows, we will extend the notation $\Var{X}$ to describe the set of product lines of elements of $X$, regardless of the encoding used. We also use the same notation $\pl{x}|_\conf$ to denote the derivation of a product from product line $\pl{x} \in \Var{X}$ under configuration $\conf$.}

\newText[pl:prod:clarification]{Finally, given any languages of product lines $\Var{X}$ and $\Var{Y}$, we can define the language $\Var{(X \times Y)}$ as $\Var{X} \times \Var{Y}$. That is, we can form a product line in which each product is a pair $\langle x,y \rangle \in X \times Y$ by taking a pair of product lines $\langle \pl{x}, \pl{y}\rangle \in \Var{X} \times \Var{Y}$. The derivation of the product pair under configuration $\conf$ is naturally $\langle \pl{x},\pl{y}\rangle|_\conf = \langle \pl{x}|_\conf, \pl{y}|_\conf\rangle$.}

    \delete[nat:example]{An example of a ``product line language'' which is \emph{not} an instance of Def.~\ref{def:plModel} is the language $\Var{\mathbb{N}}$ of variational natural numbers~\cite{shahin2020automatic}. Given a feature model $\Phi$, a value $\pl{v} \in \Var{\mathbb{N}}$ represents a ``product line'' of natural numbers of the form 
    $\pl{v} = \{\langle x_1,\phi_1 \rangle,...,\langle x_n, \phi_n\rangle  \}$,
    where each $x_i \in \mathbb{N}$ and each $\phi_i$ is a feature expression such that for any $\conf \in \llbracket \Phi \rrbracket$, there is exactly one $\phi_i$ such that $\conf \vDash \phi_i$. The existence and uniqueness of $\phi_i$ induces a derivation function, i.e., for each $\conf$, we choose $\pl{v}|_\conf = x_i$ where $\conf \vDash \phi_i$.}

Annotative product lines provide a concise representation of (possibly very many) individual models. The downside of product line modeling is that existing (product-based) analyses cannot be applied directly to product lines. \nc{If we want to analyze a product line model with a product-based analysis, we can only do so in a brute-force fashion, deriving the set of valid products and analyzing each of them independently. This is infeasible, as a product line with $n$ features will encompass up to $2^n$ products. It is also often \weakChange{redundant}{wasteful}, as different products may overlap significantly with one another, so performing each analysis ``from scratch'' is typically not necessary. An alternative strategy is to \emph{lift} the analysis, i.e., redefine it so that it can be applied to the entire product line, and produce equivalent results to brute-forced analysis.}

\begin{definition}[Lifting~{\rm \cite{Murphy2023ReusingYF}}]\label{def:lift} \rm
    Let $f : X \to Y$ be a function. Fixing a feature model $\Phi$, a function $f^\uparrow : \Var{X} \to \Var{Y}$ is a \emph{lift} of $f$ iff for all $\pl{x} \in \Var{X}$ and all $\conf \in \llbracket \Phi \rrbracket$, we have 
    $f^\uparrow(\pl{x})|_\conf = f(\pl{x}|_\conf)$.
\end{definition}
\nc{The above definition can be adapted for functions which take additional inputs. For instance, if $f$ is an analysis which also takes some specification $\psi$ (e.g.,\typofix{and}{an} LTL formula), then the correctness \weakChange{criterion}{specification} becomes $f^\uparrow(\pl{x},\psi)|_\conf = f(\pl{x}|_\conf,\psi)$. Many forms of software analyses have been lifted, such as dataflow analysis~\cite{bodden2013spllift}, model checking~\cite{classen2012featured}, and testing~\cite{kastner2012toward}\typofix{ }{}; see ~\cite{MURPHY2025112280} for a recent survey.}

\section{Formalizing Product-Based AC Development}
\label{sec:product_based_acs}

In this section, we formalize a framework for product-based AC development, which will form the foundation of \emph{lifted} AC development in Sec.~\ref{sec:lifting}. More precisely, we define a formal AC language based on\change[gsn:ref]{GSN}{Goal Structuring Notation (GSN), introduced in Sec.~\ref{sec:background:model_based_acs}}, and formalize the correctness of ACs in terms of evidence and argumentation (Sec.~\ref{sec:productACs}). We then formalize assurance case \emph{templates} and their correctness (Sec.~\ref{sec:acTemplates}). We conclude by introducing a special class of AC templates, called \emph{analytic templates}, which incorporate the semantics and results of software analysis directly into an AC (Sec.~\ref{sec:analyticTemplates}). Unless noted otherwise, all definitions and theorems in this section are adapted from~\cite{ifm2024}. Proofs of selected theorems not given here are provided in Appendix~\ref{app:proofs:sec3}.
 
\subsection{Formalizing an Assurance Case Language}
\label{sec:productACs}
An assurance case serves to make and justify claims about a system and its suitability for operation. To enable assurance case formalization, we consider \emph{model-based} assurance cases, i.e., those in which the primary subjects of assurance case claims are system models. In this work, we adopt the model-driven engineering philosophy in which ``everything is a model''{~\cite{bezivin2005unification}. Hence, the} word ``model'' can refer to traditional models, such as state machines and activity diagrams, as well as to objects such as programs, strings, and sets.
\weakDelete{We assume the existence of \emph{metamodels} $\modlang, \mathcal{N}...$, which define languages (sets) of models (class diagrams, state machines, etc.). Given a metamodel $\modlang$, we refer to any $M \in \modlang$ as an \emph{$\modlang$-model}, or simply as a \emph{model} when the intended metamodel is clear.}

In general, an assurance case goal can represent any proposition. However, in formalizing assurance cases -- and especially in formalizing assurance case \emph{templates} -- it is \weakDelete{also} useful to model goals \emph{predicatively}, i.e., as the application of some predicate $P$ to some subject (e.g., a system model). This leads us to the following formal notion of assurance case goals.

\begin{definition}[Goal]\label{def:goal}
    An assurance  {{\ntn{case}}} goal $g$ is given as either:
    \begin{enumerate}[label=\roman*)]
        \item a proposition $p$ in first-order logic (called a \emph{propositional goal}), or
        \item a pair $\langle M,P \rangle $ where $M\in \modlang$ is a model and $P$ is a predicate over $\modlang$ (called a \emph{predicative goal}).
    \end{enumerate}
    \newText[goaldef]{We let $\Goal$ denote the set of all such goals.}
\end{definition}

Obviously, a goal $g$ can be interpreted directly as the proposition it \weakChange{represents}{asserts}: in the case of a propositional goal, it is already defined by some proposition \weakNew{$p$}; for a predicative goal $\langle M,P \rangle$, we interpret the goal as \weakNew{asserting} $P(M)$. With this in mind, we often treat goals like propositions for the sake of presentation, e.g., we write $g \implies g^\prime$ with the\change[nat:imp]{natural interpretation}{interpretation that the proposition asserted by $g$ logically implies the proposition asserted by $g^\prime$}. It would be straightforward to modify Def.~\ref{def:goal} to allow goals which assert $n$-ary relations between multiple models. Instead, to simplify presentation, we adopt the convention that a tuple of models $\langle M_1,\dots,M_n\rangle$ is itself considered a model.

Our formalization of assurance cases includes two other entities inspired by GSN: \emph{evidence} and \emph{strategies}. Evidence can take many forms: testing results, verification results, formal proofs, system documentation, expert opinion, and so on. To simplify our formalization, we assume a set $\mathcal{E}$ of evidence artifacts (broadly conceived), making no assumptions about the internal structure of any $e \in \mathcal{E}$. However, we assume that \weakNew{determining} \emph{evidence adequacy}, i.e., \weakDelete{determining} whether a piece of evidence actually \emph{is} evidence of a given goal, is decidable. More precisely, we can always (manually or automatically) decide whether evidence $e$ is adequate to support proposition $P$, in which case we write \change[prfnotation]{$e : P$}{$e \vdash P$}. \weakChange{Our formalization takes a deeper view of argumentation, organized into \emph{strategies}. As}{We next turn to \emph{strategies}, which, as} shown in Fig.~\ref{fig:gsn-example},\change[strategyclar]{strategies relate a parent goal (the conclusion of the argument) to its child goals (the premises of the argument)}{clarify the logical relationship between a parent goal (the conclusion of the argument) and its subgoals (the premises of the argument)}. Unlike goals or evidence, the strategy node itself adds no semantic content to the assurance case; it instead provides \emph{explanatory} content to the reader, so that they can understand what argument is being made. Accordingly, our formalization models strategies simply as opaque labels. We let $\mathsf{Str}$ denote the set of all strategies.

We can now \weakDelete{proceed to} define a formal language of GSN-like assurance cases.

\begin{definition}[Assurance Case]\label{def:AC} The language $\AC$ of \emph{assurance cases} is generated by the following grammar:
\begin{align*}
    \AC := &\; {\tt Und}(g) \tag{$g \in \Goal$}\\
    \mid &\;{\tt Evd}(g,e) \tag{$g \in \Goal$, $e \in \Evd$} \\ 
    \mid &\;{\tt Decomp}(g,st, \{A_1\ellipses A_n\}) \tag{ $g \in \Goal$, $st \in \mathsf{Str}$, $A_i \in \AC$}
\end{align*}
\end{definition}
The first constructor defines an undeveloped goal $g$, i.e., one which has yet to be supported by evidence or argumentation; the second defines a goal $g$ being supported by evidence $e$; and the third constructor defines a goal $g$ being decomposed into sub-ACs $\{A_1\ellipses A_n\}$ via decomposition strategy $st$, with $n \geq 1$. \weakChange{The above}{This} definition is purely syntactic, making no assumptions on the relations between goals, evidence, or strategies. Before we define these semantic qualities, we introduce the \emph{root} function ${\tt Rt} : \AC \to \Goal$, which returns the root goal of a given AC, i.e.,
\begin{align*}
    {\tt Rt}({\tt Und}(g)) = g \qquad 
    {\tt Rt}({\tt Evd}(g,e)) = g \qquad 
    {\tt Rt}({\tt Decomp}(g, st, \{A_1\ellipses A_n\})) = g
\end{align*}

The key semantic property of assurance case {argumentation} is \emph{goal refinement}~\cite{viger2023foremost}, which determines whether a given strategy represents a sound inference. 
\begin{definition}[Goal Refinement]\label{def:goalRefinement}
    We say that a set of assurance cases $\mathcal{A} = \{A_1,...A_n\}$ {\newText[guard]{($n \geq 1$)}} \emph{refines} goal $g$, denoted $\mathcal{A} \prec g$, iff \change[forall:rw]{$(\forall i \leq n, g_i) \implies g$}{$\left(\bigwedge_{i=1}^ng_i\right) \implies g$},
    where $g_i = {\tt Rt}(A_i)$.
\end{definition}
In other words, a strategy induces a sound goal refinement if, whenever all of the \weakNew{root} goals in the child ACs hold, the parent goal holds. \nc{In what follows, we often ignore the distinction between a goal $g$ and the AC fragment ${\tt Und}(g)$. For instance, we may indicate a goal refinement $\{g_1\ellipses g_n\} \prec g$, with the understanding that each $g_i$ is interpreted as ${\tt Und}(g_i)$ to maintain consistency with Def.~\ref{def:goalRefinement}}.

\begin{example}\label{ex:precRefinement}
   \begin{figure}
         \centering \includegraphics[width=0.7\linewidth]{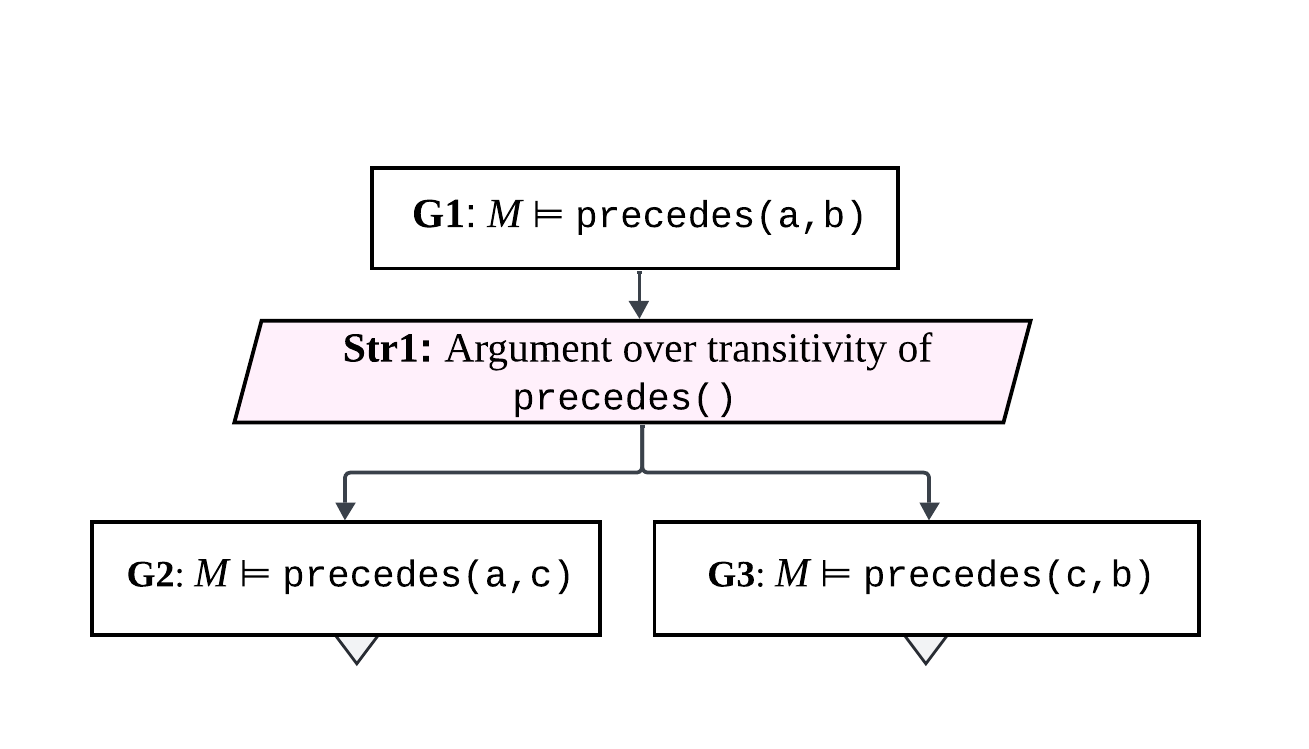}
         \caption{\bf AC fragment formalized in Example~\ref{ex:precRefinement}.}
         \label{fig:prec_example_fig}
         \Description[]{}
     \end{figure}
    We can formalize properties of an LTS $M$ through the satisfaction of LTL formulae, which we denote by $M \vDash \psi$, with $\psi$ being an LTL formula. Consider the ``$\token{precedes}$'' relation between events in an LTS, such that $\token{precedes}(a,b)$ means that whenever event $b$ occurs, event $a$ must have occurred first (in LTL, $\neg b \:W\:a$, with $W$ denoting weak-until). Fix the notation $P_{a,b}$ for predicates over LTSs as $P_{a,b}(M) = \typofix{}{}$``$M \vDash \token{precedes}(a,b)$'' for any model $M$ and events $a$ and $b$.

    Fix some $M \in {\tt LTS}$, for which we want to produce assurance that $M \vDash \token{precedes}(a,b)$. We define a root goal $g = \langle M, P_{a,b}\rangle$. Given some third event $c$, can then define two subgoals $g_{1} = \langle M,P_{a,c} \rangle $ and $g_{2} = \langle M,P_{c,b}\rangle$. Leaving these goals as undeveloped, we form the decomposition $A = {\tt Decomp}(g, st, \{g_1, g_2\})$, where strategy $st$ indicates an ``argument by transitivity of \token{precedes}''. This assurance case is visualized in Fig.~\ref{fig:prec_example_fig}, with undeveloped goals being indicated by a triangle token. We can also verify that the refinement $\{g_{1},  g_{2}\} \prec g$ holds, since $M \vDash \token{precedes}(a,c) \land M \vDash \token{precedes}(c,b) \implies M \vDash \token{precedes}(a,b)$.

 \end{example}

Goal refinement and evidence adequacy collectively define the correctness of an assurance case. Ideally, one wants to be able to infer with confidence that the root goal of an AC is true; one can infer this if (a) every evidence artifact in the AC is adequate \weakDelete{evidence} for the goal to which it is assigned, and (b) if every strategy in the AC provides a sound goal refinement. We say that an AC for which (a) and (b) hold is \emph{supported}.

\begin{definition}[Supported Assurance Case]\label{def:supp}
For any assurance case $A \in \AC$, the \emph{supported} predicate $\Supp(A)$ can be inferred via the following rules:
\[  \infer[{\tt [Supp\text{-}1]}]{\Supp({\tt Evd}(g,e))}{
    &g = p
    & \ntn{e \vdash p}
    }\qquad
     \infer[{\tt [Supp\text{-}2]}]{\Supp({\tt Evd}(g,e))}{
    &g = \langle M, P\rangle
    & \ntn{e \vdash P(M)}
    }
\]
\[   \infer[{\tt [Supp\text{-}3]}]{\Supp({\tt Decomp}(g,st,\{A_1\ellipses A_n\}))}{
    & \forall i \leq n, \Supp(A_i)
    & \{A_1\ellipses A_n\} \prec g
    }
\]
\end{definition}
One can verify by induction that these rules correctly model the intended interpretation of an AC:
\begin{lemma}\label{def:suppCorrect}
Let $A$ be an assurance case \weakDelete{for system $S$} such that\; ${\tt Rt}(A) = g$. If $A$ is supported, then we can infer $g$.
\end{lemma}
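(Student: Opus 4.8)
The plan is to prove the statement by structural induction on the assurance case $A$, matching the cases to the rules defining $\Supp$ in Def.~\ref{def:supp}; equivalently, one can induct on the derivation of $\Supp(A)$. I first observe that the constructor ${\tt Und}(g)$ is handled vacuously: there is no inference rule whose conclusion has the form $\Supp({\tt Und}(g))$, so the hypothesis $\Supp(A)$ can never hold when $A = {\tt Und}(g)$, and the implication is trivially true. This matches the intuition that an undeveloped goal on its own provides no grounds for inferring its claim.

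For the two base cases, $A = {\tt Evd}(g,e)$, I would appeal directly to the interpretation of evidence adequacy. If $\Supp(A)$ was derived via $[{\tt Supp\text{-}1}]$, then $g$ is a propositional goal $p$ and $e \vdash p$; by the meaning of the adequacy relation, $e$ is adequate evidence for $p$, so we may infer $p = g$. Symmetrically, if $\Supp(A)$ was derived via $[{\tt Supp\text{-}2}]$, then $g = \langle M, P\rangle$ and $e \vdash P(M)$; since a predicative goal $\langle M, P\rangle$ is interpreted as the proposition $P(M)$, the adequacy of $e$ for $P(M)$ lets us infer $g$. In both cases ${\tt Rt}(A) = g$, as required.

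The inductive case is $A = {\tt Decomp}(g, st, \{A_1 \ellipses A_n\})$, whose support must have been derived by $[{\tt Supp\text{-}3}]$. By inversion on that rule, $\Supp(A_i)$ holds for every $i \leq n$ and $\{A_1 \ellipses A_n\} \prec g$. Writing $g_i = {\tt Rt}(A_i)$, the induction hypothesis applied to each supported sub-case $A_i$ yields that we can infer $g_i$. By Def.~\ref{def:goalRefinement}, the refinement $\{A_1 \ellipses A_n\} \prec g$ means precisely that $\left(\bigwedge_{i=1}^n g_i\right) \implies g$. Combining the individually inferred $g_i$ with this implication, we infer $g = {\tt Rt}(A)$, completing the induction.

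The argument is routine once the induction is set up correctly, and I expect no real obstacle beyond bookkeeping. The one point requiring care is the treatment of evidence: the base cases rest entirely on the assumption (made explicit in the text preceding Def.~\ref{def:supp}) that the adequacy judgement $e \vdash P$ genuinely licenses inferring $P$, i.e., that adequacy is \emph{sound} rather than a purely syntactic annotation. I would therefore flag this reliance explicitly, since it is the sole semantic input to the proof; the strategy nodes, being opaque labels, contribute nothing beyond the refinement relation they witness.
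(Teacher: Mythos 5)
Your proof is correct and follows essentially the same route as the paper's: structural induction on $A$, dismissing ${\tt Und}(g)$ as vacuous since no rule derives $\Supp({\tt Und}(g))$, handling the evidence cases via $[{\tt Supp\text{-}1}]$/$[{\tt Supp\text{-}2}]$, and closing the decomposition case by combining the inductive hypothesis with the goal refinement $\{A_1\ellipses A_n\} \prec g$ from $[{\tt Supp\text{-}3}]$. Your explicit flagging of the soundness of the adequacy judgement $e \vdash P$ is a reasonable clarification but does not change the argument.
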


\subsection{Formal Assurance Case Templates}
\label{sec:acTemplates}

\weakChange{As per}{Per} Def.~\ref{def:supp}, verifying goal refinements is a key component of ensuring the correctness of an AC. As mentioned in  Sec.~\ref{sec:background:templates}, \emph{templates} can be used to systematize the formalization and verification of \typofix{decompositions}{decomposition} strategies\weakDelete{in ACs}. \newText[templatePref]{In general, a template $T$ is designed to support goals of the form $\langle M,P\rangle$ for any model $M \in \modlang$, given some fixed predicate $P$ over $\modlang$. \emph{Instantiating} a template generally requires providing data from some domain $D$. This is formalized in the following definition.}

\begin{definition}[AC Template]\label{def:template}
    \change[templateDef]{An}{Given a modeling language $\modlang$ and a domain $D$, an} AC \emph{template} {\ntn{$\template{\modlang}{D}$}} is a tuple $\langle P, I, C \rangle$, where 
    \begin{itemize}
        \item $P : \modlang \to \{\top,\bot\}$ is called the \emph{parent predicate},
        \item $I : D \to {\ntn{\mathbb{P}({\Goal})}}$ is the \emph{instantiation function}, \weakChange{where $D$ is some auxiliary domain of data used in the instantiation,}{such that {for all $x \in D, I(x)$ is finite}}, and
        \item $C : D \times \modlang \to \{\top, \bot\}$ is the \emph{correctness criterion}, which is assumed to be decidable.
    \end{itemize}
    \nc{Intuitively, $P$ \weakChange{formalizes the parent goal which will be applied to some model $M$;}{defines the class of goals which $\template{\modlang}{D}$ is designed to decompose,} $I$ defines how the (\weakNew{finitely many}) subgoals of the \weakChange{strategy}{decomposition} are constructed, and $C$ defines the conditions under which the resulting strategy produces a sound argument}. The template is said to be \emph{valid} if for every $x \in D$ and $M \in \modlang$,  $C(x,M)$ implies $I(x,M) \prec$ \weakChange{$P(M)$}{$\langle M, P\rangle$}. 
\end{definition}

\begin{example} \label{ex:domainDecomp2}
    Recall the domain decomposition template described in Example~\ref{ex:domainDecomp1}. Fix a universe $U$ and a predicate $P$ over $U$. We can formalize the domain decomposition template with respect to $P$ as $\template{\mathbb{P}(U)}{\mathbb{F}(U)} = \langle $\weakChange{$\mathcal{P}$}{$\mathsf{Forall}_P$}$, I, C\rangle$, where 
    \begin{itemize}
        \item $\ntn{\mathsf{Forall}_P}$ is a predicate over $\ntn{\mathbb{P}(U)}$ such that\weakNew{, for any subset $S \subseteq U$, }  $\mathsf{Forall}_P(S) = $ ``$\forall x \in S, P(x)$'',
        \item the instantiation function $I :{\ntn{\mathbb{F}(U)} \to\ntn{\mathbb{P}(\Goal)}}$ {is defined by}  $I(\{X_1\ellipses X_n\}) = \{g_1\ellipses g_n\}$, 
        with each $g_i$ = $\langle X_i, \ntn{\mathsf{Forall}_P} \rangle$, and
        
        \item the correctness criterion is $C(\{X_1\ellipses X_n\}, S) = $ ``$S \subseteq \bigcup_iX_i$''. 
    \end{itemize}
        \weakNew{Note that $I$ is defined \emph{only} for finite families of subsets of $U$.}
\end{example}
\begin{proposition}\label{prop:domaindecompPrf}
    The domain decomposition template defined in Example~\ref{ex:domainDecomp2} is valid.
\end{proposition}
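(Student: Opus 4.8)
The plan is to unwind two layers of definitions until the statement reduces to an elementary set-theoretic implication. By Def.~\ref{def:template}, validity of $\template{\mathbb{P}(U)}{\mathbb{F}(U)}$ means that for every family $\mathcal{F} = \{X_1\ellipses X_n\} \in \mathbb{F}(U)$ and every subset $S \in \mathbb{P}(U)$, the correctness criterion $C(\mathcal{F},S)$ implies the goal refinement $I(\mathcal{F}) \prec \langle S, \mathsf{Forall}_P\rangle$. So I would first fix arbitrary $\mathcal{F}$ and $S$, substitute the concrete definitions from Example~\ref{ex:domainDecomp2}, and assume the hypothesis $C(\mathcal{F},S)$, i.e. the completeness condition $S \subseteq \bigcup_i X_i$.

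Next I would unfold the refinement obligation via Def.~\ref{def:goalRefinement}. The subgoals are $g_i = \langle X_i, \mathsf{Forall}_P\rangle$ and the parent goal is $\langle S, \mathsf{Forall}_P\rangle$; interpreting each predicative goal as the proposition it asserts (per the remark following Def.~\ref{def:goal}), the refinement $\{g_1\ellipses g_n\} \prec \langle S,\mathsf{Forall}_P\rangle$ is exactly the implication
\[
\left(\bigwedge_{i=1}^n \forall x \in X_i,\, P(x)\right) \;\implies\; \forall x \in S,\, P(x).
\]
This is the proposition I must discharge under the standing assumption $S \subseteq \bigcup_i X_i$.

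The core argument is then a one-line element chase. Assuming the antecedent (each $X_i$ consists entirely of elements satisfying $P$), I take an arbitrary $x \in S$; completeness gives $x \in \bigcup_i X_i$, so $x \in X_j$ for some index $j$, and the $j$-th conjunct of the antecedent yields $P(x)$. Since $x \in S$ was arbitrary, $\forall x\in S,\, P(x)$ follows, establishing the implication and hence the refinement.

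There is no serious obstacle here; the proof is routine once the definitions are unfolded. The only points requiring care are bookkeeping rather than mathematics: correctly reading a predicative goal $\langle X_i, \mathsf{Forall}_P\rangle$ as the asserted proposition $\forall x \in X_i,\, P(x)$, and using the completeness inclusion in the correct direction — it is $S$ that must be covered by $\bigcup_i X_i$, so that every witness $x \in S$ is guaranteed to land in some $X_j$. Note that nothing is needed about the $X_i$ individually being subsets of $S$, nor about the family forming a partition; mere covering suffices.
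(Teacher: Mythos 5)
Your proof is correct and follows essentially the same route as the paper's: fix $S$ and the family $\mathcal{F}$, assume the covering condition $S \subseteq \bigcup_i X_i$ together with all subgoals, then chase an arbitrary $x \in S$ into some $X_j$ to obtain $P(x)$ and generalize. The only difference is presentational — you unfold the definitions of validity and refinement more explicitly before the element chase, which the paper leaves implicit.
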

\begin{proof}
    \nc{Fix a universe $U$, a predicate $P$ over $U$, a set $S \subseteq U$, and a family $\mathcal{F} = \{X_1\ellipses X_n\}$ of subsets of $U$. \weakNew{Suppose we want to decompose the goal $\langle S, \mathsf{Forall}_P\rangle$ using the domain decomposition template instantiated with $\mathcal{F}$.} \weakChange{The domain decomposition induced by \typofix{$F$}{$\mathcal{F}$}}{The resulting decomposition} is given as ${\tt Decomp}(g,st, \{g_1\ellipses g_n\})$, where each $g_i$ \weakChange{$\equiv$}{asserts} $\forall x \in X_i, P(x)$. Assume the correctness criterion holds, i.e., $S \subseteq \bigcup_i X_i$, and assume that each $g_i$ is true. Given an arbitrary $x \in S$, there must be at least one $X_k$ for which $x \in X_k$, by the correctness criterion. Fixing this $X_k$, we can specialize $g_k$ with $x$ to obtain $P(x)$. By the generalization of $x$, we have $\forall x \in S, P(x)$, which is the parent goal $g$. Thus, $\{g_1,...g_n\} \prec g$, and so the template is valid.}
\end{proof}

\weakNew{We can also define an alternative template which can be applied to parent goals of the form $\mathsf{Forall}_P(S)$ when the set $S$ happens to be finite.}

\begin{example}[Enumeration Template] \label{ex:enumDecomp}
    Fix a universe $U$ and predicate $P$ over $U$\weakNew{, and suppose (as with domain decomposition) we wish to support a goal of the form $\mathsf{Forall}_P(S)$ for some finite set $S \subseteq U$}. Define the \emph{enumeration template} $\ntn{\template{\mathbb{P}(U)}{\mathbb{P}(U)}} = \langle {\ntn{\mathsf{Forall}}_{P}}, I, C \rangle $, where \weakDelete{${\ntn{\mathsf{Forall}}_{P}}$ is the same as in the domain decomposition template (Example~\ref{ex:domainDecomp2}), and} \typofix{$I : {2^U} \to [\AC]^*$}{$I : \ntn{\mathbb{P}(U)} \to \mathbb{P}(\Goal)$} is defined (for finite sets) as 
    $I(\{x_1,...x_n\}) = \{\langle x_1, {P} \rangle, ... , \langle x_n, P \rangle\}$
    \weakChange{With the constraint that the set $S$ {used to form the parent goal of the template} is the same set provided to the instantiation function.}{and $C : \mathbb{P}(U) \times \mathbb{P}(U) \to \{\top,\bot\}$ is defined as $C(T,S) = ``S = T"$. That is, {correctness merely requires} that the set $S$ which is the subject of the {parent} goal being decomposed is the same set $T$ provided to $I$. Intuitively,} \weakDelete{In other words,} the argument \weakDelete{simply} demonstrates that $P$ holds for every element of $S$ \weakNew{simply} by showing that $P$ holds for each element of $S$ individually.
\end{example}
A simplified version of the proof of Prop.~\ref{prop:domaindecompPrf} gives:
\begin{proposition}
    The enumeration template is \weakChange{correct-by-construction.}{valid.}
\end{proposition}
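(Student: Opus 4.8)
The plan is to reuse the structure of the proof of Prop.~\ref{prop:domaindecompPrf}, but to exploit that the enumeration template's correctness criterion $C(T,S) = $ ``$S = T$'' forces the enumerated subgoals to range over exactly the elements of the decomposed set. First I would fix a universe $U$, a predicate $P$ over $U$, a finite set $S \subseteq U$, and instantiation data $T \in \mathbb{P}(U)$, and assume we are decomposing $\langle S, \mathsf{Forall}_P\rangle$ via the template instantiated with $T$. I would then invoke the correctness criterion, which gives $S = T$, and write $T = \{x_1 \ellipses x_n\}$, so that by the definition of $I$ the decomposition is ${\tt Decomp}(\langle S,\mathsf{Forall}_P\rangle, st, \{g_1 \ellipses g_n\})$ with each $g_i = \langle x_i, P\rangle$ asserting $P(x_i)$.

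By Def.~\ref{def:goalRefinement}, establishing validity reduces to showing $\left(\bigwedge_{i=1}^n g_i\right) \implies \mathsf{Forall}_P(S)$. So the key step is: assuming $P(x_i)$ holds for every $i \leq n$, derive $\forall x \in S, P(x)$. I would take an arbitrary $x \in S$; since $S = T = \{x_1 \ellipses x_n\}$, we have $x = x_k$ for some $k$, whence $P(x)$ holds by assumption. Generalizing over $x$ gives $\forall x \in S, P(x)$, i.e., the parent goal, so $\{g_1 \ellipses g_n\} \prec \langle S, \mathsf{Forall}_P\rangle$ and the template is valid.

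I expect essentially no obstacle beyond bookkeeping. The one step that required genuine work in the domain-decomposition proof — specializing a universally quantified subgoal ``$\forall x \in X_k, P(x)$'' at the chosen witness — disappears here, because each enumeration subgoal is already the atomic proposition $P(x_i)$, and the equation $S = T$ makes every element of $S$ literally one of the $x_i$. The only subtlety worth flagging is that $I$ is defined only on finite sets, so I would note that the argument presupposes $S$ (hence $T$) is finite, which is exactly the standing hypothesis under which the enumeration template is meant to be applied.
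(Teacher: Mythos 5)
Your proof is correct and is essentially the paper's own argument: the paper dispatches this proposition with the remark that ``a simplified version of the proof of Prop.~\ref{prop:domaindecompPrf}'' suffices, and your write-up is precisely that simplification, replacing the covering condition $S \subseteq \bigcup_i X_i$ with the equality $S = T$ supplied by the correctness criterion and replacing the specialization of a quantified subgoal with the atomic subgoals $P(x_i)$. Your observations about the vanishing specialization step and the standing finiteness hypothesis are consistent with the paper's treatment, so nothing further is needed.
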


\weakDelete{Not every formal template may require a correctness criterion $C$.  Some templates may always produce sound goal refinements, without needing to verify any relation between the model in the parent goal and the instantiation data. We say that such templates are \emph{correct-by-construction}. We can define {such} a template simply in terms of parent predicate $P$ and instantiation function $I$, as long as we prove that it is correct-by-construction. \weakDelete{We give two such examples.}}

\weakDelete{\emph{Example (Transitivity Template)} Recall the argument over the transitivity of $\token{precedes}$ formalized in Example~\ref{ex:precRefinement}. Having fixed events $a$ and $b$, this argument can be abstracted into a formal template $T_{a,b}$ over LTSs as follows. Naturally, the parent predicate is $P_{a,b}(M) = $ \typofix{}{ }``$M \vDash \token{precedes}(a,b)$''. The instantiation of the template requires specifying the third event $c$, and produces the subgoals $\{g_1,g_2\}$, where
    $g_1 = \langle M, P_{a,c} \rangle$ and $g_2 = \langle M,P_{c,b}\rangle$. This template is correct-by-construction, since the refinement $\{g_1,g_2\} \prec \langle M,P_{a,b} \rangle $ does not depend on the choice of the intermediate event $c$, nor any actual property of the model $M$; it is sound simply because  
    $\token{precedes}(a,c) \land \token{precedes}(c,b)$ $ \implies \token{precedes}(a,b)$
    is an LTL-tautology.
}

\subsection{Analytic Argumentation}
\label{sec:analyticTemplates}

Templates form an important component of rigorous AC development, since they enable verification of goal refinement. But another essential component is to actually \emph{analyze} the system under assurance -- to determine facts about it, and to obtain assurance evidence. Rather than viewing argumentation and analysis as disjoint activities, we propose \emph{analytic} argumentation in AC development, wherein the semantics of analyses used in producing assurance are articulated explicitly in the AC. This serves two purposes. First, by making explicit  \emph{how} and \emph{why} a given analysis is used, the AC itself becomes more rigorous. Second, when we eventually move to \emph{lifting} AC development to product lines, the capability for lifted analyses to identify variation points in the system can be leveraged to automatically embed variability into argumentation.

{To this end, we formalize a special class of argument templates, which we refer to as \emph{analytic templates}, each of which is defined with respect to a fixed analysis $f$. In this work, we consider analyses to be functions of the form\change[analysisSig]{$f : \modlang \times \mathcal{L} \to Y$}{$f : X \to Y$}, where\change[analysisExpl]{$\modlang$ is some modeling language and $\mathcal{L}$ is some specification language (e.g., a property to verify in the model)}{$X$ is the set of possible inputs to the analysis (its domain) and $Y$ is the set of possible outputs of the analysis (its codomain)}.
Intuitively, an analytic argument makes assertions about the \change[specClar]{analysis specification}{input to the analysis}, the output of the analysis, and \weakChange{$f$}{the analysis} itself.

\begin{definition}[Analytic Template]\label{def:analyticTemplate}
Let ${\ntn{f : X \to Y}}$ be some analysis \change[analDef]{ over models in $\modlang$ and specifications in $\mathcal{L}$}{with domain $X$ and codomain $Y$, and let $\modlang$ be some modeling language}. An \emph{analytic template} over $f$ is a template ${\ntn{\template{\modlang}{X}}} = \langle P, I,C\rangle$, where
\begin{itemize}
    \item \weakChange{$P : \modlang \to \{\top,\bot\}$ is the parent predicate, and} {\weakNew{the parent predicate $P : \modlang \to \{\top,\bot\}$ and correctness criterion $C : X \times \modlang \to \{\top,\bot\}$ have the same meaning as in Def.~\ref{def:template}, and}}
    \item \weakNew{the instantiation function $I : X \to \mathbb{P}(\Goal)$ is defined for any $x \in X$ as $I(x) = \{g_X, g_{Y}, g_f\}$, where}
    \[\weakNew{g_X = \langle x, P_X\rangle, \qquad g_{Y} = \langle f(x), P_Y\rangle,\qquad g_f = \forall x \in X, P_f(x,f(x))}\]
    \weakNew{for some predicates $P_X$ over $X$, $P_Y$ over $Y$, and $P_f$ over $X \times Y$.}    
   
   \weakDelete{$P_{spec} : \mathcal{L} \to \{\top,\bot\}$ is called the {\emph{specification}} predicate,}

    \weakDelete{$P_{out} : Y \to \{\top,\bot\}$ is called the \emph{output} predicate, and}

    \weakDelete{$P_f : \modlang \times \mathcal{L} \to \{\top,\bot\}$ is called the \emph{functional specification} of the analysis.}
\end{itemize}
\weakChange{Intuitively, $P$ formalizes the property of the model we want to demonstrate via $f$; $P_{spec}$ formalizes the intended semantics of the analysis specification; $P_{out}$ formalizes some assertion we want to make about the outcome of the analysis; and $P_f$ formalizes what the analysis actually does. Engineers need to determine appropriate choices for $P_{spec}$, $P_{out}${,} and $P_f$ based on the given $f$ and $P$.}{Generally speaking,  to support the parent goal $\langle M,P\rangle$ for some model $M \in \modlang$, $P_X(x)$ should assert that $x$ is an appropriate input to the analysis, $P_Y(f(x))$ should assert some useful fact about the analysis result, and $P_f$ should assert that the analysis itself is ``correct''. The precise definitions of $P_X$, $P_Y$, and $P_f$ depend naturally on the parent predicate $P$ and the semantics of the analysis $f$.}

\weakDelete{The \emph{instantiation} of an analytic template is as follows: given model $M$ to form the parent goal and a specification  $\psi \in \mathcal{L}$, the instantiation function $I_f(M,\psi)$ produces subgoals $\{g_{spec},g_{out},g_f\}$, where}

\weakDelete{$g_{spec} = \langle \psi, P_{spec}\rangle \quad g_{out} = \langle f(M,\psi), P_{out}\rangle \quad g_f = \forall ~ M \in \modlang, \forall \psi \in \mathcal{L}, P_f(M,\psi)$}

\weakDelete{We refer to $g_{spec}$, $g_{out}$, and $g_{f}$ as the \emph{specification}, \emph{output}, and \emph{functional} subgoals, respectively.}
\end{definition}

\weakDelete{Note that the functional subgoal $g_f$ is propositional, describing the functional correctness of the analysis $f$ over \emph{any} choice of $M$ and $\psi$. This means that, once this goal is supported, this assurance about $f$ can potentially be reused over multiple analytic templates relying on $f$.}

\begin{example}\label{ex:mc_template} 
    \nc{\weakNew{Consider a hypothetical LTL model checker $V$, which can be modeled as an analysis $V : {\tt LTS} \times {\tt LTL}  \to \token{Result}$, with any $r \in \token{Result}$ being either a successful verification result $(ok)$ or a counterexample of the property being verified. To simplify presentation, let $X$ denote ${\token{LTS} \times \token{LTL}}$ and let $Y$ denote ${\token{Result}}$.} Consider the GSN fragment shown in Fig.~\ref{fig:gsn-example}. \weakDelete{The argument over model checking represented by strategy $st$ can be abstracted as an analytic template over the model checker $V : {\tt LTS} \times {\tt LTL} \to \{\top\} \cup Cex$, where $\top$ indicates a successful verification, and $Cex$ denotes the set of potential counterexamples the model checker could return.} The proposition in parent goal \textbf{G1} can be abstracted formally\footnote{In fact, it is not strictly necessary for the predicate to be abstracted formally -- one could simply define $P(M)$ in terms of the natural language expression given in ${\ntn{\textbf{G1}}}$ of Fig.~\ref{fig:gsn-example}. Of course, if the abstraction is left informal, then we cannot have the same degree of assurance in the \weakChange{specification subgoal}{soundness of the argument}. The question of whether or not to formalize a predicate is left to the discretion of the assurance engineer.} as a predicate ${\ntn{\mathsf{AlarmResponse}}}$ over ${\tt LTS}$ with} \begin{align*}
    {\ntn{\mathsf{Alarm}}}&{\ntn{\mathsf{Response}}}(M) =\\[0.2em] \textrm{``}\forall \sigma \in Exec(M),\: \forall i \in \mathbb{N},\: \sigma[i] \vDash{\tt Alarm} &\implies \exists\: j \in \mathbb{N}.\: j > i \land \sigma[{\ntn{j}}] \vDash \textrm{\typofix{{\tt State}~}{\tt Safe}}\textrm{"},
    \end{align*}

    \nc{where $Exec(M)$ denotes the set of all possible executions of model $M$. \weakChange{The assertion in $g_2$ that the specification used for model checking correctly formalizes the parent goal can be defined in terms of a predicate $P_{\token{LTS} \times \token{LTL}}$}{The assertion in ${\ntn{\textbf{G2}}}$ that the specification used for model checking correctly formalizes the parent goal can be abstracted as a predicate $P_X$, such that for any LTS $M$ and LTL specification $\psi$, the language of traces satisfying $\psi$ is contained within the language of ``safe'' traces described by the predicate $\mathsf{AlarmSafe}$. Formally:  }} 
    \begin{align*}
     &\weakNew{{P_X(M,\psi)}= ``\ntn{{Words}(\psi) \subseteq \{\sigma \mid \forall i \in \mathbb{N},\: \sigma[i] \vDash{\tt Alarm}} \ntn{\implies \exists\: j \in \mathbb{N}.\: j > i \land \sigma[j] \vDash {\tt Safe}\}}",}
    \end{align*}
    
    \weakDelete{${P_{spec}(\psi) = \textrm{``}{ Words}(\psi) \subseteq \{\sigma \mid \forall i \in \mathbb{N},\: \sigma[i] \vDash{\tt Safe} \implies \exists\: j \in \mathbb{N}.\: j > i \land \sigma[i+j] \vDash {\tt Safe}\}\textrm{"},}$}

    \nc{where $Words(\psi)$ denotes the set of all traces satisfying $\psi$. \weakNew{Note that in this \emph{particular} case, the model $M$ to which $P_X$ is applied is not constrained -- only the specification $\psi$.}     \weakChange{The output predicate $P_{out} : \{\top\} \cup Cex \to \{\top,\bot\}$ simply asserts}{The goal $\textbf{G3}$ asserting that the verification was successful can similarly be abstracted as a predicate} \weakDelete{ $\textrm{\typofix{$P(r)$}{$P_{out}(r)$}} =$} \weakNew{$P_Y(r) = ``r=ok$}'' \weakDelete{i.e., that $r$ is a successful verification result}. Finally, the \weakChange{functional specification predicate $P_V$ asserts the soundness of the model checker,}{goal $\textbf{G4}$ asserting the soundness of the model checker can be formalized in terms of a predicate $P_V$ over $X \times Y$}, i.e.,} 
    $$P_V(M,\psi,r) = \textrm{``}r= ok \implies Exec(M)\subseteq Words(\psi) {"}$$
    \nc{That is, the model checker succeeds only if every execution of $M$ satisfies specification $\psi$.}

    \weakNew{Finally, the correctness criterion $C$ for this template simply requires that the LTS model $M$ which is the subject of the parent goal is the same as the model $M'$ which is used as input to the verifier, i.e., $C(M',\psi,M) = ``M' = M"$. Using these components, we can construct an analytic template for model checking $\template{\token{LTS}}{X}$ according to Def.~\ref{def:analyticTemplate}.}
\end{example}
\nc{Note that the template defined above cannot be applied to \emph{any} model checking scenario; it is fixed to support the particular predicate \weakChange{$P$ describing the response pattern between alarm and safe states. However, it is straightforward to generalize the construction. For instance, we could abstract the template over the response pattern, such that the parent predicate asserts that whenever a state satisfying $X$ is reached, a state satisfying $Y$ is eventually reached, and instantiating the template requires specifying values for $X$ and $Y$.}{$\mathsf{AlarmResponse}$.}}

\begin{proposition}
    The analytic template over LTL model checking described in Example~\ref{ex:mc_template} is \weakChange{correct-by-construction}{valid}.
\end{proposition}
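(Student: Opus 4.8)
The plan is to unfold the definitions of validity, goal refinement, and the three subgoals produced by the template, and then discharge the required implication by a short chain of set inclusions. By Def.~\ref{def:template}, validity demands that for every input $x \in X$ and every model $M \in \token{LTS}$, whenever the correctness criterion $C(x,M)$ holds we have $I(x) \prec \langle M, \mathsf{AlarmResponse}\rangle$. So I would fix an arbitrary pair $x = \langle M', \psi\rangle \in X$ and an arbitrary $M \in \token{LTS}$, assume $C(x,M)$ — which here states exactly that $M' = M$ — and thereby reduce the goal to showing $I(\langle M,\psi\rangle) \prec \langle M, \mathsf{AlarmResponse}\rangle$.

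By Def.~\ref{def:goalRefinement}, this reduces to proving that the conjunction of the propositions asserted by the three subgoals implies $\mathsf{AlarmResponse}(M)$. Unfolding $I$, the subgoals assert respectively: (i) $P_X(M,\psi)$, i.e. $Words(\psi)$ is contained in the set of ``safe'' traces; (ii) $P_Y(f(M,\psi))$, i.e. the verifier returns $ok$; and (iii) the functional goal $g_f$, i.e. that $f$ is sound on every input, in particular $f(M,\psi)=ok \implies Exec(M) \subseteq Words(\psi)$. Assuming all three, I would first instantiate (iii) at the concrete pair $\langle M,\psi\rangle$ and discharge its hypothesis using (ii), obtaining $Exec(M) \subseteq Words(\psi)$; then I would chain this with (i) by transitivity of $\subseteq$ to conclude that $Exec(M)$ is contained in the safe-trace set. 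Unwinding the definition of that set then gives precisely that every execution of $M$ satisfies the alarm-response pattern, which is the statement $\mathsf{AlarmResponse}(M)$.

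I do not expect a serious obstacle: the argument is a two-step inclusion chain following directly from the definitions. The only points requiring care are bookkeeping ones. First, the role of the correctness criterion must be made explicit — $P_X$ constrains only the specification $\psi$, so without the assumption $M' = M$ the functional subgoal would yield an inclusion about $Exec(M')$ rather than $Exec(M)$, and the refinement would fail; the criterion $C$ is exactly what transports the conclusion to the model named in the parent goal. Second, I would note that the three trace sets $Exec(M)$, $Words(\psi)$, and the safe-trace set range over a common trace universe so that the inclusions compose, and that the body of the safe-trace set in $P_X$ coincides verbatim with the matrix of $\mathsf{AlarmResponse}$ up to the outer quantification over $\sigma \in Exec(M)$.
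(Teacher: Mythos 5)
Your proposal is correct and follows essentially the same route as the paper's proof: assume the three subgoals, use the output goal ($V(M,\psi)=ok$) to discharge the hypothesis of the functional/soundness goal, obtain $Exec(M) \subseteq Words(\psi)$, chain with the specification goal's inclusion into the safe-trace set, and conclude $\mathsf{AlarmResponse}(M)$. Your explicit treatment of the correctness criterion $C(x,M) = ``M'=M"$ is slightly more formal than the paper's phrasing (which simply instantiates the template with the same model $M$ named in the parent goal), but the substance is identical.
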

\begin{proof}
    \weakChange{Fix $M \in {\tt LTS}$, $\psi \in {\tt LTL}$ and predicate $P$ as in Example~\ref{ex:mc_template}, and let $g_{spec}$, $g_{out}$, {and}  $g_{V}$ be the subgoals produced by the instantiation of this template with $M$ and $\psi$; assume each of them hold. By $g_{out}$, we have $V(M,\psi) = \top$. By $g_V$, we can infer from this $Exec(M) \subseteq Words(\psi)$. From this and $g_{spec}$ we have $P(M)$.}{ Let $X,Y, P_X, P_Y, P_V$ and $\template{\token{LTS}}{X}$ be defined as in Example~\ref{ex:mc_template}. 
    Suppose we are using the template $\template{\token{LTS}}{X}$ to support the goal $\langle M, \mathsf{AlarmResponse}\rangle$ for some LTS model $M$. To satisfy the correctness criterion, we instantiate the template by providing $M$ as the input to the model checker alongside an LTL specification $\psi$. Per Def.~\ref{def:analyticTemplate}, we obtain the subgoals $\{g_X, g_Y, g_V\}$ such that 
    \begin{align*}
        g_X &= \langle \langle M,\psi\rangle,  P_X\rangle\\  g_Y &= \langle V(M,\psi), P_Y\rangle\\ g_V &= \forall x  \in X, P_V(x, V(x)) 
    \end{align*}
    Suppose that all three of these subgoals hold. From $g_Y$, we have $V(M,\psi) = ok$. By $g_V$, we can then conclude that $Exec(M) \subseteq {Words}(\psi)$. From $g_X,$ we have $Words(\psi) \subseteq \{\sigma \mid \forall i \in \mathbb{N},\: \sigma[i] \vDash{\tt Alarm} \implies \exists\: j \in \mathbb{N}.\: j > i \land \sigma[i+j] \vDash {\tt Safe}\}$. The parent goal ${\mathsf{AlarmResponse}}(M)$ then immediately follows.
        }
\end{proof}

\weakDelete{Note that Def.~\ref{def:analyticTemplate} does not specify a correctness criterion for analytic templates. For the sake of simplicity, in what follows we assume that for all analytic templates $T = \langle P,P_{spec}, P_{out}, P_f\rangle $, the four predicates are chosen such that $T$ is correct-by-construction, i.e., it produces a sound argument for each model $M$ and specification $\psi$. It is straightforward to generalize our results to account for arguments which require some precondition on the analysis input.}

\nc{The instance of analytic argumentation shown in Fig.~\ref{fig:gsn-example}, and formalized in the preceding example, serves to produce \emph{evidence} for the AC (in this case, model checking evidence). However, we can also consider analytic templates whose primary purpose is to drive \emph{argumentation}. Consider a scenario in which we want to verify that some property $P$ holds for every element ${\ntn{e}}$ of model $M$ which satisfies some criterion $C$ ({\ntn{in which case we write $e \vDash C$}}). We would need to do two things: first, compute the set $S = \{e \in \ntn{Elem}(M) \mid e \vDash C\}$, and second, show that $P$ holds for each $x \in S$. In an AC, the former can be achieved through an analytic template, e.g., by running a model query which collects elements satisfying $C$, and the latter through argumentation, e.g., domain decomposition. We can refer to this kind of analytic argumentation as being \emph{argument-producing}, since the output subgoal ${\ntn{g_Y}}$ {\ntn{produced by the model query}} would be supported by further argumentation (in this case, domain decomposition). By contrast, we can refer to the form of analytic argumentation described in Example~\ref{ex:mc_template} as being \emph{evidence-producing}, since the\weakDelete{output} subgoal ${\ntn{g_Y}}$ is supported directly by evidence \weakChange{as given by the analysis output}{(namely, the verification result itself)}.}

\nc{\weakChange{To make this illustration more concrete}{To illustrate this distinction}, we formalize the following \emph{querying template}:}
\begin{example}[Querying Template]\label{ex:queryTemplate}
    \nc{Fix an LTS $M$ which has multiple states which are labeled as ``alarm states'', meaning that when any of these states is reached, some alarm is \weakChange{issued}{raised}. Let ${\tt Alarms}(M)$ denote the set of all such alarm states. \weakChange{Furthermore, f}{F}or any state $s$ and execution $\sigma \in Exec(M)$, let \weakChange{${\tt FinallySafe}$}{$\mathsf{AlarmStateResponse}$}$(s,\sigma) = $``$\forall i \in \mathbb{N},\sigma[i] = s \implies \exists j \in \mathbb{N}, \sigma[i+j] \vDash {\tt Safe}$''\weakDelete{, where ${\tt Safe}$ is the label of a ``safe'' state}. \weakNew{Finally, we define a \emph{query result} to be a subset of $Elem(M)$ where $M$ is the model being queried. Denote the set of all possible \emph{query results} as $\mathsf{QR}$, where $M$ is inferred from the context.}} 

    \nc{Fix a \emph{query engine} \weakChange{$Q : \modlang \times Query \to [\ntn{Elem}(M)]^*$}{$Q : \token{LTS} \times \token{Query} \to \mathsf{QR}$}, i.e., a function which takes \weakNew{any LTS} model $M$ and query ${\ntn{q}} \in \token{Query}$ and returns  \weakNew{the set of elements of $M$ which} satisfy ${\ntn{q}}$. \weakNew{For simplicity, let $X$ denote $\token{LTS} \times \token{Query}$ and let ${Y}$ denote $\mathsf{QR}$.}
    With these components, we define a \emph{querying template} ${\ntn{\template{\token{LTS}}{X}}} = \langle \mathsf{AllAlarmsSafe}, I, C\rangle $\weakChange{$ = \langle P, P_{spec}, P_{out}, P_Q \rangle$}  ~as follows:} 
    \begin{itemize}
        \item \weakChange{$P(M)$}{$\mathsf{AllAlarmsSafe}(M)$} $= \textrm{``}\forall s \in {\tt Alarms}(M), \forall \sigma \in Exec(M), {\sf {\ntn{AlarmStateResponse}}}(s,\sigma)\textrm{''}$
        \item\typofix{$P_{sepc}$}{} \weakNew{The instantiation function $I$ uses the following predicates:}
        \begin{itemize}
            \item {\ntn{{$P_X$}$(M,q) = \textrm{``}\{e \in \ntn{Elem}(M) \mid e \vDash q\} = {\tt Alarms}(M)\textrm{''}$}}
            \item {\ntn{$P_Y(S) = \textrm{``}\forall s \in S, \:\forall \sigma \in Exec(M), \mathsf{AlarmStateResponse}(s,\sigma)\textrm{''}$}}
            \item {\ntn{$P_Q(M,q,S) = \textrm{``}S = \{e \in \ntn{Elem}(M) \mid e \vDash q\}\textrm{''}$}}
        \item {\ntn{The correctness criterion $C$ over $X \times Y$} is defined as $C(M',q,M) = ``M' = M"$}
        \end{itemize}
        
    \end{itemize}
    \nc{Intuitively, this template simply supports the proposition \weakChange{$P(M)$}{$\mathsf{AllAlarmsSafe}{(M)}$} by executing the query on the model, producing a concrete set of model elements, and using this to replace the abstract set ${\tt Alarms}({\ntn{M}})$ in the parent goal.} \weakNew{As in the model checking template, the correctness criterion for the querying template merely requires that the model which is the subject of the parent goal is the same model provided to the query engine.}
\end{example}

\begin{example}\label{ex:fullRE}
    \begin{figure}[t]
        \centering
        \includegraphics[width=\linewidth]{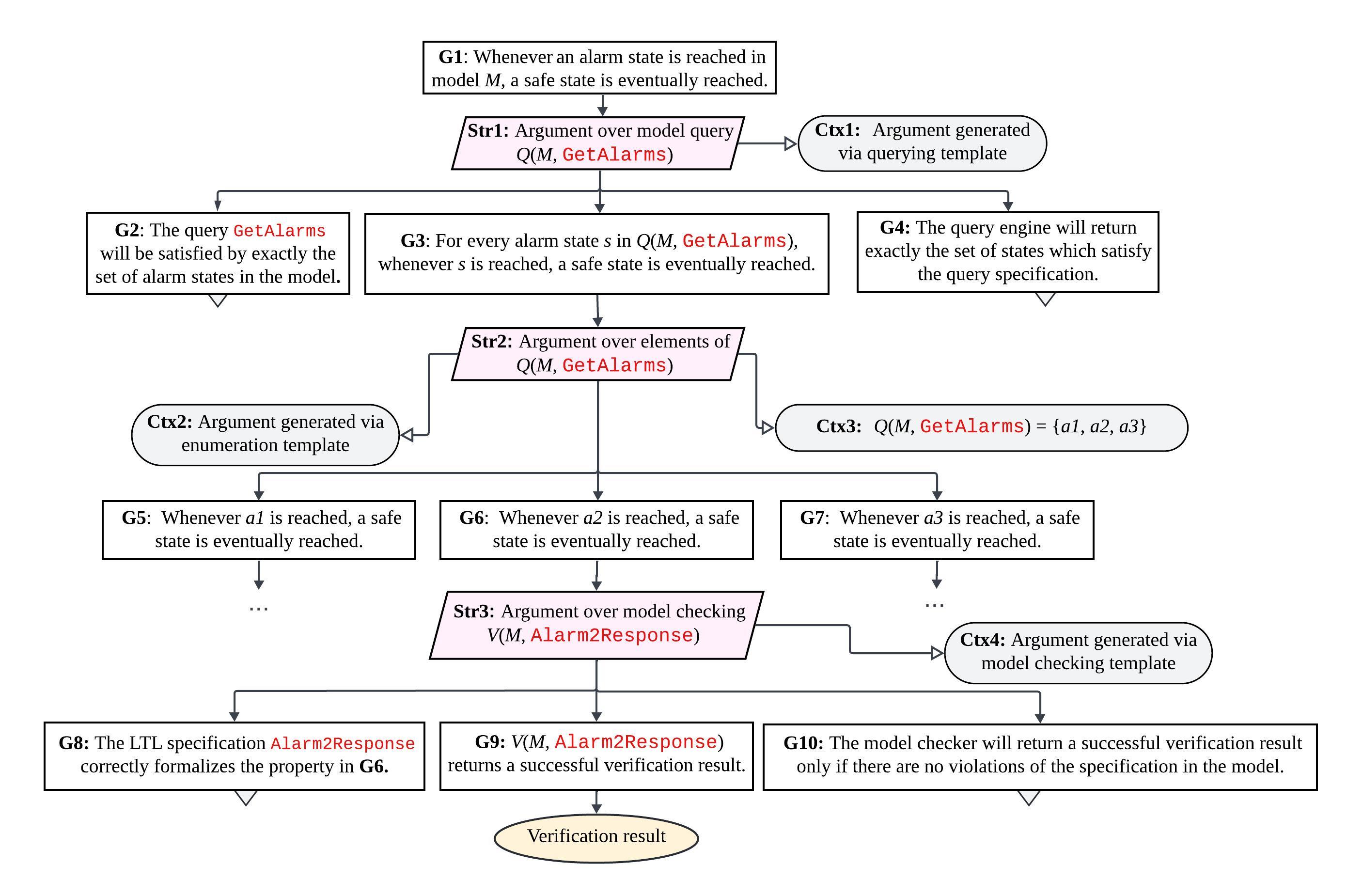}
        \caption{\bf The AC described in Example~\ref{ex:fullRE} showcasing the use of both argument-producing and evidence-producing analytic templates, connected via an enumeration template.}
        \label{fig:running_example_full_1}
        \Description[]{}
    \end{figure}
    \nc{The AC shown in Fig.~\ref{fig:running_example_full_1} demonstrates both the evidence-producing and argument-producing forms of analytic argumentation. We begin with an instance of the querying template given in {\ntn{Example}}.~\ref{ex:queryTemplate}, where ${\ntn{{\tt GetAlarms}}}$ is the query specification intended to match alarm states. \weakDelete{As in Example~\ref{ex:gsn}, the specification and functional subgoals are supported by expert review.} \weakChange{The output subgoal $g_{out}$}{The subgoal \textbf{G3} of \textbf{Str1} (corresponding to $g_Y$ of the querying template)} is further decomposed via the enumeration template, applied to the set of query results $Q(M,{\tt GetAlarms})$. In this example, there are three alarm states $\{a_1,a_2,a_3\}$, each \weakChange{with their own subgoal $g_1,g_2,g_3$  asserting the same property as was described in the root goal $g$ for their respective alarm}{of which induces its own subgoal via the enumeration template. In this way, we say that the use of the querying template is \emph{argument}-producing.} \weakChange{In this illustration,}{For brevity,} we only show the \weakChange{completion of the AC for}{decomposition of} the subgoal $\textbf{G6}$, \weakNew{asserting that the response property is satisfied with respect to alarm $a_2$.} \weakChange{which is given through the model checking template with LTL specification ${\tt alarm\_spec\_2} = {\tt G}(a_2 \implies {\tt F}({\tt Safe}))$.}{This goal is decomposed via the model checking template (Example~\ref{ex:mc_template}), given an LTL specification ${\tt Alarm2Response}$ which is specific to the alarm state $a_2$. This use of the model checking template is \emph{evidence}-producing, since \textbf{G9} (corresponding to $g_Y$ in the model checking template) is supported directly by the verification result itself. Following the use of the three templates to drive the argumentation, the AC developer is left to justify the query specification, LTL specification(s), the query engine, and the model checker (cf. goals \textbf{G2}, \textbf{G4}, \textbf{G8} and \textbf{G10}). Of course, we would expect that support for \textbf{G4} and \textbf{G10} need only be produced \emph{once}, and can then be re-used for each instantiation of their respective templates. By contrast, goals $\textbf{G2}$ and $\textbf{G8}$ are dependent on the query and LTL specification used in these particular instantiations of their respective templates.}}
\end{example}

\nc{In this section, we formalized a language of GSN-like assurance cases, and studied their development through \weakDelete{rigorous} formal templates. We also formalized the integration of software analyses as part of assurance argumentation through the concept of analytic templates. Our aim in the next section will be to \emph{lift} this framework from the level of software products to software product lines.}
\section{Lifting Assurance Case Development for Product Line Models}
\label{sec:lifting}

In the preceding section, we defined a language of assurance cases in the style of GSN, and formalized the development of ACs through the use of templates, including analytic templates. In this section, we aim to \emph{lift} this view of AC development to the level of product lines. \weakChange{We begin by defining}{After summarizing the differences between product-based and lifted AC development (Sec.~\ref{sec:lifting:overview}), we formalize} a suitable language for ``product lines of ACs'' and its  semantics (Sec.~\ref{sec:plACs}). Subsequently, we lift the notions of AC templates (Sec.~\ref{sec:liftTemplates}) and analytic templates (Sec.~\ref{sec:liftAnalytic}) to target this variability-aware language. As in the preceding section, all definitions and theorems, unless noted otherwise, are adapted from ~\cite{ifm2024}. Proofs of selected theorems not given here are provided in Appendix~\ref{app:proofs:sec4}.

\subsection{Overview of Lifted AC Development}
\label{sec:lifting:overview}

\begin{figure}[t]
    \centering
    \includegraphics[width=0.9\textwidth]{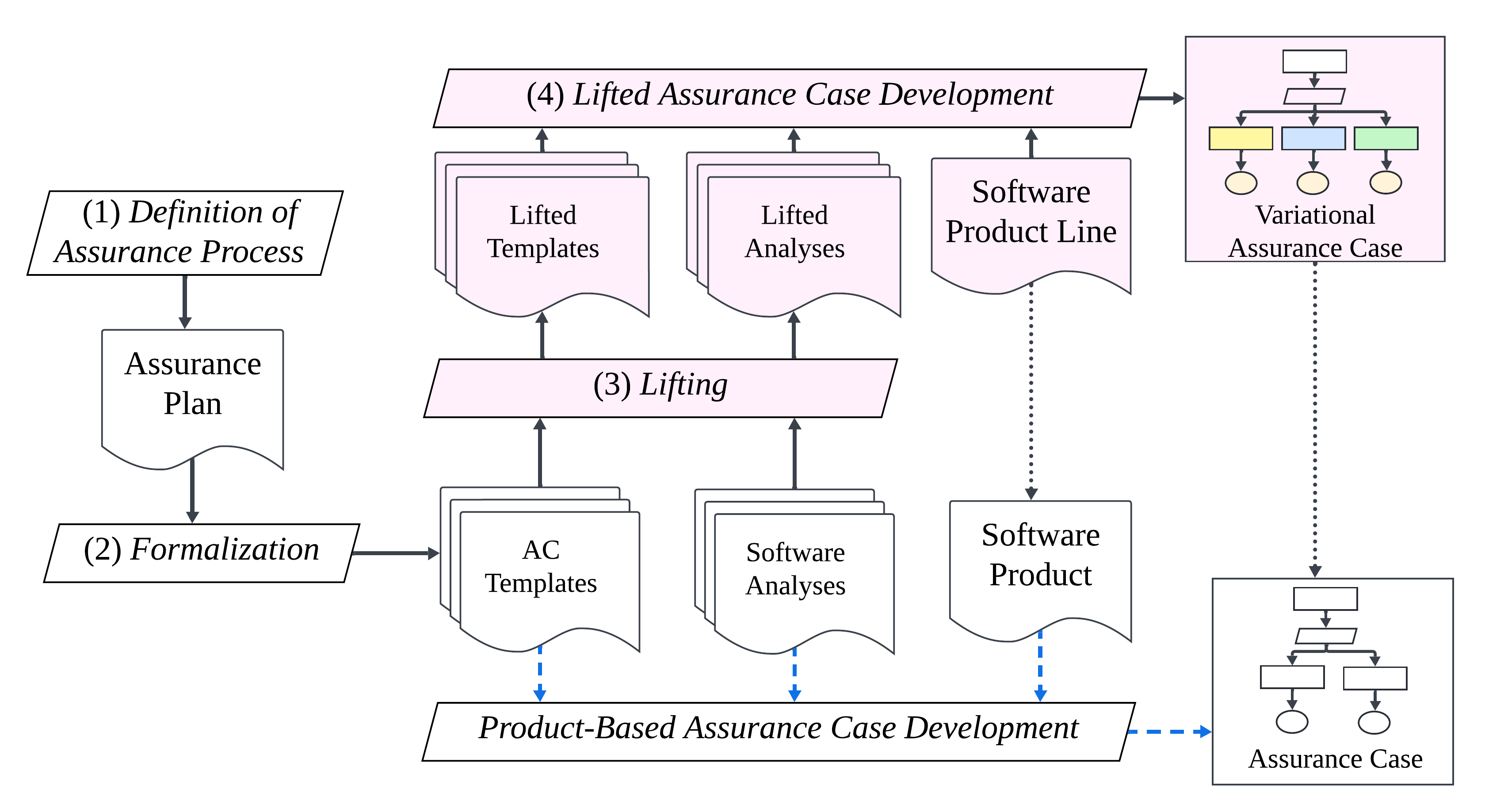}
    \caption{\bf  Lifting AC development from products to product lines.  Product-based activities and artifacts shown in \weakChange{blue}{white}, variational activities and artifacts shown in pink. Core activities for supporting lifted AC development are numbered 1-4. Blue dashed arrows indicate the traditional product-based AC development process; dotted arrows indicate product derivation.}
    \label{fig:workflows}
    \Description[]{}
\end{figure}
\nc{Fig.\typofix{}{~}\ref{fig:workflows} illustrates the necessary activities and artifacts required to support lifted AC development. The overall methodology consists of four key stages: the first two \weakChange{being instances of existing product-based assurance activities, the latter two being}{are shared with product-based AC development, while the latter two are}  specific to product line assurance.  The first stage is the \emph{definition of the assurance process} (Fig.~\ref{fig:workflows}-(1)). In this stage, a group of assurance stakeholders (e.g., safety engineers, verification engineers, managers) collectively determine an \emph{assurance} plan prescribing how assurance for the system should be produced. For instance, the assurance plan may outline which system requirements can be supported by testing, and which should be supported by formal verification. The  assurance plan can be informed by multiple factors, such as regulatory or industrial safety standards, or the resources of the assurance team.}

\nc{Once the assurance plan has been defined, aspects of it can be \emph{formalized} as AC templates (Fig.~\ref{fig:workflows}-(2)). For instance, the prescribed uses of particular analyses in the assurance plan can be used as a basis for defining corresponding analytic templates (Sec.~\ref{sec:analyticTemplates}). Other AC templates, such as domain decomposition (Example~\ref{ex:domainDecomp2}), can be identified and formalized if necessary. If we were only interested in assuring an individual software product, we could begin AC development at this stage, combining the prescribed analyses and formalized templates to produce a rigorous AC. However, if we are interested in assuring a \emph{product line}, it is infeasible to produce an AC for each of its products using only product-based methods.}

\nc{Instead, we enter the \emph{lifting} stage (Fig.~\ref{fig:workflows}-(3)). Here, we lift both the prescribed analyses and the formal AC templates which would otherwise be used in AC development. As mentioned in Sec.~\ref{sec:intro}, the lifting of software analyses is a very well-studied area of SPLE~\cite{thum2014classification,MURPHY2025112280}. By contrast, to the best of our knowledge, the lifting of arbitrary AC templates has not been rigorously studied (prior to ~\cite{ifm2024}).\footnote{An AC development process given by Nešić et al.~\cite{nevsic2021product} for component-based product lines, is a form of variability-aware assurance case development. However, it does not constitute \emph{lifted} AC development in the strict, formal sense, as the outcome of the process is a single AC which covers every product of the product line, rather than a ``product line of ACs''. Of course, it is relatively straightforward to reinterpret this process as an equivalent lifted one.} One factor preventing the rigorous study of lifting for AC templates is the lack of an appropriate language for variability-aware ACs. As per Def.~\ref{def:lift}, lifting some process (whether an analysis or a template) requires defining a variability-aware extension of its input and output spaces. Viewing a template as a function which produces an assurance case (fragment) as its output, lifting a template requires first determining an appropriate\typofix{formalism for variability-aware for ACs to target}{representation for variational ACs}.}

\nc{Assuming that such a \weakChange{variational language of}{language of variational} ACs has been given, and that product-based AC templates can be soundly lifted, we can begin \emph{lifted AC development} for the product line under assurance (Fig.~\ref{fig:workflows}-(4)). The outcome of this process is a {variational AC} which reflects the assurance-relevant variation points in the product line \weakDelete{(represented visually by coloured goals)}. Once again, the correctness of lifted AC development is defined such that deriving a \emph{product} AC from this variational AC should yield the same result as applying product-based AC development for the same product of the product line.}

\nc{\weakChange{To summarize: in order }{In summary,} to support lifted AC development, we need to do two things\weakChange{. First,}{:} we need to define an appropriate formal language for variational ACs\weakChange{; and second,}{, and} we need to formalize the \weakChange{process of lifting}{lifting of} formal AC templates to target this variational language.}

\subsection{Formalizing a Language of Variational Assurance Cases}
\label{sec:plACs}
The semantics of traditional assurance cases are essentially\typofix{boolean}{Boolean}, with a goal being either supported or unsupported. The central idea underlying variational assurance cases is to generalize these semantics to the level of \emph{sets} -- that is, sets of products in the product line. 

\weakChange{The first step of defining a language of variational assurance cases is}{We begin by} defining \emph{variational goals}\weakDelete{, i.e., goals with variability}. \weakChange{This variability}{The variability inherent in an assurance goal} can be both \emph{structural} and \emph{semantic}. Structural variability refers to the fact that a goal may be relevant only for some products and not others. Semantic variability refers to the fact that the \emph{interpretation} of a goal depends on the choice of a feature configuration. As one would expect, we can implement structural variability via presence conditions; however, the formalization of semantic variability is a bit more subtle.

\change[varSemExpl]{A variational goal is, of course, a goal associated with a product line model. But it is worth considering what kind of \emph{predicate} we want to associate with the goal. If we want to create a variational goal for a product line $\pl{x} \in \Var{X}$, we might assume that we would choose a predicate $P : \Var{X} \to \{\top,\bot\}$. However, the interpretation of this goal as $P(\pl{x})$ would \emph{not} be variational; this interpretation remains {boolean}. In fact, we do \emph{not} want to associate variational goals with predicates over product lines, but over \emph{products}. Given a product line $\pl{x} \in \Var{X}$ and predicate $P$ over $X$, $P$ cannot be interpreted with respect to $\pl{x}$ directly; but only given a choice of configuration $\conf \subseteq F$. Put another way, the interpretation of $P$ relative to $\pl{x}$ is effectively a \emph{function}  $P_{\pl{x}} : 2^F \to \{\top,\bot\}$.}{Given that the interpretation of a ``traditional'' assurance goal is a Boolean proposition, in order to properly ``lift'' the interpretation of goals for product lines, we require a \emph{variational} goal to be interpreted as a  ``product line'' of propositions -- one for each product. In the traditional setting, we define a (predicative) goal in terms of the model $M \in \modlang$ which is the subject of the goal, and a predicate $P$ over $\modlang$. What if instead we want to define a variational goal whose subject is a product line model  $\pl{M} \in \Var{\modlang}$? One may propose to apply a predicate $P$ over $\Var{\modlang}$ to $\pl{M}$, but this would not achieve the desired interpretation, since $P(\pl{M})$ would still be an ordinary proposition, rather than a product line of propositions. In fact, the desired interpretation can be achieved by associating $\pl{M}$ with a predicate over $\modlang$, i.e., a predicate over \emph{product} models. The key observation is that such a predicate $P$ over $\modlang$ cannot be applied to $\pl{M} \in \Var{\modlang}$ \emph{directly}, but only once a particular configuration $\conf$ has been fixed. If we consider $\langle \pl{M}, P\rangle$ as constituting a variational goal $\pl{g}$, we may thus interpret $\pl{g}$ as a  \emph{function} from configurations to propositions. Alternatively, we may consider the interpretation of $\pl{g} = \langle \pl{M},P\rangle$ to be the set of propositions}
$$\{P(\pl{M}|_\conf) \mid \conf \in \llbracket \Phi \rrbracket\}$$
\weakNew{with $\Phi$ being the feature model of the product line. Of course, in practice, the variational goal may not be relevant for every valid configuration under $\Phi$, so such a goal should also be annotated by a presence condition $\phi$ indicating the sets of configurations for which it is relevant.}

\weakChange{Note, however,}{Finally, note} that we do not require that \emph{every} goal in a product line AC be strictly variational in this sense -- some goals may simply be \typofix{boolean}{Boolean} propositions, as was the case in Def.~\ref{def:goal}. For instance, if a goal does not refer to a product line at all, but refers to some other aspect of system development, there may be no need for semantic variability. For the sake of simplicity, we include such goals under the umbrella of ``variational goals'' as defined below. \newText[StructDef]{We emphasize that the following definition is structurally congruent to the definition of ``traditional'' assurance goals (Def.~\ref{def:goal}), with the essential differences being (1) the use of product line models in the place of product models, and (2) the association of every variational goal to a presence condition.}

\begin{definition}[Variational Goal]
 Fix a set of features $F$ and a feature model $\Phi$. Then a \emph{variational goal} is either
 \begin{enumerate}[label=\roman*)]
     \item a pair $\langle p, \phi\rangle$ where $p$ is a proposition in first-order logic and $\phi \in {\tt Prop}(\textrm{\typofix{$f$}{$F$}})$ \newText[pcref]{is the presence condition of the goal}, or
     \item a tuple $\langle \pl{M}, P, \phi\rangle$, where $\pl{M} = \langle F, \Phi, M, \ell\rangle$ is a product line of $\mathcal{M}$-models \weakNew{(i.e., $\pl{M} \in \Var{\modlang}$)}, $P$ is a predicate over $\mathcal{M}$, and $\phi \in {\tt Prop}(F)$ is the presence condition of the goal.
 \end{enumerate}
 As in Def.~\ref{def:goal}, we refer to the two types of goals as \emph{propositional} and \emph{predicative}, respectively. \newText[pcg]{Given any variational goal $\pl{g}$, we let $\token{pc}(\pl{g})$ denote the presence condition of $\pl{g}$.} 
\end{definition}

Note that a variational goal \weakChange{$\pl{g} = \langle \pl{M},P,\phi\rangle$}{$\pl{g}$} can be interpreted as a \emph{product line of goals} over the restricted feature model \weakChange{$\Phi \land \phi$}{$\Phi \land \token{pc}(\pl{g})$}, with the \typofix{requisite}{} derivation operator defined as follows:

\begin{definition}[Derivation of Variational Goals]\label{def:vgoal}
    Fix a set of features $F$ and a feature model $\Phi$. \weakChange{Given a variational goal $g \in \Var{\Goal}$ annotated by $\phi$ and a configuration {$\conf \in \llbracket \Phi \land \phi \rrbracket$}}{Given a variational goal $\pl{g}$ and configuration $\conf \in \llbracket \Phi \land \token{pc}(\pl{g}) \rrbracket$}, the \emph{derivation} of $\pl{g}|_\conf \in \Goal$ under $\conf$ is defined as 
    \begin{align*}
         \langle p, \phi\rangle |_\conf &= p\\
        \langle \pl{M}, P, \phi\rangle|_\conf &= \langle \pl{M}|_\conf, P\rangle
    \end{align*}
\end{definition}

Accordingly, we let $\Var{\Goal}$ denote the set of all possible variational goals, when $F$ and $\Phi$ can be inferred from the context.
We can now define a product line assurance case in a manner structurally analogous to traditional assurance cases \weakNew{(Def.~\ref{def:AC})}, but with goals replaced by variational goals.

\begin{definition}[Variational Assurance Case]\label{def:placs}
Fixing a set of features $F$, the language $\Var{\AC}$ of \emph{variational assurance cases} -- also called product line assurance cases -- over $F$ is generated by the following grammar:
\begin{align*}
    \Var{\AC} := &\;{\tt Und}(\pl{g}) \tag{$g \in \Var{\Goal}$}\\
    \mid &\;{\tt Evd}(\pl{g},e) \tag{$g \in \Var{\Goal}$, $e \in {\Evd}$} \\ 
    \mid &\;{\tt Decomp}(\pl{g},st, \{\pl{A}_1\ellipses \pl{A}_n\}) \tag{$g \in \Var{\Goal}$, $st \in {\tt Str}$, \typofix{$A_i$}{$\pl{A}_i$}~$\in \Var{\AC}$}
\end{align*}
\end{definition}

Like with product assurance cases, we let ${\tt Rt}(\pl{A})$ denote the (variational) root goal of $\pl{A} \in \Var{\AC}$. We also let ${\tt pc}(\pl{A})$ denote the presence condition of the root goal of $\pl{A}$. Recalling that we have defined a derivation operator $\pl{g}|_\conf$ for variational goals, the derivation operator for $\Var{\AC}$ effectively maps this operator across the entire AC, pruning out any subtrees which are not relevant for the given configuration (as expressed through presence conditions).

\begin{definition}[Product Assurance Case Derivation]\label{def:varACDerivation} Given a variational assurance case $\pl{A}$ with features \typofix{in $f$}{$F$} and a configuration $\conf \subseteq F$, the derivation of the product assurance case for $\conf$ is defined recursively as:
\begin{align*}
    {\tt Und}(\pl{g})|_\conf &= {\tt Und}(\pl{g}|_\conf)\\
    {\tt Evd}(\pl{g},e)|_\conf &= {\tt Evd}(\pl{g}|_\conf, e)
    \\    {\tt Decomp}(\pl{g}, st, \{\pl{A}_1\ellipses  \pl{A}_n\})|_\conf &= {\tt Und}(\pl{g}|_\conf) \tag{If there is no $\pl{A}_i$ such that $\conf \vDash {\tt pc}(\pl{A}_i)$}\\
    {\tt Decomp}(\pl{g}, st, \{\pl{A}_1\ellipses \pl{A}_n\})|_\conf &= {\tt Decomp}(\pl{g}|_\conf, st,  \{\pl{A}_i|_\conf \mid\conf \vDash {\tt pc}(\pl{A}_i) \}) \tag{otherwise}
\end{align*}
 That is, whenever we encounter a goal decomposition, we ignore all sub-ACs whose root goals have presence conditions not satisfied by $\conf$, and \weakChange{recurse}{proceed recursively} through the remaining \weakChange{sub-ACs}{children}. If there are no \weakChange{sub-ACs}{children} whose root goals have satisfied presence conditions, the goal is marked as undeveloped.
\end{definition}

\begin{example}
\label{ex:derivation}
  \begin{figure}
        \centering
        \includegraphics[width=0.9\linewidth]{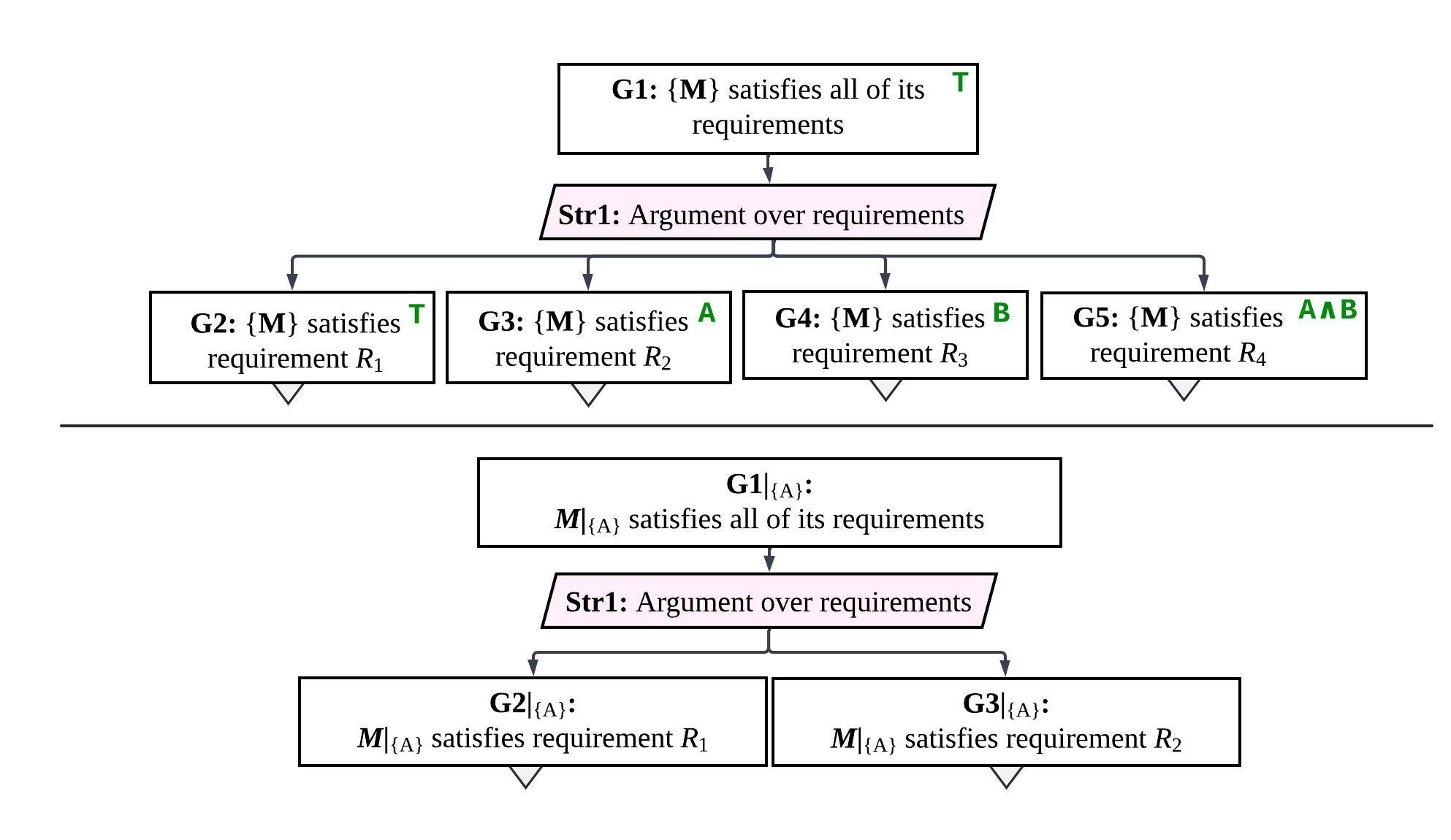}
        \caption{\bf A variational assurance case (top), and the product AC derived under $\conf = \{\token{A}\}$ (bottom), as described in Example~\ref{ex:derivation}.}
        \label{fig:derivation_example}
        \Description[]{}
    \end{figure}
Fig.~\ref{fig:derivation_example} (top) displays a variational AC fragment \weakDelete{$\pl{A}$} for a product line model $\pl{M}$ over features $F = \{\token{A}, \token{B}\}$ and feature model $\Phi = \top$. \change[abstrClar]{The variational goals are \emph{abstracted} with respect to a configuration of $\pl{M}$ (as denoted by curly brackets)}{The product line model $\pl{M}$, which is the subject of each of these variational goals, is enclosed in braces to emphasize that the goals cannot be interpreted as Boolean propositions until a configuration of $\pl{M}$ has been fixed}. The parent goal {\ntn{\textbf{G1}}} asserts that (a product of) $\pl{M}$ satisfies all of its requirements. For the sake of illustration, let us assume that four requirements $R_1,R_2,R_3, R_4$ have been defined for $\pl{M}$, with $R_1$ being a requirement for all products, $R_2$ being a requirement for products with feature $\token{A}$, $R_3$ being a requirement for \weakDelete{all} products with feature $\token{B}$, and $R_4$ being a requirement for \weakChange{any product}{products} with \weakChange{either $\token{A}$ or $\token{B}$}{both  $\token{A}$ and $\token{B}$}. Each of the variational subgoals \weakDelete{in $\pl{A}$} \typofix{assert}{asserts} that (a product of) $\pl{M}$ satisfies one of these requirements, and is annotated by the associated presence condition. The outcome of deriving the product AC associated with configuration $\{\token{A}\}$ is shown in Fig.~\ref{fig:derivation_example} (bottom). All goals with presence conditions not satisfied by $\{A\}$ (namely, ${\ntn{\textbf{G4}}}$ and ${\ntn{\textbf{G5}}}$) have been pruned, and each of the remaining goals \weakChange{has been instantiated $\left(\pl{M}|_{\{A\}}\right)$}{being instantiated with subject model $\pl{M}|_{\{\token{A}\}}$}.
\end{example}

Note that Def.~\ref{def:placs} does not place any constraints on the presence conditions placed on variational goals. Consider a situation in which a parent goal $\pl{g}$, annotated by $\phi$, \weakDelete{is decomposed by some strategy and} has a subgoal $\pl{g}^\prime$, annotated by $\phi^\prime$ such that $\llbracket \phi^\prime\rrbracket \not\subseteq \llbracket \phi \rrbracket$. This would mean that there are \weakChange{\emph{fewer}}{some} configurations for which we \weakNew{do not} need to support $\pl{g}$, \weakChange{than}{but} for which we need to support $\pl{g}^\prime$. Conceptually, this runs contrary to the intention that subgoals \change{should be easier to support than}{are used to support} their parent goals. Operationally, this also reveals a limitation of the derivation operator specified in Def.~\ref{def:varACDerivation}, since we would never be able to derive $\pl{g}^\prime$ under configurations in $\llbracket \phi^\prime \land \neg \phi \rrbracket$. Rather than defining a more complex derivation process to handle such situations, we will instead enforce the following \emph{well-formedness} condition on presence conditions in variational \typofix{ACS}{ACs}:

\begin{definition}[Well-formed Product \typofix{line}{Line} Assurance Case]\label{def:wellformed}
    A decomposition ${\tt Decomp}(\pl{g}, st, \{\pl{A}_1\ellipses  \pl{A}_n\})$ is said to be \emph{well-formed} if, given that $\pl{g}$ has presence condition $\phi$, for each $\pl{A}_i$ with ${\tt pc}(\pl{A}_i) = \phi_i$, we have $\phi_i \implies \phi$. A product line assurance case \pl{A} is well-formed if each of its\change[arg]{arguments}{decompositions}\typofix{are}{is} well-formed.
    \end{definition}

\change[recallSupp]{We now formalize the support semantics for variational assurance cases -- effectively, a lifting of Def.~\ref{def:supp}. As mentioned above, the fundamental idea is to lift the semantics of the AC to the level of \emph{sets} of products}{We now formalize what it means for a variational assurance case to be ``supported''. Recall from Def.~\ref{def:supp} that a \emph{product} assurance case is supported iff (1) each of its leaf-level goals are supported by adequate evidence, and (2) each goal decomposition in the AC is a sound goal refinement. Our aim now is to \emph{lift} this notion for product lines of ACs.} \weakDelete{This lifting needs to be done for evidence as well as for {argumentation}.} \newText[motInv]{Suppose we wish to determine whether a variational assurance case $\pl{A}$ is supported, in the sense that every \emph{product} assurance case derivable from $\pl{A}$ under any configuration of the product line is supported. This in turn would require (by Def.~\ref{def:supp}) that (1) for every configuration $\conf$ of the product line, and for every leaf-level goal $\pl{g}$ of $\pl{A}$ which is present under $\conf$, there is adequate evidence for $\pl{g}|_\conf$; and (2) for every configuration $\conf$ of the product line, and for every internal goal $\pl{g}$ which is present under $\conf$, the decomposition supporting $\pl{g}|_\conf$ derived under $\conf$ provides a sound goal refinement. We note that both (1) and (2) are asserting \emph{invariants} over the set of configurations of the product line. This similarity between the two criteria motivates the following definition of \emph{configuration invariance}:} 
\weakDelete{To make this notion sufficiently generic and precise, we define \emph{configuration invariance} as follows:}
\begin{definition}[Configuration Invariance]\label{def:inv}
Let $P$ be a predicate over $X$, and let $\pl{x} \in \Var{X}$ have features $F$ and feature model $\Phi$. Given any feature expression $\phi \in {\tt Prop(F)}$, the predicate $P$ is \emph{invariant in } $\pl{x}$ \emph{with respect to $\phi$}, denoted $\Inv{P}{\pl{x}}{\phi}$, if and only if $\forall \conf \in \llbracket\Phi\rrbracket, \conf \vDash \phi \implies P(\pl{x}|_\conf)$.
\end{definition}
\newText[explInv]{That is, given a product line $\pl{x}$ with feature model $\Phi$, $\Inv{P}{\pl{x}}{\phi}$ states that $P(\pl{x}|_\conf)$ holds for every product of $\pl{x}$ derivable under any valid configuration $\conf$ satisfying $\phi$.} Configuration invariance can then be used to to define \emph{variational} support of an assurance case.

\begin{definition}
\label{def:vsupp}
[Variationally Supported Assurance Case]
The \emph{variational support} predicate {\ntn{$\Supp^\uparrow$}} over $\Var{\AC}$ is defined inductively from the following inference rules:
\[  \infer[{\tt [Supp^\uparrow\text{-}1]}]{{\Supp^\uparrow}({\tt Evd}(\pl{g},e))}{
    &\pl{g} = \langle  p, \phi\rangle
    & \ntn{e \vdash p}
    }\qquad
    \infer[{\tt [Supp^\uparrow\text{-}2]}]{{\Supp^\uparrow}({\tt Evd}(\pl{g},e))}{
    &\pl{g} = \langle \pl{M}, P, \phi\rangle
    & \ntn{e \vdash}\weakNew{ ~\Inv{P}{\pl{M}}{\phi}}
    }
\]
\[ 
     \infer[{\tt [\Supp^\uparrow\text{-}3]}]{\Supp^\uparrow({\tt Decomp}(\pl{g},\typofix{}{st,}\mathcal{A}))}{
     & \pl{g} = \langle \pl{M}, P, \phi\rangle 
    & \mathcal{A} = \{\pl{A}_1,...\pl{A}_n\}
    & \forall \pl{A}_i \in \mathcal{A}, \Supp^\uparrow(A_i)
    & \weakNew{\Inv{\prec}{\langle \mathcal{A}, \pl{g}\rangle}{\phi}}
    }
\]
Note that the invariance condition in rule $[\Supp^\uparrow$-$3]$ is interpreted as $$\forall \conf \in \llbracket \Phi \rrbracket, \conf \vDash \phi \implies \mathcal{A}|_\conf \prec \pl{g}|_\conf$$ with $\mathcal{A}|_\conf = \{\pl{A}_i|_\conf \mid\conf \vDash {\tt pc}(\pl{A}_i) \}$. \weakNew{That is, for every configuration $\conf$, having derived $\mathcal{A}|_\conf$ as the set of child ACs whose root goals are present under $\conf$, we require that $\mathcal{A}|_c \prec \pl{g}|_\conf$.} 
\end{definition}
\newText[suppExpl]{Let us quickly summarize the intuitive meaning of these inference rules:
\begin{itemize}[itemsep=0.3em]
    \item { } [$\Supp^\uparrow$-1] asserts that for a propositional variational goal $\pl{g} = \langle p, \phi\rangle$ to be supported by evidence, we require the evidence $e$ to be adequate to support proposition $p$. This is exactly the same as rule [$\Supp$-1] in Def.~\ref{def:supp}, which reflects the fact that propositional goals -- lacking variability by definition -- have the same semantics in variational ACs as in traditional ACs.
    \item { } [$\Supp^\uparrow$-2] asserts that for a predicative variational goal $\pl{g} = \langle \pl{M}, P,\phi\rangle$ to be supported by evidence, we require the evidence $e$ to be adequate to support the proposition $\Inv{P}{\pl{M}}{\phi}$. That is, $e$ must provide adequate evidence for the proposition $P(\pl{M}|_\conf)$ for every configuration $\conf$ under which the goal is present.
    \item {} [$\Supp^\uparrow$-3] asserts that (1) for every internal variational goal $\pl{g}$ which is present under some configuration $\conf$, the \emph{product-level} decomposition obtained by derivation under $\conf$ is a sound goal refinement, and (2) each of the sub-ACs under $\pl{g}$ are variationally supported. In particular, note that the definition of $\prec$ (Def.~\ref{def:goalRefinement}) excludes the possibility of $\pl{g}|_\conf$ being soundly refined by an empty set of subgoals. Consequentially,  $[\Supp^\uparrow$-3] can never be applied if there is some configuration $\conf$ under which $\pl{g}$ is present, but no subgoal of $\pl{g}$ is present under $\conf$.
\end{itemize}
}
We can \weakDelete{quickly} verify by induction that $\Supp^\uparrow$ is indeed a correct lift of $\Supp$.
\begin{theorem}\label{thm:supplift}
    If {\rm $\pl{A}$} is well-formed, \newText[weakNew]{then} $\Supp^\uparrow(${\rm $\pl{A}$}) \weakChange{$\equiv$}{if and only if}  $\;\mathsf{Inv}(\Supp,$ {\rm $\pl{A}$}$,\phi$), where $\phi = \token{pc}(${\rm \pl{A}}$)$.
\end{theorem}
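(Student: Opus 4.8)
The plan is to prove the biconditional by structural induction on the variational assurance case $\pl{A}$, showing at each node that $\Supp^\uparrow$ is derivable exactly when the product AC derived under every relevant configuration is supported. Writing $\phi = \token{pc}(\pl{A})$, I would keep in mind that $\Inv{\Supp}{\pl{A}}{\phi}$ unfolds to ``$\forall \conf \in \llbracket \Phi \rrbracket,\; \conf \vDash \phi \implies \Supp(\pl{A}|_\conf)$''. The engine of the argument is that the derivation operator of Def.~\ref{def:varACDerivation} commutes with each constructor, so $\Supp(\pl{A}|_\conf)$ can always be matched against the corresponding product-level rule of Def.~\ref{def:supp}, and the biconditional can be verified node-by-node.

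First I would dispatch the leaf cases. For ${\tt Und}(\pl{g})$, no rule of $\Supp^\uparrow$ or of $\Supp$ applies, so both sides are false---provided, as is natural, that $\llbracket \Phi \land \phi \rrbracket \neq \emptyset$ so the invariant is not vacuously true. For ${\tt Evd}(\pl{g}, e)$ with a propositional goal $\pl{g} = \langle p, \phi\rangle$, derivation yields ${\tt Evd}(p, e)$ at every $\conf$, and both [$\Supp^\uparrow$-1] and [$\Supp$-1] collapse to the configuration-independent condition $e \vdash p$, so the sides agree. The one delicate leaf is the predicative case $\pl{g} = \langle \pl{M}, P, \phi\rangle$: rule [$\Supp^\uparrow$-2] demands $e \vdash \Inv{P}{\pl{M}}{\phi}$, whereas supportedness of each derived node ${\tt Evd}(\langle \pl{M}|_\conf, P\rangle, e)$ demands $e \vdash P(\pl{M}|_\conf)$. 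I would close this by unfolding $\Inv{P}{\pl{M}}{\phi}$ into the finite conjunction $\bigwedge_{\conf \in \llbracket \Phi \land \phi\rrbracket} P(\pl{M}|_\conf)$ and invoking the convention that evidence adequacy distributes over this conjunction, i.e.\ $e$ supports the invariant iff it supports each instance.

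The inductive step for ${\tt Decomp}(\pl{g}, st, \mathcal{A})$ with $\mathcal{A} = \{\pl{A}_1, \ldots, \pl{A}_n\}$ is the heart of the argument and the step I expect to be hardest. Rule [$\Supp^\uparrow$-3] bundles two obligations: the refinement invariant $\Inv{\prec}{\langle \mathcal{A}, \pl{g}\rangle}{\phi}$ and the variational support of every child. The refinement invariant already matches the product side---for each $\conf \in \llbracket \Phi \land \phi\rrbracket$ it yields $\mathcal{A}|_\conf \prec \pl{g}|_\conf$---and because Def.~\ref{def:goalRefinement} forbids refinement by the empty set, it guarantees that some child is present under every such $\conf$, thereby excluding the degenerate branch of Def.~\ref{def:varACDerivation} that would emit an unsupported ${\tt Und}$ node. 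What remains is to reconcile $\forall i,\ \Supp^\uparrow(\pl{A}_i)$ with the product-side requirement that $\Supp(\pl{A}_i|_\conf)$ hold for every $\conf \in \llbracket\Phi\land\phi\rrbracket$ and every child present at $\conf$; applying the induction hypothesis to each child rewrites $\Supp^\uparrow(\pl{A}_i)$ as $\Inv{\Supp}{\pl{A}_i}{\phi_i}$ with $\phi_i = \token{pc}(\pl{A}_i)$.

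The quantifier reshuffling that equates these two forms is precisely where well-formedness is indispensable: since $\pl{A}$ is well-formed we have $\phi_i \implies \phi$, hence $\llbracket\Phi\land\phi_i\rrbracket = \llbracket\Phi\land\phi\land\phi_i\rrbracket$, so the condition ``$\conf \in \llbracket\Phi\land\phi\rrbracket$ and child $i$ present'' coincides with ``$\conf \in \llbracket\Phi\land\phi_i\rrbracket$''. Under this identity the product-side conjunction (over configurations, then present children) and the $\Supp^\uparrow$-side conjunction (over children, then configurations) range over exactly the same obligations $\Supp(\pl{A}_i|_\conf)$, which closes the biconditional. Were well-formedness dropped, a child present under a configuration at which the parent is absent would make the $\Supp^\uparrow$ side strictly stronger and break the ``if'' direction---so this is the point I would guard most carefully, together with the tacit assumption that every presence condition occurring in $\pl{A}$ is satisfiable under $\Phi$, without which an invariant could turn vacuously true at a leaf where $\Supp^\uparrow$ fails.
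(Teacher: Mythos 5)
Your proof is correct and follows essentially the same route as the paper's: structural induction on $\pl{A}$, arguing both directions, with well-formedness ($\phi_i \implies \phi$) invoked exactly where you place it — to align the configuration sets when swapping quantifiers in the backward direction of the ${\tt Decomp}$ case. You are in fact more explicit than the paper about three points it glosses over: the vacuity issue at undeveloped leaves whose presence conditions are unsatisfiable under $\Phi$, the convention that evidence adequacy for $\Inv{P}{\pl{M}}{\phi}$ distributes over the per-configuration instances $P(\pl{M}|_\conf)$, and the fact that the nonemptiness built into $\prec$ rules out the degenerate ${\tt Und}$ branch of the derivation operator.
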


\subsection{Lifting Assurance Case Templates}
\label{sec:liftTemplates}
\weakChange{As p}{P}er Thm.~\ref{thm:supplift}, the correctness of a variational assurance case depends on the ability to verify that a strategy is invariantly a sound goal refinement. This is obviously a more involved task than verifying a \emph{single} goal refinement; as such, the use of rigorous and verifiable templates is even more \weakChange{crucial}{essential} in developing variational ACs. Furthermore, we do not want to begin designing templates for variational ACs from scratch: developing and verifying \weakChange{templates even for traditional ACs can}{product-based templates can already} be a complex and expensive process. Instead, we want to \weakChange{\emph{reuse} this effort as much as possible -- hence, if we lift a valid template, we want to be sure that this process preserves validity}{\emph{lift} product-based templates in a manner which preserves their validity (at the product line level)}.

\begin{definition}[Lifting Templates]\label{def:liftTemplate}
    Let ${\ntn{\template{\modlang}{D}}} = \langle P, I,C \rangle$ be a template \weakChange{with $D$ being the domain of data used for}{for supporting predicate $P$ over $\modlang$ with} instantiation function $I$ and correctness criterion $C$.  Then {lifting} ${{\template{\modlang}{D}}}$ requires producing a \emph{lifted} instantiation function $${I}^\uparrow : \Var{D} \to \mathbb{P}(\Var{\Goal})$$ 
    \weakNew{That is, the lifted template is instantiated by providing a \emph{product line} of data $\pl{x} \in \Var{D}$. The requirement that $I^\uparrow$ is a \emph{lift} of $I$ is interpreted as follows: whenever $I^\uparrow$ is used to support a variational goal $\langle \pl{M},P,\phi\rangle$ with $\pl{M} \in \Var{\modlang}$, for any $\pl{x} \in \Var{D}$ and configuration $\conf \in \llbracket \phi\rrbracket$, we require $I^\uparrow(\pl{x})|_\conf = I(\pl{x}|_\conf)$. }

    \weakDelete{i.e., for all $\pl{x} \in \Var{D}$, and each  configuration $\conf \in \llbracket \phi \rrbracket$, we have
${I}^\uparrow(\pl{x},\phi)|_\conf = \textrm{{$I(\pl{x}|_\conf)$}}$.} 

\weakDelete{The \emph{variational correctness criterion} for the lifted template is}{}
\weakDelete{{${C}^\uparrow : \Var{D}  \times {\Var{\mathcal{M}}} \times {\tt Prop}(F) \to \{\top,\bot\}$,} \weakDelete{defined by ${C}^\uparrow(x,\pl{M},\phi) = {\Inv{\langle \pl{x},\pl{M}\rangle}{C}{\phi}}$}}

{Given that $C : D \times \modlang \to \{\top,\bot\}$ was the correctness criterion for the original template, we define the \emph{variational} correctness criterion for the lifted template as $\Inv{C}{\langle \pl{x},\pl{M}\rangle}{\phi\rangle}$. That is, the argument produced by the lifted template should be provably sound (in the variational sense) if for the given $\pl{x} \in \Var{D}$ and $\pl{M} \in \Var{\modlang}$, for all $\conf \in \llbracket \phi \rrbracket$, we have $C(\pl{x}|_\conf, \pl{M}|_\conf)$. }

\end{definition}
We denote the lifted template produced by this process as \weakNew{$\template{\modlang}{D}^\uparrow = \langle P, I^\uparrow, C\rangle$ to emphasize that the sole requirement for lifting the template is the lifting of the instantiation function. The following theorem establishes that any template which has been lifted via the preceding construction preserves validity of the product-based template across the entire product line.}

\begin{theorem} \label{thm:liftTemplateCorrect} \ntn{Let $\pl{g} = \langle \pl{M}, P,\phi\rangle$ be a variational goal with $\pl{M} \in \Var{\modlang}$ for some language $\modlang$ and with feature model $\Phi$. If}
\begin{enumerate}
    \item {\ntn{$\template{\modlang}{D}= \langle P, I, C\rangle$ is a valid template, 
    \item $\template{\modlang}{D}^\uparrow = \langle P, I^\uparrow, C\rangle$ is a lifted template conforming to Def.~\ref{def:liftTemplate},}} 
    \item $\pl{x} \in \Var{D}$ such that $\Inv{C}{\langle \pl{x},\pl{M}\rangle}{\phi}$ holds,
    \item $\pl{G} = \{\pl{g}_1\ellipses \pl{g}_n\} = I^\uparrow(\pl{x})$,
\end{enumerate} then $\Inv{\prec}{\langle \pl{G},\pl{g}\rangle}{\phi}$ holds. That is, for every $\conf \in \llbracket \Phi \land \phi \rrbracket$, given $\pl{G}|_\conf = \{\pl{g}_i|_\conf \mid \conf \vDash \token{pc}(\pl{g}_i)\}$, we have $\pl{G}|_\conf \prec \pl{g}|_\conf$.

\weakDelete{$\forall \pl{x} \in \Var{D}, \forall \pl{M} \in \Var{\mathcal{M}}, \forall \phi \in {\tt Prop}(F),\; {C}^\uparrow(\pl{x}, \pl{M},\phi) \implies$}

\weakDelete{$ {\Inv{\prec}{\langle \pl{G},\pl{g}\rangle}{\phi}}$ where $\pl{g} = \langle P, \pl{M}, \phi\rangle$ and $\pl{G} = {I}^\uparrow(\pl{x},\phi)$.}
\end{theorem}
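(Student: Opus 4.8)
The plan is to prove the invariance conclusion pointwise over configurations, reducing it to the product-level validity of $\template{\modlang}{D}$ via the lifting and invariance hypotheses. Unfolding Def.~\ref{def:inv}, the claim $\Inv{\prec}{\langle \pl{G},\pl{g}\rangle}{\phi}$ asks us to show that for every $\conf \in \llbracket \Phi \land \phi\rrbracket$ we have $\pl{G}|_\conf \prec \pl{g}|_\conf$. So I would fix an arbitrary such $\conf$ (noting $\conf \in \llbracket\Phi\rrbracket$ and $\conf \vDash \phi$) and establish this single goal refinement; since $\conf$ is arbitrary, the invariant follows.

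First I would rewrite the derived subgoals using the lifting hypothesis. Since $\pl{G} = I^\uparrow(\pl{x})$ (hypothesis 4) and $\conf \in \llbracket\phi\rrbracket$, Def.~\ref{def:liftTemplate} gives directly $\pl{G}|_\conf = I^\uparrow(\pl{x})|_\conf = I(\pl{x}|_\conf)$. This is the key step where the presence-condition pruning built into the derivation of a set of variational goals is absorbed: the equality supplied by the lift condition already identifies the pruned, derived family $\{\pl{g}_i|_\conf \mid \conf \vDash \token{pc}(\pl{g}_i)\}$ with the product-level instantiation $I(\pl{x}|_\conf)$. Next I would discharge the correctness criterion at the product level: from hypothesis 3, $\Inv{C}{\langle \pl{x},\pl{M}\rangle}{\phi}$ together with $\conf \in \llbracket\Phi\rrbracket$, $\conf \vDash \phi$, and the pairwise derivation rule $\langle \pl{x},\pl{M}\rangle|_\conf = \langle \pl{x}|_\conf, \pl{M}|_\conf\rangle$ yields $C(\pl{x}|_\conf, \pl{M}|_\conf)$.

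With these in hand, I would invoke validity of the product-based template. Instantiating hypothesis 1 at the product $M := \pl{M}|_\conf \in \modlang$ and data $x := \pl{x}|_\conf \in D$, the fact $C(\pl{x}|_\conf,\pl{M}|_\conf)$ gives $I(\pl{x}|_\conf) \prec \langle \pl{M}|_\conf, P\rangle$. Finally, Def.~\ref{def:vgoal} computes the derived parent goal as $\pl{g}|_\conf = \langle \pl{M}, P,\phi\rangle|_\conf = \langle \pl{M}|_\conf, P\rangle$. Chaining the equalities and the refinement, $\pl{G}|_\conf = I(\pl{x}|_\conf) \prec \langle \pl{M}|_\conf, P\rangle = \pl{g}|_\conf$, which is exactly the required product-level refinement.

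The argument is essentially a diagram chase, so I do not expect a genuinely hard step; the one place demanding care is the bookkeeping in the first step, namely confirming that $\pl{G}|_\conf$ — the derivation of the \emph{set} of variational subgoals, which discards those $\pl{g}_i$ whose presence condition $\conf$ fails to satisfy (as made explicit in the theorem statement) — coincides with $I(\pl{x}|_\conf)$. This is not something to be re-derived but is precisely the defining requirement of a lifted instantiation function in Def.~\ref{def:liftTemplate}, so the obstacle dissolves once the lift condition is applied with the correct reading of set derivation. A secondary point worth stating explicitly is that $\conf \in \llbracket\Phi\land\phi\rrbracket \subseteq \llbracket\phi\rrbracket$, so the lift condition (stated for $\conf \in \llbracket\phi\rrbracket$) is indeed applicable.
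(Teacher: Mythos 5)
Your proposal is correct and follows essentially the same route as the paper's proof: fix an arbitrary configuration satisfying $\phi$, use the defining property of the lifted instantiation function to identify the derived (pruned) subgoal set with $I(\pl{x}|_\conf)$, extract $C(\pl{x}|_\conf,\pl{M}|_\conf)$ from the invariance hypothesis, apply product-level validity to obtain the refinement, and generalize over $\conf$. Your version is slightly more explicit about the bookkeeping (the derivation of the parent goal and the $\llbracket\Phi\land\phi\rrbracket\subseteq\llbracket\phi\rrbracket$ point), but the decomposition and key steps are identical.
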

\begin{proof} 
    \nc{\ntn{Assume that $\template{\modlang}{D}$ is a valid template. Suppose that $\template{\modlang}{D}$ is used to decompose a goal $\pl{g} = \langle \pl{M},P,\phi\rangle$, and is instantiated with $\pl{x} \in \Var{D}$ such that $\Inv{C}{\langle \pl{x}, \pl{M}\rangle}{\phi}$ holds. Let $\pl{G} =  \{\pl{g}_1\ellipses  \pl{g}_n\}$ be the resulting variational goals.}} Fix some configuration $\conf \in \llbracket \typofix{\conf}{\phi} \rrbracket$ for which we need to show that $\{\pl{g}_1\ellipses  \pl{g}_n\}|_\conf \prec \pl{g}|_\conf$. Let $\{\pl{g}_j\ellipses  \pl{g}_k\}$ be the subset of variational goals whose presence conditions are satisfied by $\conf$. Since ${\ntn{I^\uparrow}}$ is a lift of $I$, we know that $I(\pl{x}|_\conf) = \{\pl{g}_j|_\conf\ellipses  \pl{g}_k|_\conf\}$. Since \weakNew{$\template{\modlang}{D}$} is valid, and $C(\pl{x}|_\conf,\pl{M}|_\conf)$ holds, we know that $I(\pl{x}|_\conf) \prec \pl{g}|_\conf$. By generalization of $\conf$, we have \typofix{$\Inv{\langle \pl{G},\pl{g}\rangle}{\prec}{\phi}$}{$\Inv{\prec}{\langle \pl{G},\pl{g}\rangle}{\phi}$}.
\end{proof}

\begin{example}[Variational Domain Decomposition] \label{ex:vdomdecomp} 
We can lift the domain decomposition template defined in Example~\ref{ex:domainDecomp2} as follows. Fixing a universe $U$ and predicate $P$ over $U$, the product-based template is used to refine a parent goal of the form $\langle S, \ntn{\mathsf{Forall}_P}\rangle$, with $S \in {\ntn{\mathbb{P}(U)}}$ and ${\ntn{\mathsf{Forall}_P}(S) = \textrm{``}\forall x \in X, P(x)"}$. The \emph{lifted} template will apply this predicate $\ntn{\mathsf{Forall}_P}$ to \emph{variational sets} $\pl{S} \in \ntn{\Var{\mathbb{P}(U)}}$, i.e., sets whose elements are annotated by presence conditions~\cite{shahin2021towards}.\footnote{Formally, $\ntn{\Var{\mathbb{P}(U)} = \mathbb{P}({U \times {\tt Prop}(F)})}$, and derivation of a set under $\conf$ is defined as $\pl{S}|_\conf = \{x \mid (x, \phi) \in \pl{S} \land \conf \vDash \phi\}$.} 

Consider the original \weakDelete{template} instantiation function {\ntn{$I : \mathbb{F}(U) \to \mathbb{P}(\Goal)$}, defined for finite families of subsets of $U$}. In order to lift $I$, we need to decide how to represent a product line of (finite) families over $U$. A natural choice is to represent them as (finite) families of annotated sets, i.e., $\ntn{\Var{\mathbb{F}(U)} \subset \mathbb{P}(\Var{\mathbb{P}(U)})}$, with derivation defined as $\{\pl{X}_1\ellipses \pl{X}_n\}|_\conf = \{\pl{X}_1|_\conf\ellipses \pl{X}_n|_\conf\}$

\weakChange{Then the lifted instantiation is taken as ${I^\uparrow(\{\pl{X}_1\ellipses \pl{X}_n\}) = \{\pl{g}_1\ellipses  \pl{g}_n\}}$, with $\pl{g}_i = \langle \pl{X}_i, \mathcal{P}, \phi \rangle$}{Then, to instantiate the lifted template to support $\pl{g} = \langle \pl{S}, \mathsf{Forall}_P,\phi\rangle$, we take $I^\uparrow(\{\pl{X}_1\ellipses\pl{X}_n\}) = \{\pl{g}_1\ellipses \pl{g}_n\}$, where each $\pl{g}_i = \langle \pl{X}_i, \mathsf{Forall}_P, \phi\rangle$}.
\end{example}
\begin{proposition}\label{prop:enumLiftCorrect} The instantiation function of the variational domain decomposition (Example~\ref{ex:vdomdecomp}) correctly lifts the instantiation function defined in Example~\ref{ex:domainDecomp2}.
\end{proposition}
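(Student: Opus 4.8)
The plan is to verify the lifting condition of Def.~\ref{def:liftTemplate} by directly unfolding the definitions of $I$, $I^\uparrow$, and the two relevant derivation operators. Concretely, for any variational family $\pl{x} = \{\pl{X}_1\ellipses \pl{X}_n\} \in \Var{\mathbb{F}(U)}$ instantiating the lifted template to support a variational goal $\langle \pl{S}, \mathsf{Forall}_P, \phi\rangle$, and for every configuration $\conf \in \llbracket \phi \rrbracket$, I must establish the equality $I^\uparrow(\pl{x})|_\conf = I(\pl{x}|_\conf)$, where the derivation of a set of variational goals is $\{\pl{g}_1\ellipses \pl{g}_n\}|_\conf = \{\pl{g}_i|_\conf \mid \conf \vDash \token{pc}(\pl{g}_i)\}$, in keeping with Def.~\ref{def:varACDerivation} and the derived set $\mathcal{A}|_\conf$ appearing in Def.~\ref{def:vsupp}.

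First I would expand the left-hand side. By the definition of $I^\uparrow$ in Example~\ref{ex:vdomdecomp}, $I^\uparrow(\pl{x}) = \{\pl{g}_1\ellipses \pl{g}_n\}$ with each $\pl{g}_i = \langle \pl{X}_i, \mathsf{Forall}_P, \phi\rangle$. The crucial observation is that every subgoal $\pl{g}_i$ carries the \emph{same} presence condition $\phi$ as the parent goal. Hence for any $\conf \in \llbracket \phi \rrbracket$ we have $\conf \vDash \token{pc}(\pl{g}_i) = \phi$ for every $i$, so no subgoal is pruned during derivation and $I^\uparrow(\pl{x})|_\conf = \{\pl{g}_1|_\conf \ellipses \pl{g}_n|_\conf\}$. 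Applying the derivation of predicative variational goals (Def.~\ref{def:vgoal}), each $\pl{g}_i|_\conf = \langle \pl{X}_i|_\conf, \mathsf{Forall}_P\rangle$, which yields $I^\uparrow(\pl{x})|_\conf = \{\langle \pl{X}_1|_\conf, \mathsf{Forall}_P\rangle \ellipses \langle \pl{X}_n|_\conf, \mathsf{Forall}_P\rangle\}$.

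Next I would expand the right-hand side. By the derivation on variational families given in Example~\ref{ex:vdomdecomp}, $\pl{x}|_\conf = \{\pl{X}_1|_\conf \ellipses \pl{X}_n|_\conf\}$, which is a finite family of ordinary subsets of $U$ (finite because $\pl{x}$ has at most $n$ members), and therefore a legitimate input to the product-based $I$. Applying $I$ from Example~\ref{ex:domainDecomp2} gives $I(\pl{x}|_\conf) = \{\langle \pl{X}_1|_\conf, \mathsf{Forall}_P\rangle \ellipses \langle \pl{X}_n|_\conf, \mathsf{Forall}_P\rangle\}$, which is syntactically identical to the derived left-hand side. Since $\pl{x}$ and $\conf$ were arbitrary, the lifting condition holds.

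I expect the only genuinely delicate point to be the treatment of presence conditions during derivation of the goal set: one must confirm that restricting attention to $\conf \in \llbracket \phi \rrbracket$ guarantees all $n$ subgoals survive, so that the two derivations produce families of exactly the same shape rather than one being a strict subset of the other. Everything else is a mechanical matching of the pairs produced on each side. A minor caveat worth flagging is that if two derived sets $\pl{X}_i|_\conf$ happen to coincide, the corresponding goals collapse identically on both sides, so the set-level equality is preserved regardless.
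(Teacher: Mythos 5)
Your proof is correct and follows essentially the same route as the paper's: both unfold $I^\uparrow$ and the derivation operators, observe that every subgoal carries the parent's presence condition $\phi$ (so no subgoal is pruned under any $\conf \in \llbracket \phi \rrbracket$), and then match each derived goal $\langle \pl{X}_i|_\conf, \mathsf{Forall}_P\rangle$ with the corresponding output of the product-based $I$ applied to $\pl{x}|_\conf$. Your additional remarks on finiteness of the derived family and the possible collapse of coinciding sets are harmless refinements of the same argument, not a departure from it.
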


By combining the above proposition with Thm.~\ref{thm:liftTemplateCorrect} {\ntn{and the validity of traditional domain decomposition (Prop.~\ref{prop:domaindecompPrf})}}, we obtain:
\begin{proposition}
    The variational domain decomposition template defined in Example~\ref{ex:vdomdecomp} is valid.
\end{proposition}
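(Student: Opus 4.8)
The plan is to derive validity directly from the general lifting theorem (Thm.~\ref{thm:liftTemplateCorrect}), since its hypotheses are already supplied by the two preceding results. Recall that the product-based domain decomposition template of Example~\ref{ex:domainDecomp2} has the form $\template{\mathbb{P}(U)}{\mathbb{F}(U)} = \langle \mathsf{Forall}_P, I, C\rangle$, so here the model domain $\modlang$ is instantiated to $\mathbb{P}(U)$ and the data domain $D$ to $\mathbb{F}(U)$. First I would confirm the two hypotheses of Thm.~\ref{thm:liftTemplateCorrect} that concern the template itself: hypothesis (1), that the product-based template is valid, is exactly Prop.~\ref{prop:domaindecompPrf}; and hypothesis (2), that the lifted template conforms to Def.~\ref{def:liftTemplate}, reduces (by that definition) to the single requirement that $I^\uparrow$ is a correct lift of $I$, which is precisely the content of Prop.~\ref{prop:enumLiftCorrect}.

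With both hypotheses in hand, I would appeal directly to Thm.~\ref{thm:liftTemplateCorrect}. Fixing any variational goal $\pl{g} = \langle \pl{S}, \mathsf{Forall}_P, \phi\rangle$ with $\pl{S} \in \Var{\mathbb{P}(U)}$, any product line of families $\pl{x} \in \Var{\mathbb{F}(U)}$ satisfying $\Inv{C}{\langle \pl{x}, \pl{S}\rangle}{\phi}$, and setting $\pl{G} = I^\uparrow(\pl{x})$, the theorem yields $\Inv{\prec}{\langle \pl{G}, \pl{g}\rangle}{\phi}$, i.e.\ $\pl{G}|_\conf \prec \pl{g}|_\conf$ for every $\conf \in \llbracket \Phi \land \phi\rrbracket$. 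Since the validity of a lifted template is, by definition, precisely the implication from the variational correctness criterion $\Inv{C}{\langle \pl{x}, \pl{S}\rangle}{\phi}$ to the invariant refinement $\Inv{\prec}{\langle \pl{G}, \pl{g}\rangle}{\phi}$ (cf.\ Def.~\ref{def:liftTemplate} together with Def.~\ref{def:template}), this is exactly the claim.

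Because the genuinely substantive work is carried entirely by Thm.~\ref{thm:liftTemplateCorrect}, I do not expect a real obstacle; the one place deserving care is purely the bookkeeping of the product line representations introduced in Example~\ref{ex:vdomdecomp}. Specifically, I would double-check that the variational sets $\Var{\mathbb{P}(U)} = \mathbb{P}(U \times {\tt Prop}(F))$ and the product lines of families $\Var{\mathbb{F}(U)} \subseteq \mathbb{P}(\Var{\mathbb{P}(U)})$ are bona fide product line extensions equipped with the derivation operators stated there, so that they legitimately instantiate the abstract $\Var{\modlang}$ and $\Var{D}$ appearing in the theorem. Once these types are verified to align, no computation is required and the result follows immediately.
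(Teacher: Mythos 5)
Your proof is correct and follows exactly the paper's route: the paper obtains this proposition by combining Prop.~\ref{prop:domaindecompPrf} (validity of the product-based template), Prop.~\ref{prop:enumLiftCorrect} (that $I^\uparrow$ correctly lifts $I$), and Thm.~\ref{thm:liftTemplateCorrect}. Your version merely spells out the hypothesis-checking and type bookkeeping that the paper leaves implicit.
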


\change[varExpl]{It should be clear, however, that this lifting of domain decomposition only produces AC fragments with \emph{semantic} variability, and not with syntactic variability, since the same presence condition $\phi$ is applied to all goals. We will shortly consider a modified form of domain decomposition which \emph{does} introduce syntactic variability in the AC.}{Consider the kind of variability that is present in the strategies produced via this lifted form of domain decomposition. As mentioned in Sec.~\ref{sec:plACs}, a product line assurance case can exhibit two forms of variability: \emph{structural}, in which different parts of the assurance case (e.g., different goals) are only relevant for certain products, and \emph{semantic}, in which the interpretation of an assurance goal depends on the choice of feature configuration. Strategies produced by lifted domain decomposition, as defined above, exhibit semantic variability, since each subgoal $\pl{g}_i = \langle \pl{X}_i, \mathsf{Forall}_P, \phi\rangle$ refers to a product line $\pl{X}_i$ of subsets of $U$, and interpreting $\mathsf{Forall}_P$ with respect to $\pl{X}_i$ requires fixing a configuration $\conf$ and deriving the set $\pl{X}_i|_\conf$. However, the arguments produced using this template do not introduce any \emph{additional} structural variability, since each subgoal inherits the same presence condition $\phi$ from the parent goal which is being refined. That is, if the parent goal $\pl{g}$ is present under configuration $\conf$, then so are all of the subgoals produced by the template. Conversely, if $\pl{g}$ is not present under $\conf$, then neither are any of the subgoals produced by the template. We will shortly consider a modified form of domain decomposition which \emph{does} introduce syntactic variability in the AC.}

\subsection{Lifting Analytic Templates}
\label{sec:liftAnalytic}
\weakChange{As mentioned in the preceding section,}{Just as} the complexity of manually verifying variational goal refinements makes the use of lifted templates strongly compelling\weakChange{. T}{, t}he complexity of analyzing product lines makes the use of lifted analyses \weakDelete{as a means of producing variational evidence} similarly compelling. Given the integration of analysis and argumentation through analytic templates (Sec.~\ref{sec:analyticTemplates}), it is worth providing special attention to \weakNew{lifting} this particular class of templates. \weakChange{The general idea is to modify Def.~\ref{def:analyticTemplate} }{To this end, we can lift analytic templates (Def.~\ref{def:analyticTemplate})}   such that\weakDelete{(i) the analytic template is instantiated using a product line model, rather than a product model, and (ii)} the instantiation of the template executes a \emph{lifted} analysis, rather than a product-based analysis.

\begin{definition}[Lifted Analytic Template]\label{def:analyticTemplateLift}
    \weakChange{Fixing a product-based analysis $f : \mathcal{M} \times \mathcal{L}\to Y$, let $T_f = \langle P, P_{spec}, P_{out}, P_f\rangle $ be an analytic template for $f$, with parent predicate $P : \mathcal{M} \to \{\top,\bot\}$. Let $f^\uparrow : \Var{\mathcal{M}} \times \mathcal{L} \to \Var{Y}$. Then for any product line model $\pl{M}$, specification $\psi \in \mathcal{L}$ and presence condition $\phi$, the lifted analytic template induced by $f^\uparrow$ uses the instantiation $I_f^\uparrow(\pl{M},\psi,\phi) = \{\pl{g}_{spec},\pl{g}_{out}, \pl{g}_f,\pl{g}_{f^\uparrow}\}$, where}{Let $f : X \to Y$ be a product-based analysis with domain $X$ and codomain $Y$, and let $\template{\modlang}{X}$ be an analytic template for $f$, i.e., $\template{\modlang}{X}$ is used to decompose goals of the form $\langle M, P\rangle$ with $M \in \modlang$. Suppose that the instantiation function $I$ of $\template{\modlang}{X}$ is defined using predicates $P_X$ over $X$, $P_Y$ over $Y$, and $P_f$ over $X \times Y$, as per Def.~\ref{def:analyticTemplate}. Suppose that $f^\uparrow : \Var{X} \to \Var{Y}$ is a lift of $f$. Then the \emph{lifted analytic template} $\template{\modlang}{X}^\uparrow$ \emph{induced by $f^\uparrow$} is defined as $T^\uparrow = \langle P, I^\uparrow, C\rangle$, with the lifted instantiation $I^\uparrow : \Var{X} \to \mathbb{P}(\Goal)$ being defined for any $\pl{x} \in \Var{X}$ as }
        \[\ntn{I^\uparrow(\pl{x}) = \{\pl{g}_X, \pl{g}_Y, \pl{g}_f, \pl{g}_{\mathsf{Lift}}\}}\]
    \ntn{where}
    \begin{align*}
     \pl{g}_X &= \langle \pl{x}, P_X, \phi\rangle \\    
     \pl{g}_Y &= \langle f^\uparrow(\pl{x}), P_Y,\phi \rangle \\
     \pl{g}_f &= \langle \forall x \in X, P_f(x,f(x)), \phi \rangle \\ 
     \pl{g}_{\mathsf{Lift}} &= \langle \forall \pl{x} \in \Var{x}, \forall \conf \in \llbracket \Phi \rrbracket, f^\uparrow(\pl{x})|_\conf = f(\pl{x}|_\conf), \phi\rangle 
    \end{align*}
    
    \weakDelete{$\pl{g}_{spec} = \langle \psi, P_{spec} , \phi\rangle, \quad  \pl{g}_{out} = \langle f^\uparrow(\pl{M},\psi), P_{out}, \phi\rangle\quad \pl{g}_f = \langle g_f, \phi \rangle \quad \pl{g}_{f^\uparrow} = \langle \mathsf{Lifts}(f,f^\uparrow), \phi\rangle$} 
    
\weakChange{where $g_f = \textrm{``}\forall M \in \mathcal{M}, \forall \psi \in \mathcal{L}, P_f(M,\psi)$ and  $\mathsf{Lifts}(f,f^\uparrow)$ asserts that $f^\uparrow$ is a correct lift of $f$. }{and where $\phi$ is the presence condition of the parent goal being decomposed and $\Phi$ is the feature model of the product line being analyzed. Note that goals $g_f$ and $g_{\mathsf{Lift}}$ are propositional.}
\end{definition}

\change[liftAnExpl]{We make a few observations about Def.~\ref{def:analyticTemplateLift}. First, unlike the lifting of general templates (Def.~\ref{def:liftTemplate}), the lifting of analytic templates does not require any work on the part of the AC developer, besides producing the lifted analysis. Second, the predicates $P_{spec}$, $P_{out}$, and the propositional goal $g_f$ are reused analogously to the original template; moreover, existing assurance of ${g}_f$ {can be} reused as-is for instantiations of the lifted template. Third, the output predicate $P_{out}$ is now ``applied'' to the lifted analysis result $f^\uparrow(\pl{x})$, as we have now replaced the analysis $f$ by the (likely much more efficient) analysis $f^\uparrow$. Fourth, the only new subgoal asserts the correctness of the lift $f^\uparrow$. If the correctness of the lift is established, then we can conclude that the entire argument over the lifted analysis is just as sound as the original analytic template.}{Let us  elaborate on what is being described in Def.~\ref{def:analyticTemplateLift} and make some observations. First, as with product-based analytic templates, instantiating a lifted analytic template requires executing the analysis over which the template is defined, with the key difference being that we are now executing the \emph{lifted} analysis $f^\uparrow$ on a variational input $\pl{x} \in \Var{X}$, hence the use of $f^\uparrow(\pl{x})$ as the subject of $\pl{g}_Y$. Second, the goal $\pl{g}_f$ asserts exactly the same proposition as $g_f$ in the product-based template, i.e., the correctness of the product-based analysis $f$ with respect to the predicate $P_f$. Third, the goal $\pl{g}_{\mathsf{Lift}}$ simply asserts that $f^\uparrow$ is a correct lift of $f$. Fourth, since $\pl{g}_f$ and $\pl{g}_{\mathsf{Lift}}$ are both propositional, supporting these goals does not (generally) depend on the particular product line model under assurance: the goal $\pl{g}_f$ refers solely to the product-based analysis $f$, and although the goal $\pl{g}_{\mathsf{Lift}}$ does refer a particular feature model $\Phi$, in practice, lifted analyses are designed to satisfy this criterion for \emph{any} product line with \emph{any} feature model~\cite{MURPHY2025112280}. Thus, adequate support for goals $\pl{g}_f$ and $\pl{g}_{\mathsf{Lift}}$ can generally be produced \emph{once} and then re-used for every instantiation of the lifted analytic template. Finally, we emphasize that the \emph{only} requirement for lifting the template is the production of the lifted analysis $f^\uparrow$, since the predicates $P_X$, $P_Y$, and $P_f$ are reused \emph{verbatim} from the original analytic template.}

\weakNew{The following theorem establishes that the lifting of any analytic template as described in Def.~\ref{def:analyticTemplateLift} preserves validity.}

\begin{theorem}\label{thm:liftAnalyticCorrect}
    \ntn{Let $\template{\modlang}{X}$ be an analytic template over  $f : X \to Y$, and let $f^\uparrow : \Var{X} \to \Var{Y}$. Let $\template{\modlang}{X}^\uparrow$ be the result of lifting $T$ with respect to $f^\uparrow$ as defined in Def.~\ref{def:analyticTemplateLift}. If ~$\template{\modlang}{X}$ is valid, then $\template{\modlang}{X}^\uparrow$ is valid.}
\end{theorem}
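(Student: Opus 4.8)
The plan is to establish validity of $\template{\modlang}{X}^\uparrow = \langle P, I^\uparrow, C\rangle$ directly from the definition, reducing each derived product-level decomposition to the validity of the original template $\template{\modlang}{X}$. Concretely, I would fix a variational goal $\pl{g} = \langle \pl{M}, P, \phi\rangle$ with $\pl{M} \in \Var{\modlang}$ and feature model $\Phi$, together with an instantiation datum $\pl{x} \in \Var{X}$ for which the variational correctness criterion holds, i.e. $\Inv{C}{\langle \pl{x}, \pl{M}\rangle}{\phi}$ (per Def.~\ref{def:liftTemplate}, the criterion $C$ is reused verbatim by $\template{\modlang}{X}^\uparrow$). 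Writing $\pl{G} = I^\uparrow(\pl{x}) = \{\pl{g}_X, \pl{g}_Y, \pl{g}_f, \pl{g}_{\mathsf{Lift}}\}$, it then suffices to prove $\Inv{\prec}{\langle \pl{G}, \pl{g}\rangle}{\phi}$, which unfolds to requiring $\pl{G}|_\conf \prec \pl{g}|_\conf$ for every $\conf \in \llbracket \Phi \land \phi\rrbracket$.

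First I would fix an arbitrary such $\conf$ and compute the derived decomposition $\pl{G}|_\conf$. Since all four subgoals in $\pl{G}$ carry the same presence condition $\phi$ inherited from the parent goal, and $\conf \vDash \phi$, all four survive derivation. Using Def.~\ref{def:vgoal} for goal derivation together with the fact (built into Def.~\ref{def:analyticTemplateLift}) that $f^\uparrow$ is a lift of $f$, the crucial observation is that $\pl{g}_Y|_\conf = \langle f^\uparrow(\pl{x})|_\conf, P_Y\rangle = \langle f(\pl{x}|_\conf), P_Y\rangle$, where the second equality is exactly the lifting property $f^\uparrow(\pl{x})|_\conf = f(\pl{x}|_\conf)$. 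Combining this with $\pl{g}_X|_\conf = \langle \pl{x}|_\conf, P_X\rangle$ and the propositional goals $\pl{g}_f, \pl{g}_{\mathsf{Lift}}$, whose derivations return their underlying propositions unchanged, I obtain $\pl{G}|_\conf = I(\pl{x}|_\conf) \cup \{L\}$, where $I(\pl{x}|_\conf) = \{\langle \pl{x}|_\conf, P_X\rangle, \langle f(\pl{x}|_\conf), P_Y\rangle, (\forall x \in X, P_f(x, f(x)))\}$ is precisely the product-level instantiation of $\template{\modlang}{X}$ at the derived datum $\pl{x}|_\conf \in X$, and $L = \pl{g}_{\mathsf{Lift}}|_\conf$ is the fixed proposition asserting that $f^\uparrow$ lifts $f$.

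Next I would invoke the validity of $\template{\modlang}{X}$. The invariance hypothesis $\Inv{C}{\langle \pl{x}, \pl{M}\rangle}{\phi}$ gives $C(\pl{x}|_\conf, \pl{M}|_\conf)$, so validity of $\template{\modlang}{X}$ yields $I(\pl{x}|_\conf) \prec \langle \pl{M}|_\conf, P\rangle = \pl{g}|_\conf$. It remains to promote this to the four-element set $\pl{G}|_\conf = I(\pl{x}|_\conf) \cup \{L\}$. This is where the single genuinely new ingredient enters: by Def.~\ref{def:goalRefinement}, a refinement $\mathcal{A} \prec g$ is the implication $\left(\bigwedge_i g_i\right) \implies g$, and adjoining a further conjunct to the antecedent preserves the implication; hence $I(\pl{x}|_\conf) \prec \pl{g}|_\conf$ entails $\left(I(\pl{x}|_\conf) \cup \{L\}\right) \prec \pl{g}|_\conf$, i.e. $\pl{G}|_\conf \prec \pl{g}|_\conf$. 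Generalizing over $\conf$ gives $\Inv{\prec}{\langle \pl{G}, \pl{g}\rangle}{\phi}$, which is the validity of $\template{\modlang}{X}^\uparrow$.

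The main obstacle --- really the only non-bookkeeping point --- is that $I^\uparrow$ emits four subgoals whereas $I$ emits three, so $I^\uparrow$ is \emph{not} literally a lift of $I$ and Thm.~\ref{thm:liftTemplateCorrect} cannot be applied verbatim. The resolution is the monotonicity of $\prec$ under strengthening of the premise set, which absorbs the extra lifting goal $\pl{g}_{\mathsf{Lift}}$ harmlessly. Equivalently, one could package $L$ into an auxiliary product-based template $\langle P, I', C\rangle$ with $I'(x) = I(x) \cup \{L\}$ --- valid by the same monotonicity observation --- and then apply Thm.~\ref{thm:liftTemplateCorrect} to $I^\uparrow$ as a genuine lift of $I'$; I would mention this reduction but prefer the direct argument above, since it keeps the role of the lifting property of $f^\uparrow$ (used to rewrite $\pl{g}_Y|_\conf$) fully explicit.
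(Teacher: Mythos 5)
Your proof is correct and follows essentially the same route as the paper's: fix $\conf \in \llbracket \phi \rrbracket$, note that all four subgoals inherit the parent's presence condition and thus survive derivation, use the lifting property $f^\uparrow(\pl{x})|_\conf = f(\pl{x}|_\conf)$ to identify the derived output subgoal with the product-level one, obtain $C(\pl{x}|_\conf, \pl{M}|_\conf)$ from the invariance hypothesis, and invoke validity of the product-based template. The only presentational difference is that you absorb the extra subgoal $\pl{g}_{\mathsf{Lift}}|_\conf$ via an explicit monotonicity-of-$\prec$ observation, whereas the paper keeps it as an antecedent of the unfolded implication and lets it sit harmlessly (indeed usefully, since the paper reads the rewriting of $P_Y$ off that very premise) when reducing to the product-level validity implication.
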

\begin{proof}
    {Let {\ntn{$\template{\modlang}{X}  = \langle P,I,C\rangle$ be an analytic template for $f : X \to Y$ and let $f^\uparrow : \Var{X} \to \Var{Y}$}. Let $\template{\modlang}{X}^\uparrow$ be the lifted template induced by $f^\uparrow$.} Let the instantiation function $I$ of $\template{\modlang}{X}$ be defined using predicates $P_X$ over $X$, $P_Y$ over $Y$, and $P_f$ over $X \times Y$. Suppose that $\template{\modlang}{X}^\uparrow$ is used to decompose a variational goal $\pl{g} = \langle \pl{M}, P, \phi\rangle$ by executing $f^\uparrow$ on $\pl{x} \in \Var{X}$, \weakNew{with $\pl{x}$ satisfying $\mathsf{Inv}(C,\langle \pl{x}, \pl{M}\rangle,\phi)$,} producing the variational subgoals $\{\pl{g}_X, \pl{g}_Y, \pl{g}_f, \pl{g}_{\mathsf{Lift}}\}$ as per Def.~\ref{def:analyticTemplateLift}.  We need to show that for all configurations $\conf \in \llbracket{\phi}\rrbracket$, we have $$\{\pl{g}_{X}, \pl{g}_{Y}, \pl{g}_f, \pl{g}_{\mathsf{Lift}}\}|_\conf \prec \pl{g}|_\conf$$ Since all goals have the same presence condition $\phi$, we have $$\{\pl{g}_{X}, \pl{g}_{Y}, \pl{g}_f, \pl{g}_{\mathsf{Lift}}\}|_\conf = \{\pl{g}_{X}|_\conf, \pl{g}_{Y}|_\conf, \pl{g}_f|_\conf, \pl{g}_{\mathsf{Lift}}|_\conf\}$$ Thus, to prove the refinement, we need to verify }
    \begin{align*}
        &\{\pl{g}_{X}|_\conf \land \pl{g}_{Y}|_\conf \land \pl{g}_f|_\conf \land \pl{g}_{\mathsf{Lift}}|_\conf\} \prec \pl{g}|_\conf\\
        \equiv~ & P_{X}(\pl{x}|_\conf) \land P_{Y}(f^\uparrow(\pl{x})|_\conf) \land g_f \land (\forall \pl{x} \in \Var{x}, \forall \conf \in \llbracket \Phi \rrbracket, f^\uparrow(\pl{x})|_\conf = f(\pl{x}|_\conf)) \implies P(\pl{M}|_\conf) \tag{1}
    \end{align*}
    where $g_f = \forall x \in X, P_f(x,f(x))$, {\ntn{and $\equiv$ denotes logical equivalence} (by the definitions of $\prec$ and derivation of goals)}.
    {From the validity of $\template{\modlang}{D}$, we know that for any $x \in X$ and $M \in \modlang$ such that $C(x,M)$ holds, we have}
    \begin{align*}
        P_{X}(x) \land P_{Y}(f(x)) \land g_f \implies P(M) \tag{2}
    \end{align*}
    {From $\Inv{C}{\langle \pl{x},\pl{M}\rangle}{\phi}$ we have $C(\pl{x}|_\conf, \pl{M}|_\conf)$. We then reduce (1) to (2) by fixing $x= \pl{x}|_\conf$, $M = \pl{M}|_\conf$, and observing that from the correctness of the lifting of $f^\uparrow$, we have}
    \begin{align*}
        P_{Y}(f^\uparrow(\pl{x})|_\conf) = P_{Y}(f(\pl{x}|_\conf))
    \end{align*}
\end{proof}

\newText[analysisFormat1]{In the remainder of this section, we illustrate the lifting of analytic templates with respect to the model checking template (Example~\ref{ex:mc_template}) and querying template (Example~\ref{ex:queryTemplate}) defined in Sec.~\ref{sec:analyticTemplates}. Before we can do this, we need to specify how to model the variational inputs and outputs of their underlying analyses. Let us begin with their outputs. With ${\tt Result}$ denoting the set of possible results of the model checker, we define a \emph{variational model checking result} $\pl{r} \in \Var{\tt Result}$ to be an (explicit) product line $\{\langle r_i, \phi_i\rangle\}$, where $r_i$ is the result (either $ok$ or a counterexample) corresponding to configurations $\conf$ satisfying $\phi_i$. Conversely, we define a \emph{variational query result} $ \pl{S} \in \Var{\mathsf{QR}}$ {to be a set of pairs $\{\langle e_i, \phi_i\rangle \}$, with each $\phi_i$ describing the set of configurations in which element $e_i$ is present and satisfies the query}.}

\newText[analysisFormat2]{The choice of how to represent variational inputs to these analyses is more subtle. Taking model checking as an example, a ``natural'' choice may be to take the input as a \emph{product line of model checking problems} $\Var{(\token{LTS} \times \token{LTL})}$, i.e., an input is a pair $\langle \pl{M}, \pl{Spec} \rangle$, where $\pl{M}$ is a product line of LTS models (e.g., an FTS) and $\pl{Spec} = \{\langle \psi_i, \phi_i \rangle \}$, where $\psi_i$ is the LTL specification used for verification of configurations satisfying $\phi_i$. Indeed, there are some lifted verification tools which support analyzing a product line of models against variability-aware  specifications~\cite{classen2012model}. However, there is also a common design pattern in SPLE in which, when lifting an analysis whose input consists of both a model and specification (such as in model checking and querying), the lifted analysis \emph{only} analyzes the product line of models against a {single} (product-based) specification  ~\cite{MURPHY2025112280}. For instance, the lifted model query engine developed by Di Sandro et al.~\cite{di2023adding} follows this pattern. Our framework for lifted analytic templates can accommodate both kinds of lifting. However, for simplicity, our illustrations will assume that we are using the ``simpler'' form of lifting, in which we are analyzing a product line of models against a single specification. In particular, we will assume that for lifted model checking, we are verifying a product line of models against a single LTL specification (equivalently, a product line of specifications containing a single product). Likewise, for lifted querying, we will assume we are analyzing a product line of models against a single query (equivalently, a product line of queries containing a single product). }
\begin{example}\label{ex:mclift}
\begin{figure}[t]
    \centering
    \includegraphics[width=\linewidth]{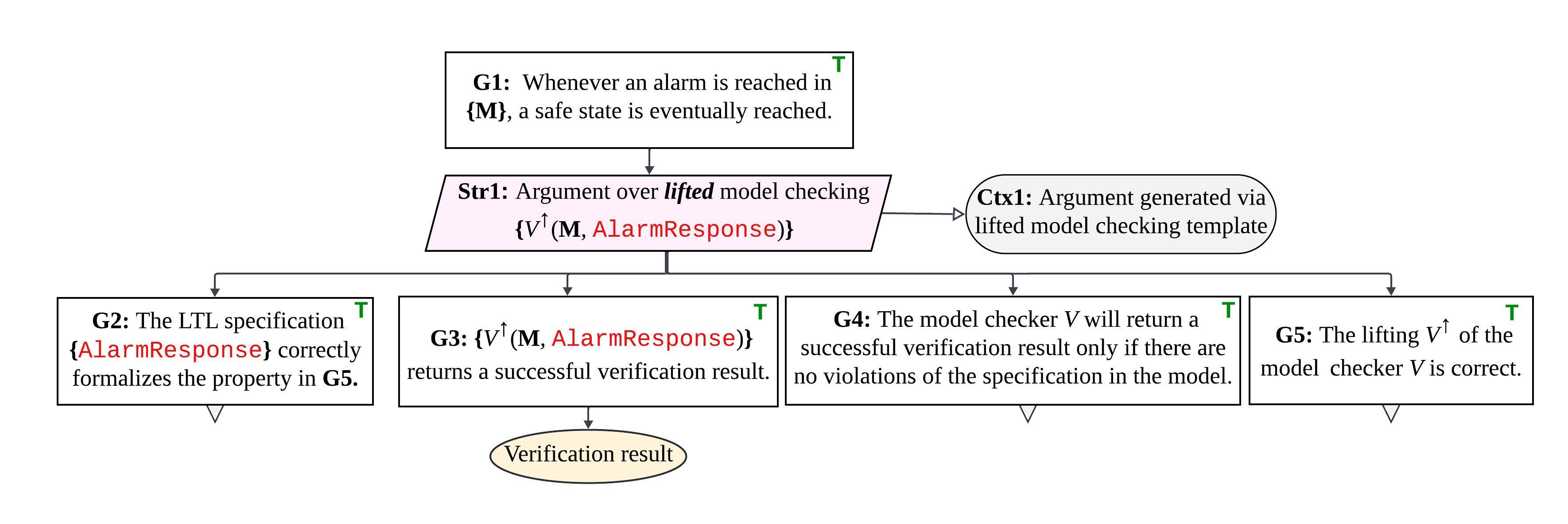}
    \caption{\bf Instantiation of the lifted model checking template on FTS $\pl{M}$ and LTL specification ${\ntn{\tt AlarmResponse}}$.}
    \label{fig:liftInst}
    \Description[]{}
\end{figure}
\nc{Recall the analytic template for model checking of LTSs formalized in Example~\ref{ex:mc_template}. Given a lifted LTL model \typofix{$V^\uparrow$ checker}{checker $V^\uparrow$} for FTSs (e.g., SNIP ~\cite{classen2012model}), we can lift this template and instantiate it as shown in Fig.~\ref{fig:liftInst}. In this example, the \change[instaClar]{instantiation is done with presence condition $\top$, so this argument pertains to all products of the FTS $\pl{M}$.}{parent goal being decomposed is annotated with $\top$, meaning that \textbf{G1} must be supported for all configurations. Suppose that, when instantiating the template, all configurations are verified by $V^\uparrow$. We are then able to instantiate the template across all configurations, hence all children of \textbf{Str1} are also annotated by $\top$. If some violations had been found, we could instantiate the template only for those configurations which were verified.}  Note \weakDelete{that goals $g_1$ and $g_{out}$ are abstracted over configurations of $\pl{M}$, and note} the new subgoal ${\ntn{\textbf{G5}}}$ asserting the correctness of the lift. We emphasize that  \weakChange{\typofix{$g_3$}{$g_{out}$}}{\textbf{G3}} is interpreted with respect to the \emph{lifted} model checking result. \weakChange{That is, when we instantiate the template, we have already verified all products of $\pl{M}$, and}{Consequentially,} deriving a product AC does not trigger additional product-based verification.} 
\end{example}

\subsection{Analytic Variational Argumentation}

Beyond their usefulness as a means of collecting variational evidence, there is another compelling reason to integrate lifted analyses into assurance arguments. Since lifted analyses -- by definition -- produce results which reflect variability in the product line, these results can be used as a basis for introducing variability in the assurance case. 

To illustrate this, let us \weakChange{begin by lifting}{lift} the enumeration template defined in Example~\ref{ex:enumDecomp}. We first define the notion of a \emph{variational singleton}, i.e., a singleton set $\{x\}$ annotated by presence condition $\phi$, denoted $\{x\}_\phi$. Given a configuration $\conf$, we have $\{x\}_\phi|_\conf = x$ if $\conf \vDash \phi$, and \change[undef]{$\{x\}_\phi|_\conf = \emptyset$ otherwise}{$\{x\}_\phi|_\conf = *$ otherwise, where $*$ represents an ``undefined'' value.}

\begin{example}[Lifted Enumeration Template]\label{ex:enumDecompLift}
    Fix a universe $U$ and a predicate $P$ over $U$. Given the enumeration template $\template{\mathbb{P}(U)}{\mathbb{P}(U)} = \langle \mathsf{Forall}_P, I, C\rangle$, the \emph{lifted} enumeration template is defined by the instantiation function $\ntn{I^\uparrow : \Var{\mathbb{P}(U)} \to \Var{\mathbb{P}(\Goal)}}$ (defined for finite inputs) as 
    $$\ntn{I^\uparrow(\{\langle x_1,\phi_1\rangle \ellipses  \langle x_n,\phi_n\rangle \}) = \{ \langle \{x_1\}_{\phi_1} , P, \phi \land \phi_1\rangle\ellipses  \langle \{x_n\}_{\phi_n}, P, \phi \land \phi_n\rangle\}}$$
\ntn{where $\phi$ is the presence condition of the parent goal being decomposed by the template.}
\end{example}

That is, $I^\uparrow$ simply maps each annotated element $\langle x_i, \phi_i\rangle$ to a variational subgoal which, upon derivation \weakNew{under configuration $\conf$}, asserts $P(x_i)$ only when $\phi_i$ is \change[junkexpl]{true}{satisfied by $\conf$. By construction, the ``undefined'' value $*$ will never be observed in any configuration; it merely serves to make derivation a total function.}

\begin{proposition}\label{prop:liftEnum}
    The instantiation function for the lifted enumeration template (Def.~\ref{ex:enumDecompLift}) is a correct lift of the instantiation function in Example~\ref{ex:enumDecomp}.
\end{proposition}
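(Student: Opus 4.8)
The plan is to verify the single commutation equation demanded by Def.~\ref{def:liftTemplate}: for every finite $\pl{x} \in \Var{\mathbb{P}(U)}$ and every configuration $\conf \in \llbracket \phi \rrbracket$ (where $\phi$ is the presence condition of the parent goal being decomposed), we must show $I^\uparrow(\pl{x})|_\conf = I(\pl{x}|_\conf)$. Since both instantiation functions are defined only on finite inputs, I would fix an arbitrary finite $\pl{x} = \{\langle x_1,\phi_1\rangle\ellipses \langle x_n,\phi_n\rangle\}$ and compute each side by unfolding definitions. The argument is a direct calculation; the only real content is checking that the two presence-condition filters --- the structural one carried by the goals and the one governing derivation of variational sets --- select the same indices.

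First I would compute the right-hand side. Using the derivation of a variational set (the footnote to Example~\ref{ex:vdomdecomp}), $\pl{x}|_\conf = \{x_i \mid \conf \vDash \phi_i\}$. Applying the product-based instantiation function $I$ of Example~\ref{ex:enumDecomp} then yields $I(\pl{x}|_\conf) = \{\langle x_i, P\rangle \mid \conf \vDash \phi_i\}$.

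Next I would compute the left-hand side. By Example~\ref{ex:enumDecompLift}, $I^\uparrow(\pl{x})$ is the set of variational goals $\pl{g}_i = \langle \{x_i\}_{\phi_i}, P, \phi \land \phi_i\rangle$. Deriving this set of variational goals under $\conf$, following the filtering convention used in Def.~\ref{def:varACDerivation}, retains exactly those $\pl{g}_i$ with $\conf \vDash \token{pc}(\pl{g}_i) = \phi \land \phi_i$; since $\conf \in \llbracket \phi \rrbracket$ already gives $\conf \vDash \phi$, this condition collapses to $\conf \vDash \phi_i$, precisely the index set selected on the right-hand side. For each retained goal, Def.~\ref{def:vgoal} gives $\pl{g}_i|_\conf = \langle \{x_i\}_{\phi_i}|_\conf, P\rangle$, and because retention guarantees $\conf \vDash \phi_i$, the variational singleton derives to $\{x_i\}_{\phi_i}|_\conf = x_i$ rather than the junk value $*$. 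Hence $I^\uparrow(\pl{x})|_\conf = \{\langle x_i, P\rangle \mid \conf \vDash \phi_i\}$, which matches the right-hand side and completes the proof.

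The main (and essentially only) subtlety to watch for is the interaction between the structural presence condition $\phi \land \phi_i$ attached to each lifted subgoal and the ``undefined'' derivation $\{x_i\}_{\phi_i}|_\conf = *$ introduced in Example~\ref{ex:enumDecompLift}. The point I would make explicit is that these two mechanisms are perfectly aligned: a lifted subgoal survives derivation exactly when $\conf \vDash \phi_i$, which is exactly the condition under which its variational-singleton subject derives to the genuine element $x_i$. Consequently the value $*$ is never observed during any derivation that actually occurs, confirming the informal remark that $*$ serves only to make derivation total. No induction is required, since the instantiation function operates on a flat finite input rather than on the recursive assurance-case structure.
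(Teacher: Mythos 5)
Your proof is correct. Note that the paper never actually proves Prop.~\ref{prop:liftEnum}: its appendix only proves the analogous claim for variational domain decomposition (Prop.~\ref{prop:enumLiftCorrect}), so that proof is the closest point of comparison. Your argument is the natural extension of it --- fix $\conf \in \llbracket \phi \rrbracket$, unfold both sides of $I^\uparrow(\pl{x})|_\conf = I(\pl{x}|_\conf)$, and check they agree --- but the enumeration case has genuine content that the domain-decomposition case lacks. There, the paper's proof is a one-line unfold precisely because, as it notes, ``there is no structural variability in the argument'': every subgoal carries the same presence condition $\phi$. The lifted enumeration template instead attaches a distinct presence condition $\phi \land \phi_i$ to each subgoal, and your proof isolates exactly the resulting obligation: the structural filter used when deriving the set of variational goals ($\conf \vDash \phi \land \phi_i$, which collapses to $\conf \vDash \phi_i$ since $\conf \vDash \phi$) coincides with the filter in the derivation of the variational set $\pl{x}|_\conf$, and on the retained indices the variational singleton $\{x_i\}_{\phi_i}$ derives to $x_i$ rather than to the undefined value $*$, so both sides yield $\{\langle x_i, P\rangle \mid \conf \vDash \phi_i\}$. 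This filter-alignment step, together with the observation that $*$ is never observed under any derivation that actually occurs, is precisely what a complete proof of this proposition requires, and it is what the paper's existing appendix proof would be missing if transplanted verbatim.
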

{\ntn{Finally, with $X =  {\tt LTS} ~\times~ {\tt Query}$ (the set of inputs to the query engine) and $Y = {\mathsf{QR}}$ (the set of possible query results), define the  \emph{lifted querying template} ${\ntn{\template{\token{LTS}}{X}^\uparrow}}$ induced by a lifted query engine $Q^\uparrow: \Var{X} \to \Var{Y}$, which is a lift of the product-based {\ntn{querying template $\template{\token{LTS}}{X}$}   (Example~\ref{ex:queryTemplate})}}}

\begin{example}\label{ex:analyticVar}          
 \begin{figure}[t]
     \centering
     \includegraphics[width=\linewidth]{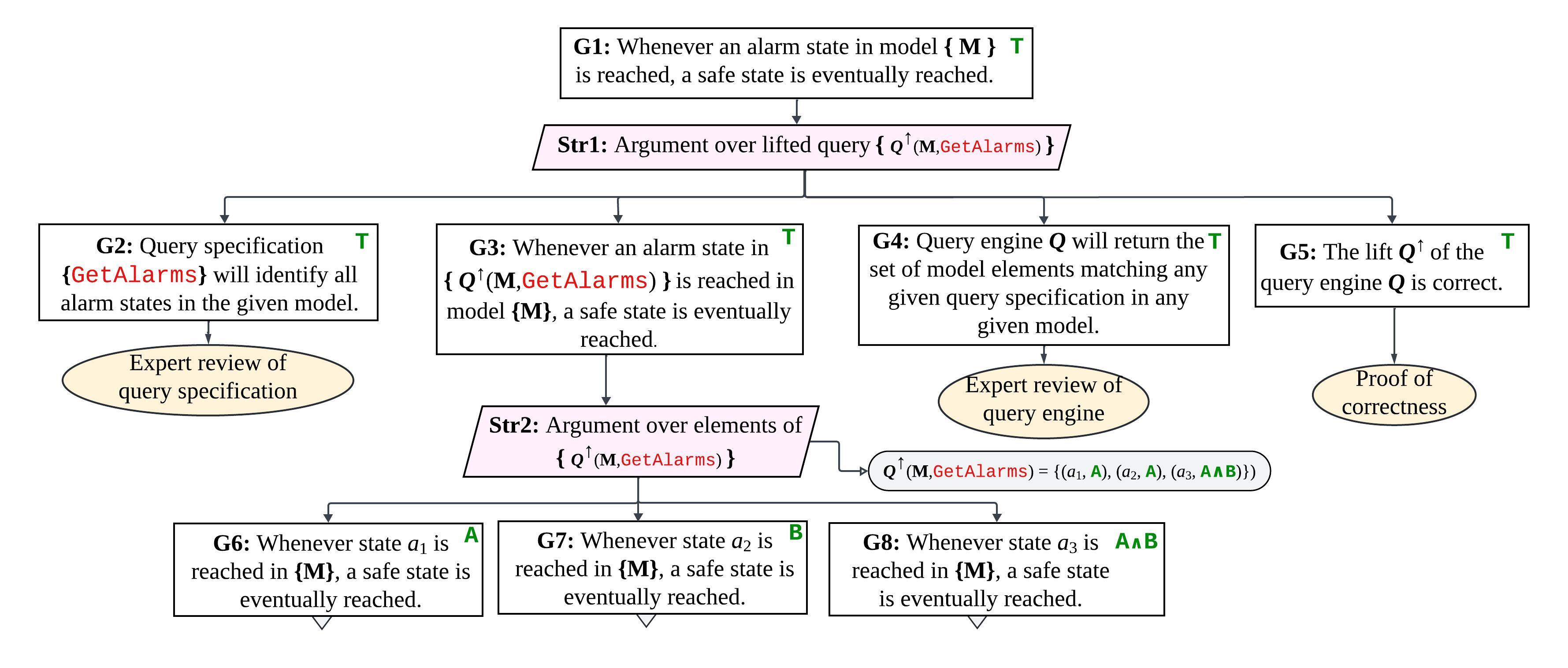}
     \caption{\bf Illustration of analytic variational argumentation achieved through a lifted query template and lifted enumeration template (Example~\ref{ex:analyticVar}). }
     \label{fig:lift_query_example}
     \Description[]{}
 \end{figure}  
\nc{The variational AC fragment in Fig.~\ref{fig:lift_query_example} shows how the lifted query template \weakDelete{$T_{Q^\uparrow}$} and lifted enumeration template \weakDelete{$T_{E^\uparrow}$} can be composed to systematically \weakChange{produce}{perform}  variational argumentation. We first apply the lifted query template to decompose root goal $\weakNew{\textbf{G1}}$, which involves running the lifted query on model $\pl{M}$ to produce a variational set $Q^\uparrow(\pl{M},{\tt GetAlarms})$. This variational set is then decomposed through the lifted enumeration template, such that the presence condition of each subgoal is taken from the annotated query result.}
\end{example}

\nc{What is noteworthy about this example is that the\change[auto]{entire process was completely automatic}{assurance-relevant variations between sets of products were identified automatically}, requiring no \weakDelete{design or} manual inspection of the product line\weakDelete{ by a human engineer}. Without lifted analyses, \weakChange{a human}{one} would need to manually inspect the product line model to determine configurations under which each alarm is relevant, and then manually annotate the AC accordingly, which is laborious and error-prone for complex models. Instead, \emph{analytic variational argumentation} allows us to automatically identify assurance-relevant variation points in product line models, and feed this information into lifted templates to produce variational assurance.}

\nc{In this section, we lifted the framework for formal AC development given in Sec.~\ref{sec:product_based_acs} to the product line level. Our formal language of variational ACs provides a suitable setting in which to study the lifting of AC templates. The soundness of the lifting \weakChange{process}{of arbitrary templates} is given by Thm.~\ref{thm:liftTemplateCorrect}. We also studied the lifting of analytic templates, and proposed analytic variational argumentation in which lifted analyses are used to identify assurance-relevant variation points to be reflected in the AC, while preserving argument validity (per Thm.~\ref{thm:liftAnalyticCorrect}).}

\section{A Semantic Approach to Assurance Case Regression}
\label{sec:product_regression}

In the preceding sections, we formalized and lifted AC development. The primary motivation for formalized ACs is the ability to verify their correctness. However, another benefit of formalization becomes apparent when we need to consider the effect of a \emph{change} on the AC. When ACs are completely informal or unstructured, determining which aspects of system assurance have been potentially compromised can be highly laborious and error-prone. By contrast, when ACs are developed and interpreted formally, the regression of assurance (or its preservation) \weakChange{becomes a problem which can be rigorously and systematically analyzed}{can be analyzed mechanically}. This is especially crucial in the context of ACs for product lines, where the complexity of analyzing assurance regression manually quickly becomes intractable.

In this section, we study the {regression} of (product) ACs when system models are evolved, which in turn will form the foundation of a lifted regression analysis in Sec.~\ref{sec:liftedRegression}. Proofs of selected theorems not given here are provided in Appendix~\ref{app:proofs:sec5}.

\subsection{Specification of Product-Based Regression Analysis}
\label{sec:regressionOutline}
\newText[abstr3]{In analyzing the regression of an AC, we are effectively analyzing the regression of assurance at three levels of abstraction: at the level of \emph{evidence}, at the level of \emph{arguments}, and at the level of the entire AC. In all cases, however, we use the same general notion of a \emph{regression value}, adapted from ~\cite{kokalySafetyCaseImpact2017}. } 

\begin{definition}[{Regression Value}]\label{def:regression}
     \change{Given some set $X$ and a predicate $P$ over $X$, a \emph{regression analysis} for $P$ is a function ${R_P} : X \to \{{\cmark}, {\xmark}, \textbf{?}\}$. The symbols \cmark, \xmark and \textbf{?} are referred to as \emph{regression values}. Given $x \in X$ (representing some ``old'' object),  and $x^\prime \in X$ (representing some ``new'' object), the functional specification of $R$ is as follows:}{Given a predicate $P$ over a set $X$, let $x \in X$ such that $P(x)$. Let $x' \in X$ be some modification of $x$. A \emph{regression value} for $x'$ with respect to $P(x)$ is one of the following:}
\begin{itemize}
    \item ``Reuse'' (denoted $\cmark$), indicating that $P(x')$ holds.
    \item ``Revise'' (denoted $\xmark$), indicating that $P(x')$ does \emph{not} hold.
    \item ``Recheck'' (denoted ?), indicating that we cannot determine whether $P(x')$ holds.
\end{itemize}
\end{definition}
\newText[regClar]{A regression analysis, broadly speaking, is any analysis that determines a regression value following some modification to some object (e.g., a program, a model, or an assurance case). Given a predicate $P$ over $X$, we may speak formally of a \emph{regression analysis} as a function $R_P : X \times X \to \{\cmark, \xmark, ?\}$ such that, given any ``old'' $x \in X$ satisfying $P(x)$ and some ``new' $x' \in X$, $R_P(x,x')$ computes a regression value for $x'$ with respect to $P(x)$. }
Examples of regression analyses include program equivalence checking (e.g., for regression of testing results~\cite{mora2018client}) and structural model analyses (e.g., for regression of model checking results~\cite{menghi2021torpedo}). Obviously, such analyses can be used to determine whether individual pieces of evidence can be reused through a system evolution. But we can also consider regression analysis of the assurance case itself, wherein the predicate we are analyzing is its \emph{support} (Def.~\ref{def:supp}). 

The procedure defined in this section serves exactly this purpose. It can be decoupled into two ``passes'' over \weakChange{$A \in \AC$}{an assurance case $A$} following an evolution of system models. The \emph{forward pass} proceeds in a \emph{top-down} fashion, such that goals referring to out-of-date models are updated, and strategies are checked for argument regression -- each strategy being annotated by a regression value depending on whether the argument remains sound. Following the forward pass, we obtain an ``updated'' AC $A'$. The \emph{backward pass} then aims to determine the support of $A^\prime$ given the support of $A$. This is done in a \emph{bottom-up} fashion; we begin by determining whether pieces of evidence can be reused to support the updated goals, and propagate these results up through the AC, annotating each internal goal with a regression value as we ascend. \weakNew{To avoid wasting resources, we \emph{only} analyze children of strategies if we can determine the regression of the strategy itself. If we cannot determine the regression of a strategy (i.e., is annotated as ``Recheck''), we immediately annotate all its descendants with ``Recheck''. This is done to indicate that the strategy itself must be inspected manually to determine which of its children are reusable following the evolution, and only then is it worthwhile to analyze their regression. }

\begin{example}
\begin{figure}
    \centering
    \includegraphics[width=0.6\linewidth]{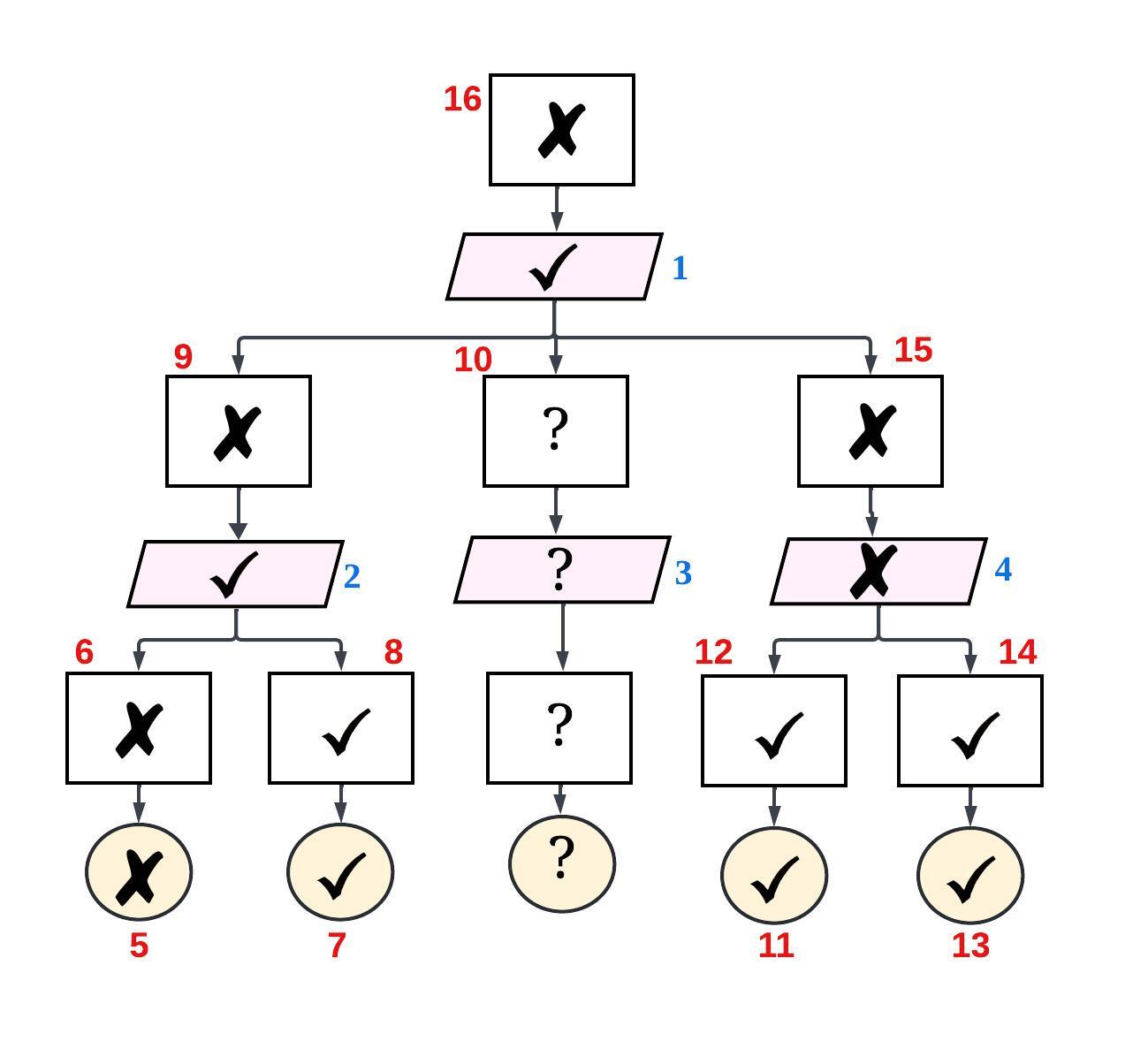}
    \caption{\bf An annotated AC produced through regression analysis, indicating the visitation order. Blue labels (1-4) indicate annotations made during the forward pass, red labels (5-16) indicate annotations made during the backward pass. Once the strategy labeled by (3) is annotated with ``Recheck'' (?) during the forward pass, all its descendants immediately inherit this regression value.} 
    \label{fig:product_regression_example}
    \Description[]{}
\end{figure}

Fig.~\ref{fig:product_regression_example} illustrates the potential annotations for an AC following an evolution, and their sequencing. Each goal, evidence artifact, and strategy is annotated by a regression value. The annotation of an evidence artifact indicates whether this artifact \typofix{}{is} still adequate for the goal to which it is assigned. The annotation of a strategy indicates whether the argument induced by this strategy remains sound. The annotation of a goal indicates whether the AC rooted at this goal is supported vis-à-vis Def.~\ref{def:supp}. Numbered labels indicate the sequence in which annotations are made, with those made during the forward pass shown in blue \weakNew{(1-4)}, and those made during the backward pass shown in red \weakNew{(5-16)}. \weakNew{The evidence-goal pair in the middle branch is marked as ``Recheck'' as a direct consequence of their parent strategy (labeled 3) being marked as ``Recheck'' during the forward pass. This regression value is then assigned to the parent goal of the strategy during the backward pass.}

\end{example}

\typofix{Let's}{We} fix a set $\mathcal{S} = \{M_1\ellipses  M_n\}$ of system models which are subjects of the AC. We define an \emph{evolution set} as a set $\Delta = \{(M_j,M_j^\prime), ... (M_\ell, M_\ell^\prime)\}$, where $1 \leq j,\ell \leq n$, and each $M_k^\prime$ is the ``evolved'' version of model $M_k$. Note that \change[subsetChange]{only a subset of system models may be modified in an evolution}{a given evolution may modify every system model appearing in the assurance case, or it may only modify a subset of these models}. Given a system model $M$, we use $M \in \Delta$ as shorthand for $\exists~M^\prime \in \mathcal{M}, (M,M^\prime) \in \Delta$.

\subsection{The Forward Pass: Regression of Arguments}
\label{sec:regression:forward}
As described above, the forward pass of the regression analysis is \weakDelete{effectively} tasked with updating and potentially revising the arguments of the AC. \weakChange{Proceeding recursively down the AC, we need to}{We} consider the regression of each strategy according to the means by which it was created: \weakChange{it may be an instantiation of a general template (Sec.~\ref{sec:acTemplates}), an analytic template (Sec.~\ref{sec:analyticTemplates}), or it may not be template-based at all}{in particular, whether it was created through the use of a formal template (in which case its regression can be formally analyzed), or whether it was created by hand (in which case we assume its regression will be analyzed manually)}. This is also where the distinction between argument-producing and evidence-producing analytic templates becomes important. \newText[newLazy1]{Recall from Sec.~\ref{sec:analyticTemplates} that an application of an analytic template $\template{\modlang}{X}$ for analysis $f: X \to Y$ is \emph{evidence}-producing if the subgoal $g_Y$ corresponding to the output of $f$ can be supported directly by using $f(x)$ as evidence (e.g., $f(x)$ is a model checking result and $g_Y$ asserts that the verification was successful). Conversely, an application of $\template{\modlang}{X}$ is \emph{argument-producing} if $g_Y$ is the parent goal of a decomposition, i.e., $g_Y$ is supported by further argumentation rather than by evidence. Note that we can automatically distinguish the two kinds of applications (i.e., the resulting strategies) simply based on the structure of the AC, and that a given analytic template may in principle have applications which are both evidence-producing and argument-producing.} 
{When we complete the forward pass through the AC, if we encounter an} \change[newLazy2]{analytic template $T_f$ which has been applied to model $M \in \Delta$, we need to ask whether it makes sense to re-apply $T_f$ on $M^\prime$, which in turn requires evaluating $f(M^\prime)$.}{application of an analytic template $\template{\modlang}{X}$ to some input $x \in X$, we need to decide whether or not to re-instantiate the template, which would involve re-executing the analysis $f$. We can adopt an \emph{eager} strategy, in which all analytic templates are re-instantiated, or a \emph{lazy} strategy, in which some analytic templates are \emph{not} re-instantiated.} 

\change[newLazy3]{In our regression analysis, we  adopt the convention that argument-producing  templates should be re-instantiated \emph{eagerly} during the forward pass, while evidence-producing templates should \emph{not}; instead, we use the information generated during the backward pass to decide whether the evidence-producing analyses {need} to be re-executed. }{In this work, we adopt a lazy strategy. In particular, we propose that if an application of an analytic template is {argument}-producing, it should be eagerly re-instantiated, whereas an application which is {evidence}-producing should not be re-instantiated. We justify our strategy with two observations. First, note that any nodes which descend from an argument-producing analytic strategy may depend on the results of the analysis. For example, considering the argumnet over querying shown in Example~\ref{ex:fullRE}, each of the subgoals corresponds to an alarm state identified by the query. If we do not re-execute the analysis by re-instantiating the template, we cannot determine whether the strategy enumerating over all the alarm states has regressed or not, since we do not know if there are any new states in the evolved model. In this circumstance, it makes sense to re-instantiate the template in order to analyze the completeness of the subsequent argument.}

\weakNew{However, in the case of evidence-producing instantiations of templates, the only AC nodes dependent on the analysis result are the output goal $g_Y$ and the evidence supporting it. To be sure, in some cases the only viable option may be to re-execute the analysis as a means of producing up-to-date evidence. In other cases, however, we may be able to determine whether the evidence can be reused without re-executing the analysis, and this may be less resource-intensive. For instance, referring once again to Example~\ref{ex:fullRE}, we can consider the (evidence-producing) application of the model checking template. Rather than re-running the model checker on the evolved model, we may wish apply a  structural regression analysis comparing the previous and modified versions of the model. For example, the structural regression analysis for model checking developed by Menghi et al.~\cite{menghi2021torpedo} has complexity which is linear in the size of the model, whereas LTL model checking is linear in the size of the model but exponential in the size of the LTL formula being verified~\cite{baier2008principles}.}

\weakNew{With this lazy evaluation strategy in mind, we begin by considering the general procedure for analyzing the regression of an argument built from a template (which may or may not be analytic)}.

\subsubsection{\ntn{Regression Analysis of Template-Based Arguments}} 

\label{sec:tempreg}
\begin{algorithm}[t]
\onehalfspacing
\caption{\ntn{Regression Analysis for Template-Based Arguments}}
\label{alg:templateRegression}
    \begin{algorithmic}[1]
            \Procedure{$\mathsf{TemplateRegression}$}{$M^\prime,st,\mathcal{A},\template{\modlang}{D},x$}
        \State \textbf{fix} $\template{\modlang}{D} = \langle P,I,C\rangle$
        \If{some $x^\prime$ is available such that $C(x^\prime,M^\prime)$}
        \If{$st$ is analytic and evidence-producing}{~$\{g_1^\prime\ellipses g_n^\prime\} := \mathsf{Replace}(\mathcal{A},x,x')$}
        \Else $\;\{g_1^\prime\ellipses g_n^\prime\}  := {I}(x^\prime) $
        \EndIf
        \Else 
        \State \textbf{for} $A_i \in \mathcal{A}$ \textbf{do} $\mathsf{MarkAsRecheck}(A_i)$
            \State \Return $ ?$
        \EndIf
        \State $\langle\mathcal{A}_{\tt Obs},\mathcal{A}_{\tt Reuse}, \mathcal{A}_{\tt New}\rangle := {\mathsf{Match}}(\mathcal{A}, \{g_1^\prime\ellipses  g_n^\prime\})$
        \State \textbf{for} $A_i \in \mathcal{A}_{\tt Obs}$ \textbf{do} $\mathsf{MarkAsObsolete}(A_i)$
        \State $v_{st} := \IfThenElse{\mathcal{A}_{\tt New} = \emptyset}{\cmark}{\xmark}$
        \State \Return $\ v_{st}$
        \EndProcedure
    \end{algorithmic}
\end{algorithm}

$\;$
\weakChange{We fix a decomposition strategy ${\tt Decomp}(g,st,\mathcal{A})$, where $g = \langle M,P\rangle$ is a goal referring to model $M$ via predicate $P$, where $st$ is a strategy obtained by instantiating some template $\ntn{\template_{\modlang}{D}} = \langle P,I,C\rangle$ with input $x$, where $I$ is the template's instantiation function, and where $C$ is its correctness criterion. Suppose that $M \in \Delta$, and let $g^\prime$ be the updated goal over $M^\prime$.}{Consider a goal $\langle M, P\rangle$ which has been decomposed using strategy $st$ generated from template $\template{\modlang}{D} = \langle P, I, C\rangle$, by instantiation with $x \in D$.}
During the forward pass, we need to determine two things about this strategy: its regression value, and (if it needs to be revised), which of its subgoals are reused in the new argument. The latter is necessary to avoid analyzing the regression of \weakDelete{AC fragments underneath} goals which will\typofix{}{no} longer be relevant after argument revision.

Alg.~\ref{alg:templateRegression} defines a procedure which takes as input the evolved model $ M^\prime$, \weakNew{the strategy $st$}, the ``old'' child ACs $\mathcal{A}$, the template \weakChange{$T = \langle P,I,C \rangle$}{$\template{\modlang}{D}$}, and the \weakChange{input $x$ which was used to instantiate the template, and it}{instantiation input $x$. The algorithm} returns a regression value $v_{st}$ for the strategy. The first step is \weakChange{to check whether the input $x$ remains correct relative to $M^\prime$ given the correctness criterion. If so, we produce the updated subgoals $\{g_1^\prime\ellipses g_n^\prime\}$ for the template, which refer to $M^\prime$ rather than $M$. Otherwise, we can attempt to produce some new input $x^\prime$ which satisfies the correctness criterion, and use it for the instantiation (Line 4).}{to try to identify a new instantiation input $x' \in D$ which satisfies the correctness criterion $C$ of the template given the evolved model $M^\prime$.} We do not specify how such an $x^\prime$ should be synthesized \weakNew{in general} (though this introduces an interesting \emph{repair} problem for future work, \typofix{}{cf. Sec.~\ref{sec:conclusion}}). {If no such $x^\prime$ can be obtained, we have no way of determining what parts of the argument are reusable. As such, we \weakChange{annotate all descendants as \textbf{?} }{mark all descendants of the strategy as ``Recheck''} (Line {\ntn{7}}) before returning \weakChange{$v_{st} = \xmark$}{$v_{st} = ~?$} (Line 8).}{}

\weakChange{After either of the first two cases, we are left with a set of updated subgoals $\{g^\prime_1\ellipses  g_n^\prime\}$.}{If we are able to find a suitable input $x'$, we next need to consider whether to re-instantiate the template. As explained above, we adopt a lazy evaluation strategy in which applications of analytic templates which are evidence-producing are not re-instantiated. In such a case,  we only \emph{replace} each occurrence of $x$ with $x'$ in the subgoals of the argument (i.e., the roots of each $A \in \mathcal{A}$), which we do with an operator $\mathsf{Replace}$ (Line 4), producing a set of updated goals $\{g_1'\ellipses g_n'\}$. Otherwise, if the strategy was developed using a non-analytic template, or the strategy is analytic and argument-producing, we obtain the updated goals $\{g_1'\ellipses g_n'\}$ by re-instantiating the template with $x'$ (Line 5).} 

We now need to compare \weakChange{these}{the updated goals $\{g_1'\ellipses g_n'\}$} against the existing \weakChange{sub-ACs $\mathcal{A}$ to determine which sub-ACs}{subgoals given by the roots of each $A \in \mathcal{A}$  to determine which parts of the AC} ought to be further analyzed. We do this using an operator $\mathsf{Match}$ (Line {\ntn{9}}), which matches updated subgoals $g_i^\prime$ against the root goals of $\mathcal{A}$. \weakChange{When a match is found (meaning they refer to the same pair model (modulo its evolution) via the same predicate), we replace the root goal of the sub-AC with the updated goal, and add it to a set $\mathcal{A}_{\tt Reuse}$}{We consider a goal $g'_i$ to be \emph{matched} by some $A \in \mathcal{A}$ if the root goal of $A$ is either (1) identical to $g_i'$, (2) equivalent up to substitution with an evolved system model, or (3) equivalent up to substitution with an updated analysis result (in the case of an analytic template). Those children $A \in \mathcal{A}$ which can be matched to some $g_i'$ are added to the set $\mathcal{A}_{\tt Reuse}$, since they are reused in the updated AC.} Any \weakChange{sub-ACs}{children} in $\mathcal{A}$ which are not matched to any $g^\prime_i$ represent \emph{obsolete} children of the strategy (recorded in $\mathcal{A}_{\tt Obs}$), and are not analyzed further. Finally, any subgoals $g_i^\prime$ not matched by any existing \weakChange{subtree}{child $A \in \mathcal{A}$} are recorded as \emph{new} subgoals ($\mathcal{A}_{\tt New}$).\footnote{Given that we have identified $\mathcal{A}_{\tt New}$, it would be natural to include them as undeveloped subgoals in the AC right away, in which case we could return $v_{st} = \cmark$. For the sake of simplicity, we do not do any such ``argument repair'' here, and leave the structure of the AC unchanged.} Any obsolete \weakChange{sub-ACs}{children} are then annotated as such (Line 10). Finally, to determine a regression value for the strategy, we check whether $\mathcal{A}_{\tt New}$ is empty, \change[curPremMis]{(meaning that the current argument is missing some premises)}{meaning that re-instantiating the template (if it was done)  did not produce any \emph{new} subgoals}. \weakNew{If it is empty, we return  ``Reuse'' ($v_{st} = \cmark$); otherwise, we return ``Revise'' ($v_{st} = \xmark$).}

\begin{example} 
    Recall the annotated AC shown in Fig.~\ref{fig:product_regression_example}. Note the rightmost strategy, which is annotated by $\xmark$. Since the children of this strategy have apparently been verified against regression, it \emph{must} be the case that (1) we were able to produce a correct input to re-instantiate the template (otherwise {\ntn{the strategy and}} both subgoals would be annotated \weakChange{by}{with} $\textbf{?}$), (2) both subgoals were found to be reusable during the forward pass (otherwise they would have been marked obsolete), and (3) there is at least one missing subgoal (otherwise the strategy would be annotated by $\cmark$).
\end{example}

\begin{lemma}\label{lem:templateRegCorrect}
    \weakNew{Let ${\mathtt{Decomp}}(g,st,\mathcal{A})$ be a decomposition of parent goal $g = \langle M,P\rangle$ into children $\mathcal{A}$, and let $M' \in \modlang$ be the evolved version of model $M \in \modlang$. Let $\template{\modlang}{D} = \langle P,I,C\rangle$ be the (valid) template used to produce this decomposition via instantiation with $x \in D$. Let $v_{st}$ be the regression value returned by $\mathsf{TemplateRegression}(M',st,\mathcal{A},\template{\modlang}{D},x)$. Then:}
    \begin{enumerate}
        \item \weakNew{$v_{st} = \cmark$ iff the decomposition remains sound after updating $g$ and $\mathcal{A}$ with evolved system model(s), and (if applicable) re-instantiating $\template{\modlang}{D}$ for $M^\prime$,
        \item $v_{st} = \xmark$ iff re-instantiating $\template{\modlang}{D}$ for $M^\prime$ produces at least one new subgoal not provided in $\mathcal{A}$ which is required to make the decomposition sound, and 
        \item $v_{st} = ~?$ iff the template could not be re-instantiated for $M'$.}
    \end{enumerate}

\end{lemma}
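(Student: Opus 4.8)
The plan is to prove all three biconditionals by a single case analysis on the control flow of Alg.~\ref{alg:templateRegression}, exploiting the fact that the procedure has exactly three return points and that the three regression values $\cmark$, $\xmark$, $?$ correspond to mutually exclusive semantic conditions. Because of this, it suffices to establish the three \emph{forward} implications (the algorithm returns a given value $\Rightarrow$ its associated semantic condition holds); the converse of each biconditional then follows automatically, since if the semantic condition for one value held while the algorithm returned a different value, two of the forward implications would jointly contradict the mutual exclusivity of the semantic conditions. So I would first argue that the three semantic conditions are pairwise exclusive, and then discharge the three forward implications one branch at a time.

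The $?$ branch is the easiest, so I would handle it first. Reading the algorithm, the value $?$ is returned exactly when the guard ``some $x'$ is available such that $C(x',M')$'' fails (Lines~3,7--8). Since re-instantiating the template $\template{\modlang}{D}$ for $M'$ is, by Def.~\ref{def:template}, precisely the task of supplying an input satisfying the correctness criterion against $M'$, the failure of this guard is definitionally the statement ``the template could not be re-instantiated for $M'$'', giving claim~(3) directly. This also confines the remaining two cases to the branch on which a valid $x'$ has been found, so that $C(x',M')$ holds throughout the rest of the analysis.

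On that branch the central tool is the \emph{validity} of $\template{\modlang}{D}$ (Def.~\ref{def:template}): since $C(x',M')$ holds, we have $I(x') \prec \langle M',P\rangle$, i.e.\ the template's prescribed instantiation soundly refines the \emph{updated} parent goal. The second ingredient is the correctness of $\mathsf{Match}$, which I would state as a prerequisite property: $\mathsf{Match}$ partitions $\mathcal{A}$ so that the root goals of $\mathcal{A}_{\tt Reuse}$ are exactly those subgoals of the updated set $\{g_1'\ellipses g_n'\}$ matched by an existing child (up to model or analysis-result substitution), while $\mathcal{A}_{\tt New}$ collects the prescribed subgoals left unmatched. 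I would also observe that the two production routes for $\{g_1'\ellipses g_n'\}$ --- re-instantiation $I(x')$ for non-analytic or argument-producing strategies (Line~5), and $\mathsf{Replace}(\mathcal{A},x,x')$ for evidence-producing analytic strategies (Line~4) --- yield the same set of subgoals up to the deferred analysis result, so the soundness reasoning is uniform; the ``(if applicable)'' qualifier in claim~(1) is exactly the bookkeeping for this distinction. With these in hand, the $\cmark$ case follows: $\mathcal{A}_{\tt New}=\emptyset$ means every prescribed subgoal is matched by a reused child, so the children of the updated decomposition carry goals equivalent to the full set $I(x')$, which by validity refines $\langle M',P\rangle$; hence the decomposition remains sound, giving claim~(1). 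Symmetrically, $\mathcal{A}_{\tt New}\neq\emptyset$ witnesses a prescribed subgoal absent from $\mathcal{A}$, which is the syntactic trigger for claim~(2).

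The hard part will be the reverse direction of claim~(2): turning the purely syntactic test $\mathcal{A}_{\tt New}\neq\emptyset$ into the semantic statement that the missing subgoal is \emph{required to make the decomposition sound}. The algorithm flags $\xmark$ whenever \emph{any} prescribed subgoal is unmatched, yet a priori a prescribed subgoal could be logically redundant (entailed by the remaining ones), in which case its absence would not actually break the refinement. I would bridge this gap by reading ``required'' relative to the template's prescription --- the valid template's instantiation $I(x')$ is taken as the canonical sound decomposition, so a missing \emph{prescribed} subgoal is ``required'' in that sense --- or, if the stronger reading is intended, by making explicit a non-redundancy assumption on template instantiations (no proper subset of $I(x')$ refines the parent) together with the soundness of $\mathsf{Match}$ (it never spuriously reports a subgoal as new). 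I would state this $\mathsf{Match}$/non-redundancy hypothesis up front, since the precision of the lemma --- and in particular the clean separation between $\cmark$ and $\xmark$ --- rests on it.
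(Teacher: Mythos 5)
Your proposal is correct and follows essentially the same route as the paper's proof: a case analysis on the algorithm's three return points, using failure of the guard on Line~3 for the $?$ case, and template validity together with $C(x',M')$ and the assumed correctness of $\mathsf{Match}$ (plus the observation that $\mathsf{Replace}$ and re-instantiation agree for analytic strategies) to separate $\cmark$ from $\xmark$. Your worry about the reading of ``required'' in claim~(2) is resolved exactly as your first bridging option suggests --- the paper implicitly treats the valid template's instantiation $I(x')$ as the canonical sound decomposition, so any unmatched prescribed subgoal counts as required.
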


\subsubsection{The Forward Pass}
\begin{algorithm}[t]
    \caption{Forward Pass of the Regression Analysis}
    \label{alg:instantiation}
    \begin{algorithmic}[1] 
       \onehalfspacing
        \Procedure{$\mathsf{ForwardPass}$}{$A,\Delta$}
                \If{${\tt Rt}(A)$ is predicative with $M \in \Delta$}{~$\mathsf{Update}({\tt Rt}(A),M^\prime)$} \Comment{$(M, M') \in \Delta$}
                \EndIf
                \Switch{$A$}
                \Case{${\tt Und}(g)$ \textbf{or} ${\tt Evd}(g,e)$:}~\Return
                \EndCase 
                \Case{${\tt Decomp}(g, st, \mathcal{A})$:}
                    \If{$st$ instantiates template $\template{\modlang}{D}$ with input $x$}
                    \If{$g = \langle M, P\rangle$ with $M \notin \Delta$}
                          \State $\mathsf{Annotate}(st, \cmark)$ 
                          \State \textbf{for} $A' \in \mathcal{A}$ \textbf{do } $\mathsf{ForwardPass}(A',\Delta)$
                        \Else
                           \State $v_{st}  := \mathsf{TemplateRegression}(M^\prime,st,\mathcal{A},\template{\modlang}{D},x)$
                            \State $\mathsf{Annotate}(st,v_{st})$
                            \If{$v_{st}=$ ?}~\textbf{return}
                            \Else ~\textbf{for} $A' \in \mathcal{A}$ s.t. $A'$ is not marked obsolete \textbf{do} $\mathsf{ForwardPass}(A',\Delta)$ 
                             \EndIf 
                        \EndIf 
                    \ElsIf{$st$ is not template-based}
                        \State $\mathsf{Annotate}(\{st\},\textbf{?})$
                        \State \textbf{for} $A' \in \mathcal{A}$ \textbf{do } $\mathsf{MarkAsRecheck}(A')$
                    \EndIf
                \EndCase
            \EndSwitch
        \EndProcedure
    \end{algorithmic}
\end{algorithm}
The complete forward pass is formalized in Alg.~\ref{alg:instantiation}. We take as input $A$, the AC being updated, and $\Delta$, the evolution set. We first check to see if the root goal $g$ of $A$ refers to an ``old'' model $M$ which has evolved; if so, we replace $M$ by $M^\prime$ in $g$ (Line 2). We then \weakChange{pattern match}{perform case analysis} on $A$ \weakNew{(Line 3)}: if it is an undeveloped goal or goal supported by evidence, we return (Line 4). If $A$ is a decomposition over strategy $st$, we do a \weakNew{further} case analysis on how $st$ was created \weakNew{(Line 6)}. \weakChange{First, if $st$ instantiates a (non-analytic) template, we need to check whether the root goal refers to an old version of the model;}{If $st$ was not created using a template, we are unable to analyze its regression, so we annotate $st$ and its descendants as ``Recheck'' (Lines 16-17). Otherwise, if $st$ is the instantiation of a template, then we need to check whether the subject of the parent goal $g$ has evolved (Line 7);} if not, we can mark the strategy with $\cmark$ and proceed \weakNew{recursively} through the \weakChange{sub-ACs}{children} (Lines 8-9). Otherwise, we need to invoke the $\mathsf{TemplateRegression}$ procedure in Alg.~\ref{alg:templateRegression} \weakNew{(Line 11)}. Having obtained the strategy regression value $v_{st}$, we annotate $st$ with it \weakNew{(Line 12). If we found $v_{st} =~ ?$, then we could not re-instantiate the template, and that the children of this strategy will not be analyzed, so we return early (Line 13). Otherwise, we continue the analysis recursively on all children of the strategy which were not identified as obsolete by Alg.~\ref{alg:templateRegression} (Line 14).}

\weakDelete{If $st$ instead instantiates an analytic template, we can annotate it by $\cmark$ (as noted in Sec.~\ref{sec:analyticTemplates}, we consider all analytic templates to be correct-by-construction). We only need to re-execute the analysis if the model in the parent goal is updated \emph{and} the template is argument-building (Line 17). Either way, we proceed recursively (Line 18). Finally, if $st$ is not template-based, we annotate it and its descendants with \textbf{?} and halt (Lines 20-21).}

\subsubsection{Extracting the Reusable Core}
\newText[reusableCore]{Following the forward pass of the analysis, every strategy of the AC will be annotated by a regression value, and every goal will either not be annotated, annotated as ``Recheck'' (\textbf{?}), or marked as obsolete. At this point, we can immediately prune all subtrees of the AC which are rooted at an obsolete goal; we no longer need to support them in the updated AC, so we do not need to analyze their regression. The remaining goals of the AC are \emph{potentially reusable}, since they are either still needed to support their parent goals, or are descendants of strategies which could not be analyzed. We refer to the AC obtained after pruning obsolete subtrees from $A$ as the \emph{reusable core} of $A$, denoted $A_R$. Note that any goals in $A$ whose children were all deemed obsolete during the forward pass become undeveloped goals in $A_R$. To produce appropriate regression annotations for the remaining goals in $A_R$, we need to determine the regression of the evidence supporting its reusable goals, and propagate these regression values upwards through $A_R$.}

\subsection{The Backward Pass: Regression of Evidence}
\begin{algorithm}[t]
    \caption{Evidence Regression}
    \label{alg:productEvdRegression}
    \begin{algorithmic}[1] 
       \onehalfspacing
        \Procedure{$\mathsf{EvdRegression}$}{$\Delta, g^\prime,e$}
            \If{$g^\prime$ is propositional \textbf{or} $g^\prime$ refers to $M$ such that $M \notin \Delta$}{~$v = \cmark$}
            \Else
            \State \textbf{fix} $g^\prime = \langle M^\prime, P\rangle$ \Comment{$(M,M') \in \Delta$}
            \State $v := \:$\IfThenElse{\text{regression analysis} ${R_P}$\text{ is available}}{${R_P}(M,M^\prime)$}{\textbf{?}}
            \EndIf
            \State {{$\mathsf{Annotate}(\{e,g'\},v)$}}
            \State \Return $v$
        \EndProcedure
    \end{algorithmic}
\end{algorithm}
\weakNew{As described above, the purpose of the backward pass of the analysis is to determine which goals of the reusable core $A_R$ of can still be supported by evidence, and to propagate regression values from the leaves of the AC to the root.}
Before we consider the regression of evidence, we want to distinguish evidence of propositional goals (Rule [$\Supp$-1] in Def.~\ref{def:supp}) from evidence of predicative goals (Rule [$\Supp$-2]). In what follows, we assume that \emph{all} propositional goals do \emph{not} depend on system models (and therefore, their support never regresses). Turning to predicative goals, \typofix{let's}{we} fix $g = \langle M, P\rangle$, and evidence \weakChange{$e : P(M)$}{$e$ which is adequate to conclude $P(M)$}. Given a model evolution of $M$ into $M^\prime$, and $g$ \weakChange{has}{having} been updated as $g^\prime = \langle M^\prime, P\rangle$, we want to determine whether $e$ \weakChange{can support}{is adequate for} $g^\prime$. Taking the perspective of an assurance engineer, we can consider three cases: (i) a regression analysis $\ntn{R_P}$ for $P$ is already available; (ii) a regression analysis $\ntn{R_P}$ for $P$ is not available, and producing new evidence $e^\prime $ for $P(M^\prime)$ is expensive (e.g., it requires analyzing the system in a physical testing environment); (iii) a regression analysis $\ntn{R_P}$ for $P$ is not available, but checking $P(M^\prime)$ from scratch is inexpensive \weakDelete{(e.g., a static analysis can be repeated on a moderately sized model)}. In the backward pass, we do not distinguish between cases (i) and (iii), on the basis that checking $P$ from scratch is a ``worst-case'' form of regression analysis.

Thus, in \weakChange{computing the regression value $v$ between goals $g, g^\prime$, given that $e$ is evidence for $g$,}{order to determine whether $e$ remains adequate for the updated goal $g'$,} we follow the process defined in Alg.~\ref{alg:productEvdRegression}. \weakNew{In addition to $g'$ and $e$, the algorithm takes as input the evolution set $\Delta$.} First, we check whether $g^\prime$ is propositional, or if $g^\prime = g$ (meaning that the model referred to by $g$ was not evolved); if so, we set \weakNew{the regression value} $v = \cmark$ (Line 2). Otherwise, we \weakNew{fix $g' = \langle M', P\rangle$, and $M$ as the original version of the evolved model $M'$  (Line 4). We then} consider whether a regression analysis $\ntn{R_P}$ is available for the predicate in question (including, potentially, simply re-verifying $P$ from scratch). If such an analysis is available, we set $v$ to the regression value it computes; otherwise, we set $v$ to $\textbf{?}$ (Line 5). Before returning $v$, we annotate the evidence $e$ and goal $g^\prime$ with it ({\ntn{Line 6}}).

\paragraph{Composing Regression Values}
Having defined how regression can be determined for evidence, all that remains is to define the {composition} of regression from multiple subgoals in an argument. That is, given \weakChange{sub-ACs}{children} $\{A_1\ellipses A_n\}$ which decompose a goal $g$ via strategy $st$, we want to determine regression for $g$ given regression values for the root goals of each $A_i$ and for $st$. As per the intended semantics of the regression analysis, the regression value for $g$ should be the \emph{least} of \weakChange{the regression values for the subgoals and the strategy}{these regression values}, under the ordering $\xmark < \textbf{?} < \cmark$. That is, given that $\{A_1\ellipses  A_n\}$ have root goals respectively annotated by $\{v_1\ellipses  v_n\}$, and $st$ is annotated by $v_{st}$, we annotate $g$ with $\mathsf{Min}{(\{v_1\ellipses  v_n, v_{st}\})}$. Note that the AC shown in Fig.~\ref{fig:product_regression_example} conforms to this rule.

\subsubsection{The Backward Pass} 
$\;$
\weakChange{Given an AC $A$, and $A^\prime$ being the result of applying the forward pass to $A$, the backward pass returns a regression value for the root goal of ${A}^\prime$. Because the backward pass proceeds \emph{recursively}, it also annotates each internal goal of ${A}^\prime$ with its own regression value. We specify one additional preprocessing step prior to applying the backward pass: we prune all obsolete goals from $A^\prime$, and the corresponding goals from $A$, so that the two ACs remain isomorphic.}{Let $A$ be the original AC prior to the system evolution, and let $A_R$ be the reusable core of $A$ as identified by the forward pass of the analysis. The goal of the backward pass is to determine which goals in $A_R$ can still be supported by evidence. This stage of the analysis is specified in Alg.~\ref{alg:product_regression}, and takes as input the reusable core $A_R$ and the evolution set $\Delta$.}
\begin{algorithm}[t]
    \caption{Backward Pass of the Regression Analysis}
    \label{alg:product_regression}
    \begin{algorithmic}[1] 
       \onehalfspacing
        \Procedure{$\mathsf{BackwardPass}$}{$\Delta,{A_R}$}
            \Switch{$A_R$}
                \Case{${\tt Und}(g')$:}
                \State $\mathsf{Annotate}(g', \xmark)$
                \State \Return $\xmark$
                \EndCase
                \Case{${\tt Evd}(g^\prime,e)$:} 
                \State \Return ${\tt EvdRegression}(\Delta,g^\prime,e)$
                \EndCase
                \Case{${\tt Decomp}(g^\prime, st, \{A_1^\prime\ellipses  A_n^\prime\})$:} \label{alg:product_regression:recursive}
                    \State $v_{st} := \mathsf{Annotation}(st)$
                    \If{$v_{st}$ = ?}
                    \State $\mathsf{Annotate}(g,v_{st})$
                    \State \textbf{return} $v_{st}$ \EndIf
                    \State $V := \{v_{st}\}$
                    \State \textbf{for each}  $A_i^\prime $ \textbf{do} $V := V \cup \{\mathsf{BackwardPass}(\Delta,A_i')\}$
                     \State $v := \mathsf{Min}(V)$ \label{alg:product_regression:min}
                    \State $\mathsf{Annotate}(g,v)$
                    \State \Return $v$
                \EndCase
            \EndSwitch
        \EndProcedure
    \end{algorithmic}
\end{algorithm}
\change[bpchange]{The backward pass is specified in Alg.~\ref{alg:product_regression}. The analysis proceeds recursively through $A'$, deconstructing $A$ accordingly when pattern matching (Lines 4 and 7).}{The analysis proceeds recursively through $A_R$; at each level of the recursion, we annotate some node(s) with regression values, and return the regression value of the root of the current subtree. We denote each of the goals in $A_R$ using the notation $g^\prime$ to emphasize that the subjects of these goals may have been updated during the forward pass.}  \weakNew{Recall that, assuming $A$ originally had no undeveloped goals, each undeveloped goal $g'$ in $A_R$ must be the parent of a strategy all of whose subgoals were deemed obsolete. In such a case, we annotate $g'$ as ``Revise'' ($\xmark$), since it is no longer supported by any subgoals, and return this regression value (Lines 4-5).} In the \weakChange{base case}{case of a goal supported by evidence ${\tt Evd}(g',e)$}, we invoke the evidence regression analysis (Alg.~\ref{alg:productEvdRegression}). In the recursive case, \weakNew{corresponding to a decomposition strategy, we first extract the regression value $v_{st}$ of the strategy $st$ and check whether it is ``Recheck'' (?); if so, we have reached a strategy which could not be analyzed, and all its descendants have already been marked as ``Recheck''. We thus mark $g$ as ``Recheck'' and return this same value (Lines 11-12).} \weakChange{we first check to see whether we have reached a strategy which is annotated as $\textbf{?}$. If so, we know that this is an informal strategy, and all its descendants have already been annotated by $\textbf{?}$, so we just return $\textbf{?}$ (Line 9). If we see that $v_{st} =\xmark$, we can check to see whether the descendants have already been annotated as \textbf{?} during template regression (Alg.~\ref{alg:instantiation}), in which case we return $\xmark$ (Line 10). Otherwise, we visit each of the subgoals}{Otherwise, we add $v_{st}$ to a set $V$ and recursively apply the analysis to each of the children of the strategy, adding the resulting regression values to $V$ (Lines 13-14)}. \weakChange{and annotate $g$ with the least of these and $v_{st}$ (Lines 12-14).}{Finally, we compute $v$ as the smallest element of $V$ under the ordering $\xmark < \textbf{?} < \cmark$, annotate $g$ with $v$, and return $v$ (Lines 15-17).}

\begin{example}
    \begin{figure}[t]
        \centering
        \includegraphics[width=\linewidth]{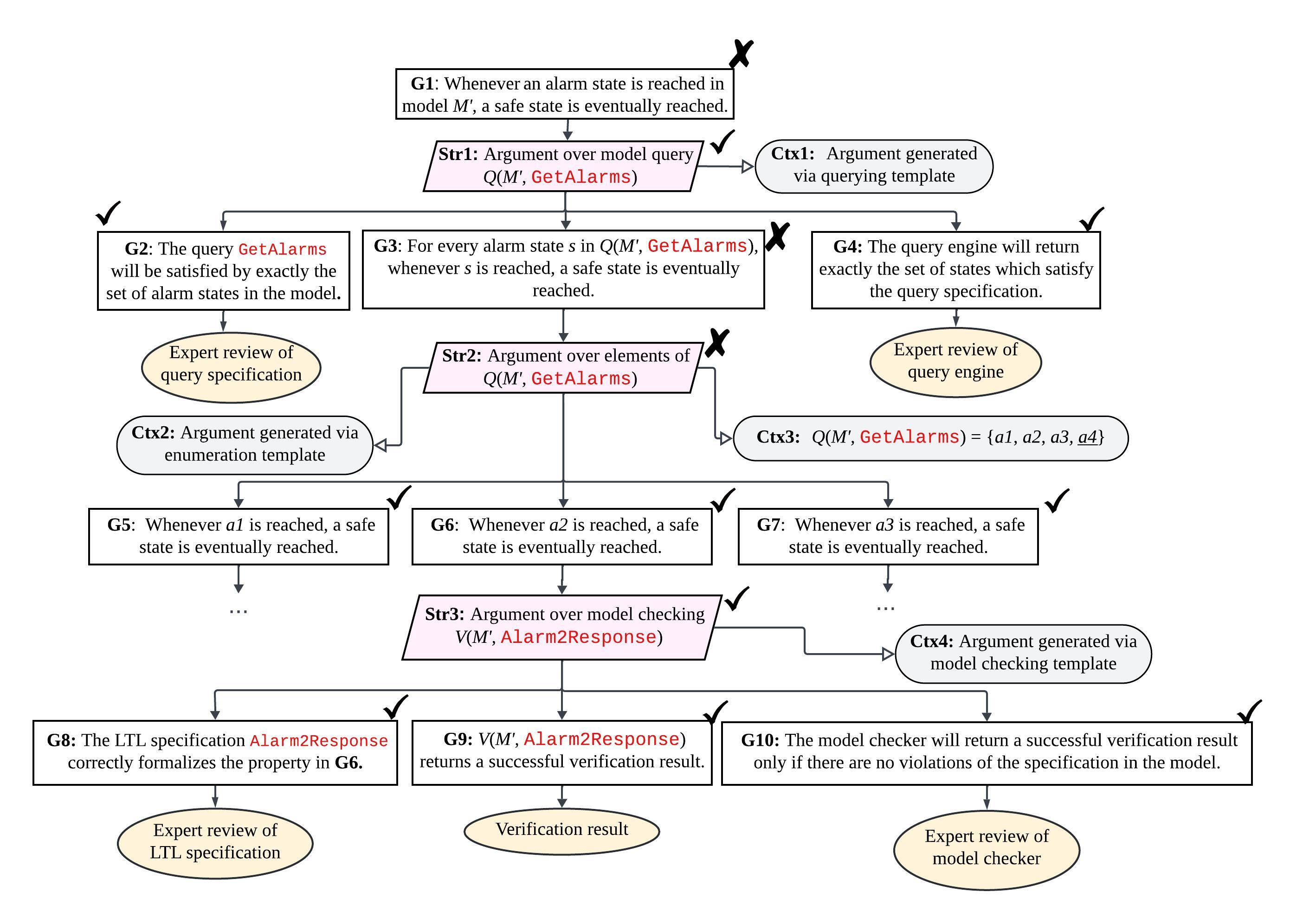}
        \caption{\bf Outcome of regression analysis for the running example AC. }
        \label{fig:regression_example_full}
        \Description[]{}
    \end{figure}

    Recall the product-based AC for the LTS $M$ shown in Fig.~\ref{fig:running_example_full_1}. Suppose that\weakNew{, after having produced evidence for all the undeveloped subgoals,} we evolve $M$ as $M^\prime$ in such a way that a new alarm state $a_4$ is introduced. The full regression \weakNew{analysis} for this evolution would produce the annotated AC shown in Fig.~\ref{fig:regression_example_full}. First, during the forward pass, all references to $M$ in goals are replaced by $M^\prime$. The strategy \weakNew{\textbf{Str1}} is an instance of an analytic template, \weakChange{and so it is labeled by $\cmark$. The strategy $st_1$}{and it} is also argument-producing, so the query engine is executed again on $M^\prime$, \weakNew{satisfying the correctness criterion (i.e., running the query on the same model referred to by the updated parent goal). After the re-instantiation, no additional subgoals are missing, so we mark \textbf{Str1} as ``Reuse'' ($\cmark$). The updated query result contains the new alarm state $a_4$}. When we proceed recursively into \weakNew{\textbf{Str2}}, we re-apply the enumerative instantiation template, and see that we now require four subgoals, whereas we only have three, so we annotate \weakNew{\textbf{Str2}} as $\xmark$. However, none of these three subgoals are obsolete, so we continue through all of them (here, we show only the process for \weakNew{\textbf{G6}} and its descendants). Once again reaching an instance of an analytic template in \weakNew{\textbf{Str3}}, we annotate it by $\cmark$. This strategy is evidence-producing, so we do not re-execute the model checker on $M^\prime$.

    We now move on to the backward pass. Beginning with the evidence-supported subgoals of $\weakNew{\textbf{Str1}}$, we see that neither $\weakNew{\textbf{G2}}$ nor $\weakNew{\textbf{G4}}$ are modified from the original AC \delete[gQ]{(moreover, $g_Q$ is propositional)}, so both these goals and their evidence are annotated by $\cmark$. \weakDelete{We continue recursively through the AC; there are no strategies annotated by $\textbf{?}$, so we can proceed straight to the leaves.}  Considering the \weakChange{sub-AC}{child AC} rooted at $\weakNew{\textbf{G6}}$, we can again certify \weakNew{that \textbf{G8}} and \weakNew{\textbf{G10}} \weakChange{and their}{remain supported by their} evidence. For $\textbf{G9}$, \weakChange{we could leverage }{suppose we apply}  the structural regression analysis for LTL model checking provided by Menghi et al.~\cite{menghi2021torpedo} \weakChange{. Suppose}{ and that}  it returns $\cmark$, indicating that the model checking result for $M$ can be reused for $M^\prime$. Then, following the rules for regression composition, we also annotate $\weakNew{\textbf{G5}}$ by $\cmark$, as all of its children are \weakNew{annotated by} $\cmark$. Suppose the same holds for $\weakNew{\textbf{G5}}$ and $\weakNew{\textbf{G7}}$. When we return to the querying subgoal $\weakNew{\textbf{G3}}$, it must be annotated by $\xmark$, since the strategy decomposing it is known to be \weakChange{unsound}{incomplete}. This $\xmark$ propagates through $\weakNew{\textbf{Str1}} $, resulting in $\weakNew{\textbf{G1}}$ being annotated by $\xmark$.
    \end{example}

\begin{theorem}\label{thm:instAlgCorrect}
   \weakNew{Let $A$ be an $AC$ such that $\Supp(A)$, and let $A_R$ be the reusable core extracted from $A$ following the forward pass of the analysis (Alg.~\ref{alg:instantiation}) given evolution set $\Delta$. Let $v$ be the regression value returned by $\mathsf{BackwardPass}(\Delta,A_R)$ (Alg.~\ref{alg:product_regression}). Then if $v = \cmark$, we know that $\Supp(A_R)$ holds;  if $v = \xmark$, we know that $\Supp(A_R)$ does not hold; and if $v = ~?$, we can make no determination about whether $\Supp(A_R)$ holds. }

\end{theorem}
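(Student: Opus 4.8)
The plan is to argue by structural induction on the reusable core $A_R$, establishing for every sub-AC $B$ occurring in $A_R$ the two soundness directions: writing $v = \mathsf{BackwardPass}(\Delta, B)$, that $v = \cmark$ implies $\Supp(B)$ and $v = \xmark$ implies $\neg\Supp(B)$ (the value $\textbf{?}$ being the residual, inconclusive case). Instantiating this at the root $A_R$ yields the theorem. Throughout, the hypothesis $\Supp(A)$ supplies the baseline facts that, before the evolution, each evidence artifact was adequate and each decomposition was a sound refinement; I would also use that $\Supp(A)$ forces $A$ to contain no undeveloped goals, so the only undeveloped goals in $A_R$ are those whose children were all pruned as obsolete during the forward pass.

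First I would dispatch the base cases. For $B = {\tt Und}(g')$ the pass returns $\xmark$, and since Def.~\ref{def:supp} has no rule concluding $\Supp({\tt Und}(g'))$, indeed $\neg\Supp(B)$; this is faithful because such a node arises precisely when the goal lost all of its support. For $B = {\tt Evd}(g',e)$ the pass returns $\mathsf{EvdRegression}(\Delta,g',e)$, and I would read off its correctness from Alg.~\ref{alg:productEvdRegression} together with the soundness of the regression analysis $R_P$ (Def.~\ref{def:regression}): a returned $\cmark$ (propositional goal, unchanged subject model, or $R_P$ certifying reuse, the first two cases using $\Supp(A)$) means $e$ is still adequate for $g'$, so $\Supp(B)$; a returned $\xmark$ means $R_P$ witnesses inadequacy, so $\neg\Supp(B)$; and $\textbf{?}$ is inconclusive.

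Next I would treat $B = {\tt Decomp}(g', st, \{A_1'\ellipses A_m'\})$, whose children are the reused, model-updated sub-ACs retained in $A_R$ (with $m \geq 1$, since a decomposition all of whose children became obsolete is recorded instead as an undeveloped goal). If $v_{st} = \textbf{?}$ the pass returns $\textbf{?}$ and nothing is claimed; observe that a non-template strategy always receives $\textbf{?}$ in the forward pass, so in every remaining branch $st$ is template-based and Lemma~\ref{lem:templateRegCorrect} applies. When $v_{st} \in \{\cmark, \xmark\}$ the pass returns $v = \mathsf{Min}(\{v_{st}\} \cup \{v_i\})$ under $\xmark < \textbf{?} < \cmark$, with $v_i = \mathsf{BackwardPass}(\Delta, A_i')$. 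The decisive observation is that, by Def.~\ref{def:supp}, $\Supp(B)$ holds iff (i) each child is supported and (ii) the updated decomposition is a sound refinement, and that Lemma~\ref{lem:templateRegCorrect} identifies (ii) with $v_{st} = \cmark$ and its failure with $v_{st} = \xmark$. Consequently $v = \cmark$ forces every term of the minimum to be $\cmark$, so (ii) holds by the lemma and each $\Supp(A_i')$ holds by the inductive hypothesis, giving $\Supp(B)$; whereas $v = \xmark$ forces some term to be $\xmark$, which is either $v_{st} = \xmark$ (so (ii) fails) or some $v_i = \xmark$ (so $\neg\Supp(A_i')$ by the hypothesis, whence (i) fails), and either way $\neg\Supp(B)$. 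The only other outcome, $v = \textbf{?}$, arises solely from a child returning $\textbf{?}$ with no $\xmark$ present, and is exactly the inconclusive case.

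I expect the principal obstacle to be the bookkeeping that aligns conditions (i) and (ii) with the actual node set of $A_R$ rather than with the original $A$ — in particular, justifying that the refinement (ii) must be evaluated against the \emph{reused} children and that obsolete-subtree pruning and the conversion of fully-obsolete parents into undeveloped goals are faithfully captured by the forward pass. This is precisely where Lemma~\ref{lem:templateRegCorrect} is invoked as a black box, so the delicate point is to confirm that its hypotheses (a valid template instantiated with some $x$ on an evolved model $M'$) are genuinely met at every template-based decomposition reached in $A_R$; this follows from $\Supp(A)$ and the way the forward pass (Alg.~\ref{alg:instantiation}) threads the instantiation data through the recursion. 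Once (i) and (ii) are correctly attributed, the $\mathsf{Min}$-composition argument and the monotonicity of the ordering $\xmark < \textbf{?} < \cmark$ are routine.
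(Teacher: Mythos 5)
Your proof is correct and follows essentially the same route as the paper's: induction on the structure (the paper uses strong induction on height, which is equivalent here), with the same base-case analysis via Alg.~\ref{alg:productEvdRegression}, the same invocation of Lemma~\ref{lem:templateRegCorrect} to tie $v_{st}$ to soundness of the updated refinement, and the same $\mathsf{Min}$-ordering case analysis in the decomposition case. The bookkeeping concerns you flag (pruning of obsolete subtrees, fully-obsolete parents becoming undeveloped goals) are handled in the paper exactly as you anticipate.
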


In this section, we have defined a rigorous regression analysis targeting the support semantics of our formal assurance case language. The purpose \weakDelete{of this} was not merely to define this particular analysis, but to provide a foundation for the \emph{lifted} regression analysis defined in the next section.
\section{Lifted Assurance Case Regression for Evolving Product Lines}
\label{sec:liftedRegression}
In the preceding section, we defined a regression analysis for product ACs with respect to \weakDelete{the} assurance support. This section considers \emph{lifted assurance case regression}: given a variational AC developed for some model-based product line system, we want to analyze the impact of a system evolution on the AC in terms of variational support (Def.~\ref{def:vsupp}). \weakChange{As in the product setting, w}{W}e begin with a discussion of the semantics of variability-aware regression analysis  (Sec.~\ref{sec:liftedRegressionSpec}). We then lift both the forward pass (Sec.~\ref{sec:forwardLift}) and backward pass (Sec.~\ref{sec:backwardspassLift}) of our product-based regression analysis. Proofs of selected theorems not given here are provided in Appendix~\ref{app:proofs:sec6}.

\subsection{Variability-aware Regression Values}
\label{sec:liftedRegressionSpec}
\weakDelete{The goal of this section is to define \emph{lifted} assurance case regression, i.e., to define a modified version of  {Alg.~\ref{alg:product_regression}} which can be applied to variational ACs (Def.~\ref{def:placs}).} \weakDelete{Before we defined the analysis in Sec.~\ref{sec:product_regression}, we fixed a generic definition of regression analysis (Def.~\ref{def:regression}), with two motivations: first, we wanted to give a functional specification for the AC analysis; second, we wanted to \emph{use} existing individual analyses for regression of evidence (see Alg.~\ref{alg:productEvdRegression}). The same principle applies in the variational context. Thus,} We begin by formalizing  \emph{variability-aware regression values} as follows (adapted from ~\cite{shahin2021towards}): 

\begin{definition}[Variability-aware Regression Value]\label{def:varRegressionValues}
    Given a feature expression $\phi$ \newText[varregval]{over features $F$}, a \emph{variability-aware regression value} over $\phi$ is a tuple of feature expressions $\langle \phi_\cmark, \phi_\xmark, \phi_{\textbf{?}}\rangle$ \weakNew{over $F$} which partition $\llbracket \phi \rrbracket$; that is, for each $\conf \in \llbracket \phi\rrbracket$, $\conf$ belongs to exactly one of $\llbracket \phi_\cmark\rrbracket$, $\llbracket \phi_\xmark \rrbracket$, or $\llbracket \phi_{\textbf{?}}\rrbracket$. 
\end{definition}
\newText[clarsymbols]{ Def.~\ref{def:varRegressionValues} establishes that every variability-aware regression value is a tuple of 3 feature expressions, the first of which represents configurations which have \emph{not} regressed ( ``Reuse'', $\cmark$);  the second of which represents configurations which \emph{have} regressed (``Revise'', $\xmark$); and the third of which represents configurations for which regression cannot be determined (``Recheck'', $?$). We introduce additional notation to express a variability-aware regression value over $\phi$ in which all configurations of $\phi$ have the same regression value, namely, $\mathsf{REUSE}(\phi) = \langle \phi, \bot,\bot\rangle$, $\mathsf{REVISE}(\phi) = \langle \bot, \phi,\bot\rangle$, and $\mathsf{RECHECK}(\phi) = \langle \bot, \bot, \phi\rangle$}. 
\delete[varreg1]{We introduce the following shorthand for three special values: $\overline{\cmark} = \langle \top, \bot,\bot \rangle$, $\overline{\xmark} = \langle \bot, \top, \bot \rangle$, and $\overline{\textbf{?}} = \langle \bot, \bot, \top \rangle$. We also define an \emph{extension} operator ${\tt Extend}(v, s , \phi)$, where $v$ is a variability-aware regression value and $s \in \{\cmark, \xmark, \textbf{?}\}$, which adds $\phi$ as a disjunct to the $s$-component of $v$. For instance, given $v = \langle \phi_\cmark, \phi_\xmark, \phi_{\textbf{?}}\rangle$, we have $\mathsf{Extend}(v,\cmark,\phi) = \langle \phi_\cmark \lor \phi, \phi_\xmark, \phi_{\textbf{?}}\rangle$.} 
It should be clear that a variability-aware regression value encodes a \emph{product line} of regression values; the fact that every configuration can be mapped to exactly one of the three values $\{\xmark, \cmark, \textbf{?}\}$ provides the derivation function. Thus, we can let $\Var{\{\cmark, \xmark, \textbf{?}\}}$ denote the set of all possible regression values over $\phi$, where $\phi$ is inferred from the context. 

\newText[vartype0]{Before we precisely define what a variability-aware regression analysis should do, let us consider the different ways in which one product line may be evolved to form a new product line. In the simplest case, we may have an evolution in which some (or all) of the {products} in the product line have been modified, but the \emph{set} of configurations of the product line is otherwise the same. More precisely, this means that the set of features $F$ and feature model $\Phi$ are both  unchanged, only the products themselves being modified. Alternatively, we may consider an evolution of the product line in which some products are modified, the set of features $F$ is unchanged, but the \emph{feature model} has been changed. For instance, there may be two features $\token{A},\token{B}$ which were mutually exclusive in the original product line ($\neg (\token{A} \land \token{B})$), but are no longer mutually exclusive in the evolved product line, introducing a whole new set of configurations. Finally, there is the case in which some products are modified, the feature model is modified, \emph{and} the language of features is modified, e.g., by adding or removing a feature. All three kinds of evolutions can be analyzed for regression using the methods described in this section. With this in mind, we define the correctness specification of a variability-aware regression analysis as follows:}

\begin{definition}[Variability-aware Regression Analysis]\label{def:varRegression}
    \weakChange{L}{Let $X$ be a set and l}et $P$ be a predicate over $X$. A \emph{variability-aware regression analysis} for $P$ is a function \weakChange{$R^\uparrow : \Var{X} \times \Var{X} \times {\tt Prop}(F) \to \Var{\{\cmark, \xmark, \textbf{?}\}}$}{$R_P^\uparrow : \Var{X} \times \Var{X} \to \Var{\{\cmark, \xmark, ?\}}$} such that \weakChange{for any $\pl{x}, \pl{x}^\prime \in \Var{X}$ and $\phi \in {\tt Prop}(F)$,}{}{\ntn{for any product line $\pl{x} \in \Var{X}$ with feature model $\Phi$ over $F$ such that $\Inv{P}{\pl{x}}{\top}$ holds, and}}  {\ntn{given any $\pl{x}' \in \Var{X}$ with feature model $\Phi^\prime$} over $F'$,}
    $R_P^\uparrow(\pl{x},\pl{x}^\prime)$ produces a variability-aware regression value $\langle \phi_\cmark, \phi_\xmark, \phi_{\textbf{?}}\rangle$ over \weakChange{$\phi$, where}{$\Phi'$, such that for all valid configurations of $\Phi'$}:
\begin{itemize}
    \item \weakChange{for all $\conf \vDash \phi \land \phi_\cmark$}{if $\conf \vDash \phi_\cmark$,  then $\conf \in \llbracket \Phi\rrbracket$ and we can conclude $P(\pl{x}^\prime|_\conf)$}  
    \item \weakChange{for all $\conf \vDash \phi \land \phi_\xmark$, we know that $P(\pl{x}^\prime|_\conf)$ does \emph{not} hold}{if $\conf \vDash \phi_\xmark$,  it is not determined whether $P(\pl{x}'|_\conf)$ holds}
    \item \weakChange{for all $\conf \vDash \phi \land \phi_{\textbf{?}}$ it is unknown whether $P(\pl{x}|_\conf)$ is preserved or lost through the evolution.}{if $\conf \vDash \phi_{?}$, then $\conf \in \llbracket \Phi \rrbracket$ but we cannot determine whether $P(\pl{x}'|_\conf)$ follows from $P(\pl{x}|_\conf)$.}
\end{itemize}
\end{definition}
That is, a variability-aware regression value compares an ``old'' product line $\pl{x}$ against a ``new'' product line $\pl{x}^\prime$ and returns a variability-aware regression value indicating which configurations of $\pl{x}^\prime$ have regressed \weakNew{with respect to $P$} relative to their corresponding products in $\pl{x}$ {\ntn{(if they exist)}}.
\newText[vartype00]{Note that the configurations of the evolved feature model $\Phi^\prime$ can be partitioned into two sets: those which were also valid configurations of the original feature model $\Phi$, and those which are new configurations introduced by $\Phi^\prime$. Per Def.~\ref{def:varRegression}, any configurations of $\Phi^\prime$ which were \emph{not} included as part of the previous feature model \emph{must} be included within $\phi_\xmark$. In practical terms, when we are analyzing regression, we are really only concerned with configurations present under both feature models. In what follows, unless indicated otherwise, our regression analysis will assume the original and evolved product lines have the same feature model $\Phi$. This assumption that we have a unique feature model also allows us to speak of ``lifting'' a regression analysis, since lifting is only defined with respect to a single feature model. After we have defined our lifted regression analysis, we discuss its generalization to evolutions which do modify the feature model and/or language of features.
}

\delete[varlemma1]{Naturally, one can implement a variability-aware regression analysis by \emph{lifting} a product-based regression analysis (and this is precisely what we aim to do).}

    \delete[varlemma2]{\emph{LEMMA}. Let $P$ be a predicate over $X$ with regression analysis ${\ntn{R_P}}$. 
    \textcolor{red}{If ${{R_P}^\uparrow : \Var{X} \times \Var{X}} \to \Var{\{\cmark, \xmark, \textbf{?}\}}$} \textcolor{red}{is a lift of ${{R_P}}$, then ${{R_P}}^\uparrow$ is a variability-aware regression analysis (per Def.~\ref{def:varRegression}).}}

\weakDelete{Recall that in the product setting, our procedure annotates each node in the AC by a regression value indicating whether it has preserved its validity with respect to $\Supp$ (in the case of goals and evidence) or with respect to goal refinement (in the case of strategies). In the lifted setting, these nodes are annotated by variability-aware regression values. The final consideration before lifting the regression analysis is to define what we mean by an evolution of a product line model.}
\weakDelete{In general, a product line evolution can be purely \emph{structural}, in that it only consists of modifications to the model elements and/or their presence conditions, or it can include a \emph{variability} modification, e.g., adding or removing features or changing the feature model. In defining our lifted regression analysis, we begin by assuming that evolutions are purely structural, and at the end, generalize to include variability modifications.}

\weakChange{To describe structural evolutions precisely,}{As a final step before defining our lifted regression analysis,} we define a variational analog of the evolution set used in the \weakDelete{forward pass of the} product-based regression analysis (Sec.~\ref{sec:regression:forward}).
\begin{definition}[Variability-aware Evolution Set]\label{def:varEvoSet}
    Let $\mathcal{S} = \{\pl{M}_1\ellipses \pl{M}_n\}$ be the set of all product line models referred to in a variational AC, all over the same features $F$ and feature model $\Phi$. Let $\mathcal{S}^\prime = \{\pl{M}^\prime_1\ellipses  \pl{M}^\prime_n\}$ be the set of models following an evolution.  Then a \emph{variability-aware evolution set} over $\mathcal{S}$ is a function $\ntn{\widehat{\Delta}} : \mathcal{S} \to {\tt Prop}(F)$, such that for each $\pl{M}_i \in \mathcal{S}$, $\llbracket \ntn{\widehat{\Delta}}({\pl{M}_i})\rrbracket \subseteq \llbracket \Phi \rrbracket$, and for all configurations $\conf \in \llbracket \Phi \rrbracket$, $\conf \vDash \ntn{\widehat{\Delta}}(\pl{M}_i)$ if and only if in the evolved model $\pl{M}^\prime_i$, we have $\pl{M}_i|_\conf \neq \pl{M}^\prime_i|_\conf$. 
\end{definition}

In other words, the variability-aware \weakChange{impact}{evolution} set $\ntn{\widehat{\Delta}}$ describes which specific products have been modified for each product line model. 
\begin{example}
\begin{figure}[t]
    \centering

    \[\begin{tikzpicture}[shorten >=1pt,node distance=2.5cm,on grid,auto] 
   \node[state,initial] (q_0)   {$s_0$}; 
   \node[state] (q_1) [right=of q_0] {$s_1$}; 
   \node[state] (q_2) [right=of q_1] {$s_2$}; 
    \path[->] 
    (q_0) edge  node {$a \mid \textcolor{blue}{{\top}}$} (q_1)
    (q_2) edge [bend right=30, above] node {$c \mid \textcolor{blue}{{\tt A}}$} (q_0)
    (q_1) edge [below] node  {$d \mid \textcolor{blue}{{\tt A}}$}(q_2)
          edge [bend left=20, below] node {$b \mid \textcolor{blue}{{\tt B}}$} (q_0)
    (q_2) edge [loop right] node  {$e \mid \textcolor{blue}{{\tt A}}$} (q_2)
    ;
\end{tikzpicture}\]
\vspace{-3mm}
    \caption{\bf An evolution of the FTS $\pl{M}$ shown in Fig.~\ref{fig:FTS}, with $\Delta(\pl{M}) = \token{A}$.}
    \label{fig:change_example}
    \Description[]{}
\end{figure}
\nc{Recall the FTS shown in Fig.~\ref{fig:FTS}, which we refer to as $\pl{M}$. A modified version of $\pl{M}$ is shown in Fig.~\ref{fig:change_example}, in which a self-loop is added to state \typofix{$q_2$}{$s_2$}. This evolution of $\pl{M}$ only affects products with feature $\token{A}$.  Expressed using a variability-aware evolution set \weakNew{$\widehat{\Delta}$}, we have \weakNew{$\widehat{\Delta}(\pl{M}) = \token{A}$}.}
\end{example}
Given such a $\widehat{\Delta}$, we can denote the corresponding evolution set associated to configuration $\conf$ as $\widehat{\Delta}|_\conf$. We assume such a $\widehat{\Delta}$ is given as input to the lifted \weakChange{forward pass}{regression analysis}. Depending on the complexity of the model and the evolution, $\widehat{\Delta}$ could be derived manually, but in general it is preferable to produce it automatically, e.g., applying a variability-aware impact analysis~\cite{angerer2019change} on each of the product line models. 

\subsection{Lifting the Forward Pass}
\label{sec:forwardLift}

We now describe the lifting of the forward pass. The lifted forward pass follows the same overall procedure as the product-based version, and serves the same function: updating goals, re-instantiating templates where appropriate, and annotating strategies with regression values.  \weakDelete{As in the product setting, we begin by considering the regression of a (variational) non-analytic template.}

\subsubsection{Lifted Template Regression}
\begin{algorithm}[t]
\onehalfspacing
\caption{Regression of lifted templates}
\label{alg:templateRegressionLift}
    \begin{algorithmic}[1]
        \Procedure{$\mathsf{TemplateRegression}^\uparrow$}{$\pl{M}^\prime,\phi, st, \mathcal{A},\template{\modlang}{D}^\uparrow,\pl{x}, \phi_{\Delta}$}
        \State \textbf{fix} $\template{\modlang}{D}^\uparrow = \langle P,I^\uparrow,C\rangle$
        \If{some $\pl{x}^\prime$ is available such that $\Inv{C}{\langle \pl{x}^\prime, \pl{M}^\prime\rangle}{\phi}$
        }
        \If{$st$ is analytic and evidence-producing}{~$\{\pl{g}_1^\prime\ellipses \pl{g}_n^\prime\} := \mathsf{Replace}(\mathcal{A},\pl{x},\pl{x}')$}
        \Else $~\{\pl{g}_1^\prime\ellipses \pl{g}_n^\prime\}  := {I}(\pl{x}^\prime)$
        \EndIf
        \Else 
            \State $\pl{v}_{st} := \langle \phi \land \neg \phi_{\Delta},~~ \bot, ~~\phi \land \phi_\Delta\rangle$ \Comment{$\pl{v}_{st}$ partitions $\llbracket \phi \rrbracket$}
            \State \textbf{for} $\pl{A} \in \mathcal{A}$ s.t. ${\tt pc}(A) \implies \phi \land \phi_\Delta$ \textbf{do} $\mathsf{MarkAsRecheck}(\pl{A})$
            \State \Return ${\pl{v}_{st}}$
        \EndIf
       \State $\langle\mathcal{A}_{\tt Obs},\mathcal{A}_{\tt Reuse}, \mathcal{A}_{\tt New}\rangle := {\mathsf{Match}}^\uparrow(\mathcal{A},\{\pl{g}_1,\ellipses \pl{g}_n'\})$ 
       \State ${{\mathsf{UpdateChildren}(\mathcal{A}, \mathcal{A}_{\tt Obs}, \mathcal{A}_{\tt Reuse})}}$
       \State \textbf{for} $\pl{A} \in \mathcal{A}_{\tt Obs}$ \textbf{do} $\mathsf{MarkAsObsolete}(\pl{A})$
       \State $\phi_{\tt New} := \bigvee~\{\phi_i\mid \exists~ \pl{A}_i \in \mathcal{A}_{\tt New}, {~{\tt pc}(\pl{A}_i)} = \phi_i \}$
       \State $\pl{v}_{st} := \langle {{ \phi \land \neg \phi_{\tt New}}}, ~~{{ \phi \land \phi_{\tt New}}}, ~~\bot \rangle$ \Comment{$\pl{v}_{st}$ partitions $\llbracket \phi\rrbracket $}
       \State \Return $\pl{v}_{st}$ 
        \EndProcedure
    \end{algorithmic}
\end{algorithm}
\weakChange{The lifted template regression procedure, shown in Alg.~\ref{alg:templateRegressionLift}, is invoked when a parent goal referring to model $\pl{M}$ is decomposed by lifted template $T^\uparrow$, with $\pl{M}$ being modified for some configurations according to $\Delta$. We take as input the updated model $\pl{M}^\prime$, the lifted template $T^\uparrow = (P,I^\uparrow,C^\uparrow)$, the input $\pl{x}$ used to instantiate $T^\uparrow$, and the child ACs $\mathcal{A}$. We also take as input $\ntn{\phi}$, the presence condition of the template's parent goal, and $\phi_\Delta$, which represents the set of configurations which are modified in the evolution according to $\Delta$.}{Consider a variational goal $\langle \pl{M},P,\phi\rangle$ with $\pl{M} \in \Var{\modlang}$, which is decomposed via strategy $st$ to children $\mathcal{A}$, with $st$ being generated by the lifted template $\template{\modlang}{D}^\uparrow$ applied to input $\pl{x} \in \Var{D}.$ The \emph{lifted template regression} procedure, shown in Alg.~\ref{alg:templateRegressionLift}, takes these as input in addition to a feature expression $\phi_\Delta$, which represents the set of configurations in which the updated model $\pl{M}'$ has been modified relative to $\pl{M}$ (i.e., $\phi_\Delta = \widehat{\Delta}(\pl{M})$ ). The procedure mirrors the product-based regression analysis (Alg.~\ref{alg:templateRegression}). We first check whether we can produce any $\pl{x}' \in \Var{D}$ such that we satisfy the variational correctness criterion $\Inv{C}{\langle \pl{x}', \pl{M}'\rangle }{\phi}$ (Line 3).\footnote{Note that, in the case where not every configuration of $\phi$ has been modified, we may obtain such an $\pl{x}'$ beginning from the original input $\pl{x}$, which satisfied $\Inv{C}{x}{\phi}$, only ``repairing'' the configurations of $\pl{x}$ satisfying $\phi \land \phi_\Delta$ for $\pl{M}'$.} If no such $\pl{x}'$ can be found, then we cannot make any decision about the regression of the argument for configurations satisfying $\phi_{\Delta}$, but we can assume that the argument remains sound for configurations which are \emph{not} modified (i.e., satisfying $\neg \phi_\Delta$). We define the variability-aware regression value $\pl{v}_{st}$ accordingly (Line 7) so as to partition $\llbracket \phi \rrbracket$ into these two sets of configurations. We then annotate the entirety of each child AC which is present under configurations satisfying $\phi \land \phi_\Delta$ as ``Recheck'' (?), since we cannot know whether this child is still relevant (Line 8). We then return $\pl{v}_{st}$ (Line 9).}

\weakChange{The procedure begins by checking to see if $\pl{x}$ satisfies the lifted correctness criterion  with respect to $\pl{M}^\prime$ (Line 3), or if there is some other $\pl{x}^\prime$ which does (Line 4). In either case, we use the lifted instantiation function to produce the variational subgoals $\{\pl{g}_1^\prime\ellipses  \pl{g}_n^\prime\}$ for the set of configurations satisfying $\phi^\prime \land \phi_{\Delta}$. If no such input can be produced, we annotate all descendants of the strategy by $\overline{\textbf{?}}$ and return $\overline{\xmark}$ (Lines 6-7). Otherwise, we need to compare the new subgoals $\{\pl{g}_1^\prime\ellipses  \pl{g}_n^\prime\}$ against the existing sub-ACs $\mathcal{A}$. To this end, we use a lifted version of the $\mathsf{Match}$ operator (Line 8).} {Suppose instead that we \emph{were} able to find some input $\pl{x}'$ satisfying the correctness criterion (Line 3). Following our lazy evaluation procedure, if the strategy we are analyzing is analytic and evidence-producing, we do not re-instantiate the template, and only replace $\pl{x}$ with $\pl{x}'$ syntactically in the children of the strategy (Line 4). Otherwise, we re-instantiate the template with $\pl{x}'$. In both cases, we obtain a set of updated variational subgoals $\{\pl{g}_1'\ellipses \pl{g}_n'\}$. As before, the final step is to compare these updated subgoals against the children of the original strategy in $\mathcal{A}$. We do so using the \emph{lifted} matching operator $\mathsf{Match}^\uparrow$.}

Like the product-based $\mathsf{Match}$, the lifted operator attempts to unify the new subgoals $\pl{g}_i^\prime$ against the root goals of $\pl{A}_i \in \mathcal{A}$. The key difference is that we now need to account for potential variations in the presence conditions. Suppose that $\pl{g}^\prime = \langle \pl{M},P,\phi^\prime\rangle$ is a new subgoal produced by re-instantiation, and \weakNew{it is ``matched'' with some existing subgoal} $\pl{g} = \langle \pl{M},P,\phi \rangle$ \weakDelete{is the root subgoal of some existing sub-AC $\pl{A} \in \mathcal{A}$}. If \weakChange{$\phi \not\equiv \phi^\prime$}{$\phi$ and $\phi^\prime$ do not denote the same sets of configurations,}, we need to determine for which configurations $\pl{g}$ is obsolete, reusable, or new.
    \nc{We thus define three feature expressions: $\phi_{\tt O}$ (obsolete), $\phi_{\tt R}$ (reuse), and $\phi_{\tt N}$ (new), where $\phi_{\tt O} = \phi \land \neg \phi^\prime$, \weakChange{$\phi_{K} = \phi \land \phi^\prime$}{$\phi_{\tt R} = \phi \land \phi'$}, $\phi_{\tt N} = \phi^\prime \land \neg \phi$. When we update $\pl{A}$ with the new root goal $\pl{g}^\prime$, we make \weakNew{(up to)} three copies: one, annotated by $\phi_{\tt O}$, is added to $\mathcal{A}_{\tt Obs}$; another, annotated by $\phi_{\tt R}$, is added to $\mathcal{A}_{\tt Reuse}$; the third, annotated by $\phi_{\tt N}$, is added to $\mathcal{A}_{\tt New}$. \weakChange{In the special cases where $\phi^\prime \implies \phi$ or $\phi \implies \phi^\prime$, we have {$\phi_{\tt O} \equiv \bot$}{$\llbracket  \phi_{\tt O} \rrbracket = \emptyset$} or {$\phi_{\tt N}\equiv \bot$}{$\llbracket \phi_{\tt N} \rrbracket = \emptyset$}, respectively, so we only need to make two copies.}{Note that we can make fewer than three copies of $\pl{g}'$ if we observe that any of $\phi_{\tt O}$, $\phi_{\tt N}$, or $\phi_{R}$ denote the empty set of configurations.}}

\begin{example}
Suppose we match an updated goal $\pl{g}^\prime$, with presence condition ${\tt A}$, with an existing goal $\pl{g}$, whose presence condition is ${\tt A} \land {\tt B}$. Then we can create two copies of $\pl{g}^\prime$, $\{\pl{g}^\prime_{\tt B}, \pl{g}^\prime_{\tt \neg B}\}$, where $\pl{g}^\prime_{\tt B}$ has presence condition ${\tt A} \land {\tt B}$, while $\pl{g}^\prime_{\tt \neg B}$ has presence condition ${\tt A} \land \neg {\tt B}$. Only $\pl{g}^\prime_{\tt B}$ is added to  $\mathcal{A}_{\tt Reuse}$, and the undeveloped goal $\pl{g}^\prime_{\tt \neg B}$ is added to $\mathcal{A}_{\tt New}$.
\end{example}
 
Once we have determined \weakChange{which variational sub-ACs}{for which configurations each of the children of the strategy} are now obsolete, reused, or new, we \weakNew{ replace each child with its copies from $\mathcal{A}_{\tt Obs}$ and $\mathcal{A}_{\tt Reuse}$ (Line 11). We then} annotate the obsolete ones accordingly {\ntn{(Line 12)}}. Next, we compute a feature expression $\phi_{\tt New}$ describing which configurations are missing one or more subgoals; we do so by taking the disjunction of the presence conditions of all ACs in $\mathcal{A}_{\tt New}$ {\ntn{(Line 13)}}. The final regression value $\pl{v}_{st}$ for this strategy \weakDelete{(with respect to $\phi_\Delta$)} is then taken as {\ntn{``Reuse'' ($\cmark$)}} for configurations satisfying \weakChange{$\neg \phi_{\tt New}$}{$\phi \land \neg \phi_{\tt New}$}, and {\ntn{``Revise'' ($\xmark$)}} for those satisfying \weakChange{$\phi_{\tt New}$}{$\phi \land \phi_{\tt New}$} (Line 14). \weakNew{Note once again that $\pl{v}_{st}$ partitions $\llbracket \phi \rrbracket$.}

\begin{lemma}\label{lem:templateRegLift}

    \weakNew{Let $\pl{g} = \langle \pl{M}, P, \phi \rangle $ be a variational goal decomposed using the valid lifted template $\template{\modlang}{D}^\uparrow$, producing strategy $st$ and children $\mathcal{A}$ via instantiation with $\pl{x} \in \Var{D}$. Let $\pl{M}' \in \Var{\modlang}$ be the updated version of $\pl{M} \in \Var{\modlang}$ following a system evolution. Then for every $\conf \in \llbracket \phi \rrbracket$, we have} 
    $$\mathsf{TemplateRegression}^\uparrow(\pl{M}',\phi,st,\mathcal{A},\template{\modlang}{D}^\uparrow,\pl{x},\phi_{\Delta})|_\conf = {\mathsf{TemplateRegression}(\pl{M}'|_\conf, st, \mathcal{A}|_\conf, \template{\modlang}{D}, \pl{x}|_\conf)}$$
    \weakNew{That is, Alg.~\ref{alg:templateRegressionLift} correctly lifts Alg.~\ref{alg:templateRegression}}.
\end{lemma}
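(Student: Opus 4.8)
The plan is to fix an arbitrary configuration $\conf \in \llbracket\phi\rrbracket$ and show that deriving the variability-aware regression value returned by Alg.~\ref{alg:templateRegressionLift} at $\conf$ yields exactly the product-level value returned by Alg.~\ref{alg:templateRegression} on the derived inputs. The argument proceeds by case analysis on the two branches of Alg.~\ref{alg:templateRegressionLift} (whether or not a repaired input $\pl{x}'$ satisfying $\Inv{C}{\langle\pl{x}',\pl{M}'\rangle}{\phi}$ is found). The connection between the lifted and product-level evolution data is supplied by Def.~\ref{def:varEvoSet}: since $\phi_\Delta = \widehat{\Delta}(\pl{M})$, we have $\conf \vDash \phi_\Delta$ if and only if $\pl{M}|_\conf \neq \pl{M}'|_\conf$, i.e., the derived model is actually modified at $\conf$. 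I would also record at the outset that, since the decomposition arose from a valid instantiation, $\Inv{C}{\langle\pl{x},\pl{M}\rangle}{\phi}$ holds, so $C(\pl{x}|_\conf,\pl{M}|_\conf)$ holds for every $\conf\in\llbracket\phi\rrbracket$.

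In the branch where a repaired $\pl{x}'$ is found, the lifted value is $\pl{v}_{st} = \langle \phi\land\neg\phi_{\tt New},\ \phi\land\phi_{\tt New},\ \bot\rangle$, so $\pl{v}_{st}|_\conf = \xmark$ when $\conf\vDash\phi_{\tt New}$ and $\cmark$ otherwise. On the product side, $\Inv{C}{\langle\pl{x}',\pl{M}'\rangle}{\phi}$ gives $C(\pl{x}'|_\conf,\pl{M}'|_\conf)$, so Alg.~\ref{alg:templateRegression} also enters its input-found branch with $x' = \pl{x}'|_\conf$. Here I would invoke the lifting properties of the sub-operations: since $I^\uparrow$ lifts $I$ (Def.~\ref{def:liftTemplate}) and $\mathsf{Replace}$ and $\mathsf{Match}^\uparrow$ are defined to lift $\mathsf{Replace}$ and $\mathsf{Match}$ respectively, the derived set $\mathcal{A}_{\tt New}|_\conf$ coincides with the product-level $\mathcal{A}_{\tt New}$. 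As $\phi_{\tt New}$ is by construction the disjunction of the presence conditions of the members of $\mathcal{A}_{\tt New}$, we obtain $\conf\vDash\phi_{\tt New}$ iff $\mathcal{A}_{\tt New}|_\conf\neq\emptyset$, which is precisely the test deciding between $\xmark$ and $\cmark$ in Alg.~\ref{alg:templateRegression}; the two values therefore agree.

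In the branch where no repaired input is found, the lifted value is $\pl{v}_{st} = \langle\phi\land\neg\phi_\Delta,\ \bot,\ \phi\land\phi_\Delta\rangle$, so $\pl{v}_{st}|_\conf = \textbf{?}$ when $\conf\vDash\phi_\Delta$ and $\cmark$ when $\conf\nvDash\phi_\Delta$. For the unmodified configurations ($\conf\nvDash\phi_\Delta$), Def.~\ref{def:varEvoSet} gives $\pl{M}'|_\conf = \pl{M}|_\conf$, so the original $\pl{x}|_\conf$ still satisfies $C$; the product algorithm then re-instantiates with this unchanged input, the updated subgoals coincide with the roots of $\mathcal{A}|_\conf$, $\mathcal{A}_{\tt New}=\emptyset$, and it returns $\cmark$, matching the derivation. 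For the modified configurations ($\conf\vDash\phi_\Delta$) I must show the product algorithm returns $\textbf{?}$, i.e., that no product-level repaired input is available at $\conf$.

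The main obstacle is exactly this last point, which turns on the semantics of the deliberately unspecified input-synthesis step. I would resolve it by stipulating that the repair of $\pl{x}'$ is itself a derivation-compatible operation: the product-level synthesis at $\conf$ is the derivation of the single lifted synthesis attempt, so a failure to produce a global $\pl{x}'$ over $\phi$ manifests as a failure at every modified configuration $\conf\vDash\phi\land\phi_\Delta$. Under this reading, the conservative marking in the else-branch of Alg.~\ref{alg:templateRegressionLift} (all of $\phi\land\phi_\Delta$ set to $\textbf{?}$, with the children present there marked ``Recheck'') is mirrored exactly by the product algorithm's ``Recheck'' return and child marking in its own else-branch, completing the case; one also checks that the $\mathsf{MarkAsRecheck}$ side effects correspond under derivation, using the well-formedness condition $\token{pc}(\pl{A})\implies\phi$ on children. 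I expect this alignment of the two branching decisions — rather than any of the routine derivation bookkeeping — to be the crux of the argument, with the remaining equalities following mechanically from the lift properties of $I^\uparrow$, $\mathsf{Replace}$, and $\mathsf{Match}^\uparrow$ together with Def.~\ref{def:varEvoSet}.
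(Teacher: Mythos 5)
Your proposal is correct and follows essentially the same route as the paper's proof: a pointwise derivation argument with case analysis over the two branches of Alg.~\ref{alg:templateRegressionLift} (the paper organizes the cases by the returned regression value rather than by branch, which is only a cosmetic difference), using the lift property of $I^\uparrow$/$\mathsf{Match}^\uparrow$ to align $\mathcal{A}_{\tt New}$ and hence the $\cmark$/$\xmark$ decision. Your ``derivation-compatible synthesis'' stipulation for the no-input branch is exactly the paper's stated ``all-or-nothing'' assumption on the repair of $\pl{x}'$ (indeed your formulation, restricted to configurations satisfying $\phi \land \phi_\Delta$, is the slightly sharper reading), so you have correctly identified and resolved the same crux.
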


\subsubsection{The Forward Pass}
\begin{algorithm}[t]
    \caption{Forward Pass of the Lifted Regression Analysis}
    \label{alg:instantiationLift}
    \begin{algorithmic}[1] 
       \onehalfspacing
        \Procedure{$\mathsf{ForwardPass}^\uparrow$}{$\pl{A},{{\widehat{\Delta}}}$}
                \If{$\pl{g} = {\tt Rt}(\pl{A})$ refers to model $\pl{M}$}{~$\mathsf{Update}(\pl{g},\pl{M}^\prime)$} \Comment{$(\pl{M}, \pl{M}') \in \widehat{\Delta}$}
                \EndIf
                \Switch{$\pl{A}$}
                \Case{${\tt Und}(\pl{g})$ or ${\tt Evd}(\pl{g},e)$:}~\Return
                \EndCase 
                \Case{${\tt Decomp}(\pl{g}, st, \mathcal{A})$:}
                    \State \textbf{fix} $\pl{g} = \langle \pl{M}, P, \phi\rangle$
                    \If{$st$ instantiates template ${{\template{\modlang}{D}}^\uparrow}$ with input $\pl{x}$}
                    \State $\phi_\Delta := \widehat{\Delta}(\pl{M})$
                    \If{$\llbracket \phi \land \phi_\Delta\rrbracket = \emptyset$} 
                          \State $\mathsf{Annotate}(st, \mathsf{REUSE}(\phi))$
                          \State \textbf{for} $\pl{A}' \in \mathcal{A}$ \textbf{do} $\mathsf{ForwardPass}^\uparrow(\pl{A}',\widehat{\Delta})$
                        \Else
                           \State $\pl{v}_{st} := \mathsf{TemplateRegression}^\uparrow(\pl{M}^\prime,\phi,st,\mathcal{A},\template{\modlang}{D}^\uparrow,\pl{x},\phi_\Delta)$ \Comment{$(\pl{M},\pl{M}') \in \widehat{\Delta}$}
                            \State $\mathsf{Annotate}(st,\pl{v}_{st})$
            \State $\mathcal{A}_{\tt Obs} := \{\pl{A}_i \in \mathcal{A} \mid \pl{A}_i $ is not marked obsolete$\}$
            \State $\mathcal{A}_{?} := \{\pl{A}_i \in \mathcal{A} \mid \token{pc}(\pl{A}_i) \Rightarrow \phi_?\}$ \Comment{$\pl{v}_{st} = \langle \phi_\cmark, \phi_\xmark, \phi_?\rangle$}

            \State\textbf{for} $\pl{A}' \in \mathcal{A} \setminus (\mathcal{A}_{\tt Obs} \cup \mathcal{A}_?)$  \textbf{do} $\mathsf{ForwardPass}^\uparrow(\pl{A}',\widehat{\Delta})$
                        \EndIf 

                    \ElsIf{$st$ is not template-based}
                        \State $\mathsf{Annotate}(\{st\}, \mathsf{RECHECK}(\phi))$
                        \State \textbf{for} $\pl{A}' \in \mathcal{A}$ \textbf{do} $\mathsf{MarkAsRecheck(\pl{A}')}$
                    \EndIf
                \EndCase
            \EndSwitch
        \EndProcedure
    \end{algorithmic}
\end{algorithm}

The complete lifted forward pass is shown in Alg.~\ref{alg:instantiationLift}. As in the product \weakChange{version}{setting}, we proceed recursively through the AC, \weakChange{updating goals whenever we find one referring to a model which is modified under at least one configuration}{updating the subject of a predicative goal with the evolved product line model whenever such a goal is found} (Line 2).  In the recursive case, we once again perform case analysis on how \weakNew{the strategy} $st$ was created\weakChange{,with the most involved case being the regression of a non-analytic template as outlined above. In such a case, if the parent goal is not modified, we can annotate $st$ by $\overline{\cmark}$ and continue.}{. If $st$ was not generated via a formal template, we cannot make a decision about its regression, so we annotate $st$ and its descendants by $\mathsf{RECHECK}(\phi) = \langle \bot,\bot,\phi\rangle$ (Lines 19-20). Otherwise, if $st$ was obtained by instantiating a template with some input $\pl{x}$, we first compute the set of configurations in which the model of the parent goal has been modified, which we denote as $\phi_\Delta$ (Line 8). If there are no configurations for which the current strategy is present ($\phi$) \emph{and} which have been modified under this evolution ($\phi_\Delta$), we can annotate the strategy as $\mathsf{REUSE}(\phi) = \langle \phi, \bot, \bot\rangle$, i.e., the strategy can be reused across all relevant configurations, and we proceed recursively through the children (Lines 10-11). Otherwise, we need to invoke the template regression procedure (Alg.~\ref{alg:templateRegressionLift}) to obtain the regression value $\pl{v}_{st}$ for this strategy. In the product setting, we returned early if we found the strategy was annotated as ``Recheck'', and only recursively analyzed non-obsolete goals. In the lifted setting, there may be some configurations for which the strategy is reusable, and some for which the strategy is marked as ``Recheck'' (cf. Line 7 of Alg.~\ref{alg:templateRegressionLift}).   Mirroring the product-based analysis, we exclude from the recursion  children which are  marked as obsolete (Line 15) \emph{or} which are present under configurations for which the strategy is marked ``Recheck'' (Line 16).} 

\weakDelete{Otherwise, we invoke the lifted template regression procedure, which returns the regression value $\pl{v}_{st}$ for this strategy. Since this value only covers configurations in $\phi_\Delta$, we need to extend it with $\phi_{\neg \Delta}$, the configurations of which preserve the soundness of the strategy. After annotating the strategy with $\pl{v}_{st}$, we continue recursively on all subgoals which are not marked obsolete, nor with \textbf{?}. The remaining cases are relatively simple, reflecting almost exactly their analogues in Alg.~\ref{alg:instantiation}. If $st$ is an analytic template,  we annotate it by $\overline{\cmark}$, and re-execute the lifted template only if the product line is modified and it is argument-building. If $st$ is not template-based, we annotate it by $\overline{\textbf{?}}$ and halt.}

\begin{theorem}\label{thm:forwardsLift}
    \weakNew{Let $\pl{A} \in \Var{\AC}$ with $\token{pc}(\pl{A}) = \phi$. Then for every $\conf \in \llbracket \phi \rrbracket$, we have} 
    $$\mathsf{ForwardPass^\uparrow( \pl{A},\widehat{\Delta})|_\conf} = \mathsf{ForwardPass}(\pl{A}|_\conf,\widehat{\Delta}|_\conf)$$
    \weakNew{That is, Alg.~\ref{alg:instantiationLift} correctly lifts Alg.~\ref{alg:instantiation}.}
\end{theorem}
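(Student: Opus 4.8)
The plan is to prove Thm.~\ref{thm:forwardsLift} by structural induction on the variational assurance case $\pl{A} \in \Var{\AC}$, following the recursive structure shared by Alg.~\ref{alg:instantiationLift} and Alg.~\ref{alg:instantiation}. Throughout, I read the claimed equality as an equality of \emph{annotated} product ACs: deriving an annotated variational AC under $\conf$ prunes every subtree whose presence condition is unsatisfied by $\conf$ (per Def.~\ref{def:varACDerivation}) and replaces each variability-aware regression value $\langle \phi_\cmark, \phi_\xmark, \phi_{?}\rangle$ with the product value determined by the unique component whose feature expression is satisfied by $\conf$ — this is well-defined precisely because the three components partition $\llbracket \phi \rrbracket$ (Def.~\ref{def:varRegressionValues}). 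Two commutation facts will be used repeatedly and should be dispatched first: (i) updating the subject of a predicative goal with its evolved product line model commutes with derivation, i.e. $(\mathsf{Update}(\pl{g},\pl{M}'))|_\conf = \mathsf{Update}(\pl{g}|_\conf, \pl{M}'|_\conf)$ when $\conf \vDash \widehat{\Delta}(\pl{M})$ and is the identity otherwise — this is exactly the content of Def.~\ref{def:varEvoSet}, which characterizes $\widehat{\Delta}(\pl{M})$ as the set of configurations under which $\pl{M}|_\conf \neq \pl{M}'|_\conf$; and (ii) the product evolution set seen by the derived AC is $\widehat{\Delta}|_\conf$, so the recursive calls on both sides are invoked with matching evolution data.

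The base cases are straightforward. For $\pl{A} = {\tt Und}(\pl{g})$ and $\pl{A} = {\tt Evd}(\pl{g},e)$, both passes do nothing but (possibly) update the root goal and return, so the result follows immediately from commutation fact (i). The inductive case $\pl{A} = {\tt Decomp}(\pl{g},st,\mathcal{A})$ carries the substance of the argument, and I would structure it by mirroring the case analysis on $st$ in Alg.~\ref{alg:instantiationLift}. If $st$ is not template-based, the lifted pass annotates $st$ and all descendants with $\mathsf{RECHECK}(\phi)$; deriving under any $\conf \vDash \phi$ yields $?$ for $st$ and all its surviving descendants, which is exactly what the product pass produces on $\pl{A}|_\conf$, since $st$ remains non-template-based after derivation. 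If $st$ instantiates a lifted template $\template{\modlang}{D}^\uparrow$ and $\llbracket \phi \land \phi_\Delta \rrbracket = \emptyset$, then for every $\conf \vDash \phi$ we have $\conf \nvDash \phi_\Delta$, so by fact (i) the subject model is unmodified under $\conf$; the lifted pass annotates $st$ with $\mathsf{REUSE}(\phi)$ (deriving to $\cmark$) and recurses on every child, matching the product branch which annotates $st$ with $\cmark$ and recurses, with the children handled by the induction hypothesis.

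The remaining and most delicate subcase is when $st$ is template-based with $\llbracket \phi \land \phi_\Delta \rrbracket \neq \emptyset$, so the lifted pass invokes $\mathsf{TemplateRegression}^\uparrow$ (Alg.~\ref{alg:templateRegressionLift}). Here I would appeal directly to Lemma~\ref{lem:templateRegLift}, which already establishes that $\mathsf{TemplateRegression}^\uparrow(\dots)|_\conf$ equals the product-level $\mathsf{TemplateRegression}$ run on the derived inputs; this simultaneously gives the correct derived value $\pl{v}_{st}|_\conf$ \emph{and} the correct side-effects (which children are marked obsolete, split, or left reusable), since the obsolete/reuse/new partitioning performed by $\mathsf{Match}^\uparrow$ is part of that procedure. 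The crux is then to verify that the set of children over which the lifted pass recurses (Line~16 of Alg.~\ref{alg:instantiationLift}: those neither marked obsolete nor lying in the Recheck region $\phi_?$), once derived under $\conf$, coincides exactly with the set of children the product pass recurses over in $\pl{A}|_\conf$. This requires tracking the child-splitting introduced by $\mathsf{Match}^\uparrow$: a single variational child may be cloned into obsolete, reuse, and new copies with disjoint presence conditions, and I must show that for each $\conf$ at most one clone survives derivation, that it survives precisely when the corresponding product child is retained by the product $\mathsf{Match}$, and that its derived annotation agrees. I expect this bookkeeping — reconciling the presence-condition arithmetic of $\phi_{\tt O},\phi_{\tt R},\phi_{\tt N}$ against the Boolean retain/discard decision of the product pass, and confirming compatibility with the $\phi_?$-exclusion — to be the main obstacle; once it is settled, the induction hypothesis applies to each surviving child and closes the case.
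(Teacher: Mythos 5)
Your proposal is correct and takes essentially the same route as the paper's proof: induction on the structure (the paper uses strong induction on $Height(\pl{A})$, which is equivalent), the same three-way case analysis on the strategy $st$ mirroring Alg.~\ref{alg:instantiationLift}, an appeal to Lemma~\ref{lem:templateRegLift} for the template-regression subcase, and the inductive hypothesis for the surviving children. If anything, you are more explicit than the paper about the child-splitting bookkeeping of $\mathsf{Match}^\uparrow$ and the alignment of the recursion sets, which the paper's proof simply asserts "is reflected by Lines 15 and 16" (treating it as part of the side-effect correctness guaranteed by Lemma~\ref{lem:templateRegLift}); note also that the paper adds a simplifying assumption you implicitly share, namely that every template-based product strategy corresponds to a lifted template, so derivation never changes a strategy's template-based status.
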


\weakNew{As in the product-based analysis, the forward pass results in an annotated variational AC in which every strategy is annotated by a regression value, and every goal will either be unannotated, annotated as ${\tt RECHCEK}$ (?) for all relevant configurations, or marked as obsolete.  As in the product-based analysis, we can extract the \emph{reusable core} $\pl{A}_R$ of the variational AC $\pl{A}$ following the forward pass of the analysis, by pruning all nodes which are  marked as obsolete. Note that the reusable core $\pl{A}_R$ of $\pl{A}$ remains a structurally valid (and well-formed) variational AC.}

\subsection{Lifting the Backward Pass}
\label{sec:backwardspassLift}

\weakChange{We now describe the lifting of the backward pass. As in the product setting, we first consider the regression of evidence, and then turn to the composition of regression values.}{Having completed the forward pass of the original variational AC $\pl{A}$, and extracted the reusable core $\pl{A}_R$, we now seek to determine for which configurations the goals of $\pl{A}_R$ can still be supported by evidence. Suppose we are considering an (updated) variational goal $\pl{g}'$, with existing evidence $e$ adequate for original goal $\pl{g}$. The process for computing the regression of support for $\pl{g}'$ is shown in Alg.~\ref{alg:varEvdRegression}. Mirroring the product-based analysis, we first consider whether $\pl{g}'$ is propositional (in which case we assume it has not regressed), or whether the model referred to by $\pl{g}'$ is not modified under any configurations in which $\pl{g}'$ is present (Line 2). In either case, we can determine that support for $\pl{g}'$ is preserved, so we set $\pl{v} = \mathsf{REUSE}(\phi) = \langle \phi, \bot,\bot\rangle$. Otherwise, we have a predicative goal $\pl{g}' = \langle \pl{M}', P, \phi \rangle$ such that for some configurations of $\phi$, the product line model $\pl{M}$ has been modified as $\pl{M}'$ (Line 6). Mirroring the product-based analysis, we then consider whether we have a (lifted) regression analysis $R_P^\uparrow$. If we do, then we can use it to compute the regression value with respect to $\phi$ (Line 8).\footnote{In practice, one can restrict the feature model further to $\phi \land \widehat{\Delta}(M)$, since we know \emph{a priori} that $P$ has been preserved for the unmodified configurations satisfying $\phi \land \neg \widehat{\Delta}$.} Otherwise, if no such analysis exists, we compute the set of configurations under which this product line model $\pl{M}$ has been modified (denoted $\phi_\Delta$) and define the regression value $\pl{v}$  to partition $\llbracket \phi \rrbracket$ into the set of configurations satisfying $\neg \phi_\Delta$, for which the evidence is reusable, and the set of configurations satisfying $\phi_\Delta$, for which it is unknown whether the evidence is reusable (Lines 10-11). In either case, we conclude by annotating both $e$ and $\pl{g}'$ with $\pl{v}$ and returning $\pl{v}$ (Lines 12-13). }

\begin{algorithm}[t]
    \caption{Variational Evidence Regression}
    \label{alg:varEvdRegression}
    \begin{algorithmic}[1] 
        \onehalfspacing
        \Procedure{$\mathsf{EvdRegression}^\uparrow$}{$\widehat{\Delta}, \pl{g}^\prime,e$}
        \State \textbf{fix} $\phi = \token{pc}(\pl{g}')$

            \If{$g'$ is propositional or $g'$ refers to $\pl{M}'$ s.t. $(\pl{M},\pl{M}') \in \widehat{\Delta}$ and $\llbracket \phi \wedge \widehat{\Delta}(\pl{M}) \rrbracket = \emptyset$ }
             \State $\pl{v} :=\mathsf{REUSE}(\phi)$
            \Else
            
           \State \textbf{fix} $\pl{g}' = \langle \pl{M}, P, \phi\rangle$ and  $\pl{M}'$ such that $(\pl{M}, \pl{M}') \in \widehat{\Delta}$

            \If{lifted regression analysis ${\ntn{R_P^\uparrow}}$ for $P$ is available}
            \State $\pl{v} := R_P^\uparrow(\pl{M}, \pl{M}')$ \Comment{Evaluate under restricted feature model $\phi$}
            \Else
            \State  $\phi_\Delta := \widehat{\Delta}(\pl{M})$
            \State {$\pl{v} := \langle \phi \land \neg \phi_\Delta,\bot,\phi \land \phi_\Delta\rangle$} \Comment{$\pl{v}$ partitions $\llbracket \phi\rrbracket $}
            \EndIf
        \EndIf
           \State $\mathsf{Annotate}(\{e, \pl{g}'\},\pl{v})$
           \State \Return $\pl{v}$
        \EndProcedure
    \end{algorithmic}
\end{algorithm}
\weakChange{Suppose we want to determine the variability-aware regression value $\pl{v}$ between variational (predicative) goals $\pl{g}, \pl{g}^\prime$, where $\pl{g}$ is supported by evidence $e$. We follow the procedure outlined in Alg.~\ref{alg:varEvdRegression}. In addition to the goals and evidence, the procedure takes $\phi$, the presence condition of $\pl{g}^\prime$ and feature expression $\phi_{\Delta}$ describing the set of configurations for which the model referred to in $g$ is modified in the evolution.}{}

\weakDelete{We fix $\pl{g} = \langle \pl{M},P,\phi\rangle$, $\pl{g}^\prime = \langle \pl{M}^\prime, P, \phi\rangle$, and $e : \Inv{\pl{M}}{P}{\phi}$. We need to determine the regression for (and \emph{only} for) the configurations satisfying $\phi_{\Delta} \land \phi^\prime$. As in the product setting, we do a case analysis on the existence of a (lifted) regression analysis $R^\uparrow$ for $P$. Lacking a lifted analysis, we can again consider simply re-verifying $P$ for this set of configurations in a {brute-force} manner as a worst-case form of regression analysis, but this is only feasible if $\llbracket \phi_\Delta \land \phi^\prime \rrbracket$ is not too large. In any case, such an analysis $R^\uparrow$ will return a variability-aware regression value which partitions $\phi_{\Delta} \land \phi^\prime$ (Line 4). We then need to extend it to cover all configurations of $\phi^\prime$; we do so by extending the $\cmark$-component to cover $\neg\phi_{\Delta}$, i.e.,  those configurations which did not need to be analyzed as they were not affected by the change (Line 5). If, by contrast, no such $R^\uparrow$ was available, and the set of relevant configurations was too large to verify in a brute-force manner, we can only split the configurations of $\phi^\prime$ into those satisfying $\phi_\Delta$, whose regression values are indeterminate, and those satisfying $\neg \phi_\Delta$, which are known not to be affected (Line 6). We conclude by annotating $\pl{g}$ and $e$ with $\pl{v}$ before returning $\pl{v}$.}

\begin{lemma}\label{lem:evdRegLiftCorrect}

    \weakNew{Let $\widehat{\Delta}$ be a variational evolution set, $\pl{g}'$ be a variational goal with presence condition $\phi$, and $e$ be adequate evidence for $\pl{g}'$. Then for all configurations $\conf \in \llbracket \phi \rrbracket$, we have }
    $$\mathsf{EvdRegression}^\uparrow(\widehat{\Delta},\pl{g}',e)|_\conf = \mathsf{EvdRegression}(\widehat{\Delta}|_\conf, \pl{g}'|_\conf, e)$$
    \weakNew{i.e., Alg.~\ref{alg:varEvdRegression} correctly lifts Alg.~\ref{alg:productEvdRegression}}.
\end{lemma}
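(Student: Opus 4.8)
The plan is to prove the identity \emph{pointwise}: fix an arbitrary configuration $\conf \in \llbracket \phi \rrbracket$, where $\phi = \token{pc}(\pl{g}')$, and show that the variability-aware value returned by $\mathsf{EvdRegression}^\uparrow(\widehat{\Delta},\pl{g}',e)$ derives under $\conf$ to exactly the value returned by $\mathsf{EvdRegression}(\widehat{\Delta}|_\conf,\pl{g}'|_\conf,e)$. Since $\mathsf{EvdRegression}^\uparrow$ performs no recursion, no induction is required; the argument is a finite case analysis mirroring the branch structure of the two algorithms. I would first record two facts. (i) Deriving a variability-aware regression value $\langle \phi_\cmark,\phi_\xmark,\phi_{?}\rangle$ under $\conf$ yields the unique symbol $s \in \{\cmark,\xmark,?\}$ with $\conf \vDash \phi_s$; this is well-defined because the three expressions partition $\llbracket \phi \rrbracket$ (Def.~\ref{def:varRegressionValues}) and $\conf \in \llbracket \phi \rrbracket$. (ii) By Def.~\ref{def:varEvoSet}, $\conf \vDash \widehat{\Delta}(\pl{M})$ holds iff $\pl{M}|_\conf \neq \pl{M}'|_\conf$; equivalently, the derived product model $\pl{M}|_\conf$ lies in the product-based evolution set $\widehat{\Delta}|_\conf$ exactly when $\conf$ satisfies the change expression $\widehat{\Delta}(\pl{M})$. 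Fact (ii) is the bridge that translates the lifted algorithm's feature-expression tests into the product-based algorithm's membership tests.

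With these in hand I would split on the branch taken by $\mathsf{EvdRegression}^\uparrow$. If the first branch fires, then $\pl{v} = \mathsf{REUSE}(\phi) = \langle \phi,\bot,\bot\rangle$, which by (i) derives to $\cmark$ for every $\conf \in \llbracket \phi \rrbracket$, so it suffices to show $\mathsf{EvdRegression}$ also returns $\cmark$: when $\pl{g}'$ is propositional, $\pl{g}'|_\conf$ is propositional (Def.~\ref{def:vgoal}) and the product-based first branch returns $\cmark$; when instead $\llbracket \phi \wedge \widehat{\Delta}(\pl{M}) \rrbracket = \emptyset$, fact (ii) gives $\conf \nvDash \widehat{\Delta}(\pl{M})$, hence $\pl{M}|_\conf \notin \widehat{\Delta}|_\conf$ and the product-based first branch again returns $\cmark$. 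In the else branch, where $\pl{g}'$ is predicative over the evolved model $\pl{M}'$, I would split on the availability of a lifted analysis $R_P^\uparrow$, invoking the standing assumption that $R_P^\uparrow$ is available precisely when $R_P$ is, and that $R_P^\uparrow$ is a lift of $R_P$ (Def.~\ref{def:lift} in its two-argument form). If $R_P^\uparrow$ is available, then for configurations in which $\pl{M}$ is modified the lift property gives $R_P^\uparrow(\pl{M},\pl{M}')|_\conf = R_P(\pl{M}|_\conf,\pl{M}'|_\conf)$, matching the product-based call, while for unmodified configurations the product-based analysis short-circuits to $\cmark$. If $R_P^\uparrow$ is unavailable, then $\pl{v} = \langle \phi \wedge \neg \phi_\Delta,\;\bot,\;\phi \wedge \phi_\Delta\rangle$ with $\phi_\Delta = \widehat{\Delta}(\pl{M})$, which by (i) derives to $?$ when $\conf \vDash \phi_\Delta$ and to $\cmark$ otherwise; using (ii), these two sub-cases align exactly with the product-based algorithm returning $?$ for a modified predicative goal with $R_P$ unavailable, and $\cmark$ for an unmodified one.

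To finish, I would note that in every branch both algorithms annotate the same pair with the value they return (i.e., $\{e,\pl{g}'\}$ is annotated with $\pl{v}$, and $\{e,\pl{g}'|_\conf\}$ with $\mathsf{EvdRegression}(\widehat{\Delta}|_\conf,\pl{g}'|_\conf,e)$), so the side effects also agree under derivation, and the case-by-case equality of returned values completes the argument. I expect the only delicate step to be the sub-case in which $R_P^\uparrow$ is used: the lifted analysis is ``evaluated under the restricted feature model $\phi$'' over all present configurations, whereas the product-based analysis invokes $R_P$ \emph{only} on genuinely modified models and short-circuits to $\cmark$ elsewhere. Reconciling these requires making precise the convention (recorded in the algorithm's footnote, restricting to $\phi \wedge \widehat{\Delta}(\pl{M})$) that $R_P^\uparrow$ yields $\cmark$ on unchanged configurations; this is the crux, and must be stated as an explicit hypothesis on $R_P^\uparrow$ rather than derived, since Def.~\ref{def:regression} alone does not force $R_P$ to return $\cmark$ on identical inputs.
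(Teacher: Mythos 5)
Your proof is correct and follows essentially the same route as the paper's own: a pointwise, branch-by-branch case analysis of Alg.~\ref{alg:productEvdRegression} and Alg.~\ref{alg:varEvdRegression} under derivation at a fixed $\conf \in \llbracket \phi \rrbracket$, relying on the same standing assumption that a lifted analysis $R_P^\uparrow$ exists precisely when the product-based $R_P$ does. Where you genuinely add value is the $R_P^\uparrow$-available sub-case: the paper dispatches it in a single sentence (``the correctness of the result is provided by the correctness of $R_P^\uparrow$''), whereas you correctly observe that pointwise agreement at configurations where $\pl{M}|_\conf = \pl{M}'|_\conf$ requires $R_P^\uparrow(\pl{M},\pl{M}')|_\conf = \cmark$ there --- since the product-based algorithm short-circuits to $\cmark$ on its first branch without ever calling $R_P$ --- and that this is not forced by Def.~\ref{def:regression} or Def.~\ref{def:varRegression}, so it must be adopted as an explicit hypothesis (in effect, the convention in the footnote to Alg.~\ref{alg:varEvdRegression} of restricting $R_P^\uparrow$ to $\phi \wedge \widehat{\Delta}(\pl{M})$ and treating unmodified configurations as Reuse). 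Isolating that hypothesis makes your treatment of this branch tighter than the paper's; the remaining cases (the $\mathsf{REUSE}(\phi)$ branch, and the $R_P^\uparrow$-unavailable branch returning $\langle \phi \wedge \neg\phi_\Delta, \bot, \phi \wedge \phi_\Delta\rangle$) match the paper's proof exactly, and your check that the annotation side effects agree under derivation is a detail the paper omits entirely.
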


\subsubsection{Composition of Variational Regression Values} {\;}
\weakChange{We next need to determine a suitable notion of \emph{composition} for lifted regression values, corresponding to $\mathsf{Min}$ in Alg.~\ref{alg:product_regression}. We  define $\mathsf{Min}^\uparrow$ as the $n$-ary operator as follows: given inputs $\{\pl{v}_1\ellipses  \pl{v}_n\}$ with $\pl{v}_i = \langle \phi^i_\cmark, \phi^i_\xmark, \phi^i_{\textbf{?}} \rangle$, we define $\mathsf{Min}^\uparrow(\{\pl{v}_1\ellipses  \pl{v}_n\})$ $ = \langle \phi_\cmark^\prime, \phi_\xmark^\prime, \phi_{\textbf{?}}^\prime \rangle$, where $\phi_\cmark^\prime = \bigwedge_i \phi_\cmark^i$, $\phi_\xmark^\prime = \bigvee_i \phi^i_\xmark$, and $\phi_{\textbf{?}}^\prime = \neg (\phi_\cmark^\prime \lor \phi_\xmark^\prime)$}{Having defined a procedure for analyzing the regression of variational evidence, all that remains is to consider the \emph{composition} of variability-aware regression values. That is, given a variational decomposition ${\tt Decomp}(\pl{g}, st, \{\pl{A}_1\ellipses \pl{A}_n\})$, given variability-aware regression values for $st$ and the root goals of each $\pl{A}_i$, we want to compute the appropriate regression value for $\pl{g}$. More precisely, we need to define a lifted composition operator $\mathsf{Min}^\uparrow(\{\pl{v}_1\ellipses \pl{v}_n, \pl{v}_{st}\})$, where each $\pl{v}_i$ is the regression value obtained from $\pl{A}_i$, and $\pl{v}_{st}$ is the regression value of $st$, which properly lifts the semantics of the $\mathsf{Min}$ operator defined in Sec.~\ref{sec:product_regression}. Note that the regression values obtained from each $\pl{A}_i$  only partition $\phi_i = \token{pc}(\pl{A}_i)$, and the regression value for $\pl{g}$ must provide a partitioning of $\phi = {\tt pc}(\pl{g})$.

We can begin by considering the simpler task of defining a binary operator $\pl{v}_1 \otimes \pl{v}_2$, where $\pl{v}_1$ is a regression value partitioning $\phi_1$, and $\pl{v}_2$ is a regression value partitioning $\phi_2$. Suppose that we want to compute a composite regression value $\pl{v} = \pl{v}_1 \otimes \pl{v}_2$ which partitions $\llbracket \phi_1 \vee 
\phi_2\rrbracket$, and consider an arbitrary $\conf \in \llbracket \phi_1 \vee \phi_2 \rrbracket$. We can reason as follows:
\begin{enumerate}
    \item If $\conf \in \llbracket \phi_1 \wedge \neg \phi_2 \rrbracket$, then $\pl{v}|_\conf$ reduces to $\pl{v}_1|_\conf$.
    \item If $\conf \in \llbracket \phi_2 \wedge \neg \phi_1 \rrbracket$, then $\pl{v}|_\conf$ reduces to $\pl{v}_2|_\conf$.
    \item Otherwise, if $\conf \in \llbracket \phi_1 \wedge \phi_2\rrbracket$, then, to respect the ordering $\xmark < ~? < \cmark$:
    \begin{enumerate}[topsep=0.2em]
        \item To have $\pl{v}|_\conf = \cmark$, we must have both $\pl{v}_1|_\conf = \cmark$ and $\pl{v}_2|_\conf = \cmark$
        \item To have $\pl{v}|_\conf = \xmark$, it suffices to have either $\pl{v}_1|_\conf = \xmark$ or $\pl{v}_2|_\conf = \xmark$
        \item to have $\pl{v}|_\conf = ?$, it suffices to have either $\pl{v}_1|_\conf = ?$ or $\pl{v}_2|_\conf = ?$, so long as the \emph{other} regression value does not become $\xmark$ under $\conf$.
    \end{enumerate}
\end{enumerate}

Thus, given the regression value $\pl{v}_1 = \langle \phi_\cmark, \phi_\xmark, \phi_?\rangle$ partitioning $\llbracket \phi\rrbracket$, and $\pl{v}_2 = \langle \psi_\cmark, \psi_\xmark, \psi_?\rangle$ partitioning $\llbracket \psi \rrbracket$, we can compute $\pl{v}_1 \otimes \pl{v}_2 = \langle \theta_\cmark, \theta_\xmark, \theta_? \rangle$, where 
\begin{align*}
    \theta_\cmark &= (\phi_\cmark \wedge \psi_\cmark) \vee (\phi_\cmark \wedge \neg \psi) \vee (\psi_\cmark \wedge \neg \phi) \\ 
    \theta_\xmark &= \phi_\xmark \vee \psi_\xmark\\
    \theta_? &= (\phi_? \wedge \neg \psi) \vee (\psi_? \wedge \neg \phi) \vee ((\phi_? \vee \psi_?) \wedge\neg \phi_\xmark \wedge \neg \psi_\xmark )
\end{align*}

\begin{lemma}\label{lem:compBinCorrect}
    Let $\pl{v}_1 = \langle \phi_\cmark, \phi_\xmark, \phi_?\rangle$ be a regression value partitioning $\llbracket \phi\rrbracket$, and $\pl{v}_2 = \langle \psi_\cmark, \psi_\xmark, \psi_?\rangle$ be a regression value partitioning $\llbracket \psi \rrbracket$. Then $\pl{v}_1 \otimes \pl{v}_2$ is a variability-aware regression value over $\llbracket \phi  \vee \psi \rrbracket$ such that for all $\conf \in \llbracket \phi \vee \psi \rrbracket$, we have $(\pl{v}_1 \otimes \pl{v}_2)|_\conf = \mathsf{Min}(\{\pl{v}_1,\pl{v}_2\}|_\conf)$.
\end{lemma}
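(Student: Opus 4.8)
The plan is to prove both assertions of the lemma at once: that $\pl{v}_1 \otimes \pl{v}_2 = \langle \theta_\cmark, \theta_\xmark, \theta_?\rangle$ is a genuine variability-aware regression value over $\llbracket \phi \vee \psi \rrbracket$ (Def.~\ref{def:varRegressionValues}), and that its derivation agrees pointwise with $\mathsf{Min}$. I would fix an arbitrary $\conf \in \llbracket \phi \vee \psi\rrbracket$ and show that $\conf$ satisfies \emph{exactly one} of $\theta_\cmark, \theta_\xmark, \theta_?$, with the value of that unique component equal to $\mathsf{Min}(\{\pl{v}_1,\pl{v}_2\}|_\conf)$. Establishing ``exactly one'' for every $\conf$ is precisely the partition requirement of Def.~\ref{def:varRegressionValues}, and it delivers the derivation identity simultaneously, so no separate well-formedness argument is needed.

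First I would split $\llbracket\phi\vee\psi\rrbracket$ into its three mutually exclusive, jointly exhaustive regions $\llbracket\phi\wedge\neg\psi\rrbracket$, $\llbracket\neg\phi\wedge\psi\rrbracket$, and $\llbracket\phi\wedge\psi\rrbracket$. In the first region only $\pl{v}_1$ is defined, so $\mathsf{Min}(\{\pl{v}_1,\pl{v}_2\}|_\conf) = \pl{v}_1|_\conf$; since $\pl{v}_1$ partitions $\llbracket\phi\rrbracket$, $\conf$ lies in exactly one of $\phi_\cmark,\phi_\xmark,\phi_?$, and I would check in each subcase that $\conf$ activates the matching $\theta$-component and no other, using $\conf \not\vDash \psi$ to kill every disjunct that mentions $\psi_\cmark, \psi_\xmark$, or $\psi_?$. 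Because all three of $\theta_\cmark,\theta_\xmark,\theta_?$ are symmetric under swapping $\phi \leftrightarrow \psi$, the region $\llbracket\neg\phi\wedge\psi\rrbracket$ follows immediately by symmetry, yielding $\pl{v}_2|_\conf$.

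The interesting region is the overlap $\llbracket\phi\wedge\psi\rrbracket$, where both derivations are defined and $\mathsf{Min}$ genuinely combines two values under the ordering $\xmark < ~? < \cmark$. Here I would run a secondary analysis organized by the minimum rather than by all nine pairs: if either value is $\xmark$ then $\theta_\xmark = \phi_\xmark \vee \psi_\xmark$ fires, and the guard $\neg\phi_\xmark \wedge \neg\psi_\xmark$ in $\theta_?$ together with the $\phi_\cmark\wedge\psi_\cmark$ requirement in $\theta_\cmark$ exclude the other two components; if neither is $\xmark$ but at least one is $?$, the third disjunct of $\theta_?$ fires (its guards now hold) while $\theta_\cmark$ fails for want of $\phi_\cmark\wedge\psi_\cmark$; and if both are $\cmark$ then $\theta_\cmark$ fires via its $\phi_\cmark\wedge\psi_\cmark$ disjunct, with the $\neg\psi$ and $\neg\phi$ conjuncts killing the stray disjuncts of $\theta_?$.

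The main obstacle I anticipate is the bookkeeping in this overlap: the third disjunct of $\theta_?$, namely $(\phi_? \vee \psi_?) \wedge \neg\phi_\xmark \wedge \neg\psi_\xmark$, is exactly the clause encoding ``the minimum is $?$ provided neither value is $\xmark$,'' and I must verify that its $\xmark$-guard makes $\theta_?$ and $\theta_\xmark$ disjoint while its $(\phi_?\vee\psi_?)$ trigger makes $\theta_?$ and $\theta_\cmark$ disjoint and jointly exhaustive. Getting these boundary interactions exactly right — rather than the individual value computations, which are routine — is where care is required; once the overlap subcases are checked, disjointness-and-coverage across all three regions yields both the partition property and the pointwise $\mathsf{Min}$ identity at once.
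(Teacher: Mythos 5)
Your proposal is correct and takes essentially the same approach as the paper's proof: an exhaustive case analysis over configurations of $\llbracket \phi \vee \psi \rrbracket$, checking the three $\theta$-formulas against the semantics of $\mathsf{Min}$ under the ordering $\xmark < \textbf{?} < \cmark$. The only notable difference is organizational and in your favor: the paper dismisses the partition property as ``clear'' and cases on the derived value $(\pl{v}_1 \otimes \pl{v}_2)|_\conf$, whereas you case on regions ($\phi \wedge \neg\psi$, $\neg\phi \wedge \psi$ by symmetry, and the overlap $\phi \wedge \psi$ organized by the minimum) and thereby establish the ``exactly one component fires'' claim and the pointwise $\mathsf{Min}$ identity simultaneously, making your argument slightly more complete on the well-formedness half of the lemma.
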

\weakNew{Note that $\{\pl{v}_1, \pl{v}_2\}|_\conf$ evaluates to $\{\pl{v}_1|_\conf, \pl{v}_2|_\conf\}$ if $\conf \in \llbracket \phi \land \psi \rrbracket$, and to $\{\pl{v}_1|_\conf\}$ (resp. $\{\pl{v}_2|_\conf\}$) if $\conf \in \llbracket \phi \land \neg \psi \rrbracket$ (resp. $\llbracket \neg \phi \land \psi\rrbracket$). 
 It is straightforward to show that $\otimes$ is both associative and commutative. With these properties in mind, we can define the lifted operator $\mathsf{Min}^\uparrow(\{\pl{v}_1\ellipses \pl{v}_n\})$ as the $n$-ary fold}
$$\mathsf{Min}^\uparrow(\{\pl{v}_1\ellipses \pl{v}_n\}) = \pl{v}_1 \otimes \pl{v}_2 \otimes \ldots \otimes \pl{v}_n$$
}
\weakNew{A proof by induction on $n$ provides the more general result we require.}
\begin{lemma}\label{lem:minLiftCorrect}
    \weakNew{Let $\{\pl{v}_1\ellipses \pl{v}_n\}$ be a set of $n \geq 1$ variability-aware regression values, such that each $\pl{v}_i$ partitions  $\llbracket \phi_i\rrbracket$ for some feature expression $\phi_i$ (over the same alphabet of features). Then for every $\conf \in \llbracket \bigvee_i \phi_i \rrbracket$, we have} 
    $$\mathsf{Min}^\uparrow(\{\pl{v}_1\ellipses \pl{v}_n\})|_\conf = \mathsf{Min}(\{\pl{v}_1\ellipses \pl{v}_n\}|_\conf)$$
\end{lemma}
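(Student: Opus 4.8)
The plan is to prove the statement by induction on $n$, leveraging the binary correctness result (Lemma~\ref{lem:compBinCorrect}) together with the associativity of $\otimes$ and the order-theoretic fact that the product-level $\mathsf{Min}$ operator — being the minimum over the total order $\xmark < ~? < \cmark$ — is idempotent and associative, so that $\mathsf{Min}(S \cup \{v\}) = \mathsf{Min}(\{\mathsf{Min}(S), v\})$ for any nonempty set $S$. The base case $n=1$ is immediate: $\mathsf{Min}^\uparrow(\{\pl{v}_1\}) = \pl{v}_1$ and $\mathsf{Min}(\{\pl{v}_1|_\conf\}) = \pl{v}_1|_\conf$, so both sides reduce to $\pl{v}_1|_\conf$ for every $\conf \in \llbracket \phi_1 \rrbracket$.

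For the inductive step, I would assume the claim for sets of size $n-1$ and set $\pl{w} = \mathsf{Min}^\uparrow(\{\pl{v}_1\ellipses \pl{v}_{n-1}\}) = \pl{v}_1 \otimes \cdots \otimes \pl{v}_{n-1}$. Repeated application of Lemma~\ref{lem:compBinCorrect} establishes that $\pl{w}$ is itself a variability-aware regression value partitioning $\llbracket \bigvee_{i<n}\phi_i\rrbracket$, and associativity of $\otimes$ gives $\mathsf{Min}^\uparrow(\{\pl{v}_1\ellipses \pl{v}_n\}) = \pl{w} \otimes \pl{v}_n$. Fixing an arbitrary $\conf \in \llbracket \bigvee_{i\le n}\phi_i\rrbracket$ and applying Lemma~\ref{lem:compBinCorrect} to $\pl{w}$ and $\pl{v}_n$ yields $(\pl{w}\otimes\pl{v}_n)|_\conf = \mathsf{Min}(\{\pl{w},\pl{v}_n\}|_\conf)$, reducing the problem to relating $\mathsf{Min}(\{\pl{w},\pl{v}_n\}|_\conf)$ to $\mathsf{Min}(\{\pl{v}_1\ellipses \pl{v}_n\}|_\conf)$.

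The remainder is a case analysis on which presence conditions $\conf$ satisfies, using the convention that derivation of a set of regression values retains only those members whose presence conditions hold under $\conf$. If $\conf$ satisfies $\phi_n$ but none of $\phi_1\ellipses \phi_{n-1}$, both derived sets reduce to $\{\pl{v}_n|_\conf\}$ and the equation holds trivially. If $\conf$ satisfies some $\phi_i$ with $i<n$ but not $\phi_n$, then $\{\pl{w},\pl{v}_n\}|_\conf = \{\pl{w}|_\conf\}$; the induction hypothesis gives $\pl{w}|_\conf = \mathsf{Min}(\{\pl{v}_1\ellipses \pl{v}_{n-1}\}|_\conf)$, and since $\pl{v}_n$ is absent $\{\pl{v}_1\ellipses \pl{v}_n\}|_\conf = \{\pl{v}_1\ellipses \pl{v}_{n-1}\}|_\conf$, so both sides agree. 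Finally, if $\conf$ satisfies $\phi_n$ and some $\phi_i$ ($i<n$), then $\{\pl{w},\pl{v}_n\}|_\conf = \{\pl{w}|_\conf,\pl{v}_n|_\conf\}$; applying the induction hypothesis to $\pl{w}|_\conf$ and then the absorption identity with $S = \{\pl{v}_1\ellipses \pl{v}_{n-1}\}|_\conf$ and $v = \pl{v}_n|_\conf$ yields $\mathsf{Min}(\{\pl{v}_1\ellipses \pl{v}_n\}|_\conf)$.

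The main obstacle is carefully tracking presence conditions through the two-stage derivation — ensuring that the derived set $\{\pl{w},\pl{v}_n\}|_\conf$, in which $\pl{w}$ aggregates the first $n-1$ values, lines up exactly with the single-stage derivation $\{\pl{v}_1\ellipses \pl{v}_n\}|_\conf$. The delicate point is configurations for which only some $\phi_i$ hold: there the intermediate $\pl{w}|_\conf$ must already equal the $\mathsf{Min}$ of the present subset (supplied by the induction hypothesis), and Lemma~\ref{lem:compBinCorrect} must correctly ``skip'' any argument whose presence condition fails under $\conf$, which it does by construction of $\otimes$ via the $\neg\phi$ and $\neg\psi$ disjuncts in its definition. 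Verifying that these mechanisms compose cleanly is the only real content of the argument; the order-theoretic facts about $\mathsf{Min}$ are routine.
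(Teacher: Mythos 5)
Your proof is correct and takes essentially the same approach as the paper: the paper itself only remarks that ``a proof by induction on $n$'' using Lemma~\ref{lem:compBinCorrect} and the associativity of $\otimes$ yields the result, which is exactly the induction you carry out. Your case analysis on which of the $\phi_i$ the configuration $\conf$ satisfies, together with the absorption identity $\mathsf{Min}(S \cup \{v\}) = \mathsf{Min}(\{\mathsf{Min}(S), v\})$, correctly supplies the details the paper leaves implicit.
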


\subsubsection{The Backward Pass}

\begin{algorithm}[t]
    \caption{Backward Pass of the Lifted Regression Analysis}
    \label{alg:lifted_assurance_regression}
    \begin{algorithmic}[1] 
        \onehalfspacing
        \Procedure{$\mathsf{BackwardPass}^\uparrow$}{$\widehat{\Delta},\pl{A}_R$} 
            \Switch{$\pl{A}_R$}
                \Case{${\tt Und}(\pl{g}')$:}
                    \State 
                    $\mathsf{Annotate}(\pl{g}', \mathsf{REVISE}(\phi))$  \Comment{$\phi = {\token{pc}}(\pl{g}')$} 
                    \State \textbf{return} ${\tt REVISE}(\phi)$
                \EndCase
                \Case{${\tt Evd}(\pl{g}^\prime,e)$:} 
                    \State \textbf{return} $\mathsf{EvdRegression}^\uparrow(\widehat{\Delta},\pl{g}^\prime, e)$
                \EndCase
                \Case{${\tt Decomp}(\pl{g}^\prime, st, \{\pl{A}_1^\prime\ellipses  \pl{A}_n^\prime\})$:} 
                    \State $\pl{v}_{st} := \mathsf{Annotation}(st)$
                    \If{$\pl{v}_{st} = \mathsf{RECHECK}(\phi)$} \Comment{$\phi = {\token{pc}}(\pl{g}')$} 
                        \State $\mathsf{Annotate}(\pl{g}',\pl{v}_{st})$
                        \State \textbf{return} $\pl{v}_{st}$
                    \EndIf

                    \State  $V:= \{\pl{v}_{st}\}$
                    \State \textbf{for each} $\pl{A}_i$ \textbf{do} $V := V \cup \{\mathsf{BackwardPass}^\uparrow(\widehat{\Delta},\pl{A}_i')\}$
                        \State $\pl{v} := \mathsf{Min}^\uparrow(V)$ 
                        \State $\mathsf{Annotate}(\pl{g}',\pl{v})$
                    \State \Return $\pl{v}$
                \EndCase
            \EndSwitch
        \EndProcedure
    \end{algorithmic}
\end{algorithm}
\label{sec:backwardsPassLift}
We now have all we need to define the lifted backward pass\weakChange{. The procedure is}{,} specified in Alg.~\ref{alg:lifted_assurance_regression}. \weakChange{As in the product setting, w}{W}e assume that the forward pass has been applied to a variational AC $\pl{A}$\weakChange{ to yield $\pl{A}^\prime$, and that all obsolete branches have been pruned from both ACs, maintaining symmetry}{, and that we have subsequently extracted the reusable core $\pl{A}_R$.} \weakChange{Furthermore, we restrict the presence conditions on both ACs to only reflect the configurations for which matched goals are reusable, as per the description of $\mathsf{Match}^\uparrow$ in Sec.~\ref{sec:forwardLift}. That is, the ACs have equivalent presence conditions on all goals. {Given} both ACs as input, alongside $\phi^\prime$, the presence condition on the root goal of $\pl{A}^\prime$, and the variability-aware evolution set $\Delta$.In the base case, we first check whether we can skip evidence regression if $\pl{g}$ is either propositional or its goal model is unmodified (Line 5). There is also a third case, in which the model referred to in the goal is modified, but this modification does not apply to any of the configurations which are relevant to this goal (Line 8). In all three cases, we annotate the goal and evidence by $\overline{\cmark}$ and return this value. Otherwise, we invoke the lifted evidence regression procedure, which will annotate the goal and evidence accordingly and return a regression value.}{We proceed recursively on the structure of $\pl{A}_R$. We denote the goals of the AC using the notation $\pl{g}'$ to emphasize that they may have been updated during the forward pass. Mirroring the product-based analysis, if we find an undeveloped goal $\pl{g}'$ with presence condition $\phi$, it must be the case that all of its subgoals have become obsolete and been removed (assuming $\pl{A}$ was originally supported). We therefore annotate $\pl{g}'$ with $\mathsf{REVISE}(\phi)$ and return this value (Lines 4-6). In the second base case, where we find a variational goal $\pl{g}'$ supported by evidence $e$, we invoke the lifted evidence regression procedure (Alg.~\ref{alg:varEvdRegression}). }  

\weakChange{In the recursive case, we check the annotation which was made to strategy $st^\prime$ during the forward pass. If it is $\overline{\textbf{?}}$, we have encountered an informal strategy whose descendants have already been annotated by $\overline{\textbf{?}}$, so we return this value (Line 14). Similarly, if the strategy is annotated by $\xmark$, but each child is already annotated by $\overline{\textbf{?}}$, this means we could not find a reusable argument fragment during the forward pass, so we proceed no further, returning $\overline{\xmark}$. Otherwise, we recursively compute regression values for each child, and determine the regression value for the parent goal by taking their (lifted) minimum}{In the recursive case, in which goal $\pl{g}'$ is decomposed by strategy $st$ into $\{\pl{A}_1'\ellipses \pl{A}_n'\}$, we extract the regression value $\pl{v}_{st}$ assigned to $st$ during the forward pass (Line 9), and check whether it is $\mathsf{RECHECK}(\phi)$. If so, we know that all descendants of this strategy have been marked as ``Recheck'' for all relevant configurations, so we can annotate $\pl{g}'$ with this same value and return it (Lines 11-12). Otherwise, we recursively compute regression values for each of the children and add them to set $V$ along with $\pl{v}_{st}$ (Lines 13-14). We then apply the lifted operator $\mathsf{Min}^\uparrow$ to these $n+1$ regression values to determine the regression  of parent goal $\pl{g}$, which we then return (Lines 15-17). Since the annotation of $\pl{v}_{st}$ partitions $\llbracket \phi\rrbracket$, the resulting annotation for parent goal $\pl{g}'$ will also partition $\llbracket \phi \rrbracket$, since as the decomposition is well-formed (Def.~\ref{def:wellformed}).}

\begin{example}\label{ex:lif_reg_example}
\begin{figure}[t]
    \centering
    \includegraphics[width=\linewidth]{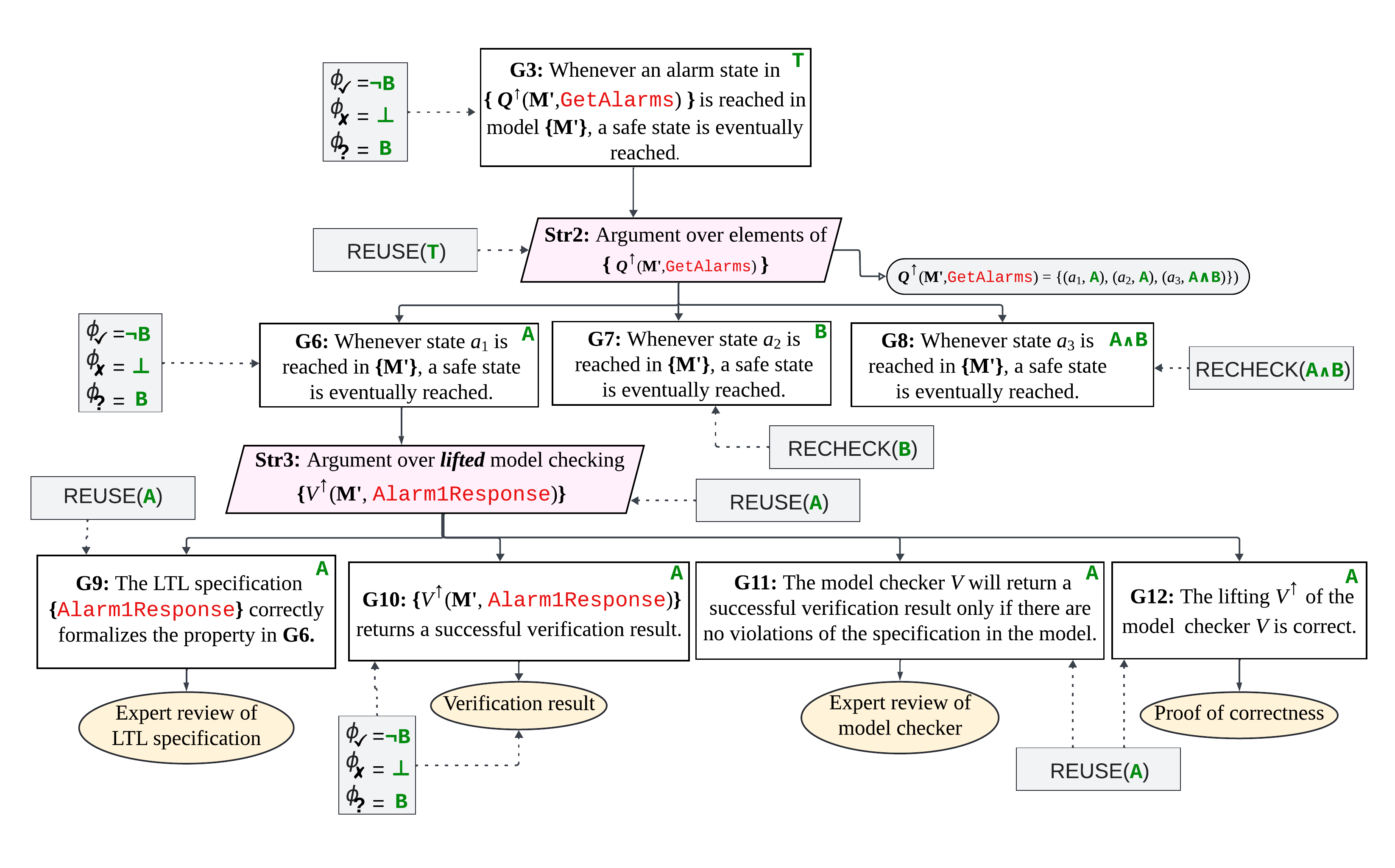}
    \caption{Fragment of the variational AC for the running example after applying lifted regression analysis (Example~\ref{ex:lif_reg_example}). Variability-aware regression labels shown in grey boxes.}
    \label{fig:lift_regression_example}
    \Description[]{}
\end{figure} 
\nc{Recall the variational AC produced in Example~\ref{ex:analyticVar} through the lifted query and lifted enumeration templates, as illustrated in Fig.~\ref{fig:lift_query_example}. Suppose that we supported each of its undeveloped goals through the lifted model checking template shown in Fig.~\ref{fig:model_check_ac}\weakNew{, and that all undeveloped goals (asserting the validity of specifications, analyses, and lifting) have been fully supported}. Now, suppose that we modified the product line model $\pl{M}$ in such a way that $\Delta(\pl{M}) = \token{B}$, \typofix{i.e,}{i.e.,} only configurations with feature $\token{B}$ were modified. An example outcome of applying the lifted regression analysis procedure is shown in Fig.~\ref{fig:lift_regression_example}. For brevity, we only show the fragment of the AC {beginning from the enumeration template over elements of the lifted query}{immediately following the re-instantiation of the lifted query template, which produced goal \textbf{(G3)}}. In this example, the modification of feature $\token{B}$ does not introduce any additional alarms \weakNew{in the evolved product line model $\pl{M}'$}, so argument \weakNew{\textbf{(Str2)}} does not regress for any configuration, and is  \weakChange{marked by $\overline{\cmark}$}{annotated with $\mathsf{REUSE}(\top)$ accordingly}. Continuing through the branch pertaining to the alarm  $\weakNew{a_1}$, the analytic template for the model checking \weakChange{is correct-by-construction, so it is also annotated by $\cmark$}{is \emph{not} re-instantiated, per the lazy evaluation strategy. However, its subgoals are still updated to refer to $\pl{M}'$ (to ensure the correctness criterion holds), and we can verify that there are no missing subgoals under $\pl{M}'$, so \textbf{Str3} is annotated by $\mathsf{REUSE}(\token{A})$}. Considering now the evidence for the model checking template, goals \textbf{G9}, \textbf{G11}, \textbf{G12} are not modified and remain supported for all configurations \weakNew{with feature $\token{A}$}. For the evidence from the lifted model checker, suppose in this case that we do \emph{not} have a corresponding lifted regression analysis. Then, as per Alg.~\ref{alg:varEvdRegression}, we mark the goal \weakNew{\textbf{G10}} with $\phi_{\textbf{?}} = \token{B}$, indicating that we cannot determine whether the modified configurations still satisfy the property, and $\phi_\cmark = \neg {\tt B}$, since the unmodified configurations can still use this evidence. \weakChange{This annotation is propagated as-is to goal $\textbf{G6}$, since all the other subgoals are $\overline{\cmark}$}{When we compose this regression value with $\mathsf{REUSE}(A)$, we obtain the same regression value assigned to \textbf{G10}, and assign it to \textbf{G6}}. \weakChange{Assuming that goals $\pl{g}_1$ and $\pl{g}_3$ were assigned the same regression value (since they apply the same template), we again propagate this annotation as-is back to the parent goal $\pl{g}$.}{When analyzing regression of \textbf{G7} and \textbf{G8}, both of these goals pertain to configurations contained within $\llbracket \widehat{\Delta}(\pl{M})\rrbracket = \llbracket \token{B}\rrbracket$, so (assuming we follow the same procedure as for \textbf{G6}), these goals are both annotated as $\mathsf{RECHECK}$ for their respective sets of configurations.} \weakNew{Finally, when we compose the regression values from \textbf{G6}, \textbf{G7}, \textbf{G8} and \textbf{Str2}, we are once again left with the same regression value $\langle \neg {\tt B}, \bot, {\tt B}\rangle$, which we assign to \textbf{G3}.} {Thus, the overall outcome of the analysis is that assurance has been preserved for configurations without feature $\token{B}$, and needs to be rechecked for those with feature $\token{B}$}. But more usefully, we can pinpoint this need for potential revision specifically to the model checking evidence; we know, through the semantics of the templates used to generate the AC, that both strategies \weakNew{\textbf{Str2}} and \weakNew{\textbf{Str3}} remain sound \weakNew{across their respective configurations} \typofix{through}{throughout} the evolution.}
\end{example}

\begin{theorem}\label{thm:backwardsLift}
    \weakNew{Let $\pl{A} \in \Var{\AC}$ be a variational AC, and let $\pl{A}_R$ be the reusable core extracted from $\pl{A}$ following the (lifted) forward pass of the analysis under variational evolution set $\widehat{\Delta}$. Let $\phi = \token{pc}(\pl{A}_R)$. Then for every $\conf \in \llbracket \phi \rrbracket$, we have} 
    $$\mathsf{BackwardPass}^\uparrow(\widehat{\Delta},\pl{A}_R)|_\conf = \mathsf{BackwardPass}(\widehat{\Delta}|_\conf,\pl{A}_R|_\conf)$$
    \weakNew{That is, Alg.~\ref{alg:lifted_assurance_regression} correctly lifts Alg.~\ref{alg:product_regression}}.
  
\end{theorem}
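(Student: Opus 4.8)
The plan is to prove the identity by structural induction on the reusable core $\pl{A}_R$, following exactly the case analysis performed by Alg.~\ref{alg:lifted_assurance_regression}. Fix a configuration $\conf \in \llbracket \phi \rrbracket$ with $\phi = \token{pc}(\pl{A}_R)$. The two base cases are immediate. If $\pl{A}_R = {\tt Und}(\pl{g}')$, the lifted pass returns $\mathsf{REVISE}(\phi) = \langle \bot, \phi, \bot\rangle$, whose derivation under $\conf \vDash \phi$ is $\xmark$; since ${\tt Und}(\pl{g}')|_\conf = {\tt Und}(\pl{g}'|_\conf)$, the product-based pass likewise returns $\xmark$. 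If $\pl{A}_R = {\tt Evd}(\pl{g}', e)$, both passes delegate to their respective evidence-regression routines, and the claim is exactly the statement of Lemma~\ref{lem:evdRegLiftCorrect}, using that ${\tt Evd}(\pl{g}', e)|_\conf = {\tt Evd}(\pl{g}'|_\conf, e)$.

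The inductive case is the decomposition ${\tt Decomp}(\pl{g}', st, \{\pl{A}_1'\ellipses \pl{A}_n'\})$, and here the argument rests on three earlier results. First, by Theorem~\ref{thm:forwardsLift} the two forward passes commute with derivation, so the strategy annotation satisfies $\mathsf{Annotation}(st)|_\conf = v_{st}^{\mathrm{prod}}$, where $v_{st}^{\mathrm{prod}}$ is the annotation the product-based forward pass assigns to $st$ in $\pl{A}_R|_\conf$; this guarantees that the control-flow branch taken by the lifted pass, once restricted to $\conf$, is the same branch taken by the product pass. Second, the induction hypothesis gives, for each child present under $\conf$ (i.e.\ $\conf \vDash \token{pc}(\pl{A}_i')$), the identity $\pl{v}_i|_\conf = \mathsf{BackwardPass}(\widehat{\Delta}|_\conf, \pl{A}_i'|_\conf)$. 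Third, Lemma~\ref{lem:minLiftCorrect} lets me push the derivation through the composition operator, $\mathsf{Min}^\uparrow(V)|_\conf = \mathsf{Min}(V|_\conf)$, where $V|_\conf$ retains precisely the strategy value $\pl{v}_{st}|_\conf$ together with the child values whose presence conditions hold under $\conf$. Combining these, when at least one child is present under $\conf$ the derived left-hand side reduces to $\mathsf{Min}$ of the same multiset of regression values that the product-based pass composes over $\pl{A}_R|_\conf$, closing the case.

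The main obstacle is reconciling two places where the lifted algorithm is coarser than the product algorithm. The derivation operator (Def.~\ref{def:varACDerivation}) collapses a decomposition to an undeveloped goal whenever \emph{no} child is present under $\conf$, in which case the product-based pass returns $\xmark$; yet the lifted pass never performs this collapse and instead returns $\mathsf{Min}^\uparrow(V)$, whose derivation is $\pl{v}_{st}|_\conf$ when no child is present. I must therefore show $\pl{v}_{st}|_\conf = \xmark$ in exactly these configurations. This is where I invoke the semantics of the forward pass: a configuration with the parent present but all children pruned can only arise through obsolescence, and the matching step then necessarily leaves at least one re-instantiated subgoal unmatched under $\conf$ (since a valid template yields a nonempty, sound refinement by Def.~\ref{def:goalRefinement}), forcing $\conf \vDash \phi_{\tt New}$ and hence $\pl{v}_{st}|_\conf = \xmark$. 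The residual ``genuine gap'' case --- a configuration where the parent is present but no child ever was --- is excluded by assuming, as in the product-level Theorem~\ref{thm:instAlgCorrect}, that $\pl{A}$ is variationally supported (Def.~\ref{def:vsupp}), since rule $[\Supp^\uparrow\text{-}3]$ together with the nonemptiness built into $\prec$ forbids such gaps. The analogous subtlety with the $\mathsf{RECHECK}(\phi)$ shortcut is handled the same way: when $\pl{v}_{st}|_\conf = ~?$ but the full value is not uniformly $\mathsf{RECHECK}$, the forward pass has already marked every child present under $\conf$ as ``Recheck'', so each contributes $?$ to the composition and $\mathsf{Min}$ returns $?$, matching the product pass's early return.
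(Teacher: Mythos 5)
Your proof is correct and its skeleton is the same as the paper's: induction over the structure of $\pl{A}_R$, with the ${\tt Und}$ case discharged by the semantics of $\mathsf{REVISE}(\phi)$, the ${\tt Evd}$ case by Lemma~\ref{lem:evdRegLiftCorrect}, and the ${\tt Decomp}$ case by combining Theorem~\ref{thm:forwardsLift} (so the derived strategy annotation agrees with the product-level one), the induction hypothesis on the children, and Lemma~\ref{lem:minLiftCorrect} for pushing derivation through $\mathsf{Min}^\uparrow$. Where you genuinely go beyond the paper is your last paragraph: the paper's proof silently skips the two corner cases you isolate, namely (i) configurations $\conf$ under which the parent is present but no child is, where Def.~\ref{def:varACDerivation} collapses the decomposition to ${\tt Und}(\pl{g}'|_\conf)$ so the product pass returns $\xmark$ while the lifted pass returns $\pl{v}_{st}|_\conf$, and (ii) configurations with $\pl{v}_{st}|_\conf = {?}$ when $\pl{v}_{st} \neq \mathsf{RECHECK}(\phi)$, where the two algorithms take different control-flow branches. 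Your resolutions are the right ones, and note that for (i) they are actually \emph{necessary}: without an assumption such as $\Supp^\uparrow(\pl{A})$ (which rules out a present parent with no present child, via rule $[\Supp^\uparrow\text{-}3]$ and the nonemptiness built into $\prec$), the stated identity can fail --- e.g., a well-formed gap configuration with $\conf \nvDash \phi_\Delta$ and a non-re-instantiable template yields $\cmark$ on the lifted side (Line 7 of Alg.~\ref{alg:templateRegressionLift}) but $\xmark$ on the product side. So you have proved a version of the theorem restricted by a hypothesis the paper only introduces later in Corollary~\ref{thm:final} (mirroring Theorem~\ref{thm:instAlgCorrect}); this is a strictly more careful account than the paper's own proof, which does not acknowledge that the unrestricted statement needs this guard.
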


 {Theorems~\ref{thm:forwardsLift} and~\ref{thm:backwardsLift} give us our \weakChange{desired}{final} result: that the lifted forward pass (Alg.~\ref{alg:instantiationLift}) and lifted backward pass (Alg.~\ref{alg:lifted_assurance_regression}) provide a sound variability-aware regression analysis. The correctness of the lifted procedures allows us to ``lift'' the correctness of the product-based analysis (Thm.~\ref{thm:instAlgCorrect}).}

\begin{corollary}
    \label{thm:final}
    \weakNew{Let $\pl{A} \in \Var{\AC}$ and let $\widehat{\Delta}$ be a variational evolution set for the models referred to in $\pl{A}$, and assume $\Supp^\uparrow(\pl{A})$. Let $\pl{A}_R$ be the reusable core extracted from $\pl{A}$ following the lifted forward pass (Alg.~\ref{alg:instantiationLift}) under $\widehat{\Delta}$, and let $\pl{v} = \langle \phi_\cmark, \phi_\xmark, \phi_?\rangle$ be the variability-aware regression value over $\phi = \token{pc}(\pl{A})$ returned by running the lifted backward pass (Alg.~\ref{alg:lifted_assurance_regression}) on $\pl{A}_R$. Then for each $\conf \in \llbracket \phi \rrbracket$:}
    \begin{itemize}
        \item \weakNew{If $\conf \vDash \phi_\cmark$, then $\Supp(\pl{A}_R|_\conf)$ holds.}
        \item \weakNew{If $\conf \vDash \phi_\xmark$, then $\Supp(\pl{A}_R|_\conf)$ does not hold.
        \item If $\conf \vDash \phi_?$, then we make no determination about whether $\Supp(\pl{A}_R|_\conf)$ holds.}
    \end{itemize}
\end{corollary}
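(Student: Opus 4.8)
The plan is to prove the three bullet points simultaneously by reducing, configuration-by-configuration, to the product-based correctness theorem (Thm.~\ref{thm:instAlgCorrect}), using the two lifting-correctness theorems (Thm.~\ref{thm:forwardsLift} and Thm.~\ref{thm:backwardsLift}) as the bridge between the variational and product levels. Fix an arbitrary $\conf \in \llbracket \phi \rrbracket$. I will show that $\pl{v}|_\conf$ is exactly the product-level regression value that $\mathsf{BackwardPass}$ computes on the product-based reusable core of $\pl{A}|_\conf$, and then read off the three conclusions directly from Thm.~\ref{thm:instAlgCorrect}.

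First I would establish the product-level precondition of Thm.~\ref{thm:instAlgCorrect}. Since $\pl{A}$ is (well-formed and) variationally supported, Thm.~\ref{thm:supplift} gives $\Inv{\Supp}{\pl{A}}{\phi}$, i.e., $\Supp(\pl{A}|_\conf)$ holds for every $\conf \in \llbracket \phi \rrbracket$. Thus at each such $\conf$ the hypothesis $\Supp(\pl{A}|_\conf)$ of Thm.~\ref{thm:instAlgCorrect} is met, instantiated with $A := \pl{A}|_\conf$ and $\Delta := \widehat{\Delta}|_\conf$.

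Next I would commute the extraction of the reusable core with derivation. By Thm.~\ref{thm:forwardsLift} we have $\mathsf{ForwardPass}^\uparrow(\pl{A},\widehat{\Delta})|_\conf = \mathsf{ForwardPass}(\pl{A}|_\conf,\widehat{\Delta}|_\conf)$. Because the reusable core is obtained purely by pruning the subtrees whose roots were marked obsolete during the forward pass, and because deriving under $\conf$ commutes with this pruning --- a node present under $\conf$ and marked obsolete in the lifted AC is precisely a node marked obsolete in the derived product AC --- I would conclude that $\pl{A}_R|_\conf$ coincides with the product-based reusable core of $\pl{A}|_\conf$ under $\widehat{\Delta}|_\conf$. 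This commutation is the step I expect to require the most care, as it must be checked against the precise per-configuration annotation semantics of Alg.~\ref{alg:instantiationLift}, in particular the treatment of configurations marked ``Recheck'' and the partial obsolescence of children produced by $\mathsf{Match}^\uparrow$, where a single variational child may be split into reusable, obsolete, and new copies.

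Finally I would commute the backward pass and unfold the derivation of a variability-aware regression value. Thm.~\ref{thm:backwardsLift} gives $\pl{v}|_\conf = \mathsf{BackwardPass}^\uparrow(\widehat{\Delta},\pl{A}_R)|_\conf = \mathsf{BackwardPass}(\widehat{\Delta}|_\conf,\pl{A}_R|_\conf)$, and since $\token{pc}(\pl{A}_R) = \token{pc}(\pl{A}) = \phi$ (pruning never alters the root's presence condition) this holds for all $\conf \in \llbracket \phi \rrbracket$. By Def.~\ref{def:varRegressionValues}, $\pl{v} = \langle \phi_\cmark, \phi_\xmark, \phi_?\rangle$ partitions $\llbracket \phi \rrbracket$, so $\pl{v}|_\conf = \cmark$ iff $\conf \vDash \phi_\cmark$, $\pl{v}|_\conf = \xmark$ iff $\conf \vDash \phi_\xmark$, and $\pl{v}|_\conf = \textbf{?}$ iff $\conf \vDash \phi_?$. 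Combining the equalities from the previous steps, $\conf \vDash \phi_\cmark$ forces the product backward pass to return $\cmark$ on $\pl{A}_R|_\conf$; applying Thm.~\ref{thm:instAlgCorrect}, whose precondition $\Supp(\pl{A}|_\conf)$ was secured in the first step, then yields $\Supp(\pl{A}_R|_\conf)$. The symmetric arguments for $\phi_\xmark$ and $\phi_?$ give the remaining two bullets, and generalizing over $\conf$ completes the proof.
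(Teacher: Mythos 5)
Your proof is correct and follows essentially the same route as the paper's own: both fix a configuration $\conf \in \llbracket \phi \rrbracket$, obtain $\Supp(\pl{A}|_\conf)$ from variational support, and reduce to the product-based correctness theorem (Thm.~\ref{thm:instAlgCorrect}) via the two lifting theorems (Thms.~\ref{thm:forwardsLift} and~\ref{thm:backwardsLift}). The only difference is presentational: the paper compresses your middle steps into the remark that composing lifted functions yields a lift of the whole analysis, whereas you spell out the one point that remark quietly assumes, namely that extraction of the reusable core commutes with derivation under $\conf$.
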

\begin{proof}
\weakNew{Let $\conf \in \llbracket \phi \rrbracket$. Since $\Supp^\uparrow(\pl{A})$ holds by assumption, we have $\Supp(\pl{A}|_\conf)$ in particular. Since composition of lifted functions produces lifted functions, we know that composing the lifted forward pass (Thm.~\ref{thm:forwardsLift}) and lifted backward pass (Thm.~\ref{thm:backwardsLift}) produces a lift of the entire product-based regression analysis described in Sec.~\ref{sec:product_regression}. The lifted regression analysis thus inherits the correctness of the product-based regression analysis (Thm.~\ref{thm:instAlgCorrect}).}
\end{proof}

\subsubsection{Handling Variability Evolution}\label{sec:varEvo}
\change[vartype2]{In the preceding discussion, we have operated under the assumption that the product line evolution was purely structural, i.e., the old and new product models were all defined with respect to the same features $F$ and feature model $\Phi$. It is thus natural to ask what changes in our regression analysis when features are removed or added, or when the feature model is modified. Our regression analysis is designed to analyze the assurance of configurations which existed in the product line \emph{prior} to the evolution -- any configurations which are only deemed valid after an evolution cannot be analyzed for regression, since they were never assured in the first place. Nevertheless, we want the interpretation of our regression analysis -- and more specifically, the regression annotations -- to be done relative to the variability model of the evolved product line, not the old one.
To this end, given an evolution which modifies the variability model, we want to begin by computing two sets of configurations:  $\Phi_{\tt Reuse}$, denoting configurations which were valid in the old product line and \emph{remain} valid in the new product line, and $\Phi_{\tt New}$, which are configurations only valid in the new product line. The precise computation of these sets depends on the details of the evolution; if we only modify the feature model $\Phi \mapsto \Phi^\prime$ without changing the alphabet of features, we can take $\Phi_{\tt Reuse} = \Phi \land \Phi^\prime$, and $\Phi_{\tt New} = \neg \Phi \land \Phi^\prime$. If we also add a new feature $f$ to $F$, then we need to ``exclude'' this feature accordingly, i.e., $\Phi_{\tt Reuse} = \Phi \land \neg f \land \Phi^\prime$, $\Phi_{\tt New} = (\neg \Phi \lor f) \land \Phi^\prime$. 

Having computed $\Phi_{\tt Reuse}$ and $\Phi_{\tt New}$, when we perform the lifted regression analysis, we can \emph{restrict} our product line models (and the variational AC) to only those configurations in $\Phi_{\tt Reuse}$, since we do not care about the regression of configurations which are no longer valid. We then need to consider what we can say about configurations in $\Phi_{\tt New}$. As we have seen in the regression analysis, there are some assurance entities -- such as propositional goals, and analytic templates -- which never regress due to a change in a model (see Alg.~\ref{alg:lifted_assurance_regression}, Line 7, and Alg.~\ref{alg:instantiationLift}, Line 18). These entities remain supported for all configurations, even those in $\Phi_{\tt New}$, so they can remain annotated by $\overline{\cmark}$. Conversely, entities which are \emph{contingent} on product line models -- such as verification evidence -- have no support for configurations in $\Phi_{\tt New}$. Thus, we can extend each of their annotations' $\xmark$-component with $\Phi_{\tt New}$, reflecting the fact that none of these configurations have any assurance, and restrict  their $\cmark$-component and $\textbf{?}$-component with $\neg \Phi_{\tt New}$ to reflect that \emph{only} the  configurations which were already valid in the previous feature model have (or potentially have) assurance.
}{At the start of this section, we noted that an evolution of a product line model may also modify its feature model, or even the language of features over which the feature model is expressed. The lifted regression analysis we have just defined only considers the regression between two (sets of) product line models over the same features and feature model. Let us begin generalizing our results to the more general kinds of product line evolutions. For the sake of simplicity, let us assume that there is only a single product line model $\pl{M}$ in the AC for which regression is being analyzed.}

\newText[vartype3]{First, let us consider the case where $\pl{M}$ is a product line with feature model $\Phi$, and it is evolved into $\pl{M}^\prime$ with feature model $\Phi^\prime$. Suppose we have a variational AC $\pl{A}$ for $\pl{M}$ for which we have established support across the entire product line, i.e., we have shown ${\Supp^\uparrow(\pl{A})}$. We now want to compute a regression value $\pl{v}$ partitioning $\Phi^\prime$ which tells us for which configurations of the new product line model $\pl{A}$ remains supported. We can do this as follows. First, we partition the new feature model $\Phi^\prime$ into two disjoint sets of configurations: $\Phi_{\tt New} = \Phi^\prime \land \neg \Phi$, representing those configurations which are present \emph{only} under the new feature model, and $\Phi_{\tt Reuse} = \Phi^\prime \land \Phi$, representing those which are ``kept'' from the previous feature model. We then apply the lifted regression analysis defined above to $\pl{A}$, under the \emph{restricted} feature model $\Phi_{\tt Reuse}$. The result is a variability-aware regression value $\pl{v} = \langle \phi_\cmark, \phi_\xmark, \phi_?\rangle$ partitioning ${\Phi}_{\tt Reuse}$. To then extend this regression value to cover all of $\Phi_{\tt New}$, we modify $\pl{v}$ as $\pl{v}^\prime = \langle \phi_\cmark \wedge \neg \Phi_{\tt New}, \phi_{\xmark} \wedge \vee \Phi_{\tt New}, \phi_{?} \wedge \neg \Phi_{\tt New} \rangle $. Note that this is equivalent to taking $\pl{v} \otimes \langle \bot, \Phi_{\tt New}, \bot \rangle $, and simply reflects the fact that assurance of any configurations in $\Phi_{\tt New}$ is completely unsupported in $\pl{A}$.

Finally, let us consider what happens when, in addition to modifying the feature model, we modify the language of features (e.g., by adding or removing a feature). The case removing a feature is not very interesting, since we are strictly reducing the set of configurations covered by the feature model; let us focus instead on the case in which we add a feature $f$ as part of the new feature model $\Phi^\prime$. The only tangible difference with the preceding the construction is the definition of the feature models $\Phi_{\tt Reuse}$ and $\Phi_{\tt New}$. In particular, when we are analyzing regression of the existing AC $\pl{A}$, we need to ``exclude'' feature $f$ from consideration, so we define ${\Phi_{\tt Reuse}} = \Phi \land \Phi^\prime \land \neg f$. Conversely, when we define $\Phi_{\tt New}$, we must explicitly \emph{include} any configurations with $f$, i.e., $\Phi_{\tt New} = (\neg \Phi \vee f) \wedge \Phi^\prime$. }
Note that $\Phi_{\tt New}$ and $\Phi_{\tt Reuse}$ always partition $\Phi^\prime$.

\begin{example} \label{ex:prod_reg}
   \nc{Recall the annotated variational AC produced through the regression analysis described in Example~\ref{ex:lif_reg_example} and shown in Fig.~\ref{fig:lift_regression_example}. Suppose that, in addition to modifying feature $\token{B}$, we introduced a new feature $\token{C}$, with its own states and transitions. For simplicity, suppose we do not otherwise modify the feature model. We then have $\Phi_{\tt Reuse} = \Phi \land \neg \token{C}$, $\Phi_{\tt New} = \Phi \land \token{C}$. 
   To account for this, each annotation of a non-propositional goal or non-analytic template in the AC would be \weakChange{extended in its $\xmark$-component with feature expression $\token{C}$, and we would accordingly restrict its $\cmark$-component with $\neg C$}{composed with $\langle \bot, \token{C}, \bot \rangle$, interpreted with respect to the set of configurations for each each goal is relevant (e.g., $\llbracket \token{A} \rrbracket$ in the case of \textbf{G10})}. \weakChange{For instance,}{In this case,} the annotations on \weakChange{leaf goal}{goals} \weakChange{$\pl{g}_{out}$}{\textbf{G10}}, \weakChange{$\pl{g}_3$}{\textbf{G6}} and root goal \weakChange{$\pl{g}$}{\textbf{G3}} \weakNew{each} become $\langle \token{\neg B} \land \neg \token{C}, \token{C}, \token{B} \land \neg \token{C} \rangle$. \weakNew{This reflects the fact that (i) assurance has not regressed (i.e., can be reused) for configurations without the {modified} feature $\token{B}$ \emph{or} the \emph{new} feature $\token{C}$; (ii) there is \emph{zero} assurance for any configurations with the new feature $\token{C}$; (ii) and configurations with the modified feature $\token{B}$ but without the new feature $\token{C}$ need to be rechecked. Note, however, that we can avoid re-annotating \emph{propositional} goals, e.g., \textbf{G11} and \textbf{G12}, since they are not affected by any changes in the product line model. That is, \textbf{G11} and \textbf{G12} still remain annotated by $\mathsf{REUSE}(\token{A})$, including for those configurations which have the new feature $\token{C}$. \weakDelete{and goal $\pl{g}_1$ becomes annotated by $\langle \neg \token{C}, \token{C}, \bot\rangle$.  However, the non-contingent nodes, such as the analytic template over model checking and propositional goals, remain  \weakChange{$\overline{\cmark}$}{$\mathsf{REUSE}(\Phi_{\tt Reuse})$}}}}
\end{example}

\nc{In this section, we defined a variability-aware regression analysis for product line ACs by lifting the product-based regression analysis defined in Sec.~\ref{sec:product_regression}. By using this lifted analysis, engineers are able to determine not only which parts of the AC need to be revised following an evolution of the product line, but also for which specific configurations this revision needs to occur. This is particularly useful in contexts where the deployment of evolved systems is time-sensitive; based on the outcome of the lifted regression analysis, the assurance team can approve the deployment of the evolution for configurations which have not lost their assurance (represented by $\phi_\cmark$), and postpone deployment of the evolution for the remaining configurations until their assurance has been re-verified.}

\section{MMINTA-PL: An Assurance Case Framework for Evolving Product Line Models}
\label{sec:tool}
{Any integration of formal methods as part of AC development should include extensive tool support, as many AC developers are not formal methods experts. {To this end, we have developed tool support for lifted AC development and regression as part of an Eclipse-based model management framework. Our tool aims \typofix{provides}{to provide}  functionalities modeled on the framework outlined in Secs.~\ref{sec:lifting} and ~\ref{sec:liftedRegression}. {In this section, we  provide a brief overview of our tool (Sec.~\ref{sec:toolDesc}), and demonstrate its capabilities on a small case study (Sec.~\ref{sec:casestudy}).}}}


\subsection{A Model Management Tool for Product Line ACs}
\label{sec:toolDesc}
{\emph{MMINT}\footnote{\url{https://github.com/adisandro/MMINT}} is an Eclipse-based model management framework developed at the University of Toronto. It is a generic framework which can be extended with plugins for specific modeling tasks. One of its extensions, \toolName~\cite{minta}, is used for model-driven AC development, supporting GSN modeling and model-based analyses. Another extension, \toolNamePL, supports product line modeling and lifted model-based analyses~\cite{di2023adding}.}

{Several functionalities of \toolName~ and \toolNamePL~ were reused directly to support lifted AC development. We combine \toolName's GSN metamodel~\cite{minta}, and
\toolNamePL's generic variational metamodel (GVM)~\cite{di2023adding}) to define a metamodel for product lines of GSN ACs. We also needed to implement some new functionalities: (i) we extended \toolName's GSN template module to recognize product line models, such that instantiation of GSN templates can be done with either product-level or lifted versions; (ii) users can define product-level analytic templates as formalized in Sec.~\ref{sec:analyticTemplates}, allowing the results of a specified analysis to be weaved into an AC as part of template instantiation. When these templates are instantiated on variational models, if the analyses associated with the template have been lifted, the lifted analyses are executed, and the lifted template is instantiated instead; (iii)\typofix{:}{} \toolNamePL's GVM is unable to provide appropriate visualizations for arbitrary product line models, so we created a custom visualization module to facilitate the inspection of PL ACs.}

\nc{To support lifted AC regression, we first implemented an extensible product-based regression framework for GSN ACs, following the components and procedures described in Sec.~\ref{sec:product_regression}. We extended \toolName's GSN template module to accommodate template-based regression, together with a baseline regression analysis for informal strategies. We then implemented the lifted regression procedure on top of \toolNamePL's GVM, as described in Sec.~\ref{sec:liftedRegression}. Users can specify how to use a lifted analysis within a GSN template to react to system evolution, i.e., whether a template should be re-instantiated eagerly upon a system evolution. The implementation follows the overall procedure described in Sec.~\ref{sec:liftedRegression}, with one exception: when analyzing an evolution which includes a modification of the feature model, whenever we re-execute a template during the regression analysis, we also apply it to the configurations which are introduced by the evolution. This is not necessary for regression analysis, strictly speaking, but is a convenient way of determining how the AC will need to be repaired to accommodate new products. }
\subsection{Case Study}
\label{sec:case_study}


\label{sec:casestudy}
{To demonstrate the feasibility of the lifted AC development and the features of our AC development tool, we followed the steps outlined in Fig.~\ref{fig:workflows} to create a partial AC for a product line of medical infusion pumps.\footnote{Artifacts at \url{https://github.com/adisandro/MMINT//\#using-mmint}} The structure and semantics of this AC correspond closely to the running example used in Secs.~\ref{sec:lifting}-~\ref{sec:liftedRegression}.}
\paragraph{System Details.}
{Infusion pumps are devices used to administer medication or other fluids to patients. As different patients may have different medical needs, it is natural to model an SPL of infusion pumps with different optional features.}
{We began from an existing Extended Finite State Machine (EFSM) model of an infusion pump created as part of a multi-institute research project~\cite{alur2004formal}. While this model was not originally defined as an SPL, it was designed to model features and hazards for an infusion pump \emph{in general}, the authors noting that in general ``no single device [...] has all of the design features''~\cite{zhang2010hazard}. We extended the EFSM to an SPL by mapping {five} optional features to their associated states and transitions and annotating these elements with presence conditions. For example,}
{\token{CHECK\_INFUSION\_RATE} is an optional feature that allows a pump to monitor the current rate of delivery of a drug.} 
{{The resulting SPL encompasses a family of {20} valid product configurations.}  In our assurance scenario, the top-level assurance obligation (i.e., the root node of the AC) is to show that when an alarm is triggered (e.g., due to a dosage limit violation), the system will \typofix{not} immediately halt drug administration (i.e., will not be administering a dose immediately following the alarm).}

\paragraph{Step 1: Defining the Assurance Process.}
{Before beginning AC development, we need to determine which types of analyses and argumentation are applicable for our assurance task. For this case study, we considered two kinds of analyses: querying of models (via the Viatra Query Language~\cite{vql}) and model checking~\cite{baier2008principles}. As part of the assurance plan, we require that the use of these analyses be accompanied by sufficient assurance that the models and specifications used for analyses have been validated.}

\paragraph{Step 2: Formalization.}
{Based on the assurance plan defined in Step 1, we can now formalize the associated argument structures. The assurance obligations associated with using queries and model checking can be formalized as analytic templates, as described in Sec.~\ref{sec:analyticTemplates}.  We also employ the enumeration template formalized as described in Example~\ref{ex:enumDecomp}.}

\paragraph{Step 3: Lifting.}
{We now lift the analyses and templates formalized in Step 2 so that they can be applied directly to SPLs. For lifted model queries, we reused a lifted query engine developed as part of previous work~\cite{di2023adding}. For lifted model checking, we used the tool FTS4VMC~\cite{ter2022fts4vmc} which verifies FTSs against specifications written in the action-based branching time logic v-ACTL\newText[actlref]{~\cite{ter2016modelling}. The language v-ACTL is based on ACTL~\cite{de1990action}, itself an extension of Computation Tree Logic~\cite{baier2008principles} with actions.} The templates formalized for both analyses can be lifted via the construction given in Def.~\ref{def:analyticTemplateLift}. We can also employ the lifted enumeration template formalized in Example~\ref{ex:enumDecompLift}. }

\paragraph{Step 4: Lifted AC Development}

\begin{figure}[t]
    \centering
\includegraphics[width=0.8\textwidth]{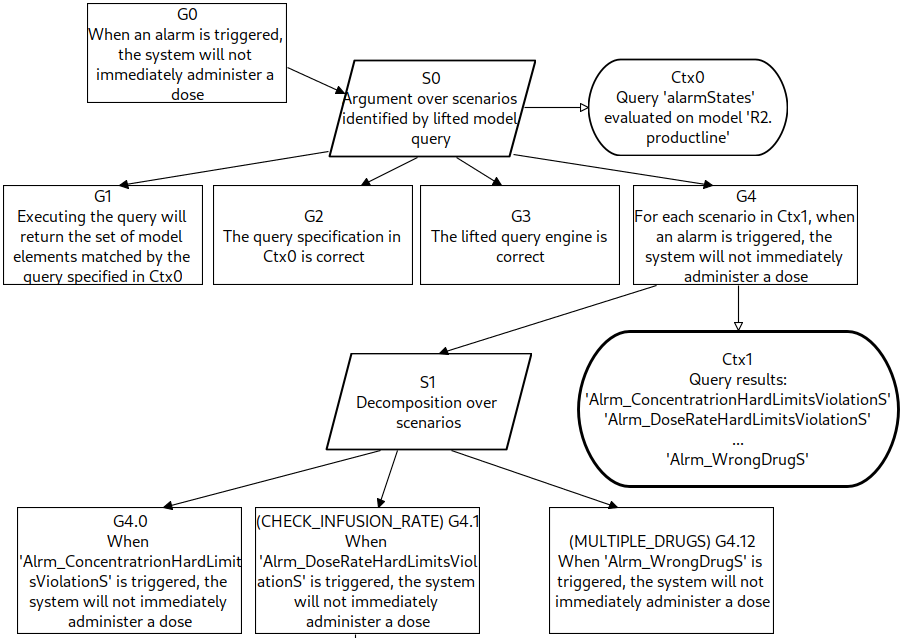}
    \caption{Instantiations of the lifted query and scenario decomposition templates.}
    \label{fig:query_template}
    \Description[]{}
\end{figure}
{Having defined, formalized and lifted the analyses and templates for our task, we can perform lifted AC development. We begin with the root goal asserting the safety property we want to assure (not administering drug doses immediately after an alarm is raised). Under normal circumstances, we could proceed by running a model query which  returns all alarm states in the model, and then assure the property for each alarm scenario. Thanks to the lifting of the query engine, we can apply the same rationale using our lifted analyses and argument templates, even though different products may be associated with different sets of alarm states. Fig.~\ref{fig:query_template} shows the results of instantiating the lifted query template followed by a lifted enumeration. When we instantiate the generic model query template, the tool automatically detects that the model is an SPL, and executes the lifted query engine to return a variability-aware set of query results (\token{Ctx0}). These results are then woven into the AC using a lifted version of the analytic template. This lifted analytic argument instantiation is sound as per Thm.~\ref{thm:liftAnalyticCorrect}.  {Returning to the AC, we proceed from goal \token{G4} using a variational domain decomposition, assigning each alarm state to its own goal, such that the presence conditions identified by the lifted query result (e.g., \token{CHECK\_INFUSION\_RATE}) are used to annotate each subgoal}. {The absence of a presence condition (e.g., \token{G4.0}) means that the alarm is present in every product.} This decomposition corresponds to the lifted enumeration template and is thus also sound by Prop.~\ref{prop:liftEnum}.}

{We can then continue to produce assurance for each identified alarm scenario in a lifted fashion. We focus on goal \token{G4.1}, which effectively asserts that every product with feature \token{CHECK\_INFUSION\_RATE} satisfies the given safety property in the context of alarms due to dose rate violations. This can be verified using (lifted) software model checking. As with queries, we can instantiate the model checking template formalized in Step 2 and lifted in Step 3, using the lifted model checker FTS4VMC for SPL-level verification. We formalize the property in \token{G4.1} in v-ACTL as 
$$\textbf{AG} [\token{Alrm\_DoseRateHardLimitsViolationS}] \textbf{AX}(\neg \token{Infusion\_NormalOperationS})$$
\change[vatlexpl]{and run the}{i.e., whenever $(\textbf{AG})$ a transition occurs corresponding to an alarm being raised due to a dose rate limit violation, across all possible successor states (\textbf{AX}), none of them are a state in which a drug is being infused.} lifted model checker on the infusion pump FTS model. In this case, the model checker does not reveal any violations, meaning that every product with \token{CHECK\_INFUSION\_RATE} satisfies the given property. This can then be incorporated as variational evidence for this family of products, as shown in Fig.~\ref{fig:model_check_ac}. {Evidence for \token{G6}, that the formalization is correct, still needs to be produced. Note that evidence for \token{G5} and \token{G8} needs to be produced {only once} and can be reused in subsequent applications of the template.} Once we have provided the required evidence for this argument (and the analogous evidence for the lifted query), we can then repeat this verification process for each alarm scenario until all assurance obligations have been satisfied.}

\begin{figure}[t]
    \centering
\includegraphics[width=0.8\textwidth]{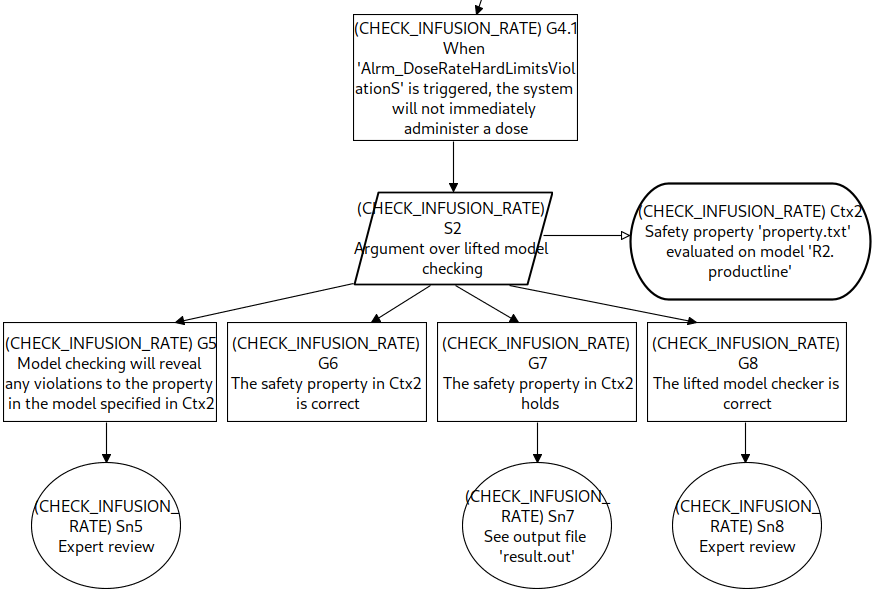}
    \caption{Tool-generated lifted analytic model checking template instantiation for claim \token{G4.1}.}
    \label{fig:model_check_ac}
    \Description[]{}
\end{figure}

\subsubsection{Evolving the Product Line}
\label{sec:evolutionSpec}

\nc{We now consider a scenario in which the infusion pump product line evolves through the addition of a new feature and the modification of an existing feature. We add the new feature \token{PROGRAMMABLE\_INFUSION} to the product line, which introduces the functionality for users to customize the settings of an infusion. Specifically, this feature gives users the ability to modify the rate at which fluids are delivered during an infusion and to set the total volume of fluid to be infused. When a new rate is set, the system checks whether the new values fall within a predetermined \textit{acceptably safe} threshold. If an unsafe value is detected, the system enters a new alarm state, \token{Alrm\_UnsafeNewRateS}. This alarm can be cleared by setting new safe parameters or stopping the infusion. Adding this feature also adds five new states (one of which is an alarm state) and 35 new transitions annotated with the \token{PROGRAMMABLE\_INFUSION} feature to the model.} 

\nc{We also modify the \token{VISUAL\_DISPLAY} feature by adding the functionality for patients and doctors to review the history of past infusions administered by the pump and the settings used in each. This becomes a baseline functionality for all infusion pumps with a visual display, leading to the addition of three new states and six new transitions annotated with the \token{VISUAL\_DISPLAY} feature. This modification is restricted to configurations of the product line with this feature enabled. }

\nc{The \token{PROGRAMMABLE\_INFUSION} feature is only allowed to be used in infusion pumps that are able to check the settings of an infusion and display them to the user (i.e., those with the \token{CHECK\_INFUSION\_RATE} feature and the \token{VISUAL\_DISPLAY} feature). This is to ensure that users are able to confirm that they set their infusion settings correctly. Consequently, the feature model for our evolved product line is given as follows: }
\begin{align*}
\token{HW\_MONITORING \land (MULTIPLE\_DRUGS \implies (CHECK\_DRUG\_TYPE \land VISUAL\_DISPLAY))} \\
\land ~\token{(PROGRAMMABLE\_INFUSION \implies (CHECK\_INFUSION\_RATE \land VISUAL\_DISPLAY))}
\end{align*}
\nc{There are 25 configurations that satisfy this feature model.}
\nc{In summary, the product line evolved by adding the new \token{PROGRAMMABLE\_INFUSION} feature and modifying the existing \token{VISUAL\_DISPLAY} feature. This resulted in the addition of seven new non-alarm states, one new alarm state, 41 new transitions and a modified feature model.}

\subsubsection{Lifted Assurance Case Regression}

\begin{figure}[t]
    \centering
    \includegraphics[width=\linewidth]{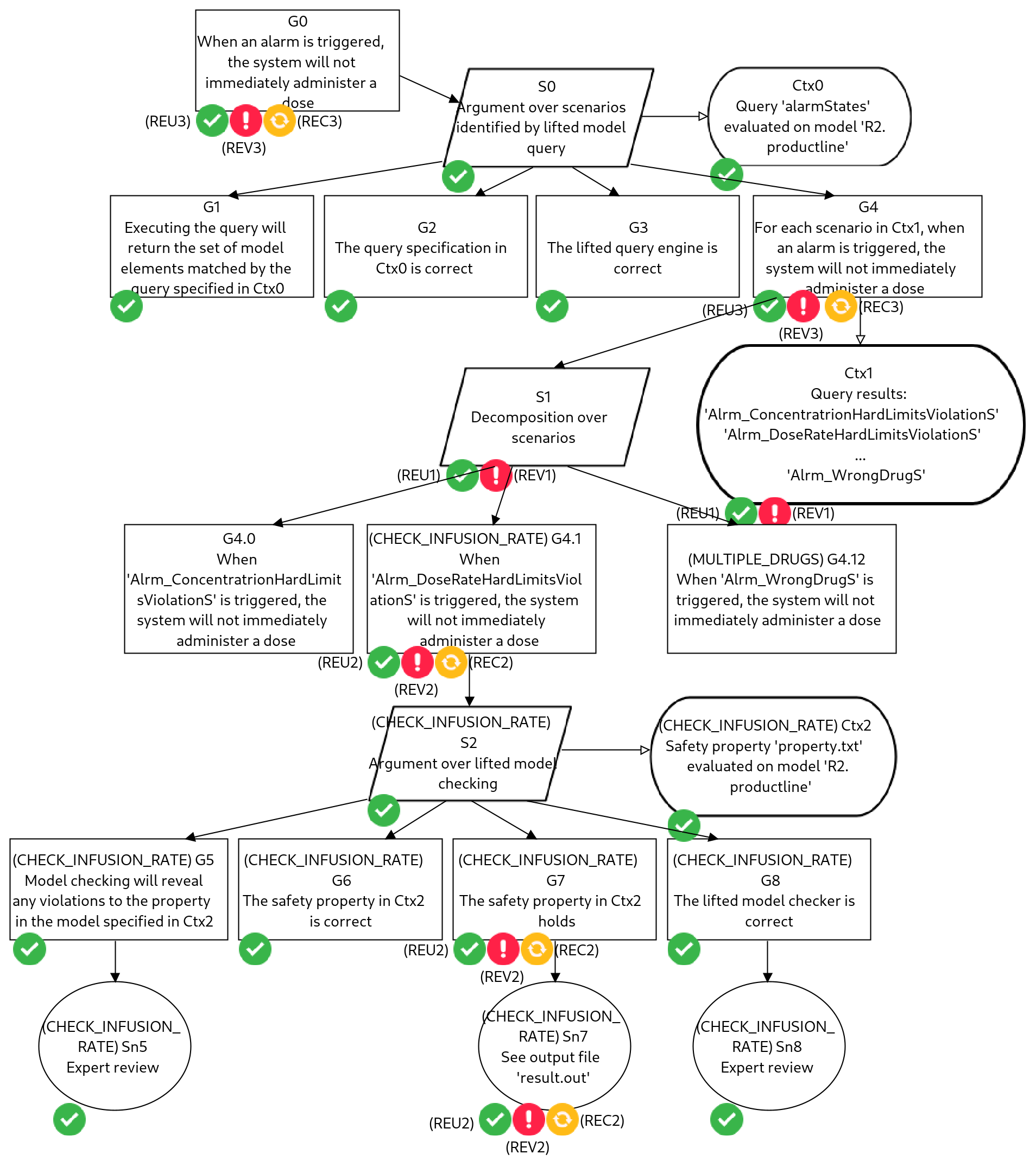}
    \caption{Outcome of lifted regression analysis for the infusion pump AC.}
    \label{fig:tool_regression}
    \Description[]{}
\end{figure}

\nc{Given the above evolution scenario, we execute the automated regression analysis that annotates the PL AC with variability-aware regression values, as shown in Fig.~\ref{fig:tool_regression}. The green regression symbol corresponds to $\phi_\cmark$, the yellow to $\phi_{\textbf{?}}$, the red to $\phi_\xmark$. We use the annotation identifier prefixes ${\tt REU}$ (for \emph{reuse}, i.e., $\phi_\cmark$), ${\tt REV}$ (for \emph{revise}, i.e., {$\phi_\xmark$}), and ${\tt REC}$ (for \emph{recheck}, i.e., $\phi_{\textbf{?}}$).}

\nc{Before we begin the regression analysis, given the evolution as described in Sec.~\ref{sec:evolutionSpec}, we compute variational impact set ${\ntn{\widehat{\Delta}}}$ for the infusion pump model $\pl{M}_{IP}$ as ${\ntn{\widehat{\Delta}}}(\pl{M}_{IP}) = \token{VISUAL\_DISPLAY}$, i.e., among \emph{existing} configurations, only those with this feature have been modified. We also compute the set of \emph{new} configurations, induced by the introduction of the feature ${\tt PROGRAMMABLE\_INFUSION}$ and associated revision of the feature model, as}
\begin{align*}
    \Phi_{\tt New} = ~&\token{CHECK\_INFUSION\_RATE}~\land~\token{HW\_MONITORING}~\land~\token{PROGRAMMABLE\_INFUSION}\\&~\land~\token{VISUAL\_DISPLAY}~\land~(\token{MULTIPLE\_DRUGS} \implies \token{CHECK\_DRUG\_TYPE}).
\end{align*}
Beginning with the forward pass of the regression procedure, we start with the top level goal \token{G0}, reaching the lifted analytic template for querying. We run the same lifted query on the evolved system model, returning an updated variability-aware set of alarm states. We find that{\ntn{, with respect to the configurations $\Phi_{\tt Old}$,}} all alarms in the AC are still present in the evolved model, {\ntn{and no new states have been introduced for configurations in $\Phi_{\tt Old}$}.} \change[]{so we mark the enumeration strategy as}{Accordingly, if we were \emph{only} concerned with configurations of $\Phi_{\tt Old}$, we could annotate the enumeration strategy as} \change[]{$\overline{\cmark}$}{$\mathsf{Reuse}(\Phi_{\tt Old}) = \langle \Phi_{\tt Old}, \bot,\bot \rangle $} and continue the forward pass into the AC subtrees. However, we also find that there is a new alarm introduced through the $\token{Programmable\_Infusion}$ {{\ntn{feature}}}, which is not currently assured -- this is recorded in the AC to facilitate reuse of the argument following the regression analysis. Thus, the only problem with the strategy is that it is as of yet incomplete for the \typofix{configruations}{configurations} in $\Phi_{\tt New}$, so the annotation for \token{S1} and \token{Ctx1} is computed as $\langle \token{REU1}=\neg \Phi_{\tt New}, \token{REV1}=\Phi_{\tt New}, \token{REC1}=\bot\rangle$. {\ntn{Note that this annotation can be derived from ${\mathsf{REUSE}(\Phi_{\tt Old})} \otimes \langle  \bot, \Phi_{\tt New}, \bot \rangle = \langle \Phi_{\tt Old} \wedge \neg \Phi_{\tt New}, \Phi_{\tt New}, \bot\rangle$, and simplifying with respect to $\Phi$ given that $\Phi_{\tt Old} \implies \Phi$.}}

\nc{We continue the forward pass into the sub-ACs for each type of alarm scenario. For the sake of brevity, we show the full detail only for goal \token{G4.1} for the alarm \token{Alrm\_DoseRateHardLimitsViolationS}. The subtree is populated by the template for lifted model checking. We start the backward pass by checking whether the lifted model checker has a regression analysis available. It does not, so we know that, among \emph{existing} configurations, the evidence is reusable for configurations satisfying $\neg \token{VISUAL\_DISPLAY}$, and requires re-checking for those containing $\token{VISUAL\_DISPLAY}$, per Line 6 of Alg.~\ref{alg:varEvdRegression}. When we also take into account the configurations which are introduced by \typofix{teh}{the} new feature model, we define the annotations of \token{Sn7}, as $\langle \token{REU2}=\neg \token{VISUAL\_DISPLAY} \land \neg \Phi_{\tt New}, \token{REV2}=\Phi_{\tt New}, \token{REC2}=\token{VISUAL\_DISPLAY} \land \neg \Phi_{\tt New}\rangle$. All the other evidence in this subtree is \typofix{annotaetd}{annotated} by {\ntn{$\mathsf{REUSE}(\token{CHECK\_INFUSION\_RATE})$}}, since they do not depend on the model, as per the semantics of the model checking template.}

\nc{As we propagate the annotation for $\token{Sn7}$ back up the AC, it remains unchanged until we reach $\token{S1}$. We now need to compute the regression value for $\token{G4}$ per Alg.~\ref{alg:lifted_assurance_regression}, which, after simplifying, gives us $\typofix{}{\langle}\token{REU3} = \neg \token{VISUAL\_DISPLAY}\land \neg \Phi_{\tt New}, \token{REV3} = \Phi_{\tt New}, \token{REC3} =\token{VISUAL\_DISPLAY} \land \neg \Phi_{\tt New}\rangle$. That is, the annotation from $\token{Sn7}$ subsumes the annotation from $\token{S1}$. This value is then propagated as-is to the root goal, since no other\typofix{sublings}{siblings} of $\token{G4}$ regress. Thus, to summarize, we can know that configurations which do not have feature ${\tt VISUAL\_DISPLAY}$, and which were already assured in the original AC, preserve all their assurance. More importantly, we can pinpoint exactly what evidence needs to be rechecked for the remaining configurations (and what those configurations are), and we know what is needed to repair strategy $S1$ to account for the new configurations induced by $\token{ProgrammableInfusion}$.}

\subsection{Discussion}

In developing the above AC fragment for the infusion pump SPL, we demonstrated the feasibility of lifted AC development to support multi-layered AC development in a lifted fashion. We emphasize two specific points. (1) {By following the workflow outlined in Sec.~\ref{sec:lifting:overview}, {the {operational process} of AC development becomes essentially the same as in product-level AC development}}. That is to say, it suffices to know what one would do to assure a single product, and the lifted analyses and templates can correctly generalize this knowledge to the SPL-level. This is due to the particular semantics we have given to variational ACs, as outlined in Sec.~\ref{sec:lifting}. (2) Through analytic variational argumentation, we are able to \emph{systematically identify {variation points} in the SPL which are relevant to the design of the AC}. For example, in our case study, the variation points identified by the lifted query led systematically to \emph{structural variability} in the AC (i.e., the subgoals of strategy \token{S1} in Fig.~\ref{fig:query_template}).

One of the limitations of the tooling is that modeling languages and model-based analyses must be defined natively in order to leverage the GVM. {For instance, the lifted model checker is not a natively defined model-based analysis, and its verification results are not automatically interpreted as variational data (this contrasts with querying, which is a native model-based analysis)}. {There is currently no automated support for template verification at the time of instantiation; we assume that all argument templates are verified manually ahead of time. We also have not implemented a general-purpose template regression procedure in the spirit of Alg.~\ref{alg:templateRegression}, instead relying on specialized regression procedures for individual templates, as we did for the combined query/enumeration template used to generate Fig.~\ref{fig:query_template}.}

{With respect to the validity of our observations, we note that our partial AC is relatively narrow in scope and was designed by the authors for the purpose of demonstrating lifted AC development and regression.
While we believe we have successfully demonstrated the feasibility of our lifted AC development and maintenance, further empirical validation is required. Ideally, this evaluation can be done as part of a collaboration with industrial assurance engineers, since real-world assurance processes may be more difficult to formalize and lift than those shown here.}
\section{Related Work}
\label{sec:related}
\paragraph{Analysis of Software Product Lines.}
{Implementing scalable analyses of SPLs is one of the central problems in SPLE. Thum et al.~\cite{thum2014classification} divide SPL-level analysis strategies as either \emph{product-based} (e.g., brute-force or sampling-based techniques), \emph{feature-based} (i.e., analyzing feature modules independently and composing the results), or \emph{family-based} (i.e., operating at the level of sets of products). \newText[mcrefs]{The lifted model checker FTS4VMC and lifted Viatra query engine described in Sec.~\ref{sec:case_study} are examples of family-based analysis tools. Other family-based model checkers include SNIP~\cite{classen2014formal}, mentioned in Sec.~\ref{sec:liftAnalytic}, and ProFeat, which is a variability-aware extension of the probabilistic model checker PRISM. A technique for family-based model checking based on $\mu$-calculus has also been developed for the model checker mCRL2~\cite{ter2017family}.} Lifted analyses are a specific form of SPL-level analyses, which can be implemented using any of these three strategies. The regression analysis procedure defined in Sec.~\ref{sec:liftedRegression} is an instance of family-based analysis. A wide variety of analyses have been lifted in the SPLE literature, }e.g.~\cite{classen2014formal,thum2012family,kastner2012type,salay2014lifting,shahin2019lifting}.  {We have previously proposed a catalogue of lifted analyses, distinguishing various approaches to their design and implementation~\cite{MURPHY2025112280}. Our proposal for lifted AC development and regression analysis presupposes that lifting of assurance-relevant analyses is feasible. As we have demonstrated in our case study, existing lifted analyses, such as lifted model checking, can be used in lifted AC development; however, some assurance scenarios may require novel lifted analyses.}

\paragraph{Assurance Cases for Product Lines.}
{{{Assurance cases for SPLs were studied by Habli and Kelly~\cite{habli2009model,habli2010safety}, who argued that safety-relevant variation points in the SPL should be reflected explicitly in the product line AC. Habli and Kelly propose using GSN patterns and the modular GSN extension~\cite{habli2010safety} to represent product line ACs. Habli~\cite{habli2009model} also provides an SPL safety metamodel which allows for variation points in the SPL to be traced explicitly to a PL AC. This approach was further refined in \typofix{de}{De} Oliveira et al.~\cite{de2015supporting}, introducing tool-supported generation of modular PL ACs from feature-based system models and safety analyses. However, this approach does not provide a distinction between product-level and SPL-level semantics or analysis, implicitly assuming that analysis and reasoning are performed at the \typofix{product}{level of products}. As an alternative to product-based AC development, Habli~\cite{habli2009model} also considers a primarily {feature-based} AC development method, in which AC modules are developed independently for each feature, and then the  PL AC is obtained by composing these modules. However, this method requires the assurance engineer to either (a) identify and mitigate all potential feature interactions, which can \typofix{devolves into to the brute-forced}{devolve into brute-force} assurance engineering, or (b) tolerate an incomplete assessment of potential feature interactions. By contrast, lifted AC development supports analysis and reasoning over {all} valid configurations of the SPL, without resorting to product-level work. To the best of our knowledge, the only existing variability-aware AC development process was proposed by Nešić et al.~\cite{nevsic2021product}, which lifts contract-based templates to PLs of component-based systems~\cite{nevsic2021product}. Our proposal for lifted AC development is effectively an attempt to generalize the approach of Nešić et al. to {arbitrary} templates and analyses.}}}

\paragraph{Formal Methods for Assurance Case Development.}
{{The most obvious use of formal methods for AC development is for the production of evidence; well-established verification techniques such as model checking~\cite{baier2008principles} and deductive verification~\cite{leino2010dafny} provide invaluable evidence for formally specified requirements. As we have mentioned above, the integration of formal methods with AC development requires extensive tool support. The AC development tool AdvoCATE~\cite{Denney:2018} uses the AutoCert inference engine~\cite{denney2014automating} to check system software implementations against formal specifications, which are then grafted into the AC following a predefined template. The Evidential Tool Bus (ETB)~\cite{Cruanes:2013} gathers verification evidence from various external tools, and then creates ACs from the bottom up using compositional rules written in a variant of Datalog. Formal methods can also be used to ensure that instantiations of argument templates actually yield sound arguments.}
{As ACs and proofs are closely related, proof assistants have been used on several occasions to study rigorous AC development. Rushby~\cite{rushby2009formalism} demonstrated a proof-of-concept of how an AC could be modelled using the proof assistant PVS. The AC editor D-Case was extended with a translation to the Agda programming language, such that an AC could be specified as an Agda program to check for well-formedness~\cite{takeyama2011brief}. An extension to Isabelle was developed to embed the Structured Assurance Case Metamodel (SACM) as part of its documentation layer~\cite{foster2021integration}. Finally, Viger et al.~\cite{viger2023foremost} used Lean to study the correctness of decomposition templates for model-based ACs.}}

\paragraph{Change Impact Assessment and Regression for Assurance Cases.}
A number of different approaches for analyzing the impact of changes have been proposed, including syntactic pattern detection \cite{carlanEnhancingStateoftheartSafety2020}, the use of machine learning models trained on AC development data \cite{menghiAssuranceCaseDevelopment2023}, the use of natural language processing \cite{muramATTESTAutomatingReview2023}, and property-specific safety contracts over modular ACs \cite{wargContinuousDeploymentDependable2019} to identify impacted portions of ACs.
In terms of specifically model-based approaches,
Annable et al. \cite{annable2024comprehensive} and Cârlan et al. \cite{carlanAutomatingSafetyArgument2022} rely on existing traceability links between system and AC to provide both semantic and syntactic change impact assessment at the product level; however, the procedure defined in ~\cite{carlanAutomatingSafetyArgument2022} targets a specific AC template for machine learning components, while ~\cite{annable2024comprehensive} relies on manual inspection of the semantics of the AC and system to eliminate false positives. In comparison, our product-based assurance regression framework is generic for arbitrary formal templates, and defines the semantics of the assurance directly as part of the template.
Kokaly et al. \cite{kokalySafetyCaseImpact2017} also rely on a model-based approach to syntactically determine product-level AC elements to be marked as \textit{reuse}, \textit{recheck} or \textit{revise}.
While their proposed approach is complete, it is not sound (i.e., false positives may occur), although techniques to improve precision are discussed at a conceptual level.
At the level of product line assurance, 
Shahin et al. \cite{shahin2021towards} lifted the work from  \cite{kokalySafetyCaseImpact2017} to the product-line level. To do this, they model variational ACs as sets of AC elements annotated with presence conditions, without any internal structure or semantics. By contrast, our lifted regression analysis is based on a variational AC language with precisely defined structure and semantics, allowing us to provide a more granular form of analysis which is nonetheless amenable to verification.

\section{Conclusion}
\label{sec:conclusion}
{In this work, we formalized lifted AC development, an assurance engineering methodology  for software product lines. By formalizing a language of variability-aware ACs, we extend existing formal approaches to AC development by lifting formal argument templates and software analyses to the SPL-level. We demonstrated the feasibility and usefulness of our methodology by developing a variational AC over a product line of medical devices.}

\nc{We identify several avenues for future work. One is to further extend our model-based development framework with support for additional lifted templates and analyses, and to integrate the modeling layer with theorem proving support, which would enable template instantiations to be verified automatically during AC development~\cite{viger2023foremost}. Another important line of work is the \emph{repair} of argument templates over model evolutions. Our template regression procedures allow for engineers to specify new instantiations for templates given a change to a model (Algs.~\ref{alg:templateRegression},\typofix{and} ~\ref{alg:templateRegressionLift}), but does not specify \emph{how} these instantiations should be produced. This introduces an opportunity for argument repair and synthesis, which might be approached deductively (based on the semantics of the template being repaired), or could potentially be approached through generative AI.}

\bibliographystyle{splncs04}\bibliography{main,aren}

@inproceedings{annable2024comprehensive,
  title={Comprehensive Change Impact Analysis Applied to Advanced Automotive Systems},
  author={Annable, Nicholas and Askarpour, Mehrnoosh and Chiang, Thomas and Kokaly, Sahar and Lawford, Mark and Paige, Richard F and Sethu, Ramesh},
  doi={10.1007/978-3-031-68606-1$\_$9},
  booktitle={Computer Safety, Reliability, and Security: 43rd International Conference, SAFECOMP 2024, Florence, Italy, September 18--20, 2024, Proceedings},
  volume={14988},
  pages={134},
  year={2024},
  organization={Springer Nature}
}

@inproceedings{carlanAutomatingSafetyArgument2022,
  title = {Automating {{Safety Argument Change Impact Analysis}} for {{Machine Learning Components}}},
  proceedings = {2022 {{IEEE}} 27th {{Pacific Rim International Symposium}} on {{Dependable Computing}} ({{PRDC}})},
  author = {C\^{a}rlan, Carmen and Gauerhof, Lydia and Gallina, Barbara and Burton, Simon},
  year = {2022},
  month = nov,
  pages = {43--53},
  issn = {2473-3105},
  doi = {10.1109/PRDC55274.2022.00019},
  urldate = {2025-03-14},
}

@inproceedings{carlanEnhancingStateoftheartSafety2020,
  title = {Enhancing {{State-of-the-art Safety Case Patterns}} to {{Support Change Impact Analysis}}},
  author = {C\^{a}rlan, Carmen and Gallina, Barbara},
  year = {2020},
  month = jan,
  pages = {4658--4665},
  series={European Safety and Reliability Conference (ESREL)},
  doi = {10.3850/978-981-14-8593-0\_4672-cd},
}

@inproceedings{kokalySafetyCaseImpact2017,
  title = {Safety {{Case Impact Assessment}} in {{Automotive Software Systems}}: {{An Improved Model-Based Approach}}},
  shorttitle = {Safety {{Case Impact Assessment}} in {{Automotive Software Systems}}},
  booktitle = {Computer {{Safety}}, {{Reliability}}, and {{Security}} (SAFECOMP)},
  author = {Kokaly, Sahar and Salay, Rick and Chechik, Marsha and Lawford, Mark and Maibaum, Tom},
  series = {LNCS},
  volume = {10488},
  year = {2017},
  pages = {69--85},
  publisher = {Springer International Publishing},
  address = {Cham},
  doi = {10.1007/978-3-319-66266-4\_5},
}

@inproceedings{menghiAssuranceCaseDevelopment2023,
  title = {Assurance {{Case Development}} as {{Data}}: {{A Manifesto}}},
  shorttitle = {Assurance {{Case Development}} as {{Data}}},
  booktitle = {2023 {{IEEE}}/{{ACM}} 45th {{International Conference}} on {{Software Engineering}}: {{New Ideas}} and {{Emerging Results}} ({{ICSE-NIER}})},
  author = {Menghi, Claudio and Viger, Torin and Di Sandro, Alessio and Rees, Chris and Joyce, Jeff and Chechik, Marsha},
  year = {2023},
  month = may,
  pages = {135--139},
  issn = {2832-7632},
  doi = {10.1109/ICSE-NIER58687.2023.00030},
  urldate = {2025-03-14},
}

@article{muramATTESTAutomatingReview2023,
  title = {{{ATTEST}}: {{Automating}} the Review and Update of Assurance Case Arguments},
  shorttitle = {{{ATTEST}}},
  author = {Muram, Faiz Ul and Javed, Muhammad Atif},
  year = {2023},
  month = jan,
  journal = {Journal of Systems Architecture},
  volume = {134},
  pages = {102781},
  issn = {1383-7621},
  doi = {10.1016/j.sysarc.2022.102781},
  urldate = {2025-03-14},
}

@inproceedings{wargContinuousDeploymentDependable2019,
  title = {Continuous {{Deployment}} for {{Dependable Systems}} with {{Continuous Assurance Cases}}},
  booktitle = {2019 {{IEEE International Symposium}} on {{Software Reliability Engineering Workshops}} ({{ISSREW}})},
  author = {Warg, Fredrik and Blom, Hans and Borg, Jonas and Johansson, Rolf},
  year = {2019},
  month = oct,
  pages = {318--325},
  doi = {10.1109/ISSREW.2019.00091},
  urldate = {2025-03-14},
}

@InProceedings{lean4,
author="Moura, Leonardo de
and Ullrich, Sebastian",
editor="Platzer, Andr{\'e}
and Sutcliffe, Geoff",
title="The Lean 4 Theorem Prover and Programming Language",
booktitle="Automated Deduction -- CADE 28",
year="2021",
publisher="Springer International Publishing",
address="Cham",
pages="625--635",
doi={10.1007/978-3-030-79876-5\_37},
abstract="Lean 4 is a reimplementation of the Lean interactive theorem prover (ITP) in Lean itself. It addresses many shortcomings of the previous versions and contains many new features. Lean 4 is fully extensible: users can modify and extend the parser, elaborator, tactics, decision procedures, pretty printer, and code generator. The new system has a hygienic macro system custom-built for ITPs. It contains a new typeclass resolution procedure based on tabled resolution, addressing significant performance problems reported by the growing user base. Lean 4 is also an efficient functional programming language based on a novel programming paradigm called functional but in-place. Efficient code generation is crucial for Lean users because many write custom proof automation procedures in Lean itself.",
isbn="978-3-030-79876-5"
}

@article{benavides2025uvl,
  title={{UVL}: Feature modelling with the universal variability language},
  author={Benavides, David and Sundermann, Chico and Feichtinger, Kevin and Galindo, Jos{\'e} A and Rabiser, Rick and Th{\"u}m, Thomas},
  journal={Journal of systems and software},
  doi={10.1016/j.jss.2024.112326},
  volume={225},
  pages={112326},
  year={2025},
  publisher={Elsevier}
}

@inproceedings{mora2018client,
  title={Client-specific equivalence checking},
  author={Mora, Federico and Li, Yi and Rubin, Julia and Chechik, Marsha},
  doi={10.1145/3238147.3238178},
  booktitle={Proceedings of the 33rd ACM/IEEE International Conference on Automated Software Engineering},
  pages={441--451},
  year={2018}
}

@article{ter2022fts4vmc,
  title={{FTS4VMC}: A front-end tool for static analysis and family-based model checking of {FTS}s with {VMC}},
  author={ter Beek, Maurice H and Damiani, Ferruccio and Lienhardt, Michael and Mazzanti, Franco and Paolini, Luca and Scarso, Giordano},
  journal={Science of Computer Programming},
  doi={10.1016/j.scico.2022.102879},
  volume={224},
  pages={102879},
  year={2022},
  publisher={Elsevier}
}

@inproceedings{walkingshaw2014variational,
  title={Variational data structures: Exploring tradeoffs in computing with variability},
  author={Walkingshaw, Eric and K{\"a}stner, Christian and Erwig, Martin and Apel, Sven and Bodden, Eric},
  booktitle={Proceedings of the 2014 ACM International Symposium on New Ideas, New Paradigms, and Reflections on Programming \& Software},
  doi={10.1145/2661136.2661143},
  pages={213--226},
  year={2014}
}

@inproceedings{leino2010dafny,
  title={Dafny: An automatic program verifier for functional correctness},
  author={Leino, K Rustan M},
  booktitle={International conference on logic for programming artificial intelligence and reasoning (LPAR)},
  doi={10.1007/978-3-642-17511-4\_20},
  pages={348--370},
  year={2010},
  series = {LNCS},
  volume = {6355},
  organization={Springer}
}

@article{classen2012featured,
  title={Featured transition systems: Foundations for verifying variability-intensive systems and their application to {LTL} model checking},
  doi={10.1109/TSE.2012.86},
  author={Classen, Andreas and Cordy, Maxime and Schobbens, Pierre-Yves and Heymans, Patrick and Legay, Axel and Raskin, Jean-Fran{\c{c}}ois},
  journal={IEEE Transactions on Software Engineering},
  volume={39},
  number={8},
  pages={1069--1089},
  year={2012},
  publisher={IEEE}
}

@incollection{de1990action,
  title={Action versus state based logics for transition systems},
  author={De Nicola, Rocco and Vaandrager, Frits},
  doi={10.1007/3-540-53479-2\_17},
  booktitle={LITP Spring School on Theoretical Computer Science},
  pages={407--419},
  year={1990},
  publisher={Springer}
}

@article{bezivin2005unification,
  title={On the unification power of models},
  doi={10.1007/s10270-005-0079-0},
  author={B{\'e}zivin, Jean},
  journal={Software \& Systems Modeling},
  volume={4},
  number={2},
  pages={171--188},
  year={2005},
  publisher={Springer}
}

@article{classen2012model,
  title={Model checking software product lines with {SNIP}},
  doi={10.1007/s10009-012-0234-1},
  author={Classen, Andreas and Cordy, Maxime and Heymans, Patrick and Legay, Axel and Schobbens, Pierre-Yves},
  journal={International Journal on Software Tools for Technology Transfer},
  volume={14},
  pages={589--612},
  year={2012},
  publisher={Springer}
}

@article{kastner2012type,
  title={Type checking annotation-based product lines},
  author={K{\"a}stner, Christian and Apel, Sven and Th{\"u}m, Thomas and Saake, Gunter},
  journal={ACM Transactions on Software Engineering and Methodology (TOSEM)},
  doi = {10.1145/2211616.2211617},
  volume={21},
  number={3},
  pages={1--39},
  year={2012},
  publisher={ACM New York, NY, USA}
}

@inproceedings{kastner2012toward,
  title={Toward variability-aware testing},
  author={K{\"a}stner, Christian and Von Rhein, Alexander and Erdweg, Sebastian and Pusch, Jonas and Apel, Sven and Rendel, Tillmann and Ostermann, Klaus},
  booktitle={Proceedings of the 4th International Workshop on Feature-Oriented Software Development},
  doi = {10.1145/2377816.2377817},
  publisher={Association for Computing Machinery},
  pages={1--8},
  year={2012}
}

@article{bodden2013spllift,
  title={{SPL}\textsubscript{{LIFT}}: Statically analyzing software product lines in minutes instead of years},
  doi={10.1145/2499370.2491976},
  author={Bodden, Eric and Tol{\^e}do, T{\'a}rsis and Ribeiro, M{\'a}rcio and Brabrand, Claus and Borba, Paulo and Mezini, Mira},
  journal={ACM SIGPLAN Notices},
  volume={48},
  number={6},
  pages={355--364},
  year={2013},
  publisher={ACM New York, NY, USA}
}

@article{classen2014formal,
  title={Formal semantics, modular specification, and symbolic verification of product-line behaviour},
  doi={10.1016/j.scico.2013.09.019},
  author={Classen, Andreas and Cordy, Maxime and Heymans, Patrick and Legay, Axel and Schobbens, Pierre-Yves},
  journal={Science of Computer Programming},
  volume={80},
  pages={416--439},
  year={2014},
  publisher={Elsevier}
}

@inproceedings{shahin2021towards,
  title={Towards certified analysis of software product line safety cases},
  author={Shahin, Ramy and Kokaly, Sahar and Chechik, Marsha},
  booktitle={Computer Safety, Reliability, and Security: 40th International Conference, SAFECOMP 2021, York, UK, September 8--10, 2021, Proceedings},
  doi={10.1007/978-3-030-83903-1\_9},
  pages={130--145},
  volume={12852},
  series={LNCS},
  year={2021},
  organization={Springer}
}

@misc{iso26262,
  author = {ISO},
  key = {{ISO 26262}},
  keywords = {26262 iso},
  number = {{ISO 26262}},
  publisher = {{ISO, Geneva, Switzerland}},
  title = {{ISO26262: Road vehicles -- Functional safety}},
  type = {{Norm}},
  url={https://www.iso.org/standard/68383.html},
  year = {{2011}}
}

@article{foster2021integration,
  title={Integration of formal proof into unified assurance cases with Isabelle/{SACM}},
  author={Foster, Simon and Nemouchi, Yakoub and Gleirscher, Mario and Wei, Ran and Kelly, Tim},
  doi={10.1007/s00165-021-00537-4},
  journal={Formal Aspects of Computing},
  volume={33},
  number={6},
  pages={855--884},
  year={2021},
  publisher={Springer}
}

@article{ter2016modelling,
  title={Modelling and analysing variability in product families: model checking of modal transition systems with variability constraints},
  author={ter Beek, Maurice H and Fantechi, Alessandro and Gnesi, Stefania and Mazzanti, Franco},
  journal={Journal of Logical and Algebraic Methods in Programming},
  volume={85},
  number={2},
  pages={287--315},
  year={2016},
  doi={https://doi.org/10.1016/j.jlamp.2015.11.006},
  publisher={Elsevier}
}

@inproceedings{thum2012family,
  title={Family-based deductive verification of software product lines},
  author={Th{\"u}m, Thomas and Schaefer, Ina and Apel, Sven and Hentschel, Martin},
  booktitle={Proceedings of the 11th International Conference on Generative Programming and Component Engineering},
  pages={11--20},  
  doi={10.1145/2480361.2371404},
  series={ACM SIGPLAN Notices},
  volume={48},
  year={2012}
}

@inproceedings{viger2020just,
  title={Just enough formality in assurance argument structures},
  author={Viger, Torin and Salay, Rick and Selim, Gehan and Chechik, Marsha},
  booktitle={Computer Safety, Reliability, and Security: 39th International Conference, SAFECOMP 2020, Lisbon, Portugal, September 16--18, 2020, Proceedings 39},
  doi={10.1007/978-3-030-54549-9\_3},
  pages={34--49},
  year={2020},
  series={LNCS},
  volume={12234},
  organization={Springer}
}

@book{apel2016feature,
  title={Feature-oriented software product lines},
  author={Apel, Sven and Batory, Don and K{\"a}stner, Christian and Saake, Gunter},
  year={2016},
  doi={10.1007/978-3-642-37521-7},
  publisher={Springer}
}

@phdthesis{habli2009model,
  title={Model-based assurance of safety-critical product lines},
  author={Habli, Ibrahim},
  year={2009},
  url={https://www.researchgate.net/publication/260034902_Model-Based_Assurance_of_Safety-Critical_Product_Lines},
  school={University of York}
}

@inproceedings{habli2010safety,
  title={A safety case approach to assuring configurable architectures of safety-critical product lines},
  author={Habli, Ibrahim and Kelly, Tim},
  doi={10.1007/978-3-642-13556-9\_9},
  booktitle={Architecting Critical Systems: First International Symposium, ISARCS 2010, Prague, Czech Republic, June 23-25, 2010 Proceedings},
  series={LNCS},
  volume={6150},
  pages={142--160},
  year={2010},
  organization={Springer}
}

@inproceedings{de2015supporting,
  title={Supporting the automated generation of modular product line safety cases},
  author={De Oliveira, Andr{\'e} L and Braga, Rosana TV and Masiero, Paulo C and Papadopoulos, Yiannis and Habli, Ibrahim and Kelly, Tim},
  doi={10.1007/978-3-319-19216-1\_30},
  booktitle={Theory and Engineering of Complex Systems and Dependability: Proceedings of the Tenth International Conference on Dependability and Complex Systems DepCoS-RELCOMEX, June 29--July 3 2015, Brun{\'o}w, Poland},
  pages={319--330},
  year={2015},
  organization={Springer}
}

@inproceedings{beek2019static,
  title={Static analysis of featured transition systems},
  author={ter Beek, Maurice H and Damiani, Ferruccio and Lienhardt, Michael and Mazzanti, Franco and Paolini, Luca},
  doi={10.1145/3336294.3336295},
  booktitle={Proceedings of the 23rd International Systems and Software Product Line Conference-Volume A},
  pages={39--51},
  year={2019}
}

@misc{takeyama2011brief,
  title={{T}owards Formal Assurance Case Framework in {A}gda},
  author={Takeyama, Makoto},
  url={https://cs.ioc.ee/~tarmo/tsem14/takeyama-slides.pdf},
  year={2014}
}

@inproceedings{rushby2009formalism,
  title={Formalism in safety cases},
doi={10.1007/978-1-84996-086-1\_1},
  author={Rushby, John},
  booktitle={Making Systems Safer: Proceedings of the Eighteenth Safety-Critical Systems Symposium, Bristol, UK, 9-11th February 2010},
  pages={3--17},
  year={2009},
  organization={Springer}
}

@inproceedings{varadarajan2023clarissa,
  title={{CLARISSA}: Foundations, Tools \& Automation for Assurance Cases},
  doi={10.1109/DASC58513.2023.10311213},
  author={Varadarajan, Srivatsan and Bloomfield, Robin and Rushby, John and Gupta, Gopal and Murugesan, Anitha and Stroud, Robert and Netkachova, Kateryna and Wong, Isaac Hong},
  booktitle={2023 IEEE/AIAA 42nd Digital Avionics Systems Conference (DASC)},
  pages={1--10},
  year={2023},
  organization={IEEE}
}

@phdthesis{kelly1999arguing,
  title={Arguing safety: a systematic approach to managing safety cases},
  author={Kelly, Timothy Patrick},
  year={1999},
  url={https://www.researchgate.net/publication/2565194_Arguing_Safety_--_A_Systematic_Approach_to_Managing_Safety_Cases_Timothy_Patrick_Kelly},
  school={University of York}
}

@inproceedings{sljivo2018tool,
  title={Tool-supported safety-relevant component reuse: From specification to argumentation},
  author={Sljivo, Irfan and Gallina, Barbara and Carlson, Jan and Hansson, Hans and Puri, Stefano},
  doi={10.1007/978-3-319-92432-8\_2},
  booktitle={Reliable Software Technologies--Ada-Europe 2018: 23rd Ada-Europe International Conference on Reliable Software Technologies, Lisbon, Portugal, June 18-22, 2018, Proceedings},
  series={LNCS}, 
  volume={10873},
  pages={19--33},
  year={2018},
  organization={Springer}
}

@article{Murphy2023ReusingYF,
  title={Reusing Your Favourite Analysis Framework to Handle Workflows of Product Line Models},
  author={Logan Murphy and Alessio {Di Sandro} and Ramy I. Shahin and Marsha Chechik},
  journal={Proceedings of the 27th ACM International Systems and Software Product Line Conference - Volume A},
  year={2023},
  doi={10.1145/3579027.3608983}
}

@article{viger2023foremost,
  title={The {ForeMoSt} approach to building valid model-based safety arguments},
  author={Viger, Torin and Murphy, Logan and Di Sandro, Alessio and Menghi, Claudio and Shahin, Ramy and Chechik, Marsha},
  doi={10.1007/s10270-022-01063-4},
  journal={Software and Systems Modeling},
  volume={22},
  number={5},
  pages={1473--1494},
  year={2023},
  publisher={Springer}
}

@article{nevsic2021product,
  title={Product-line assurance cases from contract-based design},
  author={Ne{\v{s}}i{\'c}, Damir and Nyberg, Mattias and Gallina, Barbara},
  doi={10.1016/j.jss.2021.110922},
  journal={Journal of Systems and Software},
  volume={176},
  pages={110922},
  year={2021},
  publisher={Elsevier}
}

@inproceedings{denney2012advocate,
  title={Advo{CATE}: An assurance case automation toolset},
  author={Denney, Ewen and Pai, Ganesh and Pohl, Josef},
  booktitle={Computer Safety, Reliability, and Security: SAFECOMP 2012 Workshops: Sassur, ASCoMS, DESEC4LCCI, ERCIM/EWICS, IWDE, Magdeburg, Germany, September 25-28, 2012. Proceedings},
  pages={8--21},
  year={2012},
  doi={10.1007/978-3-642-33675-1\_2},
  series={LNCS},
   volume={7613},
  organization={Springer}
}

@article{rushby2015interpretation,
  title={The interpretation and evaluation of assurance cases},
  author={Rushby, John},
  journal={Comp. Science Laboratory, SRI International, Tech. Rep. SRI-CSL-15-01},
  year={2015},
url={https://www.csl.sri.com/~rushby/papers/sri-csl-15-1-assurance-cases.pdf}
}

@inproceedings{shahin2019lifting,
  title={Lifting datalog-based analyses to software product lines},
  doi={10.1145/3338906.3338928},
  author={Shahin, Ramy and Chechik, Marsha and Salay, Rick},
  booktitle={Proceedings of the 2019 27th ACM Joint Meeting on European Software Engineering Conference and Symposium on the Foundations of Software Engineering},
  pages={39--49},
  year={2019}
}

@article{shahin2020automatic,
  title={Automatic and efficient variability-aware lifting of functional programs},
  author={Shahin, Ramy and Chechik, Marsha},
  journal={Proceedings of the ACM on Programming Languages},
  volume={4},
  number={OOPSLA},
  doi={10.1145/3428225},
  pages={1--27},
  year={2020},
  publisher={ACM New York, NY, USA}
}

@inproceedings{di2023adding,
  title={Adding Product-Line Capabilities to Your Favourite Modeling Language},
  doi = {10.1145/3571788.3571791},
  author={Di Sandro, Alessio and Shahin, Ramy and Chechik, Marsha},
  booktitle={Proceedings of the 17th International Working Conference on Variability Modelling of Software-Intensive Systems (VaMoS'23)},
  pages={3--12},
  publisher={Association for Computing Machinery},
  year={2023}
}

@inproceedings{salay2014lifting,
  title={Lifting model transformations to product lines},
  doi={10.1145/2568225.2568267},
  author={Salay, Rick and Famelis, Michalis and Rubin, Julia and Di Sandro, Alessio and Chechik, Marsha},
  booktitle={Proceedings of the 36th International Conference on Software Engineering},
  pages={117--128},
  series = {ICSE 2014},
publisher = {Association for Computing Machinery},
  year={2014}
}

@article{MURPHY2025112280,
title = {A structural taxonomy for lifted software product line analyses},
journal = {Journal of Systems and Software},
volume = {222},
pages = {112280},
year = {2025},
issn = {0164-1212},
doi = {https://doi.org/10.1016/j.jss.2024.112280},
author = {Logan Murphy and Mahmood Saifi and Alessio {Di Sandro} and Marsha Chechik},
keywords = {Software product lines, Lifting, Variability-aware analysis, SLR}
}

@inproceedings{ifm2024,
  title={{PLACIDUS}: Engineering Product Lines of Rigorous Assurance Cases},
  author={Murphy, Logan and Viger, Torin and Sandro, Alessio Di and Chechik, Marsha},
  doi={10.1007/978-3-031-76554-4\_6},
  booktitle={International Conference on Integrated Formal Methods},
  series={LNCS},
  volume={15234},
  pages={87--108},
  year={2024},
  organization={Springer}
}

@inproceedings{schobbens2006feature,
  title={Feature diagrams: A survey and a formal semantics},
  author={Schobbens, Pierre-Yves and Heymans, Patrick and Trigaux, Jean-Christophe},
  doi={10.1109/RE.2006.23},
  booktitle={14th IEEE International Requirements Engineering Conference (RE'06)},
  pages={139--148},
  year={2006},
  organization={IEEE}
}

@article{angerer2019change,
  title={Change impact analysis for maintenance and evolution of variable software systems},
  doi={10.1007/s10515-019-00253-7},
  author={Angerer, Florian and Grimmer, Andreas and Pr{\"a}hofer, Herbert and Gr{\"u}nbacher, Paul},
  journal={Automated Software Engineering},
  volume={26},
  pages={417--461},
  year={2019},
  publisher={Springer}
}

@article{thum2014classification,
  title={A classification and survey of analysis strategies for software product lines},
  author={Th{\"u}m, Thomas and Apel, Sven and K{\"a}stner, Christian and Schaefer, Ina and Saake, Gunter},
doi={10.1145/2580950},
  journal={ACM Computing Surveys (CSUR)},
  volume={47},
  number={1},
  pages={1--45},
  year={2014},
  publisher={ACM New York, NY, USA}
}

@inproceedings{ter2017family,
  title={Family-based model checking with m{CRL}2},
  author={Ter Beek, Maurice H and De Vink, Erik P and Willemse, Tim AC},
  doi={10.1007/978-3-662-54494-5\_23},
  booktitle={International Conference on Fundamental Approaches to Software Engineering},
  pages={387--405},
  year={2017},
  organization={Springer}
}

@inproceedings{Cruanes:2013,
	author = {Cruanes, Simon and Hamon, Gregoire and Owre, Sam and Shankar, Natarajan},
    doi={10.1007/978-3-642-35873-9\_18},
	title = {{Tool Integration with the Evidential Tool Bus}},
	year = {2013},
	publisher = {Springer-Verlag},
	booktitle = {Proc. of VMCAI'13},
    series={LNCS},
	volume = {7737},
    pages = {275--294},
}

@article{Denney:2018,
	author = {Denney, Ewen and Pai, Ganesh},
	title = {{Tool Support for Assurance Case Development}},
	year = {2018},
	publisher = {Kluwer Academic Publishers},
    doi={10.1007/s10515-017-0230-5},
	address = {USA},
	volume = {25},
	number = {3},
	journal = {J. Automated Software Eng.},
	month = sep,
	pages = {435–499},
	numpages = {65},
	keywords = {Safety cases, Tool support, Assurance cases, Automation, Formal methods}
}

@inproceedings{minta,
  author    = {Alessio {Di Sandro} and
               Gehan M. K. Selim and
               Rick Salay and
               Torin Viger and
               Marsha Chechik and
               Sahar Kokaly},
  title     = {{MMINT-A 2.0: Tool Support for the Lifecycle of Model-Based Safety
               Artifacts}},
  booktitle = {Proc. of MODELS'20 Companion},
  doi = {10.1145/3417990.3422012},
  pages     = {15:1--15:5},
  publisher = {{ACM}},
  year      = {2020},
}

@misc{gsn,
  title={{{GSN} Community Standard Version 2}}, 
  howpublished = {\url{http://www.goalstructuringnotation.info/}},
  author={{{GSN} Working Group}},
  address={York, UK},
  year={2011}
}

@book{baier2008principles,
  title={Principles of model checking},
  author={Baier, Christel and Katoen, Joost-Pieter},
  year={2008},
  url={https://dl.acm.org/doi/10.5555/1373322},
  publisher={MIT press}
}

@article{LTL,
title = {Linear Temporal Logic Symbolic Model Checking},
journal = {Computer Science Review},
volume = {5},
number = {2},
pages = {163-203},
year = {2011},
issn = {1574-0137},
author = {Kristin Y. Rozier}
}

@article{vql,
  author    = {D{\'{a}}niel Varr{\'{o}} and
               G{\'{a}}bor Bergmann and
               {\'{A}}bel Heged{\"{u}}s and
               {\'{A}}kos Horv{\'{a}}th and
               Istv{\'{a}}n R{\'{a}}th and
               Zolt{\'{a}}n Ujhelyi},
  title     = {{Road to a Reactive and Incremental Model Transformation Platform:
               Three Generations of the VIATRA Framework}},
  journal   = {Software and System Modeling},
  doi={10.1007/s10270-016-0530-4},
  volume    = {15},
  number    = {3},
  pages     = {609--629},
  year      = {2016}
}

@article{denney2014automating,
  title={Automating the assembly of aviation safety cases},
  author={Denney, Ewen and Pai, Ganesh},
  journal={IEEE Transactions on Reliability},
  volume={63},
  number={4},
  pages={830--849},
  year={2014},
  publisher={IEEE}
}

@article{menghi2021torpedo,
  title={{TO}r{PEDO}: witnessing model correctness with topological proofs},
  author={Menghi, Claudio and Rizzi, Alessandro Maria and Bernasconi, Anna and Spoletini, Paola},
  doi={10.1007/s00165-021-00564-1},
  journal={Formal Aspects of Computing},
  volume={33},
  number={6},
  pages={1039--1066},
  year={2021},
  publisher={Springer}
}

@article{alur2004formal,
  title={Formal specifications and analysis of the computer-assisted resuscitation algorithm ({CARA}) Infusion Pump Control System},
  doi={10.1007/s10009-003-0132-7},
  author={Alur, Rajeev and Arney, David and Gunter, Elsa L and Lee, Insup and Lee, Jaime and Nam, Wonhong and Pearce, Frederick and Van Albert, Steve and Zhou, Jiaxiang},
  journal={International Journal on Software Tools for Technology Transfer},
  volume={5},
  pages={308--319},
  year={2004},
  publisher={Springer}
}

@article{zhang2010hazard,
  title={A hazard analysis for a generic insulin infusion pump},
  author={Zhang, Yi and Jones, Paul L and Jetley, Raoul},
  journal={Journal of diabetes science and technology},
  doi={10.1177/193229681000400207},
  volume={4},
  number={2},
  pages={263--283},
  year={2010},
  publisher={SAGE Publications}
}
\newpage
\appendix 
\section{Proofs of Selected Theorems}
\label{app:proofs}
\subsection{Proofs for Section 3}
\label{app:proofs:sec3}
\subsubsection*{Proof of Lemma~\ref{def:suppCorrect}}

\begin{lemma*}
Let $A$ be an assurance case \weakDelete{for system $S$} such that\; ${\tt Rt}(A) = g$. If $A$ is supported, then we can infer $g$.
\end{lemma*}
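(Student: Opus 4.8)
The plan is to proceed by structural induction on the assurance case $A$, following the grammar of $\AC$ (Def.~\ref{def:AC}) and using the inductive characterization of $\Supp$ given by the rules [Supp-1], [Supp-2], and [Supp-3] (Def.~\ref{def:supp}). Throughout, I interpret each goal $g$ directly as the proposition it asserts, as per the convention established after Def.~\ref{def:goal}: a propositional goal $p$ is the proposition $p$, and a predicative goal $\langle M, P\rangle$ is the proposition $P(M)$.

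First I would handle the two base cases. If $A = {\tt Und}(g)$, then no inference rule in Def.~\ref{def:supp} has $\Supp({\tt Und}(g))$ as its conclusion, so the hypothesis $\Supp(A)$ is never satisfied and the implication holds vacuously. If $A = {\tt Evd}(g,e)$, then $\Supp(A)$ must have been derived by either [Supp-1] or [Supp-2]. In the former case $g$ is a propositional goal $p$ with $e \vdash p$; in the latter $g = \langle M, P\rangle$ with $e \vdash P(M)$. In both cases the evidence-adequacy relation guarantees that the proposition asserted by $g$ can be concluded from $e$, which is exactly the claim.

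Next I would treat the inductive case $A = {\tt Decomp}(g, st, \{A_1 \ellipses A_n\})$. Here $\Supp(A)$ can only have been derived by [Supp-3], which supplies two premises: that each $A_i$ is supported, and that $\{A_1 \ellipses A_n\} \prec g$. Writing $g_i = {\tt Rt}(A_i)$, the induction hypothesis applied to each $A_i$ yields $g_i$ for every $i \leq n$, hence $\bigwedge_{i=1}^n g_i$. Unfolding the definition of goal refinement (Def.~\ref{def:goalRefinement}), the relation $\{A_1 \ellipses A_n\} \prec g$ means precisely $\left(\bigwedge_{i=1}^n g_i\right) \implies g$. Combining these, $g$ follows by modus ponens, completing the induction.

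I do not anticipate a serious obstacle, as the result is essentially a soundness statement matching the inductive structure of $\Supp$ against the logical interpretation of goals. The only point requiring care is the ${\tt Und}$ case, where one must observe that $\Supp$ is simply never derivable for undeveloped goals, so the case is discharged vacuously rather than by producing $g$. A secondary subtlety is keeping the identification of goals with their asserted propositions consistent across the base and inductive cases, so that the conjunction $\bigwedge_i g_i$ supplied by the induction hypothesis lines up exactly with the antecedent of the implication encoded by $\prec$.
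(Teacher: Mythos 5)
Your proof is correct and follows essentially the same route as the paper's: structural induction on $A$, with the ${\tt Und}$ case discharged because $\Supp$ is not derivable there, the ${\tt Evd}$ case handled via evidence adequacy under [$\Supp$-1]/[$\Supp$-2], and the ${\tt Decomp}$ case combining the inductive hypothesis with the goal refinement $\{A_1\ellipses A_n\} \prec g$ from [$\Supp$-3]. The only difference is that you spell out the modus ponens step and the goal-as-proposition identification more explicitly, which the paper leaves implicit.
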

\begin{proof}
    {Assume $\Supp(A)$, let ${\tt Rt}(A)=g$, and proceed by induction on $A$. It cannot be the case that $A$ is an undeveloped goal, since there is no inference rule \weakChange{for $A = {\tt Und}(g)$}{from which $\Supp({{\tt Und}(g)})$ can be derived}. If $A = {\tt Evd}(g,e)$, we can conclude $g$ from the evidence $e$, per [$\Supp$-1] and [$\Supp$-2]. Otherwise, $A = {\tt Decomp}(g,st,\{A_1\ellipses A_n\})$ such that for each $A_i$, we have $\Supp(A_i)$. By the inductive hypothesis, we can conclude \weakChange{$\forall i \leq n,g_i$, where $g_i = {\tt Rt}(A_i)$}{that each goal $g_i= {\tt Rt}(A_i)$ holds}. By $[\Supp$-3], we have $\{A_1\ellipses  A_n\} \prec g$, which thus allows us to conclude $g$.}
\end{proof}

\subsection{Proofs for Section 4}
\label{app:proofs:sec4}
\subsubsection*{Proof of Thm.~\ref{thm:supplift}}
\begin{theorem*}
    If {\rm $\pl{A}$} is well-formed, \weakNew{then} $\Supp^\uparrow(${\rm $\pl{A}$}) \weakChange{$\equiv$}{if and only if}  $\;\mathsf{Inv}(\Supp,$ {\rm $\pl{A}$}$,\phi$), where $\phi = \token{pc}(${\rm \pl{A}}$)$.
\end{theorem*}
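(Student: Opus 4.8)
The plan is to prove the biconditional by structural induction on the variational assurance case $\pl{A}$, establishing $\Supp^\uparrow(\pl{A}) \iff \mathsf{Inv}(\Supp,\pl{A},\phi)$ (with $\phi = \token{pc}(\pl{A})$) at each node and invoking the inductive hypothesis on immediate subtrees. Throughout I would unfold the derivation operator of Def.~\ref{def:varACDerivation} together with the goal derivation of Def.~\ref{def:vgoal}, so that the right-hand side $\mathsf{Inv}(\Supp,\pl{A},\phi)$ reduces to a condition quantified over the configurations $\conf \in \llbracket \Phi \land \phi \rrbracket$, which can then be matched against the inference rules for $\Supp^\uparrow$.

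For the base cases I would first dispose of ${\tt Und}(\pl{g})$: neither $\Supp$ nor $\Supp^\uparrow$ has an inference rule supporting an undeveloped goal, so (assuming a satisfiable presence condition) both sides are false and the biconditional holds vacuously. For ${\tt Evd}(\pl{g},e)$ with a propositional goal $\pl{g} = \langle p,\phi\rangle$, both sides reduce to the configuration-independent condition $e \vdash p$, since $\langle p,\phi\rangle|_\conf = p$ and rules [$\Supp^\uparrow$-1] and [$\Supp$-1] agree. The substantive base case is the predicative goal $\pl{g} = \langle \pl{M},P,\phi\rangle$: rule [$\Supp^\uparrow$-2] asks for $e \vdash \Inv{P}{\pl{M}}{\phi}$, whereas the product side asks, via [$\Supp$-2] and $\langle \pl{M},P,\phi\rangle|_\conf = \langle \pl{M}|_\conf,P\rangle$, for $e \vdash P(\pl{M}|_\conf)$ at every $\conf \vDash \phi$. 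Matching these rests on reading adequacy of $e$ for the invariant proposition $\Inv{P}{\pl{M}}{\phi}$ as adequacy for each of its instances $P(\pl{M}|_\conf)$; I would isolate this as the single semantic assumption on the adequacy relation $\vdash$ (that it distributes over the universal quantifier defining $\mathsf{Inv}$), and I expect it to be the main conceptual subtlety of the proof.

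For the inductive step, $\pl{A} = {\tt Decomp}(\pl{g},st,\mathcal{A})$ with $\mathcal{A} = \{\pl{A}_1 \ellipses \pl{A}_n\}$ and $\pl{g} = \langle \pl{M},P,\phi\rangle$, both directions hinge on unfolding the premises of [$\Supp$-3] at each $\conf \vDash \phi$ and matching them against those of [$\Supp^\uparrow$-3]. The structural fact I would establish first is that for every $\conf \vDash \phi$ the derived case ${\tt Decomp}(\pl{g},st,\mathcal{A})|_\conf$ is a genuine decomposition rather than an undeveloped goal: because $\prec$ requires a nonempty premise set (the $n \geq 1$ guard of Def.~\ref{def:goalRefinement}), whenever the refinement $\mathcal{A}|_\conf \prec \pl{g}|_\conf$ is to hold at least one child must be present under $\conf$. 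Given this, the invariance premise $\Inv{\prec}{\langle \mathcal{A}, \pl{g}\rangle}{\phi}$ of [$\Supp^\uparrow$-3] corresponds exactly to the per-configuration refinements obtained by inverting [$\Supp$-3], while the premise $\forall i,\ \Supp^\uparrow(\pl{A}_i)$ corresponds, through the inductive hypotheses $\Supp^\uparrow(\pl{A}_i) \iff \mathsf{Inv}(\Supp,\pl{A}_i,\phi_i)$, to support of each present child at each relevant configuration. The delicate link is the change of presence condition: the inductive hypothesis for $\pl{A}_i$ quantifies over $\conf' \vDash \phi_i$, whereas the parent invariant quantifies over $\conf \vDash \phi$. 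Well-formedness (Def.~\ref{def:wellformed}), namely $\phi_i \implies \phi$, is precisely what bridges these ranges: in the forward direction it ensures every $\conf \vDash \phi$ under which $\pl{A}_i$ is present also satisfies $\phi_i$, and in the converse direction it ensures every $\conf' \vDash \phi_i$ also satisfies $\phi$, so the parent's product-level support premise can be invoked at $\conf'$ and inverted to recover $\Supp(\pl{A}_i|_{\conf'})$.

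I anticipate two places needing care. The first, as noted, is the evidence/invariant adequacy correspondence in the predicative base case, which I would make explicit as a hypothesis rather than derive. The second is the interplay of well-formedness and derivation in the inductive step: without $\phi_i \implies \phi$ one cannot align the presence condition at which the inductive hypothesis is applied with the configuration range over which the parent invariant quantifies, and the non-emptiness guard on $\prec$ is exactly what prevents a supported parent from deriving to an undeveloped goal. Everything else amounts to routine unfolding of the inference rules and the derivation operator, so once these two points are handled the induction closes cleanly.
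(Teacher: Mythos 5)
Your proposal is correct and follows essentially the same route as the paper's proof: structural induction over $\pl{A}$, matching the rules of $\Supp^\uparrow$ against per-configuration applications of $\Supp$ via the derivation operators, with well-formedness ($\phi_i \implies \phi$) invoked exactly where the paper uses it, namely to transfer the parent's invariant $\mathsf{Inv}(\Supp,\pl{A},\phi)$ down to each child's presence condition $\phi_i$ in the backward direction. The two subtleties you flag -- the distributivity of evidence adequacy $\vdash$ over the universal quantifier defining $\mathsf{Inv}$ in the predicative base case, and the undeveloped-goal edge case with an unsatisfiable presence condition -- are genuine points that the paper's proof passes over silently, so isolating them as explicit assumptions refines rather than departs from the paper's argument.
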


\begin{proof}
    {Suppose that $\pl{A} \in \Var{\AC}$ is well-formed. Let $\phi = {\tt pc}(\pl{A})$.}

    \textbf{Forward direction}. {Assume $\Supp^\uparrow(\pl{A})$, and proceed by induction on $A$. 
    
    If $\pl{A} = {\tt Evd}(g,e)$ with goal $g$ being propositional, then this proposition does not vary across configurations, so (by [$\Supp^\uparrow$-1]) $e$ is adequate for $P$ across all configurations satisfying $\phi$. If $g$ is a predicative goal, then by [$\Supp^\uparrow$-2], we have $e$ as adequate evidence of $\pl{g}|_\conf$ for each $\conf \in \llbracket \phi \rrbracket$. \weakNew{In either case, we have } ${{\Supp}}(\pl{A}|_\conf)$ holds for each $\conf \in \llbracket \phi \rrbracket$. 
    
    Otherwise, let $A = {\tt Decomp}(\pl{g},st,\mathcal{A})$. \weakNew{By $[\Supp^\uparrow$-3], we have 
    \begin{enumerate}
        \item $\Supp^\uparrow(\pl{A}_i)$ for each $\pl{A}_i \in \mathcal{A}$, and
        \item $\mathcal{A}|_\conf \prec \pl{g}|_\conf$ for each $\conf \in \llbracket \phi\rrbracket$. 
    \end{enumerate} Fix $\conf \in \llbracket \phi \rrbracket$, for which we need to show $\Supp(\pl{A}|_\conf)$.} By the inductive hypothesis, for each $\pl{A}_i$ with $\phi_i = {\tt Rt}(\pl{A}_i)$, \weakNew{from (1)} we have $\Inv{\Supp}{\pl{A}_i}{\phi_i}$. Per $[\Supp$-3], to show $\Supp(\pl{A}|_\conf)$, we need to show $\Supp(\pl{A}_i|_\conf)$ for each $\pl{A}_i|_\conf \in \mathcal{A}|_\conf$, and show $\mathcal{A}|_\conf \prec \pl{g}|_\conf$. The latter is given by (2). The former comes from the inductive hypothesis.}
    
    \textbf{Backwards direction}. {Assume $\Inv{\pl{A}}{\Supp}{\phi}$, and proceed by induction on $\pl{A}$. The base cases, for both propositional and predicative goals, are identical to the forward {\ntn{case}}. In the inductive case $\pl{A} = {\tt Decomp}(\pl{g},st,\mathcal{A})$, we need to \weakNew{apply [$\Supp^\uparrow$-3], and therefore need to} show  
    \begin{enumerate}
        \item $\forall \pl{A}_i \in \mathcal{A}, \Supp^\uparrow(\pl{A})$, and 
        \item $\Inv{\prec}{\langle\mathcal{A},\pl{g}\rangle}{\phi}$.
    \end{enumerate}
    By the inductive hypothesis, to show (1) for any given $\pl{A}_i$, it suffices to show that $\Inv{\Supp}{\pl{A}_i}{\phi_i}$ holds, where $\phi_i = {\tt Rt}(\pl{A}_i)$. Fix such an $\pl{A}_i$ and $\phi_i$. Fix $\conf \in \llbracket \phi_i \rrbracket$; since $\pl{A}$ is well-formed, we also have $\conf \in \llbracket \typofix{\conf}{\phi}\rrbracket$. Since we know $\Inv{\Supp}{\pl{A}}{\phi}$, we know in particular $\Supp(\pl{A}|_\conf)$, which in turn implies $\Supp(\pl{A}_i|_\conf)$ (per [$\Supp$-3]). By generalization of $\conf$ and \weakNew{$\pl{A}_i$}, and the \typofix{IH}{inductive hypothesis}, we have (1). To obtain (2), let $\conf \in \llbracket \conf \rrbracket$ be arbitrary, consider the refinement $\mathcal{A}|_\conf \prec \pl{g}|_\conf$, and see that this is provided by our assumption $\Supp(\pl{A}|_\conf)$ and $[\Supp$-3].}
\end{proof}

\subsubsection*{Proof of Prop.~\ref{prop:enumLiftCorrect}}
\begin{proposition*} The instantiation function of the variational domain decomposition (Example~\ref{ex:vdomdecomp}) correctly lifts the instantiation function defined in Example~\ref{ex:domainDecomp2}.
\end{proposition*}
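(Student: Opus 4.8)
The plan is to unfold the correctness criterion for lifted instantiation functions given in Def.~\ref{def:liftTemplate} and verify it by direct computation. Recall that, fixing the presence condition $\phi$ of the parent goal $\langle \pl{S}, \mathsf{Forall}_P, \phi\rangle$ being decomposed, the requirement is that for every variational family $\pl{x} = \{\pl{X}_1\ellipses \pl{X}_n\} \in \Var{\mathbb{F}(U)}$ and every configuration $\conf \in \llbracket \phi \rrbracket$, we have $I^\uparrow(\pl{x})|_\conf = I(\pl{x}|_\conf)$. Since both sides are sets of product-level goals, I would prove equality by showing the two derivations produce the same set of pairs $\langle \pl{X}_i|_\conf, \mathsf{Forall}_P\rangle$.

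First I would compute the left-hand side. Applying the lifted instantiation from Example~\ref{ex:vdomdecomp} gives $I^\uparrow(\{\pl{X}_1\ellipses \pl{X}_n\}) = \{\pl{g}_1\ellipses \pl{g}_n\}$, where each $\pl{g}_i = \langle \pl{X}_i, \mathsf{Forall}_P, \phi\rangle$. Deriving this set of variational goals under $\conf$ means keeping only those $\pl{g}_i$ with $\conf \vDash \token{pc}(\pl{g}_i)$ and mapping the goal-derivation operator of Def.~\ref{def:vgoal} across the survivors. The crucial observation here is that \emph{every} subgoal inherits the parent's presence condition $\phi$, and since $\conf \in \llbracket \phi \rrbracket$ by hypothesis, no subgoal is pruned. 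Thus $\{\pl{g}_1\ellipses \pl{g}_n\}|_\conf = \{\pl{g}_1|_\conf\ellipses \pl{g}_n|_\conf\}$, and by the predicative case of Def.~\ref{def:vgoal} each $\pl{g}_i|_\conf = \langle \pl{X}_i|_\conf, \mathsf{Forall}_P\rangle$.

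Next I would compute the right-hand side. Using the derivation operator on variational families specified in Example~\ref{ex:vdomdecomp}, namely $\{\pl{X}_1\ellipses \pl{X}_n\}|_\conf = \{\pl{X}_1|_\conf\ellipses \pl{X}_n|_\conf\}$ (each $\pl{X}_i|_\conf$ being the ordinary set $\{x \mid (x,\psi)\in \pl{X}_i \land \conf \vDash \psi\}$), I then apply the product-based instantiation function $I$ of Example~\ref{ex:domainDecomp2} to this ordinary finite family, obtaining $\{\langle \pl{X}_1|_\conf, \mathsf{Forall}_P\rangle\ellipses \langle \pl{X}_n|_\conf, \mathsf{Forall}_P\rangle\}$. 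This is syntactically the same set obtained on the left, so the two sides coincide and the lift is correct.

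This argument is essentially a bookkeeping exercise, so there is no deep obstacle; the only point requiring care is keeping the three distinct derivation operators straight — the derivation of a set of variational \emph{goals} (which prunes by presence condition), the derivation of a variational \emph{family} of sets, and the derivation of an annotated set at the leaf level — and confirming that they commute with instantiation precisely because the uniform presence condition $\phi$ guarantees no goal is dropped when $\conf \vDash \phi$. I would also note in passing that, because both results are genuine sets, any coincidences among the derived subsets $\pl{X}_i|_\conf$ collapse identically on both sides, so set semantics introduce no discrepancy.
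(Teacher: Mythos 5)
Your proof is correct and follows essentially the same route as the paper's: fix $\conf \in \llbracket \phi \rrbracket$, observe that all subgoals carry the parent's presence condition $\phi$ so no pruning occurs, and verify element-wise that $\pl{g}_i|_\conf = \langle \pl{X}_i|_\conf, \mathsf{Forall}_P\rangle$ matches what $I$ produces on the derived family. Your explicit unfolding of the right-hand side and the remark on set semantics are just slightly more detailed bookkeeping than the paper's version, not a different argument.
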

\begin{proof}
    \nc{Let ${{\mathcal{F}}} = \{\pl{X}_1,...,\pl{X}_n\}$ be the family of variational sets used to produce subgoals $\{\pl{g}_1,...\pl{g}|_n\}$ through $I^\uparrow$ \weakChange{with presence condition $\phi$}{to support a parent with presence condition $\phi$}. Given any $\conf \in \llbracket \phi \rrbracket$, we need to show that $\{\pl{g}_1|_\conf,...\pl{g}_n|_\conf\} = \{g_1,...,g_n\} = I(\{\pl{X}_1|_\conf,...,\pl{X}_n|_\conf\})$. Note that there is no structural variability in the argument, as each goal is over the same presence condition $\phi$. So fix some $\pl{g}_i$, and see that $\pl{g}_i|_\conf = \langle \pl{X}_i, {\ntn{\mathsf{Forall}_P}},\phi\rangle|_\conf = \langle \pl{X}_i|_\conf,{\ntn{\mathsf{Forall}_P}}\rangle = g_i$.}
\end{proof}

\subsection{Proofs for Section 5}
\label{app:proofs:sec5}

\subsubsection*{Proof of Lemma~\ref{lem:templateRegCorrect}}
\begin{lemma*}
    Let ${\mathtt{Decomp}}(g,st,\mathcal{A})$ be a decomposition of parent goal $g = \langle M,P\rangle$, and let $M' \in \modlang$ be the evolved version of model $M \in \modlang$. Let $\template{\modlang}{D} = \langle P,I,C\rangle$ be the (valid) template used to produce this decomposition via instantiation with $x \in D$. Let $v_{st}$ be the regression value returned by $\mathsf{TemplateRegression}(M',st,\mathcal{A},\template{\modlang}{D},x)$. Then:
    \begin{enumerate}
        \item $v_{st} = \cmark$ iff the decomposition remains sound after updating $g$ and $\mathcal{A}$, and (if applicable) re-instantiating $\template{\modlang}{D}$ for $M^\prime$,
        \item $v_{st} = \xmark$ iff re-instantiating $\template{\modlang}{D}$ for $M^\prime$ produces at least one new subgoal not provided in $\mathcal{A}$ which is required to make the decomposition sound, and 
        \item $v_{st} = ~?$ iff the template could not be re-instantiated for $M'$.
    \end{enumerate}

\end{lemma*}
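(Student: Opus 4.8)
The plan is to prove the three characterizations simultaneously by a case analysis on the control flow of Alg.~\ref{alg:templateRegression}. The procedure has exactly three exit points, returning $?$ (Line 8) or $v_{st} \in \{\cmark,\xmark\}$ (Line 12, with the value decided on Line 11), and these returns are pairwise distinct and exhaust every execution. Symmetrically, the three right-hand conditions of the lemma are mutually exclusive and jointly exhaustive: either no valid re-instantiation input exists (condition 3), or one does, in which case the re-instantiation either forces a new, soundness-critical subgoal (condition 2) or it does not (condition 1). Because both the outcomes and the conditions partition the space of executions, it suffices to establish the three \emph{forward} implications ``if the algorithm returns $v$, then the corresponding condition holds''; all six directions of the stated ``iff''s then follow by the matching-partition argument.

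The $?$ case is essentially immediate from the structure of the algorithm. The value $?$ is returned only in the \textbf{Else} branch (Lines 7--8), which is entered precisely when the guard on Line 3 fails, i.e.\ when no $x'$ with $C(x',M')$ can be produced. By Def.~\ref{def:template} this is exactly the statement that the template cannot be (soundly) re-instantiated for $M'$, giving condition (3). For the $\cmark$ case, the algorithm has found some $x'$ satisfying $C(x',M')$ and has computed updated subgoals $\{g_1'\ellipses g_n'\}$ --- either by $\mathsf{Replace}$ (evidence-producing analytic strategies) or by $I(x')$ (otherwise) --- with $\mathcal{A}_{\tt New} = \emptyset$. By the specification of $\mathsf{Match}$, $\mathcal{A}_{\tt New} = \emptyset$ means every updated subgoal $g_i'$ is matched by some child in $\mathcal{A}_{\tt Reuse}$, so the (updated) root goals of the reused children contain all of $\{g_1'\ellipses g_n'\}$. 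In the re-instantiation branch, validity of $\template{\modlang}{D}$ (Def.~\ref{def:template}) together with $C(x',M')$ yields $I(x') \prec \langle M', P\rangle$, and since the reused children's goals are exactly $I(x') = \{g_1'\ellipses g_n'\}$, Def.~\ref{def:goalRefinement} gives that the updated children refine the updated parent goal $\langle M',P\rangle$. In the $\mathsf{Replace}$ branch the updated subgoals are obtained by substituting $x'$ for $x$ in the previously sound decomposition, so soundness is preserved directly. Either way the decomposition remains sound after the update, establishing condition (1).

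The $\xmark$ case is the main obstacle, and it is where I expect the argument to require the most care. Here a valid $x'$ was found but $\mathcal{A}_{\tt New} \neq \emptyset$, so $\mathsf{Match}$ returned at least one updated subgoal $g_k'$ unmatched by any existing child. I would argue that $g_k'$ is genuinely \emph{required}: the only soundness certificate available for a template-generated decomposition is the template's validity, which certifies $I(x') \prec \langle M', P\rangle$ for the \emph{full} instantiated family, whereas the reused children supply only the proper subfamily consisting of the matched subgoals. The delicate point is that validity gives only the \emph{sufficiency} of $I(x')$, not the necessity of each individual subgoal; so, strictly speaking, $\mathcal{A}_{\tt New} \neq \emptyset$ certifies that the decomposition is \emph{incomplete} relative to the template rather than that it is semantically unsound.

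I would resolve this by reading ``the decomposition remains sound'' in conditions (1)--(2) as ``the reused children refine the updated parent goal \emph{via the template}'', which is the only notion of soundness the procedure can mechanically certify. Under this reading, $\mathcal{A}_{\tt New} \neq \emptyset$ is exactly the failure of that certificate, and the unmatched $g_k'$ is exactly a subgoal required to restore it --- precisely condition (2). This interpretive alignment between the syntactic check $\mathcal{A}_{\tt New} \overset{?}{=} \emptyset$ and the template-relative notion of soundness, rather than any hard calculation, is the crux of the proof; once it is in place, the three forward implications close all six directions of the stated ``iff''s as outlined above.
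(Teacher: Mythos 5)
Your proof is correct and takes essentially the same route as the paper's: a case analysis on the algorithm's three exit points, with template validity plus the correctness criterion $C(x',M')$ giving soundness in the $\cmark$ case (covering both the $\mathsf{Replace}$ and re-instantiation branches), the $\mathsf{Match}$ operator witnessing the missing subgoal in the $\xmark$ case, and failure to find any $x'$ giving the $?$ case. Your template-relative reading of ``sound'' in condition (2) is exactly the reading the paper adopts implicitly when its proof states that \emph{proving} the soundness of the strategy requires the additional subgoal identified by $\mathsf{Match}$, so the interpretive move you flag as the crux is consistent with, rather than a departure from, the paper's argument.
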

\begin{proof}
    \weakNew{Let $g, M', st, \mathcal{A}, \template{\modlang}{D}$ and $x$ be as described in the statement of the lemma and let $v_{st}$ be the regression value returned by Alg.~\ref{alg:templateRegression}. Proceed by case analysis:}
    \begin{itemize}
        \item \weakNew{If no $x'$ was available to satisfy the correctness criterion for the template, we returned ? as needed on Line 8, and this is the only manner by which we can obtain  $v_{st} = ~?$.}
        \item \weakNew{Suppose $x'$ was available which satisfied $C$ for $M'$. There are two sub-cases.} 
        \begin{itemize}
            \item \weakNew{If $st$ is analytic with respect to $f : X \to Y$ and evidence-producing, we merely replace each occurrence of $x$ with $x'$ in the subgoals $\{g_X, g_Y\}$ of the template. Since analytic templates always produce the same set of subgoals (modulo the updated input to the analysis), no additional subgoals will be required. Since the template is valid and $C(x',M')$ holds, the strategy remains sound. Thus, the $\mathsf{Match}$ operator on Line 9 will return $\mathcal{A}_{\tt New} = \emptyset$ and $v_{st}$ will be assigned $\cmark$ on Line 11 as needed.}
            \item \weakNew{Alternatively, suppose $st$ is either analytic and argument producing or non-analytic. If it is analytic and argument-producing, the argument for the correctness of $v_{st}$ is the same as the previous case. If $st$ is non-analytic, then the re-instantiation on Line 5 may or may not introduce some new subgoals relative to $\mathcal{A}$. If it does not, then we know the strategy remains sound since $\template{\modlang}{D}$ is valid and we have $C(x',M')$. Otherwise, we know that proving the soundness of the strategy requires at least one additional subgoal not present in $\mathcal{A}$, as identified by $\mathsf{Match}$ on Line 9. Hence we will have $\mathcal{A}_{\tt New} \neq \emptyset$ and $v_{st}$ will be assigned $\xmark$ on Line 11 as needed.}
        \end{itemize} 
    \end{itemize}
\end{proof}
\subsubsection*{Proof of Thm.~\ref{thm:instAlgCorrect}}
\begin{theorem*}
   Let $A$ be an $AC$ such that $\Supp(A)$, and let $A_R$ be the reusable core extracted from $A$ following the forward pass of the analysis (Alg.~\ref{alg:instantiation}) given evolution set $\Delta$. Let $v$ be the regression value returned by $\mathsf{BackwardPass}(\Delta,A_R)$ (Alg.~\ref{alg:product_regression}). Then if $v = \cmark$, we know that $\Supp(A_R)$ holds;  if $v = \xmark$, we know that $\Supp(A_R)$ does not hold; and if $v = ~?$, we can make no determination about whether $\Supp(A_R)$ holds. 
   
\end{theorem*}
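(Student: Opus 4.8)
The plan is to prove the statement by structural induction on the reusable core $A_R$, establishing simultaneously the three implications relating the value $v$ returned by $\mathsf{BackwardPass}(\Delta, A_R)$ to the predicate $\Supp(A_R)$. The hypothesis $\Supp(A)$ would enter in two ways that I would record up front. First, since no rule in Def.~\ref{def:supp} derives $\Supp({\tt Und}(g))$, the original $A$ contains no undeveloped goals; hence any undeveloped goal appearing in $A_R$ must have arisen during the forward pass (Alg.~\ref{alg:instantiation}) from a decomposition all of whose children were marked obsolete. Second, the original refinements and evidence-adequacy facts recorded in $A$ are what make the $\mathsf{Match}$ step of $\mathsf{TemplateRegression}$ (Alg.~\ref{alg:templateRegression}) meaningful, so that the strategy annotations $v_{st}$ produced by the forward pass are exactly the values characterized by Lemma~\ref{lem:templateRegCorrect}.

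For the base cases: if $A_R = {\tt Und}(g')$, the backward pass returns $\xmark$ (Alg.~\ref{alg:product_regression}, Lines 4--5), and since $\Supp({\tt Und}(g'))$ is underivable we have $\neg\Supp(A_R)$, matching the claim. If $A_R = {\tt Evd}(g', e)$, the pass returns $\mathsf{EvdRegression}(\Delta, g', e)$; here I would invoke the specification of a regression analysis (Def.~\ref{def:regression}) together with the case split in Alg.~\ref{alg:productEvdRegression}: a propositional or unmodified goal yields $\cmark$ and is indeed still supported by $e$ (its asserted proposition is unchanged), whereas for a genuinely evolved predicative goal the returned value is precisely that of $R_P$ (or $\textbf{?}$ when no such analysis is available), whose soundness gives $\cmark \Rightarrow \Supp$, $\xmark \Rightarrow \neg\Supp$, and no claim on $\textbf{?}$.

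The inductive step treats $A_R = {\tt Decomp}(g', st, \{A_1', \ldots, A_n'\})$, where the children are exactly the reused (non-obsolete) subtrees surviving the forward pass. I would first read off $v_{st} = \mathsf{Annotation}(st)$. If $v_{st} = \textbf{?}$, the pass returns $\textbf{?}$ immediately; this is sound because no determination is claimed, and it is consistent with the forward pass having already marked every descendant as Recheck. Otherwise, by Lemma~\ref{lem:templateRegCorrect}, $v_{st} = \cmark$ holds exactly when the pruned decomposition is a sound refinement $\{A_1', \ldots, A_n'\} \prec g'$ in the sense of Def.~\ref{def:goalRefinement}, while $v_{st} = \xmark$ holds exactly when re-instantiation exposes a required missing subgoal, i.e.\ the refinement fails. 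The pass forms $V = \{v_{st}, v_1, \ldots, v_n\}$ with each $v_i$ the recursive value for $A_i'$, and returns $v = \mathsf{Min}(V)$ under $\xmark < \textbf{?} < \cmark$. Reading each $v_i$ as $\Supp(A_i')$ via the induction hypothesis, I would match the three-valued $\mathsf{Min}$ against rule $[\Supp\text{-}3]$: $v = \cmark$ iff every member of $V$ is $\cmark$, i.e.\ the refinement holds and every child is supported, which is exactly $\Supp(A_R)$; $v = \xmark$ iff some member is $\xmark$, i.e.\ either the refinement fails or some child is unsupported, each of which defeats $[\Supp\text{-}3]$ and yields $\neg\Supp(A_R)$; and $v = \textbf{?}$ in the remaining case, where no claim is made.

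The main obstacle I anticipate is not the $\mathsf{Min}$ bookkeeping but the careful alignment of the two passes: I must argue that the strategy annotation $v_{st}$, computed during the forward pass over the \emph{full} AC, still faithfully records the refinement status of the decomposition as it appears in $A_R$ after obsolete subtrees have been pruned. This requires verifying that $\mathsf{Match}$ partitions the old children into reuse/obsolete/new so that the surviving children $\{A_1', \ldots, A_n'\}$ are precisely the premise set whose sufficiency Lemma~\ref{lem:templateRegCorrect} evaluates, and that the early Recheck-propagation triggered by $v_{st} = \textbf{?}$ leaves no partially analyzed subtree capable of contradicting the returned $\textbf{?}$. Once this correspondence is pinned down, the induction closes by combining Lemma~\ref{lem:templateRegCorrect}, the correctness of evidence regression, and the match between the $\mathsf{Min}$ ordering and the conjunctive structure of $[\Supp\text{-}3]$.
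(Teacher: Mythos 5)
Your proposal is correct and follows essentially the same route as the paper's own proof: induction on the structure of $A_R$, with base cases for ${\tt Und}(g')$ (forced $\xmark$, justified by the non-derivability of $\Supp$ for undeveloped goals) and ${\tt Evd}(g',e)$ (case split on Alg.~\ref{alg:productEvdRegression} and the specification of $R_P$), and an inductive case that dispatches $v_{st}=\textbf{?}$ early and otherwise matches $\mathsf{Min}$ under $\xmark < \textbf{?} < \cmark$ against the conjunctive premises of rule $[\Supp\text{-}3]$ via Lemma~\ref{lem:templateRegCorrect} and the induction hypothesis. The forward/backward alignment issue you flag is real but is treated in the paper exactly as you propose to treat it, by appealing to Lemma~\ref{lem:templateRegCorrect} as characterizing the annotation $v_{st}$ relative to the surviving (non-obsolete) children.
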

In the following proof, given any ${A} \in {\AC}$, let $Height(A) \in \mathbb{Z}^+$ be the number of goals in the longest branch from the root goal of ${A}$ to any leaf. Formally, $Height({\tt 
 Evd}(g,e)) = Height({\tt Und}(g)) = 1$ and $Height({\tt Decomp}(g,st,\mathcal{A})) = 1 + \max \{Height({A}_i) \mid A_i \in \mathcal{A}\}$.
\begin{proof} 
\weakNew{Let $A$ and $A_r$ be as given in the statement of the theorem, and let $v$ be the regression value returned by Alg.~\ref{alg:product_regression}. Proceed by strong induction on $n = Height(A_R)$:}
\begin{itemize}
    \item \weakNew{If $n=1$, we must have either $A_R = {\tt Und}(g')$ or $A_R = {\tt Evd}(g',e)$. Suppose $A_R$ is an undeveloped goal. By construction of $A_R$, and since $\Supp(A)$ holds, we know that this undeveloped goal is is the parent of a strategy all of whose children were identified as obsolete, in which case $g'$ is known to be unsupported. Thus, Lines 4-5 annotate $g'$ as ``Revise'' and return $\xmark$ as needed.

    If instead $A_R$ is a goal $g'$ supported by evidence $e$, we invoke Alg.~\ref{alg:productEvdRegression} on $g'$ and $e$. We can proceed by case analysis on the regression value returned by the evidence regression procedure.}
    \begin{itemize}
        \item \weakNew{The evidence regression procedure returns $\cmark$ only in one of three cases: either (i) $g'$ is propositional, in which case $e$ remains adequate by assumption; or (ii)  goal $g'$ is predicative and was not affected by the evolution;  or (iii) $g'$ was modified, but a regression analysis was employed on Line 5 which returned $\cmark$ for $M'$. Since we know that $e$ was adequate for the original goal $g$, this result allows us to conclude $P(M')$ from $e$, and hence $\Supp({\tt Evd}(g',e))$. Thus, returning $\cmark$ on Line 7 of Alg.~\ref{alg:product_regression} is correct.}
        \item \weakNew{If the evidence regression procedure returns $\xmark$, then $g'$ must be predicative and a regression analysis $R_P$ must have been applied to $M'$ which returned $\xmark$. Thus, we know (by the definition of $R_P$) that $P(M')$ cannot be inferred, so $g'$ is no longer supported. Thus, returning $\xmark$ on Line 7 of Alg.~\ref{alg:product_regression} is correct. 
        \item Otherwise, if the evidence regression procedure returns $?$, then either no regression analysis $R_P$ was available, or one was applied and returned $?$. In either case, returning ? on Line 6 of Alg.~\ref{alg:product_regression} is correct.}
    \end{itemize}   
    \item \weakNew{In the inductive case, assume $n > 1$, thus $A = {\tt Decomp}(g',st,\mathcal{A})$ with goal $g'$, strategy $st$, and  children $\mathcal{A}$ such that for all $A^\prime_i \in \mathcal{A}$, we have $Height(A_i^\prime) < n$.}

    \weakNew{Let $v_{st}$ be the annotation of the strategy produced during the forward pass. If $v_{st} = ~?$, then this strategy could not be analyzed for regression (either as a result of Lines 16-17 of Alg.~\ref{alg:instantiation} or by Lines 11-12 of Alg. ~\ref{alg:instantiation} and Lemma~\ref{lem:templateRegCorrect}). Moreover, all of its descendants are also annotated with ``Recheck''. Thus, we cannot determine whether parent goal $g$ is supported. Lines 11 and 12 thus annotate $g$ and return regression value $?$ as needed.}

    \weakNew{Otherwise, suppose the conditional on Line 10 fails. After Line 13, $V = \{v_{st}\}$, where $v_{st}$ correctly reflects the regression of $st$ by Lemma~\ref{lem:templateRegCorrect}. After Line 14, we add to $V$ the regression values computed for each child $A_i'$. By the inductive hypothesis, each of these values $v_i$ correctly reflects the regression of support for each $A_i'$. We then take $v = \mathsf{Min}(\{v_1'\ellipses v_n', v_{st}\})$. Proceed by case analysis on $v$.}

    \begin{itemize}
        \item \weakNew{If $v = \cmark$, then we must have $v_{st} = \cmark$ and each $v_i = \cmark$, by the ordering $\xmark < ~? <\cmark$. That is, the strategy remains sound, and each child AC $A_i'$ remains supported, so we have $\Supp({\tt Decomp}(g',st,\mathcal{A}))$ by definition of $\Supp$. Thus we correctly return $v = \cmark$ on Line 17.}

        \item \weakNew{If $v = \xmark$, then either $v_{st} = \xmark$ or there is some $A_i' \in \mathcal{A}$ such that $v_i = \xmark$. In the first case, the strategy $st$ is missing some subgoals necessary to prove the goal refinement, so we do not have $\Supp({\tt Decomp}(g',st,\mathcal{A}))$. In the second case, there is some child AC which is not supported, so we also do not have $\Supp({\tt Decomp}(g',st,\mathcal{A}))$. In either case, we correctly return $\xmark$ on Line 17.}

        \item \weakNew{If $v = ~?$, then there must be some child $A_i' \in \mathcal{A}$ such that $v_i = ~?$, but no child $A_j'$ such that $v_j = \xmark$, and neither do we have $v_{st} = \xmark$. Thus, there is at least one child AC for which support cannot be determined, but we have no way to conclude that  $\Supp({\tt Decomp}(g',st,\mathcal{A}))$ does certainly \emph{not} hold. Thus we cannot determine or falsify $\Supp({\tt Decomp}(g',st,\mathcal{A}))$, so we correctly return ? on Line 17.}
    \end{itemize}
\end{itemize}

\end{proof}

\subsection{Proofs for Section 6}
\label{app:proofs:sec6}

\subsubsection*{Proof of Lemma~\ref{lem:templateRegLift}}

\begin{lemma*}

    Let $\pl{g} = \langle \pl{M}, P, \phi \rangle $ be a variational goal decomposed using the valid lifted template $\template{\modlang}{D}^\uparrow$, producing strategy $st$ and children $\mathcal{A}$ via instantiation with $\pl{x} \in \Var{D}$. Let $\pl{M}' \in \Var{\modlang}$ be the updated version of $\pl{M} \in \Var{\modlang}$ following a system evolution. Then for every $\conf \in \llbracket \phi \rrbracket$, we have 
    $$\mathsf{TemplateRegression}^\uparrow(\pl{M}',\phi,st,\mathcal{A},\template{\modlang}{D}^\uparrow,\pl{x},\phi_{\Delta})|_\conf = {\mathsf{TemplateRegression}(\pl{M}'|_\conf, st, \mathcal{A}|_\conf, \template{\modlang}{D}, \pl{x}|_\conf)}$$
    That is, Alg.~\ref{alg:templateRegressionLift} correctly lifts Alg.~\ref{alg:templateRegression}.
\end{lemma*}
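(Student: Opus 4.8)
The plan is to fix an arbitrary configuration $\conf \in \llbracket \phi \rrbracket$ and prove the stated equation by a case analysis that mirrors the branching structure shared by Alg.~\ref{alg:templateRegression} and Alg.~\ref{alg:templateRegressionLift}. The backbone of the argument is that each operation performed by the lifted procedure commutes with derivation under $\conf$, so that deriving the lifted output reproduces exactly the product-level computation on the derived inputs $\pl{M}'|_\conf$, $\mathcal{A}|_\conf$, and $\pl{x}|_\conf$. The tools I would invoke are: (i) Def.~\ref{def:inv}, which unfolds $\Inv{C}{\langle \pl{x}', \pl{M}'\rangle}{\phi}$ into the family of product-level facts $C(\pl{x}'|_\conf, \pl{M}'|_\conf)$ for all $\conf \in \llbracket \phi \rrbracket$; (ii) Def.~\ref{def:liftTemplate}, which guarantees $I^\uparrow(\pl{x}')|_\conf = I(\pl{x}'|_\conf)$; and (iii) the fact that $\mathsf{Replace}$ is a purely syntactic substitution, so $\mathsf{Replace}(\mathcal{A},\pl{x},\pl{x}')|_\conf = \mathsf{Replace}(\mathcal{A}|_\conf, \pl{x}|_\conf, \pl{x}'|_\conf)$. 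I would also need a separate claim that $\mathsf{Match}^\uparrow$ correctly lifts $\mathsf{Match}$, discussed below.

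I would begin with the branch in which a repaired input $\pl{x}'$ satisfying $\Inv{C}{\langle \pl{x}', \pl{M}'\rangle}{\phi}$ is available (Line~3). Assuming the repair oracle is itself lifted, so that the product-level synthesis invoked by Alg.~\ref{alg:templateRegression} on $\pl{M}'|_\conf$ returns $\pl{x}'|_\conf$, tool (i) shows the product procedure takes the matching success branch with witness $\pl{x}'|_\conf$. Splitting on whether $st$ is analytic and evidence-producing, tool (iii) (resp.\ tool (ii)) shows that the updated subgoals $\{\pl{g}'_1 \ellipses \pl{g}'_n\}$ derive under $\conf$ to precisely the subgoals computed by the product procedure. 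It then remains to check that the partition returned by $\mathsf{Match}^\uparrow$ derives correctly and that the final regression value agrees. Given the $\mathsf{Match}^\uparrow$-lifting claim, $\conf \vDash \phi_{\mathtt{New}}$ holds iff some variational subgoal deemed new is present under $\conf$, iff the product-level $\mathcal{A}_{\mathtt{New}}$ is nonempty; unfolding $\pl{v}_{st}$ on Line~14, its derivation under $\conf$ is $\xmark$ exactly when $\conf \vDash \phi_{\mathtt{New}}$ and $\cmark$ otherwise, matching the product value on Line~11.

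For the branch in which no valid $\pl{x}'$ exists (Lines~6--9), the lifted value $\langle \phi \wedge \neg \phi_\Delta,\ \bot,\ \phi \wedge \phi_\Delta\rangle$ derives to $\cmark$ when $\conf \nvDash \phi_\Delta$ and to $?$ when $\conf \vDash \phi_\Delta$. I would split accordingly. When $\conf \nvDash \phi_\Delta$, Def.~\ref{def:varEvoSet} gives $\pl{M}'|_\conf = \pl{M}|_\conf$, so the original input (which satisfied $\Inv{C}{\langle \pl{x},\pl{M}\rangle}{\phi}$) witnesses $C(\pl{x}|_\conf, \pl{M}'|_\conf)$; the product procedure thus takes its success branch against an unchanged model and input, produces no new subgoals, and returns $\cmark$. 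When $\conf \vDash \phi_\Delta$, the lifted failure to repair must (again by the lifted-oracle assumption) correspond to the absence of any product-level $x'$ with $C(x', \pl{M}'|_\conf)$, so the product procedure returns $?$ on Line~8. Both subcases agree with the derived lifted value; I would additionally verify that the $\mathsf{MarkAsRecheck}$ side effect on Line~8 annotates exactly the children whose derivation under $\conf$ is nonempty, so the emitted annotations likewise commute with derivation.

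The main obstacle I anticipate is twofold. First is the correctness of $\mathsf{Match}^\uparrow$: the three-copy construction (splitting each matched child by $\phi_{\mathtt{O}} = \phi \wedge \neg \phi'$, $\phi_{\mathtt{R}} = \phi \wedge \phi'$, and $\phi_{\mathtt{N}} = \phi' \wedge \neg \phi$) must be shown to derive, configuration by configuration, to the obsolete/reuse/new classification of the product-level $\mathsf{Match}$; I would isolate this as a standalone lemma proved by determining, for a fixed $\conf$, which copy of each child is present. Second, and more delicate, is reconciling the \emph{global} availability test on Line~3 with the \emph{per-configuration} test in the product procedure: a single $\pl{x}'$ must simultaneously repair every modified configuration, so if repairability of individual products did not coincide with variational repairability, the lifted procedure could report $?$ at a configuration where the product procedure reports $\cmark$ or $\xmark$. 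The equation therefore holds only under the hypothesis that the synthesis step is a genuine lift of its product counterpart (equivalently, that a variational repair exists iff every product-level repair exists and agrees configuration-wise), which I would state explicitly. Barring these two points, the remaining cases are routine commutation checks.
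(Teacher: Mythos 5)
Your proposal is correct and takes essentially the same approach as the paper's proof: a per-configuration commutation argument that case-splits along the algorithm's branches, using the lift property $I^\uparrow(\pl{x}')|_\conf = I(\pl{x}'|_\conf)$, the unfolding of $\Inv{C}{\langle \pl{x}',\pl{M}'\rangle}{\phi}$ into product-level facts, and the definitions of $\phi_\Delta$ and $\phi_{\tt New}$ to reconcile the returned regression values. The ``delicate'' hypothesis you flag --- that the repair/synthesis step must itself be a lift of its product-level counterpart --- is precisely the ``all-or-nothing'' assumption the paper states explicitly before its proof, so your insistence on making it an explicit hypothesis (along with isolating the correctness of $\mathsf{Match}^\uparrow$ as a standalone lemma) matches, and if anything sharpens, the paper's argument.
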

\weakNew{Note that, for the sake of simplicity, in the proof of the lemma, we assume that identifying a new variational input $\pl{x}'$ with which to re-instantiate the template is ``all-or-nothing''. That is, if we fail to find such an input for the entire set of configurations, we assume we do not an analogous product-based input  for \emph{any} of the individual configurations.}

\begin{proof}
    \weakNew{Let $\pl{g}$, $\template{\modlang}{D}^\uparrow$, $st$, $\mathcal{A}, \pl{x}$ and $\pl{M}^\prime$ be as described in the statement of the lemma. Let $\conf \in \llbracket \phi  \rrbracket$ and let $\pl{v}_{st}$ be the regression value returned by the lifted template regression analysis. We proceed by case analysis on $\pl{v}_{st}|_\conf$.}

    \begin{itemize}
        \item \weakNew{Suppose $\pl{v}_{st}|_\conf = \cmark$. Note that we can either have $\conf \vDash \phi_{\Delta}$ or $\conf \nvDash \phi_{\Delta}$.

        If $\conf \nvDash  \phi_{\Delta}$, then applying the product-based template regression analysis under configuration $\conf$ succeeds trivially, since we can simply re-instantiate the template with the same input $\pl{x}'|_\conf = \pl{x}|_\conf$, which will result in a completely identical set of subgoals. Thus the product-based analysis will return $\cmark$ as needed.
        
        Suppose instead that we have $\conf \vDash \phi_\Delta$, so we must have returned on Line 15, with $\conf \vDash \phi \land \neg \phi_{\tt New}$. By the definition of $\phi_{\tt New}$, there cannot be any $\pl{A}_i \in \mathcal{A}_{\tt New}$ such that $\conf \vDash {\tt pc}(\pl{A}_i)$. By the construction of $\mathcal{A}_{\tt New}$, this in turn implies that instantiating $I^\uparrow(\pl{x}')$ did not introduce any new variational subgoals present under $\conf$. Since $I^\uparrow$ is a lift of $I$, applying $I$ on $\pl{x}'|_\conf$ also would not have introduced any new subgoals. Hence, applying the product-based template regression for $\pl{M}|_\conf$ using $\pl{x}|_\conf'$ would have returned $v_{st} = \cmark$, as needed.}
        \item \weakNew{Suppose $\pl{v}_{st}|_\conf = \xmark$. Then we must have returned $\pl{v}_{st}$ on Line 15, with $\conf \vDash \phi \land \phi_{\tt New}$. By analogous reasoning to the above case, applying the product-based template regression for the strategy derived under $\conf$ \emph{would} have identified at least one missing subgoal, and thus the product-based analysis would have returned $v_{st} = \xmark$ as needed.}

        \item \weakNew{Finally, if $\pl{v}_{st} = ?$, then we must have returned on Line 9, with $\conf \vDash \phi \land \phi_\Delta$, meaning we could not produce a new input for instantiating the lifted template. By the ``all-or-nothing'' assumption noted above, we assume no analogous product-based input $\pl{x}|_\conf$ can be produced for configuration $\conf$. Thus, when applying the product-based regression analysis under configuration $\conf$, we would similarly return $v_{st} = ~?$ as needed.}
    \end{itemize}
\end{proof}

\subsubsection*{Proof of Thm.~\ref{thm:forwardsLift}}
\begin{theorem*}
    Let $\pl{A} \in \Var{\AC}$ with $\token{pc}(\pl{A}) = \phi$. Then for every $\conf \in \llbracket \phi \rrbracket$, we have 
    $$\mathsf{ForwardPass^\uparrow( \pl{A},\widehat{\Delta})|_\conf} = \mathsf{ForwardPass}(\pl{A}|_\conf,\widehat{\Delta}|_\conf)$$
    That is, Alg.~\ref{alg:instantiationLift} correctly lifts Alg.~\ref{alg:instantiation}.
\end{theorem*}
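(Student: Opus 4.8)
The plan is to proceed by structural induction on the variational assurance case $\pl{A}$, fixing an arbitrary configuration $\conf \in \llbracket \phi \rrbracket$ and showing that the result of $\mathsf{ForwardPass}^\uparrow(\pl{A},\widehat{\Delta})$, once derived under $\conf$, coincides with $\mathsf{ForwardPass}(\pl{A}|_\conf, \widehat{\Delta}|_\conf)$. Two preliminary observations make the induction go through. First, the goal-update step (Line 2 of each algorithm) commutes with derivation: if $\pl{g} = \langle \pl{M}, P, \phi\rangle$ and $(\pl{M},\pl{M}') \in \widehat{\Delta}$, then updating $\pl{M}$ to $\pl{M}'$ and deriving under $\conf$ yields $\langle \pl{M}'|_\conf, P\rangle$, which by Def.~\ref{def:varEvoSet} is exactly the goal the product-based pass obtains after replacing $\pl{M}|_\conf$ with $\pl{M}'|_\conf$ precisely when $\conf \vDash \widehat{\Delta}(\pl{M})$ (and otherwise leaves the goal unchanged, since then $\pl{M}|_\conf = \pl{M}'|_\conf$). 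Second, the derived evolution set $\widehat{\Delta}|_\conf$ is the product-level evolution set containing $(\pl{M}|_\conf, \pl{M}'|_\conf)$ exactly for those $\pl{M}$ with $\conf \vDash \widehat{\Delta}(\pl{M})$. The engine of the argument is Lemma~\ref{lem:templateRegLift}, which states that $\mathsf{TemplateRegression}^\uparrow$ correctly lifts $\mathsf{TemplateRegression}$.

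The base cases are immediate: when $\pl{A}$ is ${\tt Und}(\pl{g})$ or ${\tt Evd}(\pl{g},e)$, neither pass annotates any strategy and both return, so the claim reduces to the commutation of the goal-update step noted above. The non-template strategy case is equally direct: $\mathsf{ForwardPass}^\uparrow$ annotates $st$ with $\mathsf{RECHECK}(\phi) = \langle \bot, \bot, \phi\rangle$ and marks all descendants as ``Recheck'', while the product-based pass annotates the derived strategy with $?$ and marks its descendants accordingly; since $\langle \bot,\bot,\phi\rangle|_\conf = ?$ for every $\conf \in \llbracket\phi\rrbracket$ and ``mark-as-recheck'' commutes with derivation, the two results agree.

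The substantive case is a template-based decomposition ${\tt Decomp}(\pl{g}, st, \mathcal{A})$ with $\pl{g} = \langle \pl{M}, P, \phi\rangle$. Writing $\phi_\Delta = \widehat{\Delta}(\pl{M})$, I would split on whether $\conf \vDash \phi_\Delta$. If $\conf \nvDash \phi_\Delta$ then $\pl{M}|_\conf = \pl{M}'|_\conf$, so the product-based pass takes its ``$M \notin \Delta$'' branch, annotates $st$ with $\cmark$, and recurses on every child present under $\conf$. On the lifted side, either $\llbracket \phi \land \phi_\Delta\rrbracket = \emptyset$ (in which case $st$ is annotated $\mathsf{REUSE}(\phi)$, deriving to $\cmark$, with recursion on all children), or $\mathsf{TemplateRegression}^\uparrow$ is invoked; in the latter case Lemma~\ref{lem:templateRegLift} gives $\pl{v}_{st}|_\conf = \cmark$, and one checks from the construction of $\mathsf{Match}^\uparrow$ that, because the model and instantiation input are unchanged under $\conf$, no child present under $\conf$ lands in $\mathcal{A}_{\tt Obs}$ or $\mathcal{A}_?$, so the derived recursion again covers exactly the children present under $\conf$. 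If instead $\conf \vDash \phi_\Delta$, then $\llbracket \phi \land \phi_\Delta\rrbracket \neq \emptyset$, so both passes call their respective template-regression routines; Lemma~\ref{lem:templateRegLift} equates the derived strategy annotation $\pl{v}_{st}|_\conf$ with the product-based value $v_{st}$, and the exclusion of $\mathcal{A}_{\tt Obs}$ and $\mathcal{A}_?$ on Lines 15--16 derives, under $\conf$, to precisely the product-based behaviour (early return when $v_{st} = ?$, recursion on non-obsolete children otherwise). In every subcase the recursive calls on the surviving children are discharged by the induction hypothesis.

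The main obstacle I anticipate is reconciling the divergent control flow of the two algorithms rather than any deep mathematical content. The product-based pass short-circuits an unmodified subject directly to $\cmark$ (Line 8 of Alg.~\ref{alg:instantiation}) without ever invoking $\mathsf{TemplateRegression}$, whereas the lifted pass routes all still-relevant configurations through a single $\mathsf{TemplateRegression}^\uparrow$ call whenever $\llbracket \phi \land \phi_\Delta\rrbracket \neq \emptyset$. Establishing agreement therefore requires the careful bookkeeping sketched above: showing that the derived regression value is $\cmark$ exactly on the unmodified configurations, and that the derived obsolete/recheck child sets $\mathcal{A}_{\tt Obs}|_\conf$ and $\mathcal{A}_?|_\conf$ single out the same recursed-upon children as the product-based pass. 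Once these correspondences are in place, the statement follows by combining them with the induction hypothesis.
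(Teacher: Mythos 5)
Your proof is correct and follows essentially the same route as the paper's: induction on the structure of $\pl{A}$, with the base and non-template cases handled directly, and the template case discharged by Lemma~\ref{lem:templateRegLift} together with the inductive hypothesis applied to the surviving children. You are in fact somewhat more explicit than the paper about the control-flow mismatch when $\llbracket \phi \land \phi_\Delta \rrbracket \neq \emptyset$ but $\conf \nvDash \phi_\Delta$ (the lifted pass calls $\mathsf{TemplateRegression}^\uparrow$ while the product pass short-circuits to $\cmark$); the paper absorbs that case into Lemma~\ref{lem:templateRegLift}'s treatment of unmodified configurations and the bookkeeping on Lines 15--16, so this extra care is a welcome refinement rather than a departure.
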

In what follows, let $Height : \Var{\AC} \to \mathbb{Z}^+$ be defined exactly as for product-based ACs, i.e., $Height({\tt 
 Evd}(\pl{g},e)) = Height({\tt Und}(\pl{g})) = 1$ and $Height({\tt Decomp}(\pl{g},st,\mathcal{A})) = 1 + \max \{Height(\pl{A}_i) \mid A_i \in \mathcal{A}\}$. \weakNew{We also assume, for simplicity, that all formal templates are  lifted; that is, there is no argument in the variational AC which was created \emph{manually}, but for which there is an equivalent product-level template.}
\begin{proof}
    \weakNew{Let $\pl{A}$ be the variational AC and ${\ntn{\widehat{\Delta}}}$ the variability-aware evolution set provided as input to the forward pass. We proceed by strong induction on $n = Height(\pl{A})$. Let $\phi = {\tt pc}(\pl{A})$.

    If $n = 1$, then we know either $\pl{A} = {\tt Evd}(\pl{g},e)$ or $\pl{A} = {\tt Und}(g)$. In either case, all we do is update $\pl{g}$ if the model is updated, which we mirrors the product-based algorithm whenever we are in a configuration $\conf$ where $M|_\conf \in {\ntn{\widehat{\Delta}}}|_\conf$.}

    \weakNew{Otherwise, $n > 1$ and we  have $\pl{A} = \token{Decomp}(\pl{g},st,\mathcal{A})$ for some variational goal $\pl{g}$, strategy $st$, and children $\mathcal{A}$, each of which has height less than $n$. By the inductive hypothesis, the lift is correct for each $\pl{A}_i \in \mathcal{A}$.

    If $st$ is not template-based, then we annotate all descendants of $\pl{g}$ as ``Recheck'' (Lines 19-20). By assumption, the corresponding product-level strategy under any $\conf \in \llbracket \phi\rrbracket$ cannot be template-based, and so all descendants of $\pl{g}$ present under such a $\conf$ would have been marked as ``Recheck'' by the product-level algorithm.

    Otherwise, $st$ is template-based. Suppose that Line 9 of the lifted procedure succeeds, i.e., $\phi \wedge \phi_\Delta$ is not satisfied by any configuration. Then for any $\conf \in \llbracket \phi \rrbracket$, the corresponding execution of the product-based analysis would also have succeeded (cf. Line 7 of Alg.~\ref{alg:instantiation}), since $\pl{M}|_\conf \notin \widehat{\Delta}|_\conf$. In such a case, the product AC derived under each such $\conf$ would have $st$ annotated by $\cmark$ before proceeding recursively; thus, Line 10 correctly annotates $st$ with ${\mathsf{REUSE}}(\phi)$ and the conclusion follows form the inductive hypothesis.}

     \weakNew{Finally, if $\phi \wedge \phi_\Delta$ is satisfiable, we invoke the lifted template regression analysis. The resulting regression value $\pl{v}_{st}$ is correct ``lifted'' result with respect to the product-based template analysis (Lemma~\ref{lem:templateRegLift}). In the product setting, we do not analyze any children of the strategy if the strategy is marked as ``Recheck'', or if the children are obsolete; this is reflected by Lines 15 and 16, and the conclusion follows from the inductive hypothesis.}

\end{proof}

\subsubsection*{Proof of Lemma~\ref{lem:evdRegLiftCorrect}}
\begin{lemma*}

    Let $\widehat{\Delta}$ be a variational evolution set, $\pl{g}'$ be a variational goal with presence condition $\phi$, and $e$ be adequate evidence for $\pl{g}'$. Then for all configurations $\conf \in \llbracket \phi \rrbracket$, we have 
    $$\mathsf{EvdRegression}^\uparrow(\widehat{\Delta},\pl{g}',e)|_\conf = \mathsf{EvdRegression}(\widehat{\Delta}|_\conf, \pl{g}|_\conf, e)$$
    i.e., Alg.~\ref{alg:varEvdRegression} correctly lifts Alg.~\ref{alg:productEvdRegression}.
\end{lemma*}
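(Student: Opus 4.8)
The plan is to fix an arbitrary configuration $\conf \in \llbracket \phi \rrbracket$, where $\phi = \token{pc}(\pl{g}')$, let $\pl{v}$ denote the variability-aware regression value returned by $\mathsf{EvdRegression}^\uparrow(\widehat{\Delta},\pl{g}',e)$, and establish $\pl{v}|_\conf = \mathsf{EvdRegression}(\widehat{\Delta}|_\conf, \pl{g}'|_\conf, e)$ by a case analysis that tracks the branching of Alg.~\ref{alg:varEvdRegression}. The cases are determined by (i) whether $\pl{g}'$ is propositional or its subject model is unmodified throughout $\phi$, (ii) whether a lifted regression analysis $R_P^\uparrow$ is available, and (iii) whether $\conf$ modifies the subject model, i.e.\ whether $\conf \vDash \widehat{\Delta}(\pl{M})$. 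The single fact that ties the two procedures together is the defining property of the variability-aware evolution set (Def.~\ref{def:varEvoSet}): $\conf \vDash \widehat{\Delta}(\pl{M})$ iff $\pl{M}|_\conf \neq \pl{M}'|_\conf$, so that $\pl{M}|_\conf$ appears in the derived product evolution set $\widehat{\Delta}|_\conf$ exactly when $\conf \vDash \widehat{\Delta}(\pl{M})$.

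First I would dispatch the two ``reuse everywhere'' branches. If $\pl{g}'$ is propositional, or if $\llbracket \phi \wedge \widehat{\Delta}(\pl{M}) \rrbracket = \emptyset$, the lifted algorithm returns $\mathsf{REUSE}(\phi) = \langle \phi, \bot, \bot\rangle$, whose derivation under any $\conf \in \llbracket \phi\rrbracket$ is $\cmark$. On the product side, $\pl{g}'|_\conf$ is either propositional (derivation of a propositional variational goal, per Def.~\ref{def:vgoal}) or refers to the model $\pl{M}|_\conf$, which, because $\conf \nvDash \widehat{\Delta}(\pl{M})$, is unmodified and hence absent from $\widehat{\Delta}|_\conf$. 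In both sub-cases Alg.~\ref{alg:productEvdRegression} returns $\cmark$, matching $\pl{v}|_\conf$.

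The remaining case is a predicative goal $\pl{g}' = \langle \pl{M}, P, \phi\rangle$ whose model is modified on some configurations of $\phi$, and it splits on the availability of $R_P^\uparrow$. When no lifted analysis is available, the lifted value is $\langle \phi \wedge \neg \phi_\Delta, \bot, \phi \wedge \phi_\Delta\rangle$ with $\phi_\Delta = \widehat{\Delta}(\pl{M})$; deriving under $\conf$ yields $\cmark$ exactly when $\conf \nvDash \phi_\Delta$ and $?$ exactly when $\conf \vDash \phi_\Delta$. This aligns with Alg.~\ref{alg:productEvdRegression}, which returns $\cmark$ for unmodified configurations and, lacking $R_P$, returns $?$ for modified ones; here I would note that the proof presupposes that analysis availability is decided consistently across the two settings. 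I expect the genuine obstacle to be the branch where $R_P^\uparrow$ is available and the lifted value is $R_P^\uparrow(\pl{M},\pl{M}')$. To conclude $R_P^\uparrow(\pl{M},\pl{M}')|_\conf = R_P(\pl{M}|_\conf, \pl{M}'|_\conf)$ I would invoke the hypothesis that $R_P^\uparrow$ is a correct lift of $R_P$ in the sense of Def.~\ref{def:lift}. The delicate point is the boundary behaviour on unmodified configurations, where the product algorithm returns $\cmark$ unconditionally whereas the lifted analysis nominally computes $R_P(\pl{M}|_\conf, \pl{M}'|_\conf)$ with $\pl{M}|_\conf = \pl{M}'|_\conf$. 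These agree provided $R_P$ returns $\cmark$ on identical inputs (immediate from Def.~\ref{def:regression}, since $P(\pl{M}|_\conf)$ is assumed and the new model equals the old) or, following the restriction noted in the algorithm's footnote, provided $R_P^\uparrow$ is evaluated only over $\phi \wedge \widehat{\Delta}(\pl{M})$ with $\cmark$ supplied on the complement.

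Finally, since both algorithms annotate $\{e,\pl{g}'\}$ with the value they return, equality of return values yields equality of the (derived) annotations, so no separate argument for the annotation step is needed. The essential content of the lemma is therefore the alignment in the last paragraph; the earlier cases are routine once the correspondence from Def.~\ref{def:varEvoSet} is in hand.
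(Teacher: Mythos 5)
Your proof is correct and follows essentially the same route as the paper's: a case analysis tracking the branches of Alg.~\ref{alg:varEvdRegression} against those of Alg.~\ref{alg:productEvdRegression}, linked by the defining property of $\widehat{\Delta}$ (Def.~\ref{def:varEvoSet}), and resting on the same assumption --- which the paper states explicitly as a preamble to its proof --- that availability of a regression analysis is consistent between the lifted and product settings. Where you go beyond the paper is the branch in which $R_P^\uparrow$ is available: the paper dismisses it in one sentence (``the correctness of the result is provided by the correctness of $R_P^\uparrow$''), whereas you correctly isolate the boundary behaviour on unmodified configurations, where the product algorithm returns $\cmark$ unconditionally on Line 2 while the lifted branch, via Def.~\ref{def:lift}, computes $R_P(\pl{M}|_\conf,\pl{M}|_\conf)$. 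One caveat on that refinement: your claim that $R_P$ returning $\cmark$ on identical inputs is ``immediate from Def.~\ref{def:regression}'' overstates the definition --- a regression analysis that answers $\textbf{?}$ (``cannot determine'') even on identical inputs is still sound under it --- so of your two offered resolutions, the footnote-based one (evaluating $R_P^\uparrow$ only over $\phi \land \widehat{\Delta}(\pl{M})$ and supplying $\cmark$ on the complement) is the one that actually closes the case without an added hypothesis on $R_P$.
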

\weakNew{In the following proof, we assume for simplicity a one-to-one correspondence between lifted and product-based regression analyses; i.e., if we do not have access to a lifted regression analysis for some evidence, then we also do not have a corresponding product-based regression analysis.}

\begin{proof}
    \weakNew{Let $\widehat{\Delta}, \pl{g}',\phi,$ and $e$ be as described in the statement of the lemma. First, if $\pl{g}'$ is propositional, or refers to a model which is not modified under any configurations satisfying $\phi$, then the procedure returns ``Reuse'' for all configurations satisfying $\phi$. The product-based analysis would therefore return $\cmark$ on Line 2 of Alg.~\ref{alg:productEvdRegression} for each of these configurations, as needed.}

    \weakNew{Otherwise, we either do or do not apply a lifted regression analysis $R_P^\uparrow$; if not, we return the regression value $\langle \phi \wedge \neg \phi_\Delta, \bot, \phi \wedge \phi_\Delta \rangle$, where $\phi_\Delta$ describes the set of configurations modified under $ \widehat{\Delta}$. For any $\conf \in \llbracket \phi\rrbracket$, if $\conf \vDash \phi_\Delta$, the product-based analysis in Alg.~\ref{alg:productEvdRegression} would have returned $\cmark$ on Line 2, as needed, or returned $?$ on Line 5 (under the assumption that we also do not have a corresponding product-based regression analysis for this evidence). As such, we return the correct values under both sets of configurations.

    Finally, if we do apply a lifted regression analysis $R_P^\uparrow$, the correctness of the result is provided by the correctness of $R_P^\uparrow$.}
\end{proof}

\subsubsection*{Proof of Lemma~\ref{lem:compBinCorrect}}

\begin{lemma*}
    Let $\pl{v}_1 = \langle \phi_\cmark, \phi_\xmark, \phi_?\rangle$ be a regression value partitioning $\llbracket \phi\rrbracket$, and $\pl{v}_2 = \langle \psi_\cmark, \psi_\xmark, \psi_?\rangle$ be a regression value partitioning $\llbracket \psi \rrbracket$. Then $\pl{v}_1 \otimes \pl{v}_2$ is a variability-aware regression value over $\llbracket \phi  \vee \psi \rrbracket$ such that for all $\conf \in \llbracket \phi \vee \psi \rrbracket$, we have $(\pl{v}_1 \otimes \pl{v}_2)|_\conf = \mathsf{Min}(\{\pl{v}_1,\pl{v}_2\}|_\conf)$.
\end{lemma*}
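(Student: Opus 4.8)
The plan is to establish the pointwise identity $(\pl{v}_1 \otimes \pl{v}_2)|_\conf = \mathsf{Min}(\{\pl{v}_1,\pl{v}_2\}|_\conf)$ for every $\conf \in \llbracket \phi \vee \psi\rrbracket$ by direct case analysis; the claim that $\pl{v}_1 \otimes \pl{v}_2$ is a genuine variability-aware regression value, i.e. that $\theta_\cmark$, $\theta_\xmark$, $\theta_?$ partition $\llbracket \phi \vee \psi\rrbracket$, then follows as a byproduct, since each $\conf$ will be shown to lie in exactly one of the three components. First I would recall that deriving a variability-aware regression value under $\conf$ returns the unique symbol whose associated feature expression is satisfied by $\conf$, so it suffices to show, for each $\conf$, that $\conf$ satisfies the $\theta$-component corresponding to $\mathsf{Min}(\{\pl{v}_1,\pl{v}_2\}|_\conf)$ and fails the other two.

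I would partition $\llbracket \phi \vee \psi \rrbracket$ into the three disjoint regions $\llbracket \phi \wedge \neg \psi\rrbracket$, $\llbracket \neg \phi \wedge \psi\rrbracket$, and $\llbracket \phi \wedge \psi\rrbracket$, matching the derivation of $\{\pl{v}_1,\pl{v}_2\}|_\conf$ recorded immediately after the statement. In the first region $\mathsf{Min}$ reduces to $\pl{v}_1|_\conf$; using $\conf \vDash \neg \psi$ to annihilate every $\psi$-dependent disjunct, I would check that $\conf$ satisfies $\theta_s$ exactly when it satisfies the $s$-component of $\pl{v}_1$ (for instance $\pl{v}_1|_\conf = \cmark$ gives $\conf \vDash \phi_\cmark \wedge \neg \psi$, the second disjunct of $\theta_\cmark$, while $\pl{v}_1|_\conf = {?}$ gives $\conf \vDash \phi_? \wedge \neg \psi$, the first disjunct of $\theta_?$). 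The second region is symmetric, by the symmetry of the defining formulas for $\otimes$ in $\phi$ and $\psi$.

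The substantive region is $\llbracket \phi \wedge \psi\rrbracket$, where $\mathsf{Min}$ operates on the pair $\{\pl{v}_1|_\conf, \pl{v}_2|_\conf\}$ under the ordering $\xmark < {?} < \cmark$. Here I would run through the outcomes of the pair: if both derive to $\cmark$ then $\conf \vDash \phi_\cmark \wedge \psi_\cmark$, the first disjunct of $\theta_\cmark$; if either derives to $\xmark$ then $\conf \vDash \phi_\xmark \vee \psi_\xmark = \theta_\xmark$; and if neither is $\xmark$ but at least one is ${?}$ then $\conf \vDash (\phi_? \vee \psi_?) \wedge \neg \phi_\xmark \wedge \neg \psi_\xmark$, the third disjunct of $\theta_?$. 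In each subcase I would also confirm the other two components are unsatisfied, invoking the disjointness of the partitions supplied by $\pl{v}_1$ and $\pl{v}_2$ to rule out the competing disjuncts.

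I expect the main obstacle to be the bookkeeping around $\theta_?$: its guard $\neg \phi_\xmark \wedge \neg \psi_\xmark$ must be shown both to prevent any overlap with $\theta_\xmark$ inside $\llbracket \phi \wedge \psi\rrbracket$ and to remain compatible with the boundary disjuncts $\phi_? \wedge \neg \psi$ and $\psi_? \wedge \neg \phi$ that handle the first two regions. Verifying that these three disjuncts neither overlap each other nor leak into $\theta_\cmark$ or $\theta_\xmark$ amounts to careful but routine Boolean algebra rather than a conceptual difficulty; once it is in place, the partition property and the pointwise identity fall out together.
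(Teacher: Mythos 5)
Your proof is correct and takes essentially the same approach as the paper's: a direct, exhaustive case analysis over configurations in $\llbracket \phi \vee \psi \rrbracket$, checking that the derivation of $\pl{v}_1 \otimes \pl{v}_2$ agrees with $\mathsf{Min}$ and using disjointness/containment of the input partitions to rule out the competing disjuncts of each $\theta$-component. The only difference is organizational — you case on the region of $\conf$ and the input values $\pl{v}_1|_\conf, \pl{v}_2|_\conf$, whereas the paper cases on the output value $(\pl{v}_1 \otimes \pl{v}_2)|_\conf$ — and your version has the minor merit of deriving the partition property as a byproduct rather than asserting it as clear.
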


\begin{proof}
    
    \weakNew{Let $\pl{v}_1,\pl{v}_2,\pl{v}$ be as described in the statement of the lemma.     
    Let $\pl{v} = (\pl{v}_1 \otimes \pl{v}_2)$.
    
    Given that $\pl{v}_1$ partitions $\llbracket \phi \rrbracket$ and $\pl{v}_2$ partitions $\llbracket \psi\rrbracket$, it is clear that the $\pl{v}$ partitions the union of these sets. It remains to be shown that the result derived under any $\conf$ in this set is the correct minimum. Proceed by case analysis on $\pl{v}|_\conf$:}
    \begin{itemize}
        \item \weakNew{If $\pl{v}|_\conf = \cmark$, then either (i) $\conf$ is in exactly one of $\llbracket \phi \rrbracket$ or $\llbracket \psi\rrbracket$, in which case $\mathsf{Min}^\uparrow$ only receives one input $\pl{v}_i$, which must have $\pl{v}_i|_\conf = \cmark$; or (ii)  $c \in \llbracket \phi \land \psi \rrbracket$, such that both $\pl{v}_1|_\conf = \cmark$ and $\pl{v}_2|_\cmark = \cmark$, in which case we are comparing it against $\mathsf{Min}(\cmark, \cmark)$, which is indeed $\cmark$}.

        \item \weakNew{The case for $\pl{v}|_\conf = \xmark$ is analogous, with the only distinction being that $\mathsf{Min}$ will return $\xmark$ so long as one of its inputs is $\xmark$.}
        \item \weakNew{If $\pl{v}|_\conf = ~?$, then we follow a similar line of reasoning as in case (a). The only need for care is in the case where $\conf \in \llbracket \phi \land \psi \rrbracket$, since in addition to checking that at least one of $\pl{v}_1|_\conf = ~?$ or $\pl{v}_1|_\conf = ~?$, we must ensure that the other does not reduce to $\xmark$ under $\conf$, as then $\mathsf{Min}$ would return $\xmark$. The third clause in the formula for the ``Recheck'' expression precludes such a situation.}
    \end{itemize}

\end{proof}

\subsubsection*{Proof of Thm.~\ref{thm:backwardsLift}}

\begin{theorem*}
    Let $\pl{A} \in \Var{\AC}$ be a variational AC, and let $\pl{A}_R$ be the reusable core extracted from $\pl{A}$ following the (lifted) forward pass of the analysis under variational evolution set $\widehat{\Delta}$. Let $\phi = \token{pc}(\pl{A}_R)$. Then for every $\conf \in \llbracket \phi \rrbracket$, we have 
    $$\mathsf{BackwardPass}^\uparrow(\widehat{\Delta},\pl{A}_R)|_\conf = \mathsf{BackwardPass}(\widehat{\Delta}|_\conf,\pl{A}_R|_\conf)$$
    That is, Alg.~\ref{alg:lifted_assurance_regression} correctly lifts Alg.~\ref{alg:product_regression}.
\end{theorem*}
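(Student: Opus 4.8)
The plan is to prove the identity by strong induction on $Height(\pl{A}_R)$, following the same template as the proofs of Thm.~\ref{thm:instAlgCorrect} and Thm.~\ref{thm:forwardsLift}: at each level I fix $\phi = \token{pc}(\pl{A}_R)$ and an arbitrary $\conf \in \llbracket \phi \rrbracket$, and perform a case analysis on the three syntactic forms of $\pl{A}_R$, showing in each case that the regression value returned by $\mathsf{BackwardPass}^\uparrow$, once derived under $\conf$, agrees with the value returned by $\mathsf{BackwardPass}$ run on $\pl{A}_R|_\conf$. Because the backward pass also reads the strategy annotations written by the forward pass, I will treat the annotations appearing in $\pl{A}_R|_\conf$ as the derivations (under $\conf$) of the lifted annotations on $\pl{A}_R$; by Thm.~\ref{thm:forwardsLift} these coincide with the annotations the product-based forward pass would itself produce, so invoking $\mathsf{Annotation}(st)$ in either setting is consistent.

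For the two base cases ($Height = 1$) the argument is immediate. If $\pl{A}_R = {\tt Und}(\pl{g}')$, then the lifted pass returns $\mathsf{REVISE}(\phi) = \langle \bot, \phi, \bot\rangle$, whose derivation under any $\conf \in \llbracket \phi\rrbracket$ is $\xmark$; the product pass on ${\tt Und}(\pl{g}'|_\conf)$ returns $\xmark$ directly (Alg.~\ref{alg:product_regression}, Lines 4--5), so the two match. If $\pl{A}_R = {\tt Evd}(\pl{g}', e)$, then both passes delegate to the evidence-regression procedures, and the required equality is exactly the statement of Lemma~\ref{lem:evdRegLiftCorrect}, namely $\mathsf{EvdRegression}^\uparrow(\widehat{\Delta}, \pl{g}', e)|_\conf = \mathsf{EvdRegression}(\widehat{\Delta}|_\conf, \pl{g}'|_\conf, e)$.

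The inductive case $\pl{A}_R = {\tt Decomp}(\pl{g}', st, \{\pl{A}_1'\ellipses \pl{A}_n'\})$ is where the work lies. I first extract $\pl{v}_{st} = \mathsf{Annotation}(st)$; by Thm.~\ref{thm:forwardsLift}, $\pl{v}_{st}|_\conf$ equals the strategy annotation in $\pl{A}_R|_\conf$. If $\pl{v}_{st} = \mathsf{RECHECK}(\phi) = \langle \bot, \bot, \phi\rangle$, the lifted pass annotates $\pl{g}'$ with it and returns it (Lines 11--12), which derives to \textbf{?}; correspondingly $\pl{v}_{st}|_\conf$ is \textbf{?}, so the product pass returns \textbf{?} at its Lines 11--12, matching. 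Otherwise the lifted pass recurses over every child, collecting $V = \{\pl{v}_{st}, \pl{v}_1\ellipses \pl{v}_n\}$ and returning $\mathsf{Min}^\uparrow(V)$. Each $\pl{v}_i$ partitions $\token{pc}(\pl{A}_i')$, and by the induction hypothesis $\pl{v}_i|_\conf$ equals the value the product pass computes on $\pl{A}_i'|_\conf$ whenever $\conf \vDash \token{pc}(\pl{A}_i')$. By Def.~\ref{def:varACDerivation} the derived decomposition retains exactly the children present under $\conf$, and the product pass folds $\mathsf{Min}$ over their values together with $\pl{v}_{st}|_\conf$; Lemma~\ref{lem:minLiftCorrect} states precisely that $\mathsf{Min}^\uparrow(V)|_\conf = \mathsf{Min}(V|_\conf)$, where $V|_\conf$ drops those $\pl{v}_i$ whose domain excludes $\conf$ — i.e. exactly the absent children — so the two computations coincide. (Here I also note that a decomposition node of $\pl{A}_R$ has at least one child present under each $\conf \in \llbracket \phi\rrbracket$, since otherwise the forward pass would have collapsed it to an undeveloped goal, so the product derivation is genuinely a decomposition and the cases stay aligned.)

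The main obstacle I anticipate is reconciling the recursion schedules of the two passes in the presence of \emph{partial} recheck. When $\pl{v}_{st}$ is neither wholly $\mathsf{RECHECK}(\phi)$ nor recheck-free — say $\pl{v}_{st} = \langle \phi_\cmark, \bot, \phi_{\textbf{?}}\rangle$ — the lifted forward pass (Alg.~\ref{alg:instantiationLift}, Lines 15--16) will already have marked every child present only under $\phi_{\textbf{?}}$ as recheck and skipped it, whereas the lifted backward pass still recurses over \emph{all} children. I must therefore verify that, for each $\conf \vDash \phi_{\textbf{?}}$, the backward-pass value returned for such a child derives to \textbf{?}, so that after composition via $\mathsf{Min}^\uparrow$ the parent goal derives to \textbf{?} under $\conf$ — matching the product pass, which on $\pl{A}_R|_\conf$ sees a strategy annotated \textbf{?} and returns \textbf{?} immediately. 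Establishing this requires showing that the recheck markings made during the forward pass are faithfully reflected in the children's derived backward-pass values, so that $\mathsf{Min}$ is never handed a spurious $\cmark$ or $\xmark$ for a rechecked configuration; once that bookkeeping is pinned down, the rest is a routine assembly of Thm.~\ref{thm:forwardsLift}, the induction hypothesis, and Lemma~\ref{lem:minLiftCorrect}.
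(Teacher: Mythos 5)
Your proposal follows the paper's own proof essentially verbatim: strong induction on $Height(\pl{A}_R)$, the undeveloped-goal base case discharged by noting that $\mathsf{REVISE}(\phi)$ derives to $\xmark$ under every $\conf \in \llbracket \phi \rrbracket$, the evidence base case discharged by Lemma~\ref{lem:evdRegLiftCorrect}, and the decomposition case assembled from Thm.~\ref{thm:forwardsLift} (to identify the derived strategy annotations), the inductive hypothesis on the children, and Lemma~\ref{lem:minLiftCorrect} for $\mathsf{Min}^\uparrow$. So the main line of your argument is exactly the paper's.

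Where you go beyond the paper is the ``partial recheck'' obstacle, and it is worth saying plainly: this is a real subtlety, and the paper's proof does not address it at all. The paper distinguishes only the case $\pl{v}_{st} = \mathsf{RECHECK}(\phi)$ from the case where $\mathsf{Min}^\uparrow$ is applied, and asserts that Lemma~\ref{lem:minLiftCorrect} closes the latter; but when $\pl{v}_{st}$ is recheck only on part of $\llbracket \phi \rrbracket$, then for $\conf \vDash \phi_{\textbf{?}}$ the product pass (Alg.~\ref{alg:product_regression}, Lines 10--12) short-circuits and returns \textbf{?} without ever visiting the children, while the lifted pass composes the children's values --- so the inductive hypothesis plus Lemma~\ref{lem:minLiftCorrect} alone do not yield the claimed equality, exactly as you observe. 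One refinement to your analysis: a ``spurious $\cmark$'' from a child is actually harmless, since $\mathsf{Min}(\{\textbf{?}, \cmark\}) = \textbf{?}$; the only dangerous case is a child whose backward-pass value derives to $\xmark$ on a rechecked configuration (for instance, an undeveloped child of $\pl{A}_R$ whose presence condition straddles $\phi_\cmark$ and $\phi_{\textbf{?}}$, which returns $\mathsf{REVISE}$ of its entire presence condition, or a child whose lifted evidence-regression analysis reports $\xmark$ there). The invariant you call for --- that the recheck markings made during the forward pass (Alg.~\ref{alg:instantiationLift}) guarantee no $\xmark$ ever reaches $\mathsf{Min}^\uparrow$ on configurations in $\phi_{\textbf{?}}$ --- is precisely what is needed, and it is established neither in your proposal nor in the paper. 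In short, your proof is incomplete only where the published proof is equally, and less self-consciously, incomplete; flagging the gap is a genuine improvement over the paper's treatment.
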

\begin{proof}
    \weakNew{Let $\pl{A}_R, \widehat{\Delta},$ and $\phi$ be as given in the theorem statement. We proceed by strong induction on $n = Height(\pl{A}_R)$.}

    \weakNew{If $n = 1$, then we either have an undeveloped goal or a goal supported by evidence. In the first case, we annotate the goal by $\mathsf{REVISE}(\phi)$, mirroring the fact that the undeveloped goal would be annotated with $\xmark$ under every configuration of $\phi$ in the product-based analysis (cf. Lines 4-5 of Alg.~\ref{alg:product_regression}). In the second case, the conclusion follows directly from Lemma~\ref{lem:evdRegLiftCorrect}.}

    \weakNew{If $n > 1$, then we have $\pl{A}_R = \token{Decomp}(\pl{g}', st, \mathcal{A})$ for some variational goal $\pl{g}$, strategy $st$, and children $\mathcal{A}$, such that each $\pl{A}'_i \in \mathcal{A}$ has height less than $n$. By the correctness of the lifted forward pass, the annotation we receive for $v_{st}$ encodes the strategy annotations we would obtain under each configuration of $\phi$. If we determine that $\pl{v}_{st} = \mathsf{RECHECK}(\phi)$ on Line 10, we annotate $\pl{g}'$ with this value and return accordingly, mirroring the analogous branch of the product-based analysis for all configurations satisfying $\phi$. Otherwise, we add $\pl{v}_{st}$ to $V$  and obtain the regression values $\pl{v}_i$ for each $\pl{A}'_i \in \mathcal{A}$, which correctly encode the regression values for each of their respective sets of configurations by the inductive hypothesis.  The conclusion then follows from the correctness of $\mathsf{Min}^\uparrow$ (Lemma~\ref{lem:minLiftCorrect}).}
\end{proof}

\end{document}